\newcommand{\blind}{1}
\definecolor {processblue}{cmyk}{0.96,0,0,0}
\definecolor{LightCyan}{rgb}{0.88,1,1}
\newtheorem{lemma}{Lemma}
\newtheorem{proposition}{Proposition}
\theoremstyle{remark}
\newtheorem{definition}{Definition}
\newtheorem{remark}{Remark}
\newcommand*{\addFileDependency}[1]{
	\typeout{(#1)}
	\@addtofilelist{#1}
	\IfFileExists{#1}{}{\typeout{No file #1.}}
}
\def\indist{\rightsquigarrow}
\def\ind{\perp\!\!\!\perp}
\newcommand{\var}{\text{var}}
\newcommand{\DR}{\text{DR}}
\newcommand{\cov}{\text{cov}}
\newcommand{\Pb}{\mathbb{P}}
\newcommand{\Pn}{\mathbb{P}_n}
\newcommand{\E}{\mathbb{E}}
\newcommand{\R}{\mathbb{R}}
\newcommand{\N}{\mathbb{N}}
\def\logit{\text{logit}}
\newcommand{\one}{\mathbbm{1}}
\begin{document}

	\def\spacingset#1{\renewcommand{\baselinestretch}%
		{#1}\small\normalsize} \spacingset{1}

	
	\if1\blind
	{
		\title{\bf Doubly-robust inference and optimality in structure-agnostic models with smoothness}
		\author{Matteo Bonvini\thanks{Department of Statistics, Rutgers, the State University of New Jersey. Email: mb1662@stat.rutgers.edu.} \and Edward H. Kennedy\thanks{Department of Statistics \& Data Science, Carnegie Mellon University. Email: edward@stat.cmu.edu.} \and Oliver Dukes\thanks{Department of Applied Mathematics, Computer Science and Statistics, Ghent University. Email: Oliver.Dukes@ugent.be} \and Sivaraman Balakrishnan\thanks{Department of Statistics \& Data Science, Carnegie Mellon University. Email: siva@stat.cmu.edu.}}
		\date{ \today \\ \medskip \textit{Preliminary. Comments welcome.}}
		\maketitle
	} \fi
	
	\if0\blind
	{
		\bigskip
		\bigskip
		\bigskip
		\begin{center}
			{\bf Title}
		\end{center}
		\medskip
	} \fi
	
	\begin{abstract}
    We study the problem of constructing an estimator of the average treatment effect (ATE) with observational data. The celebrated doubly-robust, augmented-IPW (AIPW) estimator generally requires consistent estimation of both nuisance functions for standard root-n inference, and moreover that the product of the errors of the nuisances should shrink at a rate faster than $n^{-1/2}$. A recent strand of research has aimed to understand the extent to which the AIPW estimator can be improved upon (in a minimax sense). Under structural assumptions on the nuisance functions, the AIPW estimator is typically not minimax-optimal, and improvements can be made using higher-order influence functions (Robins et al, 2017). Conversely, without any assumptions on the nuisances beyond the mean-square-error rates at which they can be estimated, the rate achieved by the AIPW estimator is already optimal (Balakrishnan et al, 2023; Jin and Syrgkanis, 2024).
    
    We make three main contributions. First, we propose a new hybrid class of distributions that combine structural agnosticism regarding the nuisance function space with additional smoothness constraints. Second, we calculate minimax lower bounds for estimating the ATE in the new class, as well as in the pure structure-agnostic one. Third, we propose a new estimator of the ATE that enjoys doubly-robust asymptotic linearity; it can yield  asymptotically valid Wald-type confidence intervals even when the propensity score or the outcome model is inconsistently estimated, or estimated at a slow rate. Under certain conditions, we show that its rate of convergence in the new class can be much faster than that achieved by the AIPW estimator and, in particular, matches the minimax lower bound rate, thereby establishing its optimality. Finally, we complement our theoretical findings with simulations.
	\end{abstract}
	\section{Introduction}
	The effect of a binary random variable $A$ on an outcome $Y$ is often measured by the average treatment effect (ATE). Letting $Y^a$ denote the potential outcome that one would observe had treatment been set to $A = a$, the ATE is defined as $\E(Y^1 - Y^0)$. Suppose a sufficiently rich set of covariates $X \in \R^p$ is collected. Under consistency ($A = a \implies Y^a = Y$), positivity ($\Pb(A = 1 \mid X) \in (0, 1)$) and no-unmeasured-confounding ($A \ind Y^a \mid X$), we have 
	\begin{align*}
		\E(Y^1 - Y^0) = \E\{\E(Y \mid A= 1, X) - \E(Y \mid A = 0, X)\}.
	\end{align*}
	In this work, we consider the problem of estimating $\E(Y^1 - Y^0)$ identified as above. For simplicity, however, we focus on $\psi \equiv \E\{\E(Y \mid A = 1, X)\}$, with the understanding that the same results apply to $\E\{\E(Y \mid A = 0, X)\}$. The parameter $\psi$ captures the average outcome if every unit in the population takes $A = 1$; it can also be interpreted as the population mean outcome under missingness at random \parencite{rubin1976inference}. Here, we consider estimation of $\psi$ regardless of its interpretation.
	
	One key feature of $\psi$ is that it can be estimated at $n^{-1/2}$ rates even in nonparametric models where the best rate of convergence for estimating $\mu(X) = \E(Y \mid A= 1, X)$ is slower than $n^{-1/2}$. One way to see this is to consider a randomized trial whereby the probability of receiving treatment given the measured covariates, $\Pb(A = 1 \mid X)$, is known. Because $\psi$ can be expressed as $\psi = \E[AY\omega(X)]$, for $\omega(X) = \{\Pb(A = 1 \mid X)\}^{-1}$, by Chebyshev's inequality, it can be estimated at $n^{-1/2}$ rates simply as $n^{-1}\sum_{i = 1}^n A_i Y_i\omega(X_i)$. In non-randomized studies, however, both nuisance functions $\omega(X)$ and $\mu(X)$ are unknown. In this sense, the convergence rate of an estimator of $\psi$ will typically depend on how accurately these nuisance parameters can be estimated. A vast literature has thus focused on weakening the dependence of the estimator on the nuisance functions' estimation error.
	
	In the context of nonparametric modeling, many widely adopted estimators of $\psi$ rely, in some form, on estimation of both $\mu(X)$ and $\omega(X)$. The augmented-inverse-probability-weighted (AIPW) estimator \parencite{robins1994estimation} and those based on Targeted-Maximum-Likelihood Estimation (TMLE) \parencite{van2011targeted, van2018targeted} are two prominent examples. In particular, these two approaches, based on the first-order influence function of $\psi$, are agnostic with respect to how the nuisances are estimated. Other approaches consider more specific, clever estimators of these nuisances, but can still be considered as variants of the two approaches above. Finally, the list of available estimators of $\psi$ also includes other strategies, such as those based on matching (see, e.g., \cite{imbens2004nonparametric}).
	
	In non-randomized studies, where both $\mu(X)$ and $\omega(X)$ are unknown, the state-of-the-art to conduct inference on $\psi$ is to assume that both nuisances are estimated well enough. A standard requirement is that the product of the root-mean-square-errors is asymptotically negligible, i.e., it converges in probability to 0 when scaled by $\sqrt{n}$. In this case, an asymptotically valid confidence interval is simply the Wald interval. 
	
	When it is not possible to estimate the nuisances with the accuracy required for the validity of the Wald interval, one option is to attempt to estimate and take into account the bias of the original estimator. This idea is connected to the development of the general theory of Higher Order Influence Functions (HOIFs) \parencite{robins2008higher, robins2017higher, robins2009quadratic, robins2009semiparametric, van2018higher}, as well as the discovery of estimators that can be $\sqrt{n}$-consistent even if the model for $\mu(X)$ or $\omega(X)$ is misspecified \parencite{van2014targeted, benkeser2017doubly, dukes2021doubly}. In this work, we build upon these two streams of literature and propose a novel estimator of $\psi$, which remains $\sqrt{n}$-consistent even if one of the two nuisances is not consistently estimated (or estimated at a slow rate). A more detailed list of our contributions, including minimax lower bounds, appears in Section \ref{sec:main_contributions}, after introducing notation and the basic problem statement. 
	
	\subsection{Notation}
	We assume that one observes $n$ iid copies $O^n = O_1, \ldots, O_n$, where $O = (Y, A, X) \in \mathcal{Y} \times \{0, 1\} \times \R^d$. Let $f(x)$ denote the density of $X$ and
	\begin{align*}
		& \pi(X) = \Pb(A = 1 \mid X), \quad \omega(X) = 1/\pi(X),  \quad \mu(X) = \E(Y \mid A = 1, X), \quad \text{and} \quad g(X) = \pi(X)f(X).
	\end{align*}
	The parametrization of the density in terms of $\omega(x)$ instead of $\pi(x)$ is natural and convenient when deriving the lower bounds results in Section \ref{sec:lower_bounds}, but it is not essential for deriving the properties of the estimators described in Section \ref{sec:upper_bounds}. With this notation, we can write
	\begin{align*}
		\psi & = \E\{\mu(X)\} = \E\left\{ \frac{AY}{\pi(X)}\right\} = \E\left\{ \omega(X)AY \right\} = \int \omega(x) \mu(x) g(x) dx.
	\end{align*}
	Let us use the notation $\Pn f = n^{-1} \sum_{i = 1}^n f(O_i)$, $\Pb f = \int f(o) d\Pb(o)$ and $\|f\|^2 = \Pb f^2$. We also let $a \lesssim b$ denote $a \leq C b$ for some constant $C$ that is independent of the sample size $n$, and let $a \land b$ denote $\min(a, b)$. To lighten the notation, when it does not induce confusion, we will also use $\omega = \omega(X)$, $\mu = \mu(X)$, $\omega_i = \omega(X_i)$ and $\mu_i = \mu(X_i)$. A similar notation will be used for the estimators $\widehat\omega(x)$ and $\widehat\mu(x)$. We also use the notation $\E_{X_1 \mid X_2}\{f(X_1)\} = \E\{f(X_1) \mid X_2\}$. Throughout, we assume that $\widehat\omega(x)$ and $\widehat\mu(x)$ are computed on a sample $D^n$ that is independent of $O^n$. This can be accomplished by splitting $O^n$ in subsamples and then swapping the roles of the samples for training the nuisance estimators. Further, we assume that all observations and nuisance functions are bounded; in particular, we assume that $\widehat\omega(x), \omega(x) \in [1, M_a]$, $|\mu(x)| \leq M_\mu$ and $|\widehat\mu(x)| \leq M_\mu$, for some constants $M_\omega$ and $M_\mu$.
	
	Finally, recall the definition of a H\"{o}lder smooth function. We consider this function class when we introduce our estimators in Sections \ref{sec:max_estimators} and \ref{sec:main_estimator}.
	\begin{definition}\label{Holder_def}
		Let $\alpha \in [0, 1]$ and $C$ be a positive constant. A function $f(x)$ is H\"{o}lder of order $\alpha$ if
		\begin{align*}
			|f(x) - f(y)| \ \leq C \|x - y\|^\alpha \text{ for every } x, y \text{ in the domain of } f.
		\end{align*}
	\end{definition}
	Because our subsequent results only pertain to the vanishing rate of the mean-square-errors of our estimators as a function of the sample size, we will not keep track of constants. In this light, we will say that a function is smooth of order $\alpha$ if it satisfies Definition \ref{Holder_def} for some constant $C$. Finally, letting $\lfloor \alpha \rfloor$ denote the greatest integer less than $\alpha$,  we say a function $f$ is H\"{o}lder of order $\alpha > 1$ if $f$ is $\lfloor \alpha \rfloor$-times differentiable with $\lfloor \alpha \rfloor^{\text{th}}$ derivative H\"{o}lder of order $\alpha - \lfloor \alpha \rfloor$ in the sense of Definition \ref{Holder_def}, and if $f$ has all derivatives up to order $\lfloor \alpha \rfloor$ bounded above by some constant.
	
	\subsection{Problem statement} 
	The AIPW estimator, also known as the doubly-robust (DR) estimator, is defined as
	\begin{align}\label{eq:aipw}
		& \widehat\psi_{\DR} = \frac{1}{n} \sum_{i = 1}^n A_i\widehat\omega(X_i)\{Y_i - \widehat\mu(X_i)\} + \widehat\mu(X_i) \equiv \Pn \widehat\varphi,
	\end{align}
	where $\varphi(O) = A\omega(X)\{Y - \mu(X)\} + \mu(X)$ is the (uncentered) influence function of $\psi$. The variance $\var(\varphi)$ is the semiparametric efficiency bound for estimating $\psi$ in nonparametric models, i.e. the smallest variance any regular estimator of $\psi$ can achieve \parencite{tsiatis2006semiparametric, kennedy2022semiparametric, newey1990semiparametric}. Under certain conditions, $\widehat\psi_\DR$ achieves this semiparametric bound and it is thus efficient. This can be seen from the following decomposition. Let $\overline\omega$ and $\overline\mu$ denote the limits, as $n \to \infty$, of $\widehat\omega$ and $\widehat\mu$. In this sense, let us assume, without essential loss of generality, that $\|\widehat\omega - \overline\omega\| = o_\Pb(1)$ and $\|\widehat\mu - \overline\mu\| = o_\Pb(1)$. Letting $\overline\varphi(O) = A\overline\omega(X)\{Y - \overline\mu(X)\} + \overline\mu(X)$, we have
	\begin{align*}
		\widehat\psi_\DR - \psi = (\Pn - \Pb) (\widehat\varphi - \overline\varphi) + (\Pn - \Pb) \overline\varphi + \Pb(\widehat\varphi - \varphi).
	\end{align*}
	The central term, when scaled by $\sqrt{n}$, converges to $N(0, \var(\overline\varphi))$ by the central limit theorem. Thus, by Slutsky's theorem, $\sqrt{n}(\widehat\psi - \psi) \indist N(0, \var(\overline\varphi))$ as long as the first and last terms on the right-hand-side are $o_\Pb(n^{-1/2})$. Notice that $\var(\overline\varphi) = \var(\varphi)$ if $\overline\omega = \omega$ and $\overline\mu = \mu$. Next, in light of Lemma 2 in \cite{kennedy2018sharp} (restated in the Appendix as Lemma \ref{lemma:edward}) and the fact that $\widehat\omega$ and $\widehat\mu$ are trained on a separate sample, the first term is $o_\Pb(n^{-1/2})$. The most difficult term to control is the third one, which equals
	\begin{align}\label{eq:Rn}
		R_n \equiv \Pb(\widehat\varphi - \varphi) = \int \{\widehat\omega(x) - \omega(x)\}\{\mu(x) - \widehat\mu(x)\} g(x)dx.
	\end{align}
	In this light, we rewrite the decomposition $\widehat\psi_\DR - \psi$ as
	\begin{align}\label{eq:main_decomposition}
		\widehat\psi_\DR - \psi = (\Pn - \Pb)\overline\varphi + R_n + o_\Pb(n^{-1/2}).
	\end{align}
	For inference, we assume throughout that equation \eqref{eq:main_decomposition} holds. The quantity $R_n$ is the conditional bias of $\widehat\psi_\DR$ given the training sample $D^n$. By the Cauchy-Schwarz inequality, 
	\begin{align*}
		|R_n| \ \lesssim \|\widehat\omega - \omega\| \|\widehat\mu - \mu\| \implies \widehat\psi_\DR - \psi = O_\Pb\left(n^{-1/2} + \E\{\|\widehat\omega - \omega\| \|\widehat\mu - \mu\|\}\right).
	\end{align*}
	Thus, in general, asymptotic negligibility of this term is guaranteed if 
	\begin{align*}
		\E\{\|\widehat\omega - \omega\| \|\widehat\mu - \mu\|\} \leq \{\E(\|\widehat\omega - \omega\|^2)\}^{1/2}\cdot \{\E(\|\widehat\mu - \mu\|^2)\}^{1/2} = o(n^{-1/2}),
	\end{align*}
	leading to the standard $n^{-1/4}$-rate requirement on the nuisance root-mean-square-errors. This is remarkable because $n^{-1/4}$-rates are achievable in nonparametric function classes under appropriate structural conditions, e.g. sufficient smoothness or sparsity. For example, if $\omega$ and $\mu$ are H\"{o}lder smooth of order $s$ and are estimated by rate-optimal estimators (see, e.g., Section 1.6 in \cite{tsybakov2008introduction} for a discussion on local polynomials), then $\E\{\|\widehat\omega - \omega\| \|\widehat\mu - \mu\|\} = o(n^{-1/2})$ follows if $s > d / 2$. More generally, for some sequences of constants $\epsilon_n$ and $\delta_n$, the Cauchy-Schwarz bound yields that $\E|\widehat\psi_\DR - \psi| \ \lesssim n^{-1/2} + \delta_n \epsilon_n$ whenever $\|\widehat\omega - \omega\| \ \leq \epsilon_n$ and $\|\widehat\mu - \mu\| \ \leq \delta_n$.
	
	When the condition ensuring $\epsilon_n \delta_n = o(n^{-1/2})$ fails, the Cauchy-Schwarz bound does not guarantee that the bias of $\widehat\psi_\DR$ is vanishing at a rate faster than $n^{-1/2}$, the order of the standard error. The foundational work on Higher Order Influence Functions (HOIFs) by Robins and co-authors offers a general, principled way to carry out functional estimation optimally. This approach has led to new estimators of $\psi$ based on higher-order $U$-statistics, which have been shown to be optimal in certain nonparametric models \parencite{robins2008higher, robins2009quadratic, robins2017minimax, liu2017semiparametric}. This general theory considerably expands the use of U-statistics for optimal functional estimation, which has a long history in statistics; see, e.g., the literature on estimating integral functionals of a density \parencite{bickel1988estimating, laurent1996efficient, laurent1997estimation, birge1995estimation}. With respect to the settings considered here, higher order corrections for estimating $\psi$ can be viewed as effectively estimating $R_n$ and subtracting it off from $\widehat\psi_\DR$, leading to better bias-variance trade-offs. Higher order corrections can also be done in the context of TMLE \parencite{van2018higher, diaz2016second, van2021higher}. With a more direct focus on inference, \cite{van2014targeted} and \cite{benkeser2017doubly} have discovered TMLE-based estimators of $\psi$ that are $\sqrt{n}$-consistent and asymptotically normal even if either $\widehat\omega$ or $\widehat\mu$ is not consistent. This represents an improvement upon $\widehat\psi_\DR$ because, at best, the rate of convergence for $\widehat\omega$ and $\widehat\mu$ is of order $n^{-1/2}$ (corresponding to correctly specified parametric models), so that, if the Cauchy-Schwarz bound on $|R_n|$ is employed, $R_n$ would \textit{not} be asymptotically negligible if $\widehat\omega$ or $\widehat\mu$ are inconsistent. We remark that their estimators do not directly build on the theory of HOIFs and, in particular, do not employ U-statistics. Finally, a growing body of literature (e.g., \cite{bradic2019sparsity}, \cite{zhang2021dynamic}, \cite{celentano2023challenges}, \cite{liu2023root}, \cite{bradic2024high} and references therein) considers the challenging task of estimating $\psi$ (and related functionals) under potential mis-specification of high-dimensional models for the nuisance functions. Our work is distinct from this stream of literature as we do not posit high dimensional models for $\omega$ and $\mu$. As we strive to be agnostic with respect to the analyst's choice for $\widehat\omega(x)$ and $\widehat\mu(x)$, such models are allowed within the framework we consider, but our analysis does not leverage any additional structure that they may induce on the problem.
	
	\subsection{Main contributions}\label{sec:main_contributions}
	
	In this work, we make at least three main contributions. First, we propose and develop a new function class, which is data-dependent and can be viewed as a hybrid between the completely structure-agnostic ones considered in \cite{balakrishnan2023fundamental} and more traditional ones based on smoothness conditions. The structure-agnostic class of distributions considered in \cite{balakrishnan2023fundamental} is defined as the set of all distributions $\mathcal{P}(\epsilon_n, \delta_n)$ for which it holds that $\|\widehat\omega - \omega\| \ \leq \epsilon_n$ and $\|\widehat\mu - \mu\| \ \leq \delta_n$, for some rates $\epsilon_n$ and $\delta_n$. We study a subset of this class that additionally impose some smoothness constraints on certain regression functions for which the estimators $\widehat\omega(X)$ and $\widehat\mu(X)$ enter as covariates. We find that the convergence rates admitted over this more regular class can be potentially much faster than those holding over the completely structure-agnostic ones studied in \cite{balakrishnan2023fundamental}. As our proposed function class does not directly impose regularity restrictions on $\omega$ and $\mu$ and yet can potentially admit fast rates of convergence, it can provide a nice middle ground between complete agnosticism at one extreme and much more structured smoothness, say encoded in H\"{o}lder smoothness restrictions on $\omega$ and $\mu$, at the other.  
	
	Our second contribution is to provide minimax lower bounds for estimating $\psi$ in the new hybrid class proposed, as well as in the pure structure agnostic one. We find that, over $\mathcal{P}(\epsilon_n, \delta_n)$, i.e., if the rate-condition for estimating $\omega$ and $\mu$ is the only information available (together with mild boundedness regularity conditions), the rate of convergence $n^{-1/2} + (\epsilon_n \delta_n)$ for $\psi$ is not improvable in a minimax sense. This shows that, in this framework, $\widehat\psi_\DR$ is not improvable without the introduction of additional assumptions. Our current construction only covers the case $\epsilon_n \leq \delta_n$. However, in a work concurrent to ours and developed independently, \cite{jin2024structure} show that the bound $\epsilon_n \delta_n$ also holds when $\epsilon_n > \delta_n$ (at least when $\epsilon_n = o(1)$ and $\delta_n = o(1)$); their proof is conceptually similar to ours but employs a different parametrization of the data generating process. 
	
	On the contrary, the lower bound derived for estimating $\psi$ in the proposed hybrid model is of order $n^{-1/2} + (\epsilon_n^2 \land \delta^2_n)$. In this sense, imposing additional smoothness constraints on $\mathcal{P}(\epsilon_n, \delta_n)$ allows for potentially much faster rates of convergence. The rate $\epsilon_n^2 \land \delta_n^2$  does not rule out the existence of valid confidence intervals shrinking at the rate $n^{-1/2}$ even if one of the two nuisance functions is not consistently estimated or the rate of convergence is too slow, i.e., even if one between $\epsilon_n$ or $\delta_n$ does not converge to zero (fast enough). It thus allows for the possibility of conducting doubly-robust root-$n$ inference. As mentioned above, in virtue of the lower bound rate $\epsilon_n \delta_n$, the purely structure-agnostic class of distributions studied in \cite{balakrishnan2023fundamental} and \cite{jin2024structure} does not allow for doubly-robust root-$n$ inference in nonparametric nuisance function classes. 
	
	Our third contribution is the construction and analysis of a new estimator that achieves the rate $n^{-1/2} + (\epsilon_n^2 \land \delta_n^2)$ under certain conditions. As this rate matches our minimax lower bound rate for estimating $\psi$ in the proposed hybrid function class, this new estimator is a minimax optimal one. We view our new estimator as a hybrid between the estimator of the ATE based on the approximate second-order influence function \parencite{robins2009quadratic, diaz2016second} and the estimator(s) considered in \cite{van2014targeted}, \cite{benkeser2017doubly} and \cite{dukes2021doubly} that are specifically designed to conduct doubly-robust inference. Our estimator can be used for doubly-robust inference under arguably more transparent conditions than the ones previously considered in the literature, which did not posit a hybrid smoothness/structure-agnostic model like we do here. In addition, our constructions can be easily adjusted to conduct doubly-robust inference in settings where this is not readily feasible using currently available estimators, such as to estimate the parameters in a partially linear logistic model (see Appendix \ref{appendix:partially_linear_logit}). Since we posted the first version of this work on arXiv, a new pre-print by \cite{van2024automatic} has been uploaded; tackling a similar problem as ours, the authors present an intriguing approach based on isotonic calibration. In future work, it would be interesting and helpful to compare the different methods. Finally, we evaluate the performance of the newly derived estimator against that of $\widehat\psi_\DR$ and that of the estimator described in \cite{benkeser2017doubly} and implemented in the \texttt{R} package \texttt{drtmle} \parencite{benkeser2023doubly}. The code to replicate the simulations can be found at \url{https://github.com/matteobonvini/dr_inference}. 
	
	\section{Structure-agnostic viewpoint \label{sec:lower_bounds}}
	\subsection{Purely structure-agnostic class of distributions}
	In this section, we describe the optimality viewpoint introduced in \cite{balakrishnan2023fundamental}. They consider the problem of functional estimation with nuisance parameters when all that is known are convergence rates for estimating the nuisance components. In our setting, we aim to derive the best possible rate for estimating $\psi$ when the only information available are bounds of the form $\|\widehat\omega - \omega\| \ \leq \epsilon_n$ and $\| \widehat\mu - \mu\| \ \leq \delta_n$. This framework is particularly helpful for understanding how precisely one can estimate $\psi$ without imposing structural assumptions on the data generating process; e.g., smoothness or sparsity on $\omega(x)$ and $\mu(x)$. In deriving the lower bounds, we assume that $Y$ is binary, $X$ is supported in $[0, 1]^d$ and $c \leq \widehat\mu(x) \leq 1- c$ for some constant $c > 0$ and all $x$.
	
	Given two arbitrary estimators $\widehat\omega(x)$ and $\widehat\mu(x)$, the class $\mathcal{P}(\epsilon_n, \delta_n)$ consists of all densities such that $\|\widehat\omega - \omega\| \ \leq \epsilon_n$ and $\|\widehat\mu - \mu\| \ \leq \delta_n$. Our first result is that the rate of convergence of any estimator of $\psi$ over this class cannot be faster than $n^{-1/2} + \epsilon_n \cdot \delta_n$. Our current proof breaks for data generating processes where $\epsilon_n > \delta_n$. However, both cases are covered by the concurrent work by \cite{jin2024structure}. The class $\mathcal{P}(\epsilon_n, \delta_n)$ aims to describe the setting where one constructs $\widehat\omega(\cdot)$ and $\widehat\mu(\cdot)$ on a separate independent training sample. The information available would then be encoded in the form of high-probability bounds for $\|\widehat\omega - \omega\| \ \leq \epsilon_n$ and $\|\widehat\mu - \mu\| \ \leq \delta_n$. Without essential loss of generality, in what follows, we assume these bounds hold exactly (not just with high probability) and refer to Section 3.1.1 in \cite{balakrishnan2023fundamental} for further discussion. 
	
	\begin{proposition}\label{prop:pure_sa}
		Let $\mathcal{P}(\epsilon_n, \delta_n)$ denote the class of all densities such that $\sup_{p \in \mathcal{P}(\epsilon_n, \delta_n)} \| \omega_p - \widehat\omega\|_2 \ \leq \epsilon_n$ and $\sup_{p \in \mathcal{P}(\epsilon_n, \delta_n)} \|\mu_p - \widehat\mu\|_2 \ \leq \delta_n$.  Then, provided that $\epsilon_n \leq \delta_n$,
		\begin{align*}
			\inf_{T_n} \sup_{p \in \mathcal{P}(\epsilon_n, \delta_n)} \E|T_n - \psi_p| \ \gtrsim \ \epsilon_n \delta_n.
		\end{align*}
	\end{proposition}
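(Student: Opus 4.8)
The plan is to use Le Cam's two‑point method in its mixture‑versus‑point form, along the lines of \cite{balakrishnan2023fundamental} (and, further back, the higher‑order influence function literature). It suffices to produce a null distribution $P_0\in\mathcal{P}(\epsilon_n,\delta_n)$ and a family $\{P_\lambda:\lambda\in\{-1,+1\}^k\}\subseteq\mathcal{P}(\epsilon_n,\delta_n)$, for a suitable $k=k(n)$, such that: (i) every member has the \emph{same} functional value, $\psi_{P_\lambda}=\psi_{P_0}+\Delta$ with $|\Delta|\gtrsim\epsilon_n\delta_n$; and (ii) the mixture $\bar P_1:=2^{-k}\sum_{\lambda}P_\lambda^{\otimes n}$ satisfies $\chi^2(\bar P_1\,\|\,P_0^{\otimes n})\le 1$. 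Granting (i)--(ii), the usual reduction closes the argument: for any estimator $T_n$, $\sup_p\E_p|T_n-\psi_p|\ge\tfrac12\bigl(\E_{P_0^{\otimes n}}|T_n-\psi_{P_0}|+\E_{\bar P_1}|T_n-\psi_{P_0}-\Delta|\bigr)$; splitting on $\{|T_n-\psi_{P_0}|\le|\Delta|/2\}$ lower‑bounds the right side by $\tfrac{|\Delta|}{4}\bigl(1-\mathrm{TV}(\bar P_1,P_0^{\otimes n})\bigr)$; and $\mathrm{TV}(\bar P_1,P_0^{\otimes n})\le\tfrac12\sqrt{\chi^2(\bar P_1\,\|\,P_0^{\otimes n})}\le\tfrac12$ gives $\inf_{T_n}\sup_p\E_p|T_n-\psi_p|\gtrsim|\Delta|\gtrsim\epsilon_n\delta_n$.

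For the construction, partition $[0,1]^d$ into $k$ congruent cubes $B_1,\dots,B_k$. Fix $g$ constant (so that the $X$‑density $f_0=g\widehat\omega$ under $P_0$ is a density bounded above and below), and fix $\Pb(Y=1\mid A=0,X)\equiv\tfrac12$, which does not affect $\psi$. On each $B_j$ choose a bounded bump $\rho_j$ supported on $B_j$ with $\int_{B_j}\rho_j^2\,g\asymp|B_j|$ that is orthogonal in $L^2(g\,dx)$ to the three functions $1,\widehat\mu,\widehat\omega$ restricted to $B_j$ -- possible since this is only three linear constraints in an infinite‑dimensional space. For $\lambda\in\{\pm1\}^k$ define $P_\lambda$ by keeping $g$ fixed and setting, on $B_j$, $\omega_{P_\lambda}=\widehat\omega+\lambda_j\alpha\rho_j$ and $\mu_{P_\lambda}=\widehat\mu+\lambda_j\beta\rho_j$, with constants $\alpha\asymp\epsilon_n$, $\beta\asymp\delta_n$ calibrated so that $\|\omega_{P_\lambda}-\widehat\omega\|_2\le\epsilon_n$, $\|\mu_{P_\lambda}-\widehat\mu\|_2\le\delta_n$, and $\mu_{P_\lambda},\omega_{P_\lambda}$ stay in their admissible ranges (possible for large $n$ since $\epsilon_n,\delta_n$ are bounded; we also assume $\widehat\omega$ bounded away from $1$, a natural strengthening of positivity, or localize the $\omega$‑perturbation accordingly). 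Since $\int_{B_j}\rho_j\,g=0$, $f_{P_\lambda}=g\,\omega_{P_\lambda}$ is again a density, so each $P_\lambda\in\mathcal{P}(\epsilon_n,\delta_n)$, and $P_0$ (the $\lambda$‑free member) lies in the class trivially. Because $\psi_p=\int\omega_p\mu_p\,g_p$,
\[
\psi_{P_\lambda}-\psi_{P_0}=\int(\omega_{P_\lambda}-\widehat\omega)\,\widehat\mu\,g+\int\widehat\omega\,(\mu_{P_\lambda}-\widehat\mu)\,g+\int(\omega_{P_\lambda}-\widehat\omega)(\mu_{P_\lambda}-\widehat\mu)\,g ;
\]
the first two integrals vanish bin‑by‑bin by orthogonality of $\rho_j$ to $\widehat\mu$ and $\widehat\omega$, while the third equals $\Delta:=\alpha\beta\sum_j\int_{B_j}\rho_j^2\,g$, which is independent of $\lambda$ and satisfies $|\Delta|\asymp\epsilon_n\delta_n$ because $\sum_j\int_{B_j}\rho_j^2\,g\asymp\sum_j|B_j|=1$. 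This gives (i).

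For (ii), the key is that since $Y$ is binary and $g$ is held fixed, the one‑observation likelihood ratio factors over the cells $(A,Y)\in\{0,1\}^2$ and is \emph{affine} in $\lambda$: it equals $\mu_{P_\lambda}/\widehat\mu$ on $\{A=1,Y=1\}$, $(1-\mu_{P_\lambda})/(1-\widehat\mu)$ on $\{A=1,Y=0\}$, and $(\omega_{P_\lambda}-1)/(\widehat\omega-1)$ on $\{A=0\}$ (the last since the density of $(X,A,Y)$ at $(x,0,y)$ is $g(x)(\omega(x)-1)/2$). Since the $\mu$‑ and $\omega$‑perturbations thus live on \emph{disjoint} $(A,Y)$‑cells, no quadratic‑in‑$\lambda$ term survives, and $dP_\lambda/dP_0=1+\sum_j\lambda_j h_j$ exactly, where $h_j$ is supported in $x$ on $B_j$, bounded by $O(\epsilon_n+\delta_n)$, has $\E_{P_0}[h_j]=0$ (by $\int_{B_j}\rho_j g=0$) and $h_jh_{j'}=0$ for $j\ne j'$, and $\E_{P_0}[h_j^2]\asymp(\alpha^2+\beta^2)|B_j|\lesssim\delta_n^2|B_j|$ (here using $\epsilon_n\le\delta_n$). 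Hence
\begin{align*}
1+\chi^2(\bar P_1\,\|\,P_0^{\otimes n}) &=\E_{\lambda,\lambda'}\Bigl[\bigl(1+\textstyle\sum_j\lambda_j\lambda'_j\,\E_{P_0}[h_j^2]\bigr)^n\Bigr] \le\E_{\lambda,\lambda'}\Bigl[\exp\!\bigl(n\textstyle\sum_j\lambda_j\lambda'_j\,\E_{P_0}[h_j^2]\bigr)\Bigr]\\ &=\prod_j\cosh\!\bigl(n\,\E_{P_0}[h_j^2]\bigr) \le\exp\!\Bigl(\tfrac{n^2}{2}\textstyle\sum_j\bigl(\E_{P_0}[h_j^2]\bigr)^2\Bigr),
\end{align*}
and $\sum_j(\E_{P_0}[h_j^2])^2\lesssim\delta_n^4\sum_j|B_j|^2=\delta_n^4/k$. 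Taking $k\asymp n^2\delta_n^4$ (and $k=1$ when this is below $1$) makes the exponent $O(1)$, so $\chi^2\le1$; since $|\Delta|\asymp\epsilon_n\delta_n$ does not depend on $k$, enlarging $k$ in this way is free. This gives (ii).

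I expect the $\chi^2$ step to be the crux. The point is that averaging over $\lambda$ annihilates the first‑order term, so the $n$‑fold product is controlled by the \emph{second‑order} quantity $n^2\sum_j(\E_{P_0}h_j^2)^2\asymp n^2\delta_n^4/k$ rather than the first‑order $n\sum_j\E_{P_0}h_j^2\asymp n\delta_n^2$ (which is far too large); this is precisely what lets $k$ grow -- making the design arbitrarily fine -- while the $\Theta(\epsilon_n\delta_n)$ bias is preserved. Two secondary points: one must build bumps $\rho_j$ orthogonal to $\widehat\mu$ and $\widehat\omega$ (needed only so that the bias is identical across the family rather than fluctuating with $\lambda$), and the hypothesis $\epsilon_n\le\delta_n$ is what lets the fineness $k$ be calibrated to $\delta_n$ alone -- as noted in the statement, the regime $\epsilon_n>\delta_n$ appears to require the different data‑generating parametrization of \cite{jin2024structure}.
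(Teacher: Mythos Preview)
Your proposal is correct and lands on the same $k\asymp n^2\delta_n^4$ scaling as the paper, but the route is genuinely different in three places. First, the paper compares two \emph{mixtures} $\int p_\lambda^{\otimes n}\,d\pi(\lambda)$ and $\int q_\lambda^{\otimes n}\,d\pi(\lambda)$ (both indexed by the same Rademacher prior) and bounds their Hellinger distance via the black-box Lemma~\ref{lemma:hellinger_robins} of \cite{robins2009semiparametric}, whereas you compare a single null $P_0$ to a mixture and compute $\chi^2$ directly by the Ingster trick; your calculation is more self-contained and makes the role of the quadratic term $n^2\sum_j(\E_{P_0}h_j^2)^2$ transparent without invoking an external lemma. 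Second, to make $\psi_{P_\lambda}$ independent of $\lambda$, you impose three orthogonality constraints on each bump $\rho_j$ (to $1,\widehat\mu,\widehat\omega$ in $L^2(g)$), while the paper avoids this by choosing the null $\mu$-perturbation as $\mu_{p_\lambda}=\widehat\mu-\epsilon_n(\widehat\mu/\widehat\omega)\sum_j\lambda_j B_j$, which forces the linear-in-$\lambda$ terms of $\int\omega_{p_\lambda}\mu_{p_\lambda}g$ to cancel algebraically; the paper's device needs only $\int B_j=0$ and is cleaner in that respect, whereas your orthogonality step, while standard, requires a word on why such $\rho_j$ can be chosen uniformly bounded across $j$ and $k$ (e.g., by taking piecewise-constant functions on a fixed sub-partition of each $B_j$). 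Third, your observation that the likelihood ratio is exactly affine in $\lambda$ because the $\omega$- and $\mu$-perturbations live on disjoint $(A,Y)$-cells is a nice structural point that the paper's Hellinger machinery does not need to surface explicitly. Both arguments use $\epsilon_n\le\delta_n$ at the same spot: to bound $\E_{P_0}[h_j^2]\lesssim(\alpha^2+\beta^2)/k\lesssim\delta_n^2/k$ (your version) or $\delta_1+\delta_2\lesssim\delta_n^2$ (the paper's), which is why the asymmetric regime requires a different parametrization.
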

	\begin{proof}
		See \cite{jin2024structure} for a full proof that also covers the case when $\epsilon_n > \delta_n$. Our proof relying on $\epsilon_n \leq \delta_n$ is reported in Appendix \ref{appendix:prop_pure_sa}.
	\end{proof}
	
	Proposition \ref{prop:pure_sa} establishes that, if the only assumption on the data generating process consists of error bounds for estimating $\omega$ and $\mu$, then $\psi$ cannot be estimated at a rate faster than the product of these bounds. This means that, to improve upon the AIPW estimator $\widehat\psi_\DR$, which achieves this rate under mild conditions, one needs to introduce other assumptions in addition to rate conditions on the nuisance components. Furthermore, if either $\omega$ or $\mu$ is inconsistently estimated, so that either $\epsilon_n$ or $\delta_n$ do not vanish as $n$ goes to infinity, then $\psi$ cannot be estimated at a rate faster than $\epsilon_n \land \delta_n$, without introducing other assumptions. This implies that nonparametric, doubly-robust root-$n$ inference is possible only if one relies on additional conditions. Note that this is meant as a clarifying technical statement; we do not claim that relying on such additional conditions should necessarily be avoided in practice. 
    
    \begin{wrapfigure}{r}{0.5\textwidth}
        \centering
    \includegraphics[width=0.28\textwidth]{./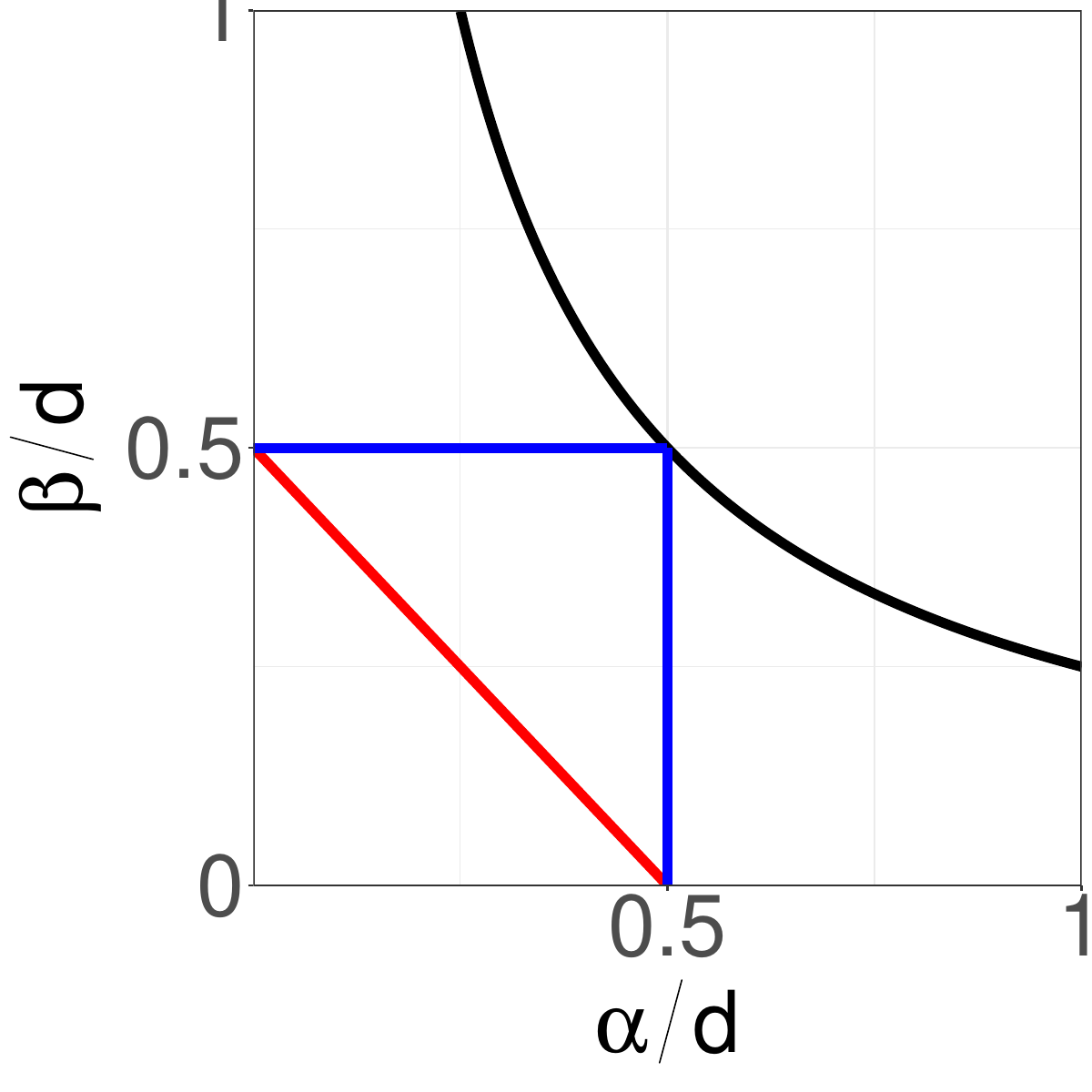}
    \caption{Possible frontiers for estimating $\psi$ at root-$n$ rates in the H\"{o}lder-smoothness model. The black line is the frontier obtained by the AIPW via the Cauchy-Schwarz bound, the red line refers to that obtained by HOIFs estimators fully exploiting the structural smoothness, while the blue line refers to the frontier obtained by estimators converging at a rate $\|\widehat\mu - \mu\|^2 \land \|\widehat\omega - \omega\|^2$. \label{fig:rates}}
\end{wrapfigure}
	We conclude this section with a remark highlighting a connection between the result from Proposition \ref{prop:pure_sa} and the seminal results on optimal estimation of functionals indexed by H\"{o}lder nuisance components.
	\begin{remark}
		Suppose that $\omega$ and $\mu$ are H\"{o}lder smooth functions on a $d$-dimensional domain of orders $\alpha$ and $\beta$. Then, there exist rate-optimal estimators in these classes that satisfy, with high probability, $\|\widehat\omega - \omega\| \ \lesssim n^{-\alpha/(2\alpha + d)}$ and $\|\widehat\mu - \mu\| \ \lesssim n^{-\beta / (2\beta + d)}$, i.e., $\epsilon_n = C_a n^{-\alpha/(2\alpha + d)}$ and $\delta_n = C_b n^{-\beta /(2\beta + d)}$ for some constants $C_a$ and $C_b$. Suppose that the good event $\|\widehat\omega - \omega\| \leq \epsilon_n$ and $\|\widehat\mu - \mu\| \leq \delta_n$ holds. Proposition \ref{prop:pure_sa} (in the case of $\alpha \geq \beta$) or more generally Theorem 2.1 in \cite{jin2024structure} yield that no estimator of $\psi$ can achieve a rate faster than $n^{-1/2} + \epsilon_n \delta_n$. This rate is slower than the minimax rate for estimating $\psi$ in this model, which is of the order $n^{-1/2}  + n^{-4s / (4s + d)}$, where $s = (\alpha + \beta) / 2$ \parencite{robins2008higher, robins2009semiparametric}. This does not contradict Proposition \ref{prop:pure_sa}: the class $\mathcal{P}(n^{-\alpha/(2\alpha + d)}, n^{-\beta/(2\beta + d)})$ is larger than the class of densities such that $\omega$ and $\mu$ are $\alpha$- and $\beta$- H\"{o}lder smooth. In fact, the worst-case construction used to prove Proposition \ref{prop:pure_sa} relies on nuisance functions that are not necessarily H\"{o}lder smooth. In this sense, Proposition \ref{prop:pure_sa} suggests that imposing smoothness assumptions on $\omega$ and $\mu$ induces regularity on $\psi$ that is in addition to the regularity induced on $\omega$ and $\mu$. This then results in a model where $\widehat\psi_\DR$ is no longer optimal and one has to rely on additional corrections for optimal estimation, such as those based on HOIFs. This intuition aligns with the result that in a H\"{o}lder model where $\alpha + \beta > d / 2$, $\psi$ can be estimated at $n^{-1/2}$ rates without the need for consistent estimators of $\omega$ and $\mu$. However, if the nuisance functions' estimators are consistent, then the HOIFs-based estimator is also semiparametric efficient in the sense of having the smallest asymptotic variance among all regular estimators. See \cite{liu2017semiparametric}, particularly Corollary 5 and Remark 7. In Figure \ref{fig:rates}, we plot possible frontiers for root-$n$ consistency in the H\"{o}lder smoothness model.
	\end{remark}
	
	\subsection{Hybrid structure-agnostic class of distributions with smoothness}
	Next, we introduce our main hybrid class, as well as two other hybrid models that are tailored to settings where it known whether $\omega$ or $\mu$ is easier to estimate. In defining these classes, we depart from the structure agnostic framework introduced in \cite{balakrishnan2023fundamental}, encoded in $\mathcal{P}(\epsilon_n, \delta_n)$, to introduce certain smoothness conditions. In particular, we consider $\mathcal{P}_\omega(\epsilon_n)$ denoting the collection of all densities such that $\|\widehat\omega - \omega\| \ \leq \epsilon_n$ and $\E \{\mu(X) \mid A = 1, \widehat\omega(X) = t_1, \omega(X) = t_2, D^n\}$ is smooth. The reason why this class can be of interest in the context of estimating $\psi$ is that one can write $R_n$ as
	\begin{align*}
		R_n & = \E\left[ A \{\widehat\omega(X) - \omega(X)\} \{\mu(X) - \widehat\mu(X)\} \mid D^n\right] \\
		& = \E\left[ A \{\widehat\omega(X) - \omega(X)\}\E \{\mu(X) \mid A = 1, \widehat\omega(X), \omega(X), D^n\} \mid D^n \right] - \E\left[A \{\widehat\omega(X) - \omega(X)\}\widehat\mu(X) \mid D^n\right] \\
		& = \E\left[\{A\widehat\omega(X) - 1\}\E \{\mu(X) \mid A = 1, \widehat\omega(X), \omega(X), D^n\} \mid D^n \right] - \E\left[\{A\widehat\omega(X) - 1\}\widehat\mu(X) \mid D^n\right].
	\end{align*}
	If $\E \{\mu(X) \mid A = 1, \widehat\omega(X), \omega(X), D^n\}$ was known, $R_n$ could be estimated by a sample average with accuracy of order $n^{-1/2}$. The hope is that if this additional nuisance function is unknown, as it would be in practice, but smooth enough so that it can be estimated well, one may still estimate $\psi$ with $n^{-1/2}$-accuracy. A more structure-agnostic way to define $\mathcal{P}_\omega(\epsilon_n)$ would be to simply impose a rate condition on the accuracy with which $\E \{\mu(X) \mid A = 1, \widehat\omega(X), \omega(X), D^n\}$ can be estimated, as opposed to assuming this function is smooth. We leave this refinement for future work. 
	
	Motivated by writing $R_n$ as
	\begin{align*}
		R_n & =   \E\left[ A\{Y - \widehat\mu(X)\}\widehat\omega(X) \mid D^n\right] - \E\left[A\{Y - \widehat\mu(X)\} \E \{\omega(X) \mid A = 1, \widehat\mu(X), \mu(X), D^n\} \mid D^n \right],
	\end{align*}
	we also consider the class of densities $\mathcal{P}_\mu(\delta_n)$ such that $\|\widehat\mu - \mu\| \ \leq \delta_n$ and $\E \{\omega(X) \mid A = 1, \widehat\mu(X) = t_1, \mu(X) = t_2, D^n\}$ is smooth. 
	
	Finally, we consider $\mathcal{P}_{\omega\mu}(\epsilon_n, \delta_n)$, the main hybrid class that we propose. It restricts $\mathcal{P}(\epsilon_n, \delta_n)$ to include only densities for which $\E \{\omega(X) \mid A = 1, \widehat\mu(X) = t_1, \mu(X) = t_2, \widehat\omega(X) = t_3, D^n\}$ and $\E \{\mu(X) \mid A = 1, \widehat\omega(X) = t_1, \omega(X) = t_2, \widehat\mu(X) = t_3, D^n\}$ are smooth. 
	In the following proposition, we derive a minimax lower bound for each of the three hybrid classes considered.
	\begin{proposition}\label{siva_prop} We consider three cases:
		\begin{enumerate}
			\item Let $\mathcal{P}_\omega(\epsilon_n)$ denote the class of all densities such that $\sup_{p \in \mathcal{P}_\omega(\epsilon_n)} \| \omega_p - \widehat\omega\| \ \leq \epsilon_n$ and $\E \{\mu(X) \mid A = 1, \omega(X) = t_1, \widehat\omega(X) = t_2, D^n\}$ is arbitrarily smooth. Then,
			\begin{align*}
				\inf_{T_n} \sup_{p \in \mathcal{P}_\omega(\epsilon_n)} \E|T_n - \psi_p| \ \gtrsim \ \epsilon^2_n.
			\end{align*}
			\item  Let $\mathcal{P}_\mu(\delta_n)$ denote the class of all densities such that $\sup_{p \in \mathcal{P}_\mu(\delta_n)} \| \mu_p - \widehat\mu\| \ \leq \delta_n$ and $\E \{\omega(X) \mid A = 1, \widehat\mu(X) = t_1, \mu(X) = t_2, D^n\}$ is arbitrarily smooth. Then,
			\begin{align*}
				\inf_{T_n} \sup_{p \in \mathcal{P}_\mu(\delta_n)} \E|T_n - \psi_p| \ \gtrsim \ \delta^2_n.
			\end{align*}
			\item Let $\mathcal{P}_{\omega\mu}(\epsilon_n, \delta_n)$ be the class of all densities such that 1) $\sup_{p \in \mathcal{P}_{\omega\mu}} \|\omega_p - \widehat\omega\| \leq \epsilon_n$ and $\sup_{p \in \mathcal{P}_{\omega\mu}} \|\mu_p - \widehat\mu\| \leq \delta_n$, 2) $\E \{\mu(X) \mid A = 1, \widehat\omega(X) = t_1, \omega(X) = t_2, \widehat\mu(X) = t_3; D^n\}$ and $\E \{\omega(X) \mid A = 1, \widehat\mu(X) = t_1, \mu(X) = t_2, \widehat\omega(X) = t_3; D^n\}$ are arbitrarily smooth. Then, 
			\begin{align*}
				\inf_{T_n} \sup_{p \in \mathcal{P}_{\omega\mu}(\epsilon_n, \delta_n)} \E|T_n - \psi_p| \ \gtrsim \ \delta^2_n \land \epsilon_n^2.
			\end{align*}
		\end{enumerate}
	\end{proposition}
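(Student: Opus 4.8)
The plan is to establish all three bounds with a single family of worst-case instances, via Le Cam's two-point method in its mixture ("fuzzy hypotheses") form (see, e.g., \cite{tsybakov2008introduction}). Fix $\widehat\omega\equiv\omega_0$ and $\widehat\mu\equiv\mu_0$ to be constants with $\omega_0>1$ and $c\le\mu_0\le1-c$; the case of genuinely nonconstant $\widehat\omega,\widehat\mu$ can be recovered by running the same argument on a sub-region on which they vary by at most $o(\epsilon_n\wedge\delta_n)$. Let $P_0$ be the law with $X\sim\mathrm{Unif}([0,1]^d)$, $A\mid X\sim\mathrm{Bern}(1/\omega_0)$, $Y\mid A=1,X\sim\mathrm{Bern}(\mu_0)$, so that $\omega_{P_0}=\widehat\omega$, $\mu_{P_0}=\widehat\mu$ and every conditional-mean smoothness constraint holds trivially. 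Partition $[0,1]^d$ into $k$ congruent bins $B_1,\dots,B_k$, split each into two halves $B_j^+,B_j^-$ of equal volume, and let $Z=(Z_1,\dots,Z_k)$ be i.i.d.\ Rademacher. Under $P_{1,Z}$ I keep the density $g$ of $(X,A=1)$ equal to its value under $P_0$ and, on $B_j^+$, set $\omega=\omega_0+Z_j\alpha$, $\mu=\mu_0+Z_j\beta$, while on $B_j^-$ I set $\omega=\omega_0-Z_j\alpha$, $\mu=\mu_0-Z_j\beta$; the $X$-density then becomes $f=g\,\omega$, which still integrates to $1$ and stays $\ge g$ provided $\omega_0-\alpha\ge1$. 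The pair $(\alpha,\beta)$ is chosen per case below.

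The key computation is that, since $\omega_0,\mu_0$ are constant and the two halves of each bin have equal $g$-mass, the terms linear in $Z_j$ cancel and only the $Z_j^2=1$ cross term survives, so that $\psi_{P_{1,Z}}-\psi_{P_0}=\Pb_{P_0}(A=1)\cdot\alpha\beta\asymp\alpha\beta$ for \emph{every} $Z$. Moreover, within each bin one has $\mu-\mu_0=(\beta/\alpha)(\omega-\omega_0)$ deterministically, so under $P_{1,Z}$ the regression function $\E\{\mu(X)\mid A=1,\widehat\omega(X)=t_2,\omega(X)=t_1,D^n\}$ equals the affine (hence arbitrarily smooth) function $\mu_0+(\beta/\alpha)(t_1-\omega_0)$, but with smoothness constant of order $\beta/\alpha$; symmetrically, $\E\{\omega(X)\mid A=1,\widehat\mu(X)=t_1,\mu(X)=t_2,D^n\}=\omega_0+(\alpha/\beta)(t_2-\mu_0)$ has smoothness constant of order $\alpha/\beta$. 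This is the crux of the argument: requiring these functions to lie in a \emph{fixed} (sample-size-independent) smoothness class forces $\beta/\alpha=O(1)$ in Cases~1 and~3, and $\alpha/\beta=O(1)$ in Cases~2 and~3, while membership in $\mathcal P(\epsilon_n,\delta_n)$ forces $\alpha\lesssim\epsilon_n$ in Cases~1 and~3 and $\beta\lesssim\delta_n$ in Cases~2 and~3. Consequently I take $\alpha\asymp\beta\asymp\epsilon_n$ in Case~1 (giving $\psi_{P_{1,Z}}-\psi_{P_0}\asymp\epsilon_n^2$), $\alpha\asymp\beta\asymp\delta_n$ in Case~2 (giving $\asymp\delta_n^2$), and $\alpha\asymp\beta\asymp\epsilon_n\wedge\delta_n$ in Case~3 (giving $\asymp\epsilon_n^2\wedge\delta_n^2$); in each case $P_0$ and all $P_{1,Z}$ belong to the stated class.

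It then remains to show the $n$-fold laws are statistically indistinguishable, i.e., $\mathrm{TV}(P_0^{\otimes n},\bar P_1^{\otimes n})\le1/2$ with $\bar P_1=\E_Z P_{1,Z}$. I would bound $\chi^2(\bar P_1^{\otimes n}\Vert P_0^{\otimes n})$ in the usual way: the perturbations in distinct bins have disjoint supports, so the $\chi^2$ factorizes across bins, and since each hidden sign $Z_j$ enters the per-observation log-likelihood essentially linearly with effect size of order $\rho\asymp(\alpha^2+\beta^2)^{1/2}$ (across the three observable channels perturbed by $Z_j$: the within-bin split of $X$, the propensity score, and the outcome model), with $m=n/k$ observations falling in a given bin, one obtains $\chi^2(\bar P_1^{\otimes n}\Vert P_0^{\otimes n})+1\lesssim\cosh(c\,m\rho^2)^{k}\le\exp(c\,n^2\rho^4/k)$ whenever $m\rho^2\lesssim1$. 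Choosing $k\asymp\max\{1,\,n\rho^2,\,n^2\rho^4\}$ — permissible, since $[0,1]^d$ admits partitions of any cardinality and $\rho\lesssim\epsilon_n\wedge\delta_n$ is bounded — makes the right-hand side $O(1)$, hence $\mathrm{TV}\le1/2$. Le Cam's bound then gives $\inf_{T_n}\sup_{p}\E_p|T_n-\psi_p|\gtrsim|\psi_{P_{1,Z}}-\psi_{P_0}|\,(1-\mathrm{TV})$, which is $\gtrsim\epsilon_n^2$, $\gtrsim\delta_n^2$, and $\gtrsim\epsilon_n^2\wedge\delta_n^2$ respectively.

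The two places that will require care are: (i) verifying honestly that the induced conditional-mean functions lie in whatever fixed smoothness class the hybrid model prescribes — this is exactly where reading "arbitrarily smooth'' with a sample-size-independent constant converts the constraint $\|\widehat\omega-\omega\|\le\epsilon_n$ into the cap $\beta\lesssim\alpha$ (and conversely), thereby improving the rate $\epsilon_n\delta_n$ of Proposition~\ref{prop:pure_sa} to $\epsilon_n^2\wedge\delta_n^2$; and (ii) the $\chi^2$ bookkeeping over the several observable channels carrying the signs $Z_j$, together with checking the positivity and boundedness side conditions ($\omega\ge1$, $\mu\in[0,1]$, $c\le\widehat\mu\le1-c$, $Y$ binary), all of which hold once $\epsilon_n$ and $\delta_n$ are below a fixed constant — a reduction that is itself without loss of generality. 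I expect (i) to be the main conceptual obstacle and (ii) to be routine but tedious.
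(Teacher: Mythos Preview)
Your high-level strategy matches the paper's: bump perturbations indexed by Rademacher signs, with the crucial insight that tying $\mu-\widehat\mu$ affinely to $\omega-\widehat\omega$ (slope $\beta/\alpha$) enforces the smoothness constraint while producing a quadratic separation $\alpha\beta$; then $\alpha\asymp\beta$ gives the claimed $\epsilon_n^2$, $\delta_n^2$, $\epsilon_n^2\wedge\delta_n^2$ rates. The paper likewise makes $\mu$ a smooth function of $(\omega,\widehat\omega,\widehat\mu)$ (e.g.\ $\mu_{q_\lambda}=1/\widehat\omega-(\omega_{q_\lambda}-\widehat\omega)/\widehat\omega^2$ in Claim~1) and bounds the Hellinger distance via the Robins--Li--Mukherjee--Tchetgen--van der Vaart lemma rather than $\chi^2$.

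The main gap is your reduction from general $\widehat\omega,\widehat\mu$ to constants. Your separation computation relies on the linear-in-$Z_j$ contributions from $B_j^+$ and $B_j^-$ cancelling, i.e.\ $\int_{B_j^+}(\alpha\widehat\mu+\beta\widehat\omega)g=\int_{B_j^-}(\alpha\widehat\mu+\beta\widehat\omega)g$; for arbitrary measurable $\widehat\omega,\widehat\mu$ this fails, and the leftover linear term is $O(\alpha+\beta)$, swamping the $\alpha\beta$ cross term. The sub-region argument does not obviously repair this: a region on which both $\widehat\omega,\widehat\mu$ vary by $o(\epsilon_n\wedge\delta_n)$ need not have constant measure (pigeonhole only gives measure $\gtrsim(\epsilon_n\wedge\delta_n)^2$), and rescaling $\alpha$ accordingly can violate the positivity constraints $\omega\ge1$, $\mu\in[0,1]$. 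The paper sidesteps this by comparing \emph{two} $\lambda$-indexed families $p_\lambda,q_\lambda$ that share the \emph{same} perturbed $\omega$, and by scaling the $\mu$-perturbation by $\widehat\mu/\widehat\omega$ (resp.\ $1/\widehat\omega$) so that the linear-in-$\lambda$ pieces of $\omega_\lambda\mu_\lambda$ cancel \emph{algebraically}, yielding a $\lambda$-free separation for arbitrary bounded $\widehat\omega,\widehat\mu$. That design choice is the piece your sketch is missing.

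A smaller point: your notation $\bar P_1^{\otimes n}$ with $\bar P_1=\E_Z P_{1,Z}$ suggests $(\E_ZP_{1,Z})^{\otimes n}$, but the object you need for fuzzy hypotheses is $\E_Z[P_{1,Z}^{\otimes n}]$; the claimed bin-wise factorization of $\chi^2$ holds only after conditioning on bin counts (which are indeed identically distributed under both hypotheses here), so the bookkeeping is a bit more delicate than indicated. This part is fixable, whereas the general-$\widehat\omega,\widehat\mu$ issue needs the paper's construction or something equivalent.
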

	As exemplified in the third claim, the smaller class $\mathcal{P}_{\omega\mu}(\epsilon_n, \delta_n)$ is an example of a collection of densities for which our lower bound allows for the possibility of doubly-robust, root-$n$ inference. In fact, $\delta_n^2 \land \epsilon_n^2 = o(n^{-1/2})$ if either $\|\widehat\omega - \omega\| = o(n^{-1/4})$ or $\|\widehat\mu - \mu\|  = o(n^{-1/4})$. In Section \ref{sec:main_estimator}, we construct an estimator achieving this rate, under certain conditions. This estimator can then be used for conducting doubly-robust inference via a standard Wald confidence interval. 
	
	\begin{remark}
		All the lower bounds from Propositions \ref{prop:pure_sa} and \ref{siva_prop} can be strengthened by taking the maximum between the rates shown and the parametric rate $n^{-1/2}$. Even if the nuisance functions $\omega(x)$ and $\mu(x)$ were known exactly, one would not typically be able to estimate $\psi$ are a rate faster than $n^{-1/2}$. The resulting lower bounds can be derived by a standard argument; see, for example, Section B.3 (Case 1) in \cite{balakrishnan2023fundamental}. 
	\end{remark}
	\section{New estimators of $\psi$ \label{sec:upper_bounds}}
	\subsection{Preliminaries and overview}
	In this section, we consider three new estimators of $\psi$ and derive upper bounds on their risk. Each estimator will be written as the doubly-robust estimator $\widehat\psi_\DR$ \eqref{eq:aipw} minus a term $T_n$ taking a different form depending on the assumptions invoked. That is,
	\begin{align*}
		\widehat\psi = \widehat\psi_{\DR} - T_n = \Pn \widehat\varphi - T_n, \text{ where } 
		\varphi(O) = A\omega(X)\{Y - \mu(X)\} + \mu(X).
	\end{align*}
	We view $T_n$ as an estimator of $R_n$ (from \eqref{eq:Rn}) that stems from merging ideas from the theory of HOIFs and some observations previously made in the doubly-robust inference literature (see Sections \ref{sec:connection_to_HOIFs} and \ref{sec:connection_to_DRAL}). 
	
	The bias $R_n$ can be written in different ways other than as in equation \eqref{eq:Rn}. Similar to \cite{van2014targeted} and \cite{benkeser2017doubly}, we will consider additional nuisance functions taking the form of regressions with both estimated outcomes and covariates. Define
	\begin{align*}
		& s_\omega(t_1, t_2; D^n) =  \E \{\mu(X) - \widehat\mu(X) \mid A = 1, \widehat\omega(X) = t_1, \omega(X) = t_2, D^n\}, \\
		& s_\mu(t_1, t_2; D^n) = \E\{\widehat\omega(X) - \omega(X) \mid A = 1, \widehat\mu(X) = t_1, \mu(X) = t_2, D^n\}, \\
		& f_\omega(t_1, t_2, t_3; D^n) = \E \{\mu(X) - \widehat\mu(X) \mid A = 1, \widehat\omega(X) = t_1, \omega(X) = t_2, \widehat\mu(X) = t_3, D^n\}, \\
		& f_\mu(t_1, t_2, t_3; D^n) = \E\{\widehat\omega(X) - \omega(X) \mid A = 1, \widehat\mu(X) = t_1, \mu(X) = t_2, \widehat\omega(X) = t_3, D^n\}.
	\end{align*}
	Notice that
	\begin{align*}
		& s_\omega(t_1, t_2; D^n) = \E\{Y - \widehat\mu(X) \mid A = 1, \widehat\omega(X) = t_1, \omega(X) = t_2, D^n\}, \\
		& f_\omega(t_1, t_2, t_3; D^n) = \E\{Y - \widehat\mu(X) \mid A = 1, \widehat\omega(X) = t_1, \omega(X) = t_2, \widehat\mu(X) = t_3, D^n\},
	\end{align*}
	but $s_\mu$ and $f_\mu$ cannot directly be written as regressions of observed outcomes on partially observed covariates. 
	
	By the law of total expectation, $R_n$ can be written as an expectation of an observed random variable times one of the four functions above:
	\begin{align*}
		R_n & =  \E\{(A\widehat\omega - 1)s_\omega(\widehat\omega, \omega; D^n) \mid D^n\} = \E\{(A\widehat\omega - 1)f_\omega(\widehat\omega, \omega, \widehat\mu; D^n) \mid D^n\} \\
		& = \E\{A(Y - \widehat\mu)s_\mu(\widehat\mu, \mu; D^n)\mid D^n\} = \E\{A(Y - \widehat\mu)f_\mu(\widehat\mu, \mu, \widehat\omega; D^n)\mid D^n\},
	\end{align*}
	where, for shorthand notation, $\widehat\omega = \widehat\omega(X)$ and $\widehat\mu = \widehat\mu(X)$. If either $s_\omega$, $s_\mu$, $f_\omega$ or $f_\mu$ were known, then $R_n$ could be estimated efficiently by a sample average. In Section \ref{sec:max_estimators}, we derive estimators tailored to models where $s_\omega$ or $s_\mu$ are H\"{o}lder functions. These estimators will improve upon $\widehat\psi_\DR$, and match the lower bound rates from Claims 1 and 2 in Proposition \ref{siva_prop}, when it is known whether $s_\omega$ or $s_\mu$ is smoother and thus easier to estimate. Without such knowledge, they would perform, in terms of mean-square-error, as well as $\widehat\psi_\DR$ but not necessarily better. To remedy this, in Section \ref{sec:main_estimator}, we construct an estimator that, under certain conditions, can improve upon $\widehat\psi_\DR$ without the knowledge of which nuisance function is easier to estimate. Further, it is shown to achieve the lower bound rate from Claim 3 in Proposition \ref{siva_prop}. 
	
	All these nuisance functions depend on estimated outcomes and covariates. The problem of nonparametrically estimating a regression function when some covariates need to be estimated in a first step has been considered, for example, by \cite{mammen2012nonparametric}, \cite{sperlich2009note} and \cite{dukes2021doubly}. In non-randomized experiments, regression adjustments based on the propensity score, e.g. via matching \parencite{abadie2016matching} or ordinary least squares \parencite{robins1992estimating, vansteelandt2014structural}, represent instances of such a general estimation problem. To understand one of the main challenges in this problem, due to its intrinsic non-smoothness, consider estimating $\E(Y \mid f(X) = t)$, for some function $f(X)$, estimated by $\widehat{f}(X)$, and observable random variables $Y$ and $X$. Regressing $Y$ onto $\widehat{f}(X)$ via some local method that depends on a vanishing bandwidth $h$ will need to ensure that $h$ does not shrink faster that the error $\widehat{f} - f$ or else the localization would be ``misplaced." This, in turns, leads to difficulties in choosing the correct order of $h$ as well as to potentially a dramatically slow rate of convergence since one may expect the error $\widehat{f} - f$ to be inflated by multiplication by $h^{-1}$ (see, e.g., Theorem 1 in \cite{mammen2012nonparametric} and Theorem 3 in \cite{dukes2021doubly}).
	  
    For this reason, we impose smoothness conditions directly on $s_\omega(t_1, t_2; D^n)$ rather than bounding the error in estimating $\E \{\mu(X) - \widehat\mu(X) \mid A = 1, \omega(X), D^n\}$ \parencite{benkeser2017doubly, dukes2021doubly}. A similar condition appears in \cite{van2024automatic}. 

    We conclude this section by providing some intuition for when one may expect that $s_\omega$, $s_\mu$, $f_\omega$ and $f_\mu$ possess some smoothness. However, a formal investigation that takes into account specific estimators of $\omega(X)$ and $\mu(X)$ is left for future work. 
	\begin{remark}
		Our proofs of Propositions \ref{prop:max_squared_a}, \ref{prop:max_squared_b} and \ref{prop:min_squared} below require control only for $t_1$ and $t_2$ of the form $t_1 = \widehat\omega(x_0)$ and $t_2 = \omega(x_0)$ for some arbitrary $x_0$ in the support of $X$. In this light, in this remark, we take $s_\omega(\widehat\omega(x_0), \omega(x_0); D^n)$ as the example and consider two cases where one may expect this function to be smooth. We leave the conditioning on $A = 1$ and $D^n$ implicit. 
		
		Perhaps the easiest case to consider is when $s_\omega(\widehat\omega(X), \omega(X); D^n) = \E \{\mu(X) - \widehat\mu(X) \mid \omega(X), D^n\}$, i.e., $\mu(X) - \widehat\mu(X) = m(\omega(X)) + \epsilon$, where $\epsilon$ is mean-zero given both $\omega(X)$ and $\widehat\omega(X)$. If this is the case, then one needs to assume that $m(\cdot)$ possesses some smoothness, which is arguably a more standard requirement as it does not involve a generated regressor.
		
		Next, we consider the case where, given $D^n$, $(\mu(X) - \widehat\mu(X), \omega(X), \widehat\omega(X))$ is jointly Gaussian\footnote{Strictly speaking, our boundedness assumption on $\mu$, $\widehat\mu$, $\omega$, and $\widehat\omega$ excludes the possibility that the vector is jointly Gaussian. The elementary calculations reported here are simply meant to capture some idealized sufficient conditions that might justify the assumption that the extra nuisance functions are smooth.}:
		\begin{align*}
			\begin{bmatrix}
				\mu(X) - \widehat\mu(X) \\
				\omega(X) \\
				\widehat\omega(X) 
			\end{bmatrix} \sim N\left(\begin{bmatrix}
				\mu_\mu \\
				\mu_\omega \\
				\mu_{\widehat\omega} 
			\end{bmatrix}, 
			\begin{bmatrix}
				\sigma^2_{\mu - \widehat\mu} & \sigma_{\omega(\mu - \widehat\mu)} & \sigma_{\widehat\omega(\mu - \widehat\mu)} \\
				\sigma_{\omega(\mu - \widehat\mu)} & \sigma^2_\omega & \rho \sigma_{\omega}\sigma_{\widehat\omega} \\
				\sigma_{\widehat\omega(\mu - \widehat\mu)} & \rho \sigma_{\omega}\sigma_{\widehat\omega} & \sigma^2_{\widehat\omega}
			\end{bmatrix}
			\right),
		\end{align*}
        where $ \sigma_{vw} = \cov\{v(X), w(X)\}$ and $\rho = \text{cor}\{\widehat\omega(X), \omega(X)\}$. 
		We have that
		\begin{align*}
			& s_\omega(\widehat\omega(x_0), \omega(x_0); D^n) - s_\omega(\widehat\omega(x_1), \omega(x_1); D^n) \\
			& = \left\{\frac{\sigma_{\omega(\mu - \widehat\mu)} \sigma^2_{\widehat\omega} - \rho\sigma_\omega\sigma_{\widehat\omega} \sigma_{\widehat\omega(\mu - \widehat\mu)}}{\sigma^2_\omega \sigma^2_{\widehat\omega}(1 - \rho^2)} \right\}\{\omega(x_0) - \omega(x_1)\} + \left\{\frac{\sigma_{\widehat\omega(\mu - \widehat\mu)} \sigma^2_\omega - \sigma_{\omega(\mu - \widehat\mu)}\rho\sigma_\omega\sigma_{\widehat\omega}}{\sigma^2_\omega \sigma^2_{\widehat\omega}(1-\rho^2)} \right\}\{\widehat\omega(x_0) - \widehat\omega(x_1)\} \\
			& \equiv \Lambda_1 \{\omega(x_0) - \omega(x_1)\} + \Lambda_2\{\widehat\omega(x_0) - \widehat\omega(x_1)\}.
		\end{align*}
		Therefore, $s(\widehat\omega(x_0), \omega(x_0); D^n)$ would be Lipschitz if $|\Lambda_1|$ and $|\Lambda_2|$ are bounded. We can write
		\begin{align*}
			& \Lambda_1 = \frac{\sigma_{\widehat\omega(\mu - \widehat\mu)}}{\sigma_{\widehat\omega}\sigma_\omega} \cdot \frac{1}{1 + \rho} + \frac{\sigma_{\widehat\omega(\mu - \widehat\mu)}}{\sigma_{\widehat\omega}\sigma^2_\omega} \cdot \frac{\sigma_{\widehat\omega} - \sigma_\omega}{1 - \rho^2} + \frac{1}{\sigma_\omega\sigma_{\widehat\omega}} \cdot \frac{\sigma_{(\omega - \widehat\omega)(\mu - \widehat\mu)}}{1 - \rho^2}, \\
			& \Lambda_2 = \frac{\sigma_{\omega(\mu - \widehat\mu)}}{\sigma_{\widehat\omega}\sigma_\omega} \cdot \frac{1}{1 + \rho} + \frac{\sigma_{\widehat\omega(\mu - \widehat\mu)}}{\sigma^2_{\widehat\omega}\sigma_\omega} \cdot \frac{\sigma_\omega - \sigma_{\widehat\omega}}{1 - \rho^2} + \frac{1}{\sigma_{\widehat\omega}\sigma_\omega} \cdot \frac{\sigma_{(\widehat\omega - \omega)(\mu - \widehat\mu)}}{1 - \rho^2}.
		\end{align*}
		So, if $\sigma_\omega \gtrsim 1$, $\sigma_{\widehat\omega} \gtrsim 1$, $|\sigma_{\widehat\omega} - \sigma_\omega| \ \lesssim 1 - \rho^2$ and $|\sigma_{(\widehat\omega - \omega)(\mu - \widehat\mu)}| \ \lesssim 1 - \rho^2$, then $|\Lambda_1|$ and $|\Lambda_2|$ are bounded and so $s_\omega(\widehat\omega(x_0), \omega(x_0); D^n)$ is Lipschitz. These straightforward calculations provide an example of more primitive conditions, albeit under idealized conditions, under which $s_\omega(t_1, t_2; D^n)$ may be expected to possess some smoothness.
	\end{remark}
	\subsection{Estimators exploiting the smoothness of either $s_\omega$ or $s_\mu$ but not both.\label{sec:max_estimators}}
	In this section, we present two estimators, $\widehat\psi_\omega$ and $\widehat\psi_\mu$, that are tailored to models where $s_\omega$ and $s_\mu$ have some smoothness, respectively. We will focus on $\widehat\psi_\omega$, but the same reasoning essentially applies to $\widehat\psi_\mu$ as well. Recall the notation $\widehat\omega_i = \widehat\omega(X_i)$ and $\widehat\omega=\widehat\omega(X)$. Consider the estimator $\widehat\psi_\omega = \widehat\psi_\DR - T_{n\omega}$ for
	\begin{align*}
		T_{n\omega} = \frac{1}{n(n-1)} \mathop{\sum\sum}_{1 \leq i \neq j \leq n} (A_i \widehat\omega_i - 1) \frac{K_h(\widehat\omega_j - \widehat\omega_i)}{\widehat{Q}(\widehat\omega_i)} A_j(Y_j - \widehat\mu_j), \text{ where }
	\end{align*}
	$K_h(u) = h^{-1}K(u/h)$, for some vanishing bandwidth $h$, and $\widehat{Q}(\widehat\omega_i) = (n-1)^{-1}\sum_{j \neq i} A_j K_h(\widehat\omega_j - \widehat\omega_i)$. Throughout, we assume the following:
	
	\textbf{Assumption (Kernel).} The kernel $K(u)$ is a non-negative, bounded, symmetric function (around zero) that is supported on $[-1, 1]$. One example is $K(u) = 0.5 \one(|u| \ \leq 1)$.
	
	If $\E\{A_jK_h(\widehat\omega_j - \widehat\omega_i) \mid X_i, D^n)$ and $\widehat{Q}^{-1}(\widehat\omega_i)$ are bounded, then by the Cauchy-Schwarz inequality:
	\begin{align*}
		| \E(T_{n\omega} \mid D^n) | \ \lesssim \|\widehat\omega - \omega\| \|\widehat\mu - \mu\|.
	\end{align*}
	In this light, this term has expectation of the same order as $R_n$, the conditional bias of $\widehat\psi_\DR$, and thus one may hope to not degrade the performance of $\widehat\psi_\DR$, at least asymptotically, by including $T_{n\omega}$ in the final estimator. A formal calculation, however, would need to consider the variance of $T_{n\omega}$ as well. Next, we give a high level argument about why and when subtracting off $T_{n\omega}$ from $\widehat\psi_\DR$ may lead to a better estimator. Our argument is based on decomposing $T_{n\omega}$ as
	\begin{align*}
	&	\frac{1}{n(n-1)} \mathop{\sum\sum}_{1 \leq i \neq j \leq n} \{A_i \widehat\omega_i - 1\} \frac{K_h(\widehat\omega_j - \widehat\omega_i)}{\widehat{Q}(\widehat\omega_i)} A_j s_\omega(\widehat\omega_i, \omega_i; D^n) \\
        & + \frac{1}{n(n-1)} \mathop{\sum\sum}_{1 \leq i \neq j \leq n}(A_i \widehat\omega_i - 1)\frac{K_h(\widehat\omega_j - \widehat\omega_i)}{\widehat{Q}(\widehat\omega_i)} A_j \epsilon_j \\
        & + \frac{1}{n(n-1)} \mathop{\sum\sum}_{1 \leq i \neq j \leq n}(A_i \widehat\omega_i - 1)\frac{K_h(\widehat\omega_j - \widehat\omega_i)}{\widehat{Q}(\widehat\omega_i)}A_j\{s_\omega(\widehat\omega_j, \omega_j; D^n) - s_\omega(\widehat\omega_i, \omega_i; D^n)\} \\
        & = (i) + (ii) + (iii)
	\end{align*}
	Firstly, $(i)$ can be shown to have, conditional on the training sample $D^n$, mean exactly equal to $R_n$. Moreover, $(ii)$ can be shown to have mean zero by the independence of $O_i$ and $O_j$ for $i \neq j$. This is because, by definition:
	\begin{align*}
		\E(A_j \epsilon_j \mid \widehat\omega_j, \omega_j, D^n) & = \E[A_j\{Y_j - \widehat\mu_j - s_\omega(\widehat\omega_j, \omega_j; D^n)\} \mid \widehat\omega_j, \omega_j, A_j, D^n]  = 0.
	\end{align*}
	This implies that $\E(A_j\epsilon_j \mid \widehat\omega_j, A_j, D^n) = 0$ and, by independence of $O_i$ and $O_j$ for $i \neq j$, also that $\E(A_j \epsilon_j \mid \widehat\omega_j, \widehat\omega_i, A_i, A_j, D^n) = 0$. 
	Finally, if $s_\omega(t_1, t_2; D^n)$ is H\"{o}lder of order $\alpha \in [0, 1]$, then 
	\begin{align*}
		\left| \E\{(iii) \mid D^n\} \right| \lesssim \left(\|\widehat\omega - \omega\|^{1 +\alpha} \ + \ \|\widehat\omega - \omega\|h^\alpha\right) \land \|\widehat\omega - \omega\| \|\widehat\mu - \mu\|,
	\end{align*}
    provided that $\widehat{Q}(\widehat\omega_1)$ is bounded away from zero and $\E\{A_2K_h(\widehat\omega_2 - \widehat\omega_1) \mid X_1, D^n\}$ is bounded.

	For $\alpha = 1$, i.e., $s_\omega(t_1, t_2; D^n)$ is Lipschitz, choosing $h \asymp n^{-1/2}$ and combining the above results yields
	\begin{align*}
		\left| \E(\widehat\psi_\omega \mid D^n) - \psi \right| \lesssim \|\widehat\omega - \omega\|^2 \ \land \ \|\widehat\omega - \omega\| \|\widehat\mu - \mu\|.
	\end{align*}
	The bound on the right-hand-side of the display above improves upon the conditional bias of $\widehat\psi_\DR$ in the case when $\|\widehat\omega - \omega\|$ is of smaller order than $\|\widehat\mu - \mu\|$. This improvement is driven by the smoothness conditions explicitly on $s_\omega(t_1, t_2; D^n)$.
	
	Next, we look for an estimator that achieves a bound on the bias of order $\|\widehat\mu - \mu\|^2 \ \land \ \|\widehat\omega - \omega\| \|\widehat\mu - \mu\|$. A natural candidate is $\widehat\psi_\mu = \widehat\psi_\DR - T_{n\mu}$, where:
	\begin{align*}
		& T_{n\mu} = \frac{1}{n(n-1)} \mathop{\sum\sum}_{1 \leq i \neq j \leq n} (A_i \widehat\omega_i - 1) \frac{K_h(\widehat\mu_j - \widehat\mu_i)}{\widehat{Q}_{-i}(\widehat\mu_j)} A_j(Y_j - \widehat\mu_j), \\
		& \widehat{Q}_{-i}(\widehat\mu_j) = (n-1)^{-1} \sum_{s = 1, s \neq i}^n A_sK_h(\widehat\mu_s - \widehat\mu_j).
	\end{align*}
	We remove the $i^{\text{th}}$ observation in $\widehat{Q}_{-i}(\widehat\mu_j)$ because we need to ensure that
	\begin{align*}
		\E\left\{ (A_i \widehat\omega_i - 1) \frac{K_h(\widehat\mu_j - \widehat\mu_i)}{\widehat{Q}_{-i}(\widehat\mu_j)} A_j(Y_j - \widehat\mu_j) \mid D^n \right\} = \E\left\{ A_i (\widehat\omega_i - \omega_i) \frac{K_h(\widehat\mu_j - \widehat\mu_i)}{\widehat{Q}_{-i}(\widehat\mu_j)} A_j(Y_j - \widehat\mu_j) \mid D^n \right\}.
	\end{align*}
	This would not be the case if the residual $A_i\widehat\omega_i - 1$ gets multiplied by a term $\widehat{Q}(\widehat\mu_j)$ involving $A_i$. With this modification in place, calculations analogous to the ones above for $\widehat\psi_\omega$ yield that
	\begin{align*}
		\left| \E(T_{n\mu} \mid D^n) - R_n \right| \lesssim \left(\|\widehat\mu - \mu\|^{1 + \beta} \ + \ h^\beta \|\widehat\mu -\mu\| \ + \ \frac{\|\widehat\omega - \omega\|\|\widehat\mu - \mu\|}{\sqrt{nh}} \right) \ \land \ \|\widehat\omega - \omega\| \|\widehat\mu - \mu\|.
	\end{align*}
	when $s_\mu(t_1, t_2; D^n)$ is H\"{o}lder $\beta \in [0, 1]$. When $\beta = 1$ and $h \asymp n^{-1/2}$, the bound reduces to 
	\begin{align*}
		\left| \E(\widehat\psi_\mu \mid D^n) - \psi \right| \lesssim (n^{-1/2} + \|\widehat\mu - \mu\|^{2}) \ \land \ \|\widehat\omega - \omega\| \|\widehat\mu - \mu\|.
	\end{align*}
	Therefore, $\widehat\psi_\mu$ would have smaller conditional bias than $\widehat\psi_\DR$ if $\mu(x)$ is easier to estimate than $\omega(x)$ so that $\|\widehat\mu - \mu\|^2$ is the dominant term. In the following propositions, we collect more formal statements on the behavior of $\widehat\psi_\omega$ and $\widehat\psi_\mu$. To do so, let us define the following functions, which arise from the linear terms (and their limits) of Hoeffding's decomposition of the U-statistics $T_{n\omega}$ and $T_{n\mu}$:
	\begin{align*}
		& \widehat\varphi_\omega = \E(\mu - \widehat\mu \mid A = 1, \widehat\omega, D^n)(A\widehat\omega - 1), \quad \overline\varphi_\omega = \E(\mu - \overline\mu \mid A = 1, \overline\omega)(A\overline\omega - 1), \\
		& \widehat\varphi_\mu = \E(\widehat\omega - \omega \mid A = 1, \widehat\mu, D^n)A(Y - \widehat\mu), \quad \overline\varphi_\mu = \E(\overline\omega - \omega \mid A = 1, \overline\mu)A(Y - \overline\mu).
	\end{align*}
	\begin{proposition} \label{prop:max_squared_a}
		Suppose that $\E\{A_jK_h(\widehat\omega_j - \widehat\omega_i) \mid X_i, D^n\} \lesssim 1$ and that $\widehat{Q}(\widehat\omega_i)$ is bounded away from zero for $1 \leq i \neq j \leq n$. If $s_\omega(t_1,t_2; D^n)$ is H\"{o}lder of order $\alpha \in [0, 1]$, then:
		\begin{enumerate}
			\item Bias of $\widehat\psi_\omega$:
			\begin{align*}
				& \left| \E(\widehat\psi_\omega - \psi \mid D^n) \right| \lesssim \left(h^\alpha \|\widehat\omega - \omega\| \ + \ \|\widehat\omega - \omega \|^{1 + \alpha}\right) \land \|\widehat\omega - \omega\|\|\widehat\mu - \mu\|,
			\end{align*}
			\item Variance of $\widehat\psi_\omega$:
			\begin{align*}
				\var(\widehat\psi_\omega \mid D^n) \lesssim n^{-1} + (n^3h)^{-1/2} + \|\widehat\omega - \omega\|^2\|\widehat\mu - \mu\|^2(nh)^{-1},
			\end{align*}
			\item Linear expansion of $\widehat\psi_\omega - \psi$:
			\begin{align*}
				& \widehat\psi_\omega - \psi = (\Pn - \Pb) (\overline\varphi - \overline\varphi_\omega) + O_\Pb\left(n^{-1/2} \|\widehat\omega - \omega\|_\infty \ \land \ (nh)^{-1/2} \|\widehat\omega - \omega\|\right)\\
				& \hphantom{=} +  O_\Pb\left(n^{-1/2}\{ (\|\widehat\omega - \omega\|_\infty^\alpha \ + \ h^\alpha) \land \|\widehat\mu - \mu\|_\infty\} \right) + o_P(n^{-1/2}) \\
				& \hphantom{=} + O_\Pb\left(\left\{\left(h^\alpha \|\widehat\omega - \omega\| \ + \ \|\widehat\omega - \omega \|^{1 + \alpha}\right) \land \|\widehat\omega - \omega\|\|\widehat\mu - \mu\|\right\} + (n^3h)^{-\frac{1}{4}} + \|\widehat\omega - \omega\|\|\widehat\mu - \mu\|(nh)^{-\frac{1}{2}}\right).
			\end{align*}
			provided that $\|\widehat\varphi_\omega - \overline\varphi_\omega\| = o_\Pb(1)$.
		\end{enumerate}
	\end{proposition}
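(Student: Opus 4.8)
The plan is to prove the three claims in turn, exploiting the decomposition $\widehat\psi_\omega - \psi = (\Pn\widehat\varphi - \psi) - T_{n\omega}$ together with \eqref{eq:main_decomposition}, which gives $\Pn\widehat\varphi - \psi = (\Pn - \Pb)\overline\varphi + R_n + o_\Pb(n^{-1/2})$, so that everything reduces to understanding $T_{n\omega}$ relative to $R_n$ through the split $T_{n\omega} = (i) + (ii) + (iii)$ displayed in the main text. For the bias (Claim 1), since $\E(\widehat\psi_\omega - \psi\mid D^n) = R_n - \E(T_{n\omega}\mid D^n)$, I would show $\E\{(i)\mid D^n\} = R_n$, $\E\{(ii)\mid D^n\} = 0$, and the stated bound on $\E\{(iii)\mid D^n\}$, whence the bias equals $-\E\{(iii)\mid D^n\}$. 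For $(i)$ the inner sum telescopes, $\frac{1}{n-1}\sum_{j\ne i}A_jK_h(\widehat\omega_j - \widehat\omega_i) = \widehat{Q}(\widehat\omega_i)$, so $(i) = \frac1n\sum_i(A_i\widehat\omega_i - 1)s_\omega(\widehat\omega_i,\omega_i;D^n)$ and its conditional mean is $R_n$ by the identity recalled before Proposition~\ref{prop:max_squared_a}. For $(ii)$ the multiplier of $\epsilon_j = Y_j - \widehat\mu_j - s_\omega(\widehat\omega_j,\omega_j;D^n)$ involves no outcomes (in particular $\widehat{Q}(\widehat\omega_i)$ depends only on $\{A_kK_h(\widehat\omega_k - \widehat\omega_i)\}$), so conditioning on $\{\widehat\omega_j,\omega_j,A_j,D^n\}$ together with the remaining $O_k$'s and using $\E\{A_j\epsilon_j\mid\widehat\omega_j,\omega_j,A_j,D^n\} = 0$ (the defining property of $s_\omega$) yields $\E\{(ii)\mid D^n\} = 0$. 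For $(iii)$, because every factor except $A_i\widehat\omega_i - 1$ is measurable w.r.t.\ $\{X_i\}\cup\{O_k:k\ne i\}\cup D^n$, I would replace $A_i\widehat\omega_i - 1$ by $\E(A_i\widehat\omega_i - 1\mid X_i,D^n) = \pi(X_i)\{\widehat\omega_i - \omega_i\}$; on the support of $K_h$ one has $|\widehat\omega_i - \widehat\omega_j|\le h$, hence $|\omega_i - \omega_j|\le|\widehat\omega_i - \omega_i| + |\widehat\omega_j - \omega_j| + h$, and the H\"older-$\alpha$ continuity of $s_\omega$ gives $|s_\omega(\widehat\omega_j,\omega_j;D^n) - s_\omega(\widehat\omega_i,\omega_i;D^n)|\lesssim h^\alpha + |\widehat\omega_i - \omega_i|^\alpha + |\widehat\omega_j - \omega_j|^\alpha$ there. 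Combining with $\widehat{Q}(\widehat\omega_i)\gtrsim 1$, $\E\{A_jK_h(\widehat\omega_j - \widehat\omega_i)\mid X_i,D^n\}\lesssim 1$, Jensen ($\E|\widehat\omega - \omega|^{1+\alpha}\le\|\widehat\omega - \omega\|^{1+\alpha}$ since $\alpha\le1$), and one Cauchy--Schwarz step for the term carrying $|\widehat\omega_j - \omega_j|^\alpha$ gives $|\E\{(iii)\mid D^n\}|\lesssim h^\alpha\|\widehat\omega - \omega\| + \|\widehat\omega - \omega\|^{1+\alpha}$; the alternative $\|\widehat\omega - \omega\|\|\widehat\mu - \mu\|$ follows because $\E\{(iii)\mid D^n\} = R_n - \E(T_{n\omega}\mid D^n)$ and both $|R_n|$ and $|\E(T_{n\omega}\mid D^n)|$ are $\lesssim\|\widehat\omega - \omega\|\|\widehat\mu - \mu\|$ by the crude Cauchy--Schwarz bounds already recorded.

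For the variance (Claim 2), I would bound $\var(\widehat\psi_\omega\mid D^n)\le 2\var(\Pn\widehat\varphi\mid D^n) + 2\var(T_{n\omega}\mid D^n)$, the first term being $\lesssim n^{-1}$ as $\widehat\varphi$ is bounded. For $\var(T_{n\omega}\mid D^n)$ I would first replace the data-dependent $\widehat{Q}(\widehat\omega_i)$ by its conditional mean $q(\widehat\omega_i) := \E\{A_jK_h(\widehat\omega_j - \widehat\omega_i)\mid X_i,D^n\}\gtrsim 1$, turning $T_{n\omega}$ into a genuine second-order $U$-statistic with $D^n$-measurable kernel; its Hoeffding decomposition produces a linear part with conditional variance $\lesssim n^{-1}$ (bounded summand), a completely degenerate part with conditional variance $\lesssim n^{-2}\E\{K_h(\widehat\omega_j - \widehat\omega_i)^2(\cdots)\mid D^n\}\lesssim n^{-2}h^{-1}\le(n^3h)^{-1/2}$ for $nh\gtrsim 1$, and the $\widehat{Q}$-replacement remainder, which contributes $\|\widehat\omega - \omega\|^2\|\widehat\mu - \mu\|^2(nh)^{-1}$ (its deterministic part being $\lesssim(nh)^{-1}\|\widehat\omega - \omega\|\|\widehat\mu - \mu\|$, obtained by expanding $\widehat{Q}^{-1}$ around $q^{-1}$ and isolating the diagonal contribution of the resulting triple sum). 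Summing these gives Claim 2.

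For the linear expansion (Claim 3), write $T_{n\omega} = \E(T_{n\omega}\mid D^n) + L_n + D_n$ with $L_n$ the centered linear part and $D_n$ the degenerate remainder; Claim 2's computation gives $D_n = O_\Pb((n^3h)^{-1/4} + \|\widehat\omega - \omega\|\|\widehat\mu - \mu\|(nh)^{-1/2})$, and Claim 1 gives $\E(T_{n\omega}\mid D^n) = R_n + \E\{(iii)\mid D^n\}$ with $\E\{(iii)\mid D^n\}$ controlled as above. The crux is to identify $L_n$: computing the $O_k$-marginal of the symmetrized $U$-statistic kernel and letting $h\to0$ in the kernel smoothing reassembles, to leading order, $\widehat\varphi_\omega = (A\widehat\omega - 1)\E\{\mu - \widehat\mu\mid A=1,\widehat\omega,D^n\}$, so that $L_n = (\Pn - \Pb)\widehat\varphi_\omega + (\text{error})$, the error --- coming from the kernel bias against the H\"older-$\alpha$ (rather than smooth) target, from using the generated regressor $\widehat\omega$ rather than a fixed point, and from the $\widehat{Q}$-fluctuation --- being exactly $O_\Pb(n^{-1/2}\{(\|\widehat\omega - \omega\|_\infty^\alpha + h^\alpha)\wedge\|\widehat\mu - \mu\|_\infty\}) + O_\Pb(n^{-1/2}\|\widehat\omega - \omega\|_\infty\wedge(nh)^{-1/2}\|\widehat\omega - \omega\|)$ from the statement (the $\wedge$'s from the trivial bounds $|\mu - \widehat\mu|\le\|\widehat\mu - \mu\|_\infty$ and $|\widehat\omega - \omega|\le\|\widehat\omega - \omega\|_\infty$). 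Then $(\Pn - \Pb)\widehat\varphi_\omega = (\Pn - \Pb)\overline\varphi_\omega + o_\Pb(n^{-1/2})$ by Lemma~\ref{lemma:edward} and the hypothesis $\|\widehat\varphi_\omega - \overline\varphi_\omega\| = o_\Pb(1)$. Substituting $\Pn\widehat\varphi - \psi = (\Pn - \Pb)\overline\varphi + R_n + o_\Pb(n^{-1/2})$ and the expansion just obtained for $T_{n\omega}$ into $\widehat\psi_\omega - \psi = (\Pn\widehat\varphi - \psi) - T_{n\omega}$ cancels the $R_n$'s and leaves $(\Pn - \Pb)(\overline\varphi - \overline\varphi_\omega)$ plus exactly the error terms listed.

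The main obstacle is Claim 3's identification of $L_n$: because $T_{n\omega}$ is not a bona fide $U$-statistic (its normalization $\widehat{Q}$ depends on the whole sample) and $\widehat\omega$ is a generated regressor, one must simultaneously peel off the $\widehat{Q}$-estimation error without losing the $(nh)^{-1/2}$-type rates, control the kernel bias of local averaging against a merely H\"older-$\alpha$ target, and track the interaction between the bandwidth $h$ and the error $\|\widehat\omega - \omega\|$ --- the recurring difficulty with generated regressors flagged in Section~\ref{sec:upper_bounds}. Claims 1 and 2 are, by comparison, bookkeeping built on the $(i)+(ii)+(iii)$ decomposition and on standard $U$-statistic moment bounds.
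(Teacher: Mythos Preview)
Your overall architecture is the same as the paper's: decompose $T_{n\omega}$ via the $(i)+(ii)+(iii)$ split for the bias, replace $\widehat Q$ by $Q$ to get a genuine $U$-statistic $\widetilde T_{n\omega}$, Hoeffding-decompose $\widetilde T_{n\omega}$, and control the remainder $T_{n\omega}-\widetilde T_{n\omega}$. Claims~1 and~3 are handled essentially as in the paper, and your identification of the linear term via $\kappa_{\omega1}\approx\widehat\varphi_\omega$ (with the second projection $\kappa_{\omega2}$ absorbed into the $\|\widehat\omega-\omega\|_\infty$-type error) matches the paper's Eq.~\eqref{eq:statement2a} and Lemma~\ref{lemma:Sna}.

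There is, however, a real gap in your treatment of the $\widehat Q$-replacement remainder for Claim~2. You assert that $\E\{(T_{n\omega}-\widetilde T_{n\omega})^2\mid D^n\}\lesssim\|\widehat\omega-\omega\|^2\|\widehat\mu-\mu\|^2(nh)^{-1}$, and you recover the stated $(n^3h)^{-1/2}$ by upper-bounding the degenerate $U$-statistic variance $(n^2h)^{-1}$ by $(n^3h)^{-1/2}$. This misattributes the source of $(n^3h)^{-1/2}$: in the paper (Lemma~\ref{lemma:Tna}) it arises \emph{from the $\widehat Q$-replacement itself}, not from $S_{n\omega}$. The reason your sketch (``expand $\widehat Q^{-1}$ around $q^{-1}$ and isolate the diagonal of the triple sum'') cannot deliver the bound you claim is the cross-dependence you flag only at the end: when squaring $T_{n\omega}-\widetilde T_{n\omega}=\frac{1}{n(n-1)}\sum_{i\ne j}T_{ij}$ and looking at $\E(T_{ij}T_{lk}\mid D^n)$ with $i,j,l,k$ distinct, you want to replace $A_i\widehat\omega_i-1$ by $A_i(\widehat\omega_i-\omega_i)$ (and likewise for $l$) to extract the product $\|\widehat\omega-\omega\|^2\|\widehat\mu-\mu\|^2$. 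But $\widehat Q(\widehat\omega_l)$ contains $A_i$ and $\widehat Q(\widehat\omega_i)$ contains $A_l$, so this conditioning is blocked. The paper resolves this by writing $\widehat Q(\widehat\omega_i)=\widehat Q_{-l}(\widehat\omega_i)+\frac{A_lK_h(\widehat\omega_l-\widehat\omega_i)}{n-1}$ and its analogue for $\widehat Q(\widehat\omega_l)$; the leading piece then admits the desired conditioning and yields $\|\widehat\omega-\omega\|^2\|\widehat\mu-\mu\|^2(nh)^{-1}$, while the $O((n-1)^{-1})$ correction terms, after one more application of Lemma~\ref{lemma:Q}, produce exactly $(n^3h)^{-1/2}$. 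Without this peeling step your bound on $\E\{(T_{n\omega}-\widetilde T_{n\omega})^2\mid D^n\}$ does not close, and the seven-way case split over index coincidences ($T_{ij}^2$, $T_{ij}T_{ji}$, $T_{ij}T_{il}$, $T_{ij}T_{li}$, $T_{ij}T_{lj}$, $T_{ij}T_{jl}$, $T_{ij}T_{lk}$) that the paper carries out is not optional bookkeeping but the mechanism by which the correct rate emerges.
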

	\begin{proposition}\label{prop:max_squared_b}
		Suppose that $\E\{A_jK_h(\widehat\mu_j - \widehat\mu_i) \mid X_i, D^n\} \lesssim 1$ and that $\widehat{Q}_{-i}(\widehat\mu_j)$ is bounded away from zero for $1 \leq i \neq j \leq n$. If $s_\mu(t_1,t_2; D^n)$ is H\"{o}lder of order $\beta \in [0, 1]$, then:
		\begin{enumerate}
			\item Bias of $\widehat\psi_\mu$:
			\begin{align*}
				\left| \E(\widehat\psi_\mu - \psi \mid D^n) \right| \lesssim \left(h^\beta \|\widehat\mu - \mu\| \ + \ \|\widehat\mu - \mu \|^{1 + \beta} \ + \ \frac{\|\widehat\omega - \omega\|\|\widehat\mu - \mu\|}{\sqrt{nh}} \right) \land \|\widehat\omega - \omega\|\|\widehat\mu - \mu\|,
			\end{align*}
			\item Variance of $\widehat\psi_\mu$:
			\begin{align*}
				\var(\widehat\psi_\mu \mid D^n) \lesssim n^{-1} + (n^3h)^{-1/2} + \|\widehat\omega - \omega\|^2\|\widehat\mu - \mu\|^2(nh)^{-1},
			\end{align*}
			\item Linear expansion of $\widehat\psi_\mu - \psi$:
			\begin{align*}
				& \widehat\psi_\mu - \psi = (\Pn - \Pb) (\overline\varphi - \overline\varphi_\mu) + O_\Pb\left(n^{-1/2} \|\widehat\mu - \mu\|_\infty \ \land \ (nh)^{-1/2} \|\widehat\mu - \mu\|\right) \\
				& \hphantom{=} +  O_\Pb\left(n^{-1/2} \{(\|\widehat\mu - \mu\|_\infty^\beta \ + \ h^\beta) \land \|\widehat\omega - \omega\|_\infty\} \right) + o_P(n^{-1/2}) \\
				& \hphantom{=} + O_\Pb\left(\left\{\left(h^\beta \|\widehat\mu - \mu\| \ + \ \|\widehat\mu - \mu\|^{1 + \beta}\right) \land \|\widehat\omega - \omega\|\|\widehat\mu - \mu\|\right\} + (n^3h)^{-\frac{1}{4}} + \|\widehat\omega - \omega\|\|\widehat\mu - \mu\|(nh)^{-\frac{1}{2}}\right).
			\end{align*}
			provided that $\|\widehat\varphi_\mu - \overline\varphi_\mu\| = o_\Pb(1)$.
		\end{enumerate}
	\end{proposition}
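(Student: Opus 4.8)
The plan is to analyse $T_{n\mu}$ as a second-order $U$-statistic in $O_1,\dots,O_n$ (conditionally on $D^n$), Hoeffding-decompose it, and feed the three components through $\widehat\psi_\mu=\widehat\psi_\DR-T_{n\mu}$ together with the expansion \eqref{eq:main_decomposition} of $\widehat\psi_\DR$. The only genuine complication is that the normalization $\widehat Q_{-i}(\widehat\mu_j)$ makes the summand of $T_{n\mu}$ depend on the whole sample, so $T_{n\mu}$ is not literally a $U$-statistic; I would first replace $\widehat Q_{-i}(\widehat\mu_j)$ by the deterministic (given $D^n$) proxy $\bar Q_n(t)\equiv\E\{AK_h(\widehat\mu(X)-t)\mid D^n\}$, which is bounded away from zero (as is $\widehat Q_{-i}$, by assumption) and satisfies $\E\{(\widehat Q_{-i}(\widehat\mu_j)-\bar Q_n(\widehat\mu_j))^2\mid O_j,D^n\}\lesssim(nh)^{-1}$ uniformly; every $(nh)^{-1/2}$ and $(n^3h)^{-1/4}$ factor in the statement will trace back to this substitution error, controlled crudely via $\E|Z|\le(\E Z^2)^{1/2}$. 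With the substitution in place $T_{n\mu}$ is a bona fide $U$-statistic, with Hoeffding decomposition $T_{n\mu}=\E(T_{n\mu}\mid D^n)+2(\Pn-\Pb)h_{1,n}+W_n$, where $h_{1,n}$ is the $D^n$-measurable Hájek-projection kernel and $W_n$ the completely degenerate remainder.

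For Claim 1 the leave-$i$-out construction already gives $\E(T_{n\mu}\mid D^n)=\E\{A_i(\widehat\omega_i-\omega_i)\,K_h(\widehat\mu_j-\widehat\mu_i)\,A_j(Y_j-\widehat\mu_j)/\widehat Q_{-i}(\widehat\mu_j)\mid D^n\}$ for any $i\neq j$; I would split $\widehat\omega_i-\omega_i=s_\mu(\widehat\mu_i,\mu_i;D^n)+\varepsilon_i$ with $\E(A_i\varepsilon_i\mid\widehat\mu_i,\mu_i,D^n)=0$ (definition of $s_\mu$ plus the reweighting identity $\E\{Z\mid A=1,W\}=\E\{Z\pi\mid W\}/\E\{\pi\mid W\}$), so the $\varepsilon_i$-term vanishes in conditional expectation because $\widehat Q_{-i}$ and $O_j$ are independent of $O_i$, and then write $s_\mu(\widehat\mu_i,\mu_i)=s_\mu(\widehat\mu_j,\mu_j)+\{s_\mu(\widehat\mu_i,\mu_i)-s_\mu(\widehat\mu_j,\mu_j)\}$. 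Integrating the $i$-variable, $\E\{A_iK_h(\widehat\mu_j-\widehat\mu_i)\mid\widehat\mu_j,D^n\}=\bar Q_n(\widehat\mu_j)$ cancels the denominator, so the first piece returns $R_n$ up to the substitution error $\lesssim(nh)^{-1/2}\E\{A_j|Y_j-\widehat\mu_j|\,|s_\mu(\widehat\mu_j,\mu_j)|\mid D^n\}\lesssim(nh)^{-1/2}\|\widehat\omega-\omega\|\|\widehat\mu-\mu\|$ (using $\|s_\mu\|\lesssim\|\widehat\omega-\omega\|$); the second piece is controlled by Hölder-$\beta$ smoothness of $s_\mu$: on $\{|\widehat\mu_i-\widehat\mu_j|\le h\}$ (forced by the kernel), $|s_\mu(\widehat\mu_i,\mu_i)-s_\mu(\widehat\mu_j,\mu_j)|\lesssim h^\beta+|\widehat\mu_i-\mu_i|^\beta+|\widehat\mu_j-\mu_j|^\beta$ (insert $\pm\widehat\mu_i,\pm\widehat\mu_j$ into $\mu_i-\mu_j$), and pairing the $|\widehat\mu_j-\mu_j|^\beta$ term with $\E\{A_j(Y_j-\widehat\mu_j)\mid X_j\}=\pi_j(\mu_j-\widehat\mu_j)$ yields $\|\widehat\mu-\mu\|^{1+\beta}$, the $h^\beta$ term yields $h^\beta\|\widehat\mu-\mu\|$, and the $|\widehat\mu_i-\mu_i|^\beta$ term yields $\|\widehat\mu-\mu\|^{1+\beta}$ by Cauchy--Schwarz. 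Intersecting with the elementary bounds $|R_n|\lesssim\|\widehat\omega-\omega\|\|\widehat\mu-\mu\|$ and $|\E(T_{n\mu}\mid D^n)|\lesssim\|\widehat\omega-\omega\|\|\widehat\mu-\mu\|$ gives the stated $\land$-bound. Claim 2 follows from standard moment bounds for the Hoeffding components: $\var(2(\Pn-\Pb)h_{1,n}\mid D^n)\lesssim n^{-1}$ since $h_{1,n}$ is bounded; $\var(W_n\mid D^n)\lesssim(n^2h)^{-1}\le(n^3h)^{-1/2}$ when $nh\gtrsim1$, since $\E\{K_h(\widehat\mu_1-\widehat\mu_2)^2\mid D^n\}\lesssim h^{-1}$; and the substitution error contributes the $\|\widehat\omega-\omega\|^2\|\widehat\mu-\mu\|^2(nh)^{-1}$ piece.

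For Claim 3 I substitute the Hoeffding decomposition and \eqref{eq:main_decomposition} into $\widehat\psi_\mu-\psi=(\widehat\psi_\DR-\psi)-T_{n\mu}$: the deterministic parts give $R_n-\E(T_{n\mu}\mid D^n)$, which is the Claim-1 bias remainder and lands in the final $O_\Pb$ block; $W_n$ enters at order $(n^3h)^{-1/4}$ by Chebyshev and Claim 2; and it remains to reorganise $(\Pn-\Pb)\overline\varphi-2(\Pn-\Pb)h_{1,n}$. Evaluating the two parts of $2h_{1,n}$ in the $h\downarrow0$ limit, integrating the slot carrying $A_i\widehat\omega_i-1$ gives $A(Y-\widehat\mu)\,\E\{\widehat\omega-\omega\mid A=1,\widehat\mu,D^n\}=\widehat\varphi_\mu$, and integrating the slot carrying $A_j(Y_j-\widehat\mu_j)$ gives $(A\widehat\omega-1)\,\E\{\mu-\widehat\mu\mid A=1,\widehat\mu,D^n\}$; so $2h_{1,n}=\widehat\varphi_\mu+(A\widehat\omega-1)\E\{\mu-\widehat\mu\mid A=1,\widehat\mu,D^n\}+\rho_n$, where the cross-role term has sup-norm $\lesssim\|\widehat\mu-\mu\|_\infty$, and $\rho_n$ collects the Hölder-$\beta$ kernel-localization and $\mu$-conditioning discrepancies (of sup-norm $\lesssim(\|\widehat\mu-\mu\|_\infty^\beta+h^\beta)\land\|\widehat\omega-\omega\|_\infty$) together with the $\bar Q_n$-substitution error. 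Applying $(\Pn-\Pb)$ to each summand: the cross-role term gives $O_\Pb(n^{-1/2}\|\widehat\mu-\mu\|_\infty\land(nh)^{-1/2}\|\widehat\mu-\mu\|)$ (the second alternative coming from bounding it before cancelling the kernel, which costs $h^{-1/2}$), $\rho_n$ gives $O_\Pb(n^{-1/2}\{(\|\widehat\mu-\mu\|_\infty^\beta+h^\beta)\land\|\widehat\omega-\omega\|_\infty\})$ plus the substitution remainders, and $(\Pn-\Pb)(\widehat\varphi_\mu-\overline\varphi_\mu)=o_\Pb(n^{-1/2})$ by Lemma \ref{lemma:edward}, the independence of $D^n$ and $O^n$, and the hypothesis $\|\widehat\varphi_\mu-\overline\varphi_\mu\|=o_\Pb(1)$. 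Combining with the $(\Pn-\Pb)\overline\varphi$ term from $\widehat\psi_\DR$ produces the leading $(\Pn-\Pb)(\overline\varphi-\overline\varphi_\mu)$, the $o_\Pb(n^{-1/2})$ term is inherited from \eqref{eq:main_decomposition}, and all other errors fall into the displayed remainder blocks.

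The main obstacle is the interplay between the generated regressor $\widehat\mu(X)$ and the kernel localisation: because $\widehat\mu$ stands in for the unknown $\mu$ inside $K_h$, one must keep the bandwidth from being ``misplaced'', which couples the admissible $h$ to $\|\widehat\mu-\mu\|$ and produces the delicate $h^\beta\|\widehat\mu-\mu\|+\|\widehat\mu-\mu\|^{1+\beta}$ bias term; this is exactly where Hölder smoothness of $s_\mu$ (rather than of a regression on $\mu$ alone, which would be inflated by $h^{-1}$ in the spirit of \cite{mammen2012nonparametric,dukes2021doubly}) is indispensable. A second, more bookkeeping-heavy obstacle, specific to the $\mu$-version and absent from $\widehat\psi_\omega$ (cf.\ Proposition \ref{prop:max_squared_a}), is that $\widehat Q_{-i}(\widehat\mu_j)$ is evaluated at the data-dependent point $\widehat\mu_j$ that is coupled to the residual $A_j(Y_j-\widehat\mu_j)$; propagating the $(nh)^{-1/2}$ fluctuations of this random denominator through the bias, variance and linear expansion without losing a power of $n$ is the most error-prone part of the argument.
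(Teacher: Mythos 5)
Your proposal is correct and follows essentially the same route as the paper: replace the leave-one-out denominator $\widehat Q_{-i}(\widehat\mu_j)$ by its population version $Q(\widehat\mu_j)=\E\{AK_h(\widehat\mu-\widehat\mu_j)\mid X_j, D^n\}$ (your $\bar Q_n$), Hoeffding-decompose the resulting genuine $U$-statistic, control the bias via the decomposition $\widehat\omega_i-\omega_i=s_\mu(\widehat\mu_i,\mu_i;D^n)+\varepsilon_i$ and the H\"{o}lder-$\beta$ smoothness of $s_\mu$, and trace the $(nh)^{-1/2}$ and $(n^3h)^{-1/4}$ remainders back to the substitution error, exploiting that $\widehat Q_{-i}$ is free of $A_i$ so that $(A_i\widehat\omega_i-1)$ centers to $A_i(\widehat\omega_i-\omega_i)$. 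This matches the paper's Lemmas \ref{lemma:Snb}--\ref{lemma:RTnb} and the surrounding argument in all essentials.
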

	Propositions \ref{prop:max_squared_a} and \ref{prop:max_squared_b} yield that $\widehat\psi_\omega$ can improve upon the performance of $\widehat\psi_\DR$ in models where $\omega$ is easier to estimate than $\mu$, while $\widehat\psi_\mu$ can improve upon $\widehat\psi_\DR$ when $\mu$ is easier to estimate than $\omega$. The main requirement for the proposition to hold is that $s_\omega$ and $s_\mu$ possess some minimal smoothness encoded in the H\"{o}lder condition. It can be seen from a standard change of variables argument that the conditions $\E\{A_jK_h(\widehat\omega_j - \widehat\omega_i) \mid X_i, D^n\} \lesssim 1$ and $\E\{A_jK_h(\widehat\mu_j - \widehat\mu_i) \mid X_i, D^n\} \lesssim 1$ are satisfied if $\widehat\omega(X)$ and $\widehat\mu(X)$ have densities with respect to the Lebesgue measure, respectively. We expect the assumption that $\widehat{Q}(\widehat\omega_j)$ and $\widehat{Q}_{-i}(\widehat\mu_j)$ are bounded away from zero to hold in practice as long as the sample size is large enough and $\widehat\omega(X)$ and $\widehat\mu(X)$ are roughly evenly distributed on their support. 
	\begin{remark}
		It is reasonable to consider applications where $\alpha$ and $\beta$ could be greater than 1, i.e., $s_\omega$ and $s_\mu$ are potentially smoother than what considered in Propositions \ref{prop:max_squared_a} and \ref{prop:max_squared_b}. This higher-order smoothness could then be exploited using higher-order kernels or kernels of local polynomials, for example. However, our analysis (not reported in the interest of space) suggests that the bound on the bias would still contain terms of order $\|\widehat\omega - \omega\|^{1 + (\alpha \land 1)}$ and $\|\widehat\mu - \mu\|^{1 + (\beta \land 1)}$. This is consistent with our lower bound analysis (Proposition \ref{siva_prop}), which establishes that the rate of convergence for $\widehat\psi_\omega$ and $\widehat\psi_\mu$ in these models cannot be faster than $\|\widehat\omega - \omega\|^2$ and $\|\widehat\mu - \mu\|^2$, respectively. Attempting to track higher order smoothness, however, may have benefits in terms of more flexibility in choosing the bandwidth $h$, particularly in finite samples. We leave the study of higher-smoothness regimes for future work.
	\end{remark}
	\begin{remark}
		Suppose that $\alpha = 1$ and set $h \asymp n^{-1/2}$. If $\|\widehat\omega - \omega\|_\infty = o_\Pb(1)$, $\|\widehat\omega - \omega\| = o_\Pb(n^{-1/4})$, then $|\widehat\psi_\omega - \psi| = O_\Pb(n^{-1/2})$ and $\sqrt{n}(\widehat\psi_\omega - \psi) \indist N(0, \var(\overline\varphi - \overline\varphi_\omega))$\footnote{An earlier version of this manuscript incorrectly stated that the limiting variance is $\var(\overline\varphi)$ while in fact our analysis yields that the variance is $\var(\overline\varphi - \overline\varphi_\omega)$. Similar corrections have been made to Propositions \ref{prop:max_squared_b} and \ref{prop:min_squared}.}. This means that $\widehat\psi_\omega$ can be used for constructing a Wald-type confidence interval even if $\widehat\mu$ is not consistent. However, if both $\|\widehat\omega - \omega\|_\infty = o_\Pb(1)$ and $\|\widehat\mu - \mu\|_\infty = o_\Pb(1)$, then $\widehat\psi_\omega$ will also achieve the semiparametric efficiency bound $\var(\varphi)$ because, in this case, $\overline\varphi = \varphi$, since $\overline\omega = \omega$, $\overline\mu = \mu$, and $\overline\varphi_\omega = 0$. Similar considerations apply to $\widehat\psi_\mu$ with the roles of $\widehat\omega$ and $\widehat\mu$ reversed.  
	\end{remark}
	\subsection{Main estimator \label{sec:main_estimator}}
	To be useful in practice, the estimators presented in Section \ref{sec:max_estimators} require knowledge of whether $\omega$ or $\mu$ is easier to estimate. In this section, we consider the estimator $\widehat\psi = \widehat\psi_\DR - T_n$, for
	\begin{align*}
		T_n = \frac{1}{n(n-1)} \mathop{\sum\sum}_{1 \leq i \neq j \leq n} (A_i \widehat\omega_i - 1) \frac{K_h(\widehat\mu_j - \widehat\mu_i) K_h(\widehat\omega_j - \widehat\omega_i)}{\widehat{Q}(\widehat\omega_i, \widehat\mu_i)} A_j(Y_j - \widehat\mu_j)
	\end{align*}
	where $\widehat{Q}(\widehat\omega_i, \widehat\mu_i) = (n-1)^{-1} \sum_{i \neq j} A_jK_h(\widehat\omega_j - \widehat\omega_i) K_h(\widehat\mu_j - \widehat\mu_i)$. We will show that, under certain conditions, the risk of this estimator is of the same order as the lower bound on the risk of any estimator derived in Proposition \ref{siva_prop} (Claim 3), thereby establishing sufficient conditions under which this estimator is minimax optimal. 
	
	A key difference between $\widehat\psi$ and either $\widehat\psi_\omega$ or $\widehat\psi_\mu$ is that $\widehat\psi$ is tailored to models where both $f_\omega(t_1, t_2, t_3; D^n)$ and $f_\mu(t_1, t_2, t_3; D^n)$ have some smoothness. The central term multiplying the two residuals $A_i\widehat\omega_i - 1$ and $A_j(Y_j - \widehat\mu_j)$ is meant to act as a kernel of a local regression on $(\widehat\omega, \widehat\mu)$ rather than on $\widehat\omega$ or $\widehat\mu$ alone. The reason for this change is that $\widehat\psi$ is designed to correct the bias of $\widehat\psi_\DR$ by subtracting off an estimate of $R_n$ even when it is not known whether $\omega$ or $\mu$ is easier to estimate. In fact, one can see that estimating $f_\omega(\widehat\omega, \omega, \widehat\mu; D^n)$ and $f_\mu(\widehat\mu, \mu, \widehat\omega; D^n)$ can be carried out simply by modifying the outcome variable as both residuals would, in practice, only be regressed onto $(\widehat\omega, \widehat\mu)$. This then allows for the construction of estimates of $R_n$ that would be essentially the same whether one expresses $R_n$ in terms of $f_\omega$ or $f_\mu$. One caveat is that summing over $i$ in the expression for $T_n$ should return an estimate of $f_\mu(\widehat\mu_j, \mu_j, \widehat\omega_j; D^n)$, but this is not exactly the case because $\widehat{Q}(\widehat\omega_i, \widehat\mu_i)$ is localized at $(\widehat\omega_i, \widehat\mu_i)$ instead of at $(\widehat\omega_j, \widehat\mu_j)$. In Proposition \ref{prop:min_squared}, we deal with this issue by invoking a Lipschitz assumption on the density of $(\widehat\omega, \widehat\mu)$ among units with $A = 1$. Whether this condition (or similar ones) can be avoided remains an open question. The main advantage of using $\widehat\psi$ versus either $\widehat\psi_\omega$ or $\widehat\psi_\mu$ is that $\widehat\psi$ is able to correct for $R_n$ when $R_n$ is expressed as a function of $f_\omega$ or $f_\mu$ using the same term $T_n$. Relative to $\widehat\psi_\omega$ and $\widehat\psi_\mu$, the price is a moderate increase in the variance because of the product of two kernels, which is needed for estimating the regressions on a two-, as opposed to one-, dimensional domain. In the following proposition, we summarize our analysis of the error $\widehat\psi - \psi$. To describe the linear term when either $\omega$ or $\mu$ is not consistently estimated, let us define
	\begin{align*}
		& \widehat\varphi_{\omega\mu}(Z; D^n) = \E(\widehat\omega - \omega\mid A = 1, \widehat\omega, \widehat\mu, D^n)A(Y - \widehat\mu) + \E(\mu - \widehat\mu\mid A = 1, \widehat\omega, \widehat\mu, D^n)(A\widehat\omega - 1), \\
		& \overline\varphi_{\omega\mu}(Z) = \E(\overline\omega - \omega\mid A = 1, \overline\omega, \overline\mu)A(Y - \overline\mu) + \E(\mu - \overline\mu\mid A = 1, \overline\omega, \overline\mu)(A\overline\omega - 1).
	\end{align*}
	\begin{proposition} \label{prop:min_squared}
		Suppose that the distribution of $(\widehat\omega, \widehat\mu)$ among units with $A = 1$ has a density with respect to the Lebesgue measure that is Lipschitz. Suppose that $\widehat{Q}(\widehat\omega_i, \widehat\mu_i)$ is bounded away from zero for $1 \leq j \leq n$. 
		Further, suppose that $f_\omega(t_1, t_2, t_3; D^n)$ and $f_\mu(t_1, t_2, t_3; D^n)$ are H\"{o}lder smooth of orders $\alpha$ and $\beta$, respectively, with $\alpha, \beta \in[0, 1]$. Finally, assume that $nh^2 \to \infty$. Then, 
		\begin{enumerate}
			\item Bias of $\widehat\psi$:
			\begin{align*}
				&\left|\E(\widehat\psi - \psi \mid D^n) \right| \lesssim \left(\|\widehat\omega - \omega\| h^\alpha + \|\widehat\omega - \omega\|^{1 + \alpha} \right) \land \left(\|\widehat\mu - \mu\| h^\beta + \|\widehat\mu - \mu\|^{1 + \beta}  \ + \ \frac{\|\widehat\omega - \omega\| \|\widehat\mu - \mu\|}{\sqrt{nh^2}} \right) \\
				& \hphantom{\left|\E(\widehat\psi - \psi \mid D^n) \right| \lesssim } \quad \land \|\widehat\omega - \omega\| \|\widehat\mu - \mu\|, 
			\end{align*}
			\item Variance of $\widehat\psi$:
			\begin{align*}
				\var (\widehat\psi \mid D^n) \lesssim  n^{-1} +  (n^3h^2)^{-1/2} + \|\widehat\omega - \omega\|^2 \|\widehat\mu - \mu\|^2(nh^2)^{-1},
			\end{align*}
			\item Linear expansion of $\widehat\psi - \psi$:
			\begin{align*}
				\widehat\psi - \psi & = (\Pn - \Pb)(\overline\varphi - \overline\varphi_{\omega\mu}) \\
				&\hphantom{=} + O_\Pb\left((\|\widehat\mu-\mu\|^{1+\beta} \ + \ h^\beta \|\widehat\mu - \mu\|) \land (\|\widehat\omega-\omega\|^{1+\alpha} \ + \ h^\alpha \|\widehat\omega - \omega\|) \land \|\widehat\omega-\omega\|\|\widehat\mu - \mu\|  \right) \\
				& \hphantom{=} + O_\Pb\left( (n^3h^2)^{-\frac{1}{4}} + \|\widehat\omega - \omega\|\|\widehat\mu - \mu\| (nh^2)^{-\frac{1}{2}} \right) + o_P(n^{-1/2}) \\
				& \hphantom{=} + O_\Pb\left(n^{-\frac{1}{2}}\left[\{(h^\alpha + \|\widehat\omega - \omega\|_\infty) \land \|\widehat\mu - \mu\|_\infty\} + \{(h^\beta + \|\widehat\mu - \mu\|_\infty) \land \|\widehat\omega - \omega\|_\infty)\} + h\right]\right)
			\end{align*}
			provided that $\|\widehat\varphi_{\omega\mu} - \overline\varphi_{\omega\mu}\| = o_\Pb(1)$.
		\end{enumerate}
	\end{proposition}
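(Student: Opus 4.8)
The plan is to treat $T_n$, conditionally on the training sample $D^n$, as a second-order $U$-statistic with an asymmetric, sample-dependent kernel, to obtain separately its conditional bias, its conditional variance, and its linear projection, and then to substitute these into the decomposition $\widehat\psi_\DR - \psi = (\Pn - \Pb)\overline\varphi + R_n + o_\Pb(n^{-1/2})$ of \eqref{eq:main_decomposition}. The structure parallels the proofs of Propositions \ref{prop:max_squared_a} and \ref{prop:max_squared_b}, with the one-dimensional localization replaced by a two-dimensional one --- this is why $h$ is replaced by $h^2$ in the variance orders and the condition becomes $nh^2 \to \infty$ --- the genuinely new ingredient being the need to transfer the localization encoded in $\widehat Q(\widehat\omega_i, \widehat\mu_i)$ from the ``$i$-point'' $(\widehat\omega_i, \widehat\mu_i)$ to the ``$j$-point'' $(\widehat\omega_j, \widehat\mu_j)$.

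\textbf{Hoeffding decomposition and the random denominator.} I would first write $T_n = \E(T_n \mid D^n) + L_n + U_n$, with $L_n$ the linear projection and $U_n$ the degenerate second-order part, conditionally on $D^n$. Because $\widehat Q(\widehat\omega_i, \widehat\mu_i)$ couples all the observations, I would compare it with its smoothed population version $Q(\widehat\omega_i, \widehat\mu_i; D^n) = \E\{A_2 K_h(\widehat\omega_i - \widehat\omega_2) K_h(\widehat\mu_i - \widehat\mu_2) \mid D^n\}$: being a bivariate kernel-density-type object, $\widehat Q$ concentrates around $Q$ in $L_2$ at rate $(nh^2)^{-1/2}$ once $nh^2 \to \infty$, and since $\widehat Q$ is bounded away from zero, replacing $1/\widehat Q$ by $1/Q$ produces a further $U$-statistic whose contribution is of order $\|\widehat\omega - \omega\| \, \|\widehat\mu - \mu\| \, (nh^2)^{-1/2}$ once the small-residual structure of the two factors $A_i \widehat\omega_i - 1$ and $A_j(Y_j - \widehat\mu_j)$ is exploited. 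This is the origin both of the last summand in the variance bound and of the $(nh^2)^{-1/2}$ summand on the third line of the expansion.

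\textbf{Bias: two expansions plus Cauchy--Schwarz.} I would bound $|\E(T_n - R_n \mid D^n)|$ in two symmetric ways, starting from the two representations $R_n = \E\{(A\widehat\omega - 1) f_\omega(\widehat\omega, \omega, \widehat\mu; D^n) \mid D^n\} = \E\{A(Y - \widehat\mu) f_\mu(\widehat\mu, \mu, \widehat\omega; D^n) \mid D^n\}$ recorded in Section \ref{sec:upper_bounds}. For the $f_\omega$-route, write $A_j(Y_j - \widehat\mu_j) = A_j f_\omega(\widehat\omega_j, \omega_j, \widehat\mu_j; D^n) + A_j \epsilon_j$ with $\E\{A_j \epsilon_j \mid \widehat\omega_j, \omega_j, \widehat\mu_j, D^n\} = 0$; the $\epsilon_j$-part has conditional mean zero (it is absorbed into $U_n$), while in the $f_\omega$-part the substitution $f_\omega(\widehat\omega_j, \omega_j, \widehat\mu_j; D^n) \mapsto f_\omega(\widehat\omega_i, \omega_i, \widehat\mu_i; D^n)$ makes the numerator and denominator of the $\widehat Q$-ratio cancel exactly, collapsing the leading piece to $\tfrac1n \sum_i (A_i \widehat\omega_i - 1) f_\omega(\widehat\omega_i, \omega_i, \widehat\mu_i; D^n)$, whose conditional mean equals $R_n$ by the first representation. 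The remainder is controlled by splitting $A_i \widehat\omega_i - 1 = A_i(\widehat\omega_i - \omega_i) + (A_i \omega_i - 1)$: the $(A_i \omega_i - 1)$-part has conditional mean exactly zero, since everything else multiplying it is a function of $X_i$ and of the other observations and is hence mean-independent of $(A_i, Y_i)$ given $X_i$; for the $A_i(\widehat\omega_i - \omega_i)$-part, the H\"{o}lder-$\alpha$ inequality $|f_\omega(\widehat\omega_j, \omega_j, \widehat\mu_j; D^n) - f_\omega(\widehat\omega_i, \omega_i, \widehat\mu_i; D^n)| \lesssim h^\alpha + |\widehat\omega_i - \omega_i|^\alpha + |\widehat\omega_j - \omega_j|^\alpha$ on the kernel support, together with a Cauchy--Schwarz step that splits $K_h K_h / \widehat Q$ into two square roots and uses $\widehat Q \gtrsim 1$ and $\E\{A_j K_h K_h \mid X_i, D^n\} \lesssim 1$, yields $\|\widehat\omega - \omega\| h^\alpha + \|\widehat\omega - \omega\|^{1+\alpha}$. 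The $f_\mu$-route is symmetric --- write $A_i \widehat\omega_i - 1 = A_i f_\mu(\widehat\mu_i, \mu_i, \widehat\omega_i; D^n) + A_i \eta_i + (A_i \omega_i - 1)$ with $\E\{A_i \eta_i \mid \widehat\mu_i, \mu_i, \widehat\omega_i, D^n\} = 0$ --- except that now replacing $f_\mu(\widehat\mu_i, \mu_i, \widehat\omega_i; D^n) \mapsto f_\mu(\widehat\mu_j, \mu_j, \widehat\omega_j; D^n)$ does \emph{not} collapse the $\widehat Q$-ratio, because $\widehat Q$ is localized at $(\widehat\omega_i, \widehat\mu_i)$ rather than $(\widehat\omega_j, \widehat\mu_j)$. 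I would resolve this by replacing $\widehat Q(\widehat\omega_i, \widehat\mu_i)$ with $\widehat Q(\widehat\omega_j, \widehat\mu_j)$: on the kernel support the two points differ by at most $h$ in each coordinate, so the replacement error is dominated by $|\widehat Q(\widehat\omega_i, \widehat\mu_i) - Q(\widehat\omega_i, \widehat\mu_i)| + |Q(\widehat\omega_i, \widehat\mu_i) - Q(\widehat\omega_j, \widehat\mu_j)| + |Q(\widehat\omega_j, \widehat\mu_j) - \widehat Q(\widehat\omega_j, \widehat\mu_j)|$, the two end terms of order $(nh^2)^{-1/2}$ in $L_2$ and the middle term of order $h$ by Lipschitz continuity of $Q$ --- which is exactly where the Lipschitz-density assumption enters. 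Retaining only the ``mean'' part $A_j(\mu_j - \widehat\mu_j)$ of $A_j(Y_j - \widehat\mu_j)$ (the $A_j(Y_j - \mu_j)$ part again having conditional mean zero), the mismatch contributes $\|\widehat\omega - \omega\| \, \|\widehat\mu - \mu\| \, (nh^2)^{-1/2}$ (plus a lower-order $h \|\widehat\omega - \omega\| \|\widehat\mu - \mu\|$ term), and, with $\widehat Q$ correctly anchored, the H\"{o}lder-$\beta$ remainder gives $\|\widehat\mu - \mu\| h^\beta + \|\widehat\mu - \mu\|^{1+\beta}$. Combining the two expansions with the crude Cauchy--Schwarz bound $|R_n| + |\E(T_n \mid D^n)| \lesssim \|\widehat\omega - \omega\| \|\widehat\mu - \mu\|$ and taking the minimum gives the bias claim.

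\textbf{Variance, linear expansion, and the main obstacle.} The variance claim follows from the Hoeffding decomposition exactly as in Propositions \ref{prop:max_squared_a}--\ref{prop:max_squared_b} with $h$ replaced by $h^2$: $L_n$ is an average of a bounded summand ($n^{-1}$); the degenerate kernel of $U_n$ has second moment $\lesssim h^{-2}$ (a product of two width-$h$ kernels supported on a set of measure $\asymp h^2$), contributing $n^{-2} h^{-2} \le (n^3 h^2)^{-1/2}$ under $nh^2 \to \infty$; and the $1/\widehat Q \mapsto 1/Q$ replacement adds $\|\widehat\omega - \omega\|^2 \|\widehat\mu - \mu\|^2 (nh^2)^{-1}$. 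For the linear expansion, since $\widehat\psi = \widehat\psi_\DR - T_n$, $\E(T_n \mid D^n) = R_n + (\text{bias remainder})$, and $L_n, U_n$ are mean-zero given $D^n$, \eqref{eq:main_decomposition} gives $\widehat\psi - \psi = (\Pn - \Pb)\overline\varphi - (\text{bias remainder}) - L_n - U_n + o_\Pb(n^{-1/2})$; the degenerate part has the variance just computed, hence $U_n = O_\Pb((n^3 h^2)^{-1/4} + \|\widehat\omega - \omega\| \|\widehat\mu - \mu\| (nh^2)^{-1/2})$ (the third line), the bias remainder is the second line, and $L_n$ coincides with $(\Pn - \Pb) \widehat\varphi_{\omega\mu}$ up to kernel-smoothing errors that, having to be controlled uniformly rather than in $L_2$, carry $\|\widehat\omega - \omega\|_\infty$, $\|\widehat\mu - \mu\|_\infty$ and the bandwidth $h$ (the fourth line), while $(\Pn - \Pb)\widehat\varphi_{\omega\mu} = (\Pn - \Pb)\overline\varphi_{\omega\mu} + o_\Pb(n^{-1/2})$ by Lemma \ref{lemma:edward} together with $\|\widehat\varphi_{\omega\mu} - \overline\varphi_{\omega\mu}\| = o_\Pb(1)$. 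Collecting terms yields the expansion claim. The genuinely hard part, and the reason both $nh^2 \to \infty$ and the Lipschitz-density assumption are imposed, is the $f_\mu$-direction of the bias: the single denominator $\widehat Q$ being anchored at the ``$i$-point'' breaks the exact cancellation that makes the $f_\omega$-direction clean, so one must transport the localization across an $O(h)$ gap using the concentration of the bivariate object $\widehat Q$ about its Lipschitz smoothed mean; everything else is a two-dimensional rerun of the one-dimensional arguments.
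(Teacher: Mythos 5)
Your proposal follows essentially the same route as the paper's proof: the same U-statistic/Hoeffding decomposition with the population denominator $Q$ as intermediary, the same two-sided bias bound via the $f_\omega$- and $f_\mu$-representations of $R_n$ (the $f_\omega$-direction collapsing exactly against $\widehat{Q}(\widehat\omega_i,\widehat\mu_i)$, the $f_\mu$-direction requiring the denominator to be transported from the $i$-point to the $j$-point at cost $(nh^2)^{-1/2}+h$ via kernel concentration plus the Lipschitz density), and the same treatment of the variance and of the linear term through $\kappa_1+\kappa_2-\widehat\varphi_{\omega\mu}$. The one cosmetic difference is that you transport to the empirical $\widehat{Q}(\widehat\omega_j,\widehat\mu_j)$ rather than the paper's population $Q(\widehat\omega_j,\widehat\mu_j)$; since $\widehat{Q}(\widehat\omega_j,\widehat\mu_j)$ contains the summand $A_iK_hK_h/(n-1)$ and hence depends on $A_i$, you would need a leave-one-out version $\widehat{Q}_{-i}$ (as in Proposition \ref{prop:max_squared_b}) to preserve the identity $\E\{(A_i\widehat\omega_i-1)(\cdot)\}=\E\{A_i(\widehat\omega_i-\omega_i)(\cdot)\}$, an $O((nh^2)^{-1})$ adjustment that does not change any of the rates.
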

	It can be seen from Proposition \ref{prop:min_squared} that $\widehat\psi$ has the potential to improve upon $\widehat\psi_\omega$ and $\widehat\psi_\mu$ in the sense that its bias is the minimum between their biases. This comes at the price of a smoothness assumption on both $f_\omega$ and $f_\mu$ as well as a Lipschitz condition on the density of $(\widehat\omega(X), \widehat\mu(X))$. Under these conditions, $\widehat\psi$ also improves upon $\widehat\psi_\DR$ and can deliver doubly-robust root-$n$ inference when either $\widehat\omega$ or $\widehat\mu$ converges at $n^{-1/4}$ rates, as long as $\alpha = \beta = 1$ and $h \asymp n^{-1/4}$. In practice, choosing $h$ can be nontrivial. In Section \ref{sec:sims}, we select the cross-validated bandwidth that an estimator of the regression function $f_\omega(t_1, t_2, t_3; D^n)$ would choose. This choice can be implemented using off-the-shelf routines but we do not claim any optimality for it. How to choose the bandwidth in practice remains largely an open question. We note that this difficulty in selecting the tuning parameter also arises in \cite{dukes2021doubly}.
	\begin{remark}\label{remark:main_rootn}
		Consider the case where $\alpha = \beta = 1$ and set $h \asymp n^{-1/4}$. Then, it holds that
		\begin{align*}
			\sqrt{n}(\widehat\psi - \psi) = \sqrt{n}(\Pn - \Pb)(\overline\varphi - \overline\varphi_{\omega\mu}) + O_\Pb(n^{1/2} \{\|\widehat\omega - \omega\|^2 \ \land \ \|\widehat\mu - \mu\|^2\}) + O_\Pb(n^{1/4} \|\widehat\omega - \omega\| \|\widehat\mu - \mu\|) + o_\Pb(1)
		\end{align*}
		provided that either $\|\widehat\omega - \omega\|_\infty = o_\Pb(1)$ or $\|\widehat\mu - \mu\|_\infty = o_\Pb(1)$. In this light, $\sqrt{n}(\widehat\psi - \psi) \indist N(0, \var(\overline\varphi - \overline\varphi_{\omega\mu}))$ as long as $\|\widehat\omega - \omega\| = o_\Pb(n^{-1/4})$ or $\|\widehat\mu - \mu\| = o_\Pb(n^{-1/4})$. Notice that as long as either $\overline\omega = \omega$ or $\overline\mu = \mu$, we have
		\begin{align*}
			\overline\varphi - \overline\varphi_{\omega\mu} = A\E(\omega \mid A = 1, \overline\omega, \overline\mu)\{Y - \E(\mu \mid A = 1, \overline\omega, \overline\mu)\} + \E(\mu \mid A = 1, \overline\omega, \overline\mu)\}.
		\end{align*}
		From this, one can deduce that $\widehat\varphi$ is semiparametric efficient if, in addition to the conditions above, both $\overline\omega = \omega$ and $\overline\mu = \mu$.
	\end{remark}
	
	\begin{remark}
		We believe our construction of $\widehat\psi$ sheds light on a point raised by \cite{benkeser2017doubly} about the putative superiority of TMLE over one-step corrections when it comes to performing doubly-robust inference (Section 4 in \cite{benkeser2017doubly}). The authors note the difficulty in deriving doubly-robust, asymptotic linear estimators that take the form of $\widehat\psi_\DR$ plus a correction term. This apparent difference in performance is surprising, as one-step estimators and TMLEs are grounded in the same semiparametric efficiency theory and typically, at least asymptotically and in regimes where $\psi$ admits root-$n$-consistent estimators, share largely the same properties. Proposition \ref{prop:min_squared} shows that, under certain conditions, there exists in fact an estimator that can estimate $R_n$ ``in one step" and could thus be used for doubly-robust inference.
	\end{remark}
	Finally, in Appendix \ref{appendix:partially_linear_logit}, we briefly discuss how our approach can be applied to the partially linear logistic model. This is a type of parameter which the approach from \cite{dukes2021doubly} does not readily extend to. Instead, we show how a suitable modification of $\widehat\psi$ can deliver doubly-robust inference in this setting as well. 
	
	\subsubsection{Regularity}
	From \cite{van2000asymptotic} (page 365), an estimation sequence $\widehat\psi$ is regular at $P$ for estimating $\psi(P)$ (relative to the tangent set, i.e., the set of all scores) if, for every possible score, there exists a distribution $L$ such that 1) $L$ does not depend on the score and 2) it holds $n^{1/2}(\widehat\psi - \psi_n) \stackrel{p_n}{\indist} L$, where $p_n$ denote the distribution indexed by local parameters. In this section, we investigate whether the estimator $\widehat\psi$ is irregular. To do so, we consider the case where $Y$ is binary and follow the analysis described in Section 4.2 in \cite{dukes2021doubly}. See also the recent pre-print \cite{van2024automatic} for an analysis on the regularity properties of estimators when certain nuisance functions are inconsistently estimated.
 
    Let $p_\eta(O)$ be the density of one observation, indexed by the nuisance parameters $\eta = \{\omega(x), \mu(x), f(x)\}$, for $f$ the density of $X$:
	\begin{align*}
		p_\eta(O) = \frac{f}{\omega} (\omega - 1)^{1-A} \mu^{YA} (1-\mu)^{(1-Y)A}.
	\end{align*}
	We follow \cite{robins2009quadratic} and, given directions $d_\omega$, $d_\mu$ and $d_f$, we consider parametric submodels $\omega_t = \omega + td_\omega$, $\mu_t = \mu + td_\mu$ and $f_t = f + tfd_f$. For $f_t$ to be a valid density, we require $\int d_f(x) f(x) dx = 0$ and assume that $t$ is sufficiently small so that $f_t(x) \geq 0$. Let the corresponding density be $p_{\eta_t}(o)$, with score
	\begin{align*}
		B(d_\omega, d_\mu, d_f) = \frac{d}{dt} \log p_{\eta_t}(O) \Big|_{t=0} = -\frac{A\omega - 1}{\omega(\omega-1)} d_\omega +\frac{A(Y - \mu)}{\mu(1-\mu)} d_\mu + d_f.
	\end{align*}
	Recall that the parameter is $\psi = \int \mu(x) f(x) dx$. The parameter implied by the submodels is 
	\begin{align*}
		\psi_t = \int \mu_t f_t d\nu = \psi + t\int \{d_\mu(x) + \mu(x)d_f(x)\}f(x) dx + t^2\int d_\mu(x) d_f(x) f(x) dx.
	\end{align*}
	In the following proposition, we investigate the regularity of $\widehat\psi$ (or lack thereof) by deriving its sampling distribution under sequences of distributions indexed by local parameters. For shorthand notation, let $h(\omega, \overline\mu) = \E(\mu \mid A = 1, \omega, \overline\mu)$ and $g(\overline\omega, \mu) = \E(\omega \mid A = 1, \overline\omega, \mu)$. 
 \begin{proposition}\label{prop:regularity}
     Let the setup be as described above. Let $p_{n}(o)$ denote the density of the observations $p_{\eta_t}$ setting $t = n^{-1/2}$ and $\psi_n = \psi_{n^{-1/2}}$, i.e., the value of the functional at $p_n$. Suppose that
     \begin{align*}
         n^{1/2}(\widehat\psi - \psi) = n^{1/2}(\Pn - \Pb)(\overline\varphi - \overline\varphi_{\omega\mu}) + o_P(1).
     \end{align*}
     Then, under the local distribution $p_n$:
     \begin{align*}
         n^{1/2}(\widehat\psi - \psi_n) \indist N(\theta, \var(\overline\varphi - \overline\varphi_{\omega\mu})),
     \end{align*}
     where
     \begin{align*}
         \theta = \begin{cases}
             0 & \text{if } \ \overline\omega = \omega \text{ and } \overline\mu = \mu, \\
             - \E[A\{\mu - h(\omega, \overline\mu)\}d_\omega] & \text{if } \ \overline\omega = \omega \text{ and } \overline\mu \neq \mu \\
             \E[\{Ag(\overline\omega, \mu) - 1\} d_\mu] & \text{if }  \ \overline\omega \neq \omega \text{ and } \overline\mu = \mu
         \end{cases}
     \end{align*}
 \end{proposition}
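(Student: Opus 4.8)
The plan is to obtain the local limit law by a standard Le Cam argument: establish local asymptotic normality and contiguity, compute the joint limit under $P$ of the rescaled estimator and the log-likelihood ratio, apply Le Cam's third lemma to pass to the sequence $p_n$, subtract the deterministic drift $n^{1/2}(\psi_n - \psi)$, and finally evaluate the resulting mean shift $\theta$ in each of the three regimes by a direct covariance computation that uses the explicit form of $\overline\varphi - \overline\varphi_{\omega\mu}$ recorded in Remark \ref{remark:main_rootn} and the binary structure of $Y$.

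\emph{Step 1 (LAN and contiguity).} Because $p_\eta$ has the explicit product form displayed before the statement and all nuisances are bounded and bounded away from $0$ and $1$, the path $t\mapsto p_{\eta_t}$ is differentiable in quadratic mean at $t=0$ with score $B\equiv B(d_\omega,d_\mu,d_f)$, so that
\begin{align*}
\log\prod_{i=1}^{n}\frac{p_n(O_i)}{p(O_i)} \;=\; \frac{1}{\sqrt n}\sum_{i=1}^{n}B(O_i) \;-\; \tfrac12\var(B) \;+\; o_P(1).
\end{align*}
By Le Cam's first lemma, $\prod_i p_n(O_i)$ and $\prod_i p(O_i)$ are mutually contiguous; in particular the assumed linearization $n^{1/2}(\widehat\psi-\psi)=n^{1/2}(\Pn-\Pb)(\overline\varphi-\overline\varphi_{\omega\mu})+o_P(1)$ continues to hold with remainder $o_{p_n}(1)$. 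Applying the bivariate CLT under $P$ to the i.i.d.\ mean-zero vector $\big(\overline\varphi-\overline\varphi_{\omega\mu}-\psi,\;B\big)$ gives joint asymptotic normality of $\big(n^{1/2}(\Pn-\Pb)(\overline\varphi-\overline\varphi_{\omega\mu}),\;\log\prod_i p_n(O_i)/p(O_i)\big)$ with asymptotic covariance $\tau:=\cov(\overline\varphi-\overline\varphi_{\omega\mu},B)$ between the two coordinates and log-likelihood-ratio mean $-\tfrac12\var(B)$. Le Cam's third lemma then yields $n^{1/2}(\widehat\psi-\psi)\indist N(\tau,\var(\overline\varphi-\overline\varphi_{\omega\mu}))$ under $p_n$.

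\emph{Step 2 (drift and reduction to a covariance).} From the displayed expansion of $\psi_t$ with $t=n^{-1/2}$, $n^{1/2}(\psi_n-\psi)=\int\{d_\mu(x)+\mu(x)d_f(x)\}f(x)\,dx+O(n^{-1/2})\to \dot\psi:=\int\{d_\mu+\mu d_f\}f\,dx$, the quadratic-in-$t$ term being negligible. Subtracting, $n^{1/2}(\widehat\psi-\psi_n)\indist N(\tau-\dot\psi,\var(\overline\varphi-\overline\varphi_{\omega\mu}))$ under $p_n$, so it remains to show $\tau-\dot\psi$ equals the stated $\theta$. Writing $B=-\frac{A\omega-1}{\omega(\omega-1)}d_\omega+\frac{A(Y-\mu)}{\mu(1-\mu)}d_\mu+d_f$ and using $\pi\omega=1$, $\E(A\mid X)=\pi$, and $\var(Y\mid A=1,X)=\mu(1-\mu)$, one computes $\tau$ termwise. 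When $\overline\omega=\omega$ and $\overline\mu=\mu$, Remark \ref{remark:main_rootn} gives $\overline\varphi-\overline\varphi_{\omega\mu}=\varphi$, so $\tau=\cov(\varphi,B)=\dot\psi$ by pathwise differentiability and $\theta=0$. When $\overline\omega=\omega$, $\overline\mu\neq\mu$, using $\overline\varphi-\overline\varphi_{\omega\mu}=A\omega\{Y-h(\omega,\overline\mu)\}+h(\omega,\overline\mu)$ and $\E(h\mid A=1,X)=h$, the $d_f$-, $d_\mu$-, and $d_\omega$-contributions to $\tau$ are $\int\mu d_f f$, $\int d_\mu f$, and $-\E[A\{\mu-h(\omega,\overline\mu)\}d_\omega]$, so $\tau-\dot\psi=-\E[A\{\mu-h(\omega,\overline\mu)\}d_\omega]$. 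Symmetrically, when $\overline\mu=\mu$, $\overline\omega\neq\omega$, using $\overline\varphi-\overline\varphi_{\omega\mu}=Ag(\overline\omega,\mu)(Y-\mu)+\mu$, the $d_\omega$-contribution vanishes, the $d_f$- and $d_\mu$-contributions are $\int\mu d_f f$ and $\E[Ag(\overline\omega,\mu)d_\mu]$, giving $\tau-\dot\psi=\E[\{Ag(\overline\omega,\mu)-1\}d_\mu]$, as claimed.

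\emph{Main obstacle.} The Le Cam machinery (Steps 1--2 up to the covariance reduction) is routine. The real work is the case-by-case evaluation of $\tau=\cov(\overline\varphi-\overline\varphi_{\omega\mu},B)$: one must carefully track that $h(\omega,\overline\mu)$ and $g(\overline\omega,\mu)$ are $\sigma(X)$-measurable (so conditioning on $A=1,X$ leaves them fixed), repeatedly invoke $\pi\omega=1$ and the binary-outcome identity $\E[(Y-a)(Y-b)\mid A=1,X]=\mu(1-\mu)+(\mu-a)(\mu-b)$, and verify that the three cross terms collapse exactly to the stated expressions — in particular confirming that the first regime reproduces the efficient influence function so that the orthogonality identity $\cov(\varphi,B)=\dot\psi$ can be invoked to conclude $\theta=0$.
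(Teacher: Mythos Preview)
Your proposal is correct and follows essentially the same approach as the paper: establish LAN, apply Le Cam's third lemma to obtain the mean shift $\tau-\dot\psi$ with $\tau=\cov(\overline\varphi-\overline\varphi_{\omega\mu},B)$, and evaluate this covariance case by case using the explicit form of $\overline\varphi-\overline\varphi_{\omega\mu}$ from Remark~\ref{remark:main_rootn}. Your treatment is slightly more explicit than the paper's about the contiguity step needed to transfer the $o_P(1)$ remainder to $o_{p_n}(1)$, and in case~1 you invoke pathwise differentiability directly rather than computing $\cov(\varphi,B)$ by hand, but these are cosmetic differences.
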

	From the third statement in Proposition \ref{prop:min_squared} and Remark \ref{remark:main_rootn}, we can derive sufficient conditions such that
	\begin{align*}
		n^{1/2}(\widehat\psi - \psi) = n^{1/2}(\Pn - \Pb)(\overline\varphi - \overline\varphi_{\omega\mu}) + o_\Pb(1),
	\end{align*}
	e.g., when $\|\widehat\omega -\omega\| \ \land \ \|\widehat\mu - \mu\| = o_\Pb(n^{-1/4})$ and $\alpha = \beta = 1$. Proposition \ref{prop:regularity} shows that, under these conditions, $\widehat\psi$ is asymptotically linear but it may not be regular unless both nuisance functions are consistently estimated. A similar finding was previously noted by \cite{dukes2021doubly} for the expected conditional covariance functional. We therefore echo their conclusion that, in most applications, one should strive to estimate both nuisances as well as possible and rely on the further correction included in $\widehat\psi$ for protection against possible misspecification. 
	\section{Connections to other literature}
	
	\subsection{Higher order influence functions \label{sec:connection_to_HOIFs}}
	In this section, we expand on the similarities and differences between $\widehat\psi$ and the estimator of $\psi$ based on the (approximate) second-order influence function. The estimator is described in detail in a series of articles by Robins and co-authors \parencite{robins2008higher, robins2009quadratic, robins2017minimax, liu2017semiparametric}. It takes the form:
	\begin{align*}
		\widehat\psi_{HOIF-2} = \widehat\psi_\DR - \frac{1}{n(n-1)} \mathop{\sum\sum}\limits_{1 \leq i \neq j \leq n} (A_i \widehat\omega_i - 1) p_i^T \widehat\Omega^{-1} p_j A_j(Y_j - \widehat\mu_j),
	\end{align*}
	where $p(x) = (p_0(x), p_1(x), \ldots, p_k(x))$ is a vector of basis functions suitable for approximating the functions $\omega(x) - \widehat\omega(x)$ and $\mu(x) - \widehat\mu(x)$, and $\Omega = \E\{Ap(X)p^T(X)\}$. When $X$ is multivariate, $p(x)$ can be taken to be a tensor product of univariate basis functions. The tuning parameter $k$ is chosen to diverge with the sample size $n$ to obtain progressively better approximations of $R_n$. A natural estimator of $\Omega$ is its empirical counterpart. However, in regimes of low-smoothness, $k$ is selected to grow faster than $n$ so that the empirical version is not invertible. In this case, an alternative estimator, which requires estimating the density of $X$, is $\int p(x) p^T(x) \widehat{g}(x) dx$, where $g(x)$ denotes the density of $X$ among the units with $A = 1$ multiplied by $\Pb(A = 1)$. See \cite{liu2017semiparametric} and \cite{liu2023new} for additional details. 
	
	One can see that $\widehat\psi$ and $\widehat\psi_{HOIF-2}$ differ only in that $\widehat\Pi_k(x_i, x_j) \equiv p_i^T\widehat\Omega^{-1} p_j$ in $\widehat\psi_{HOIF-2}$ is replaced by 
	\begin{align*}
		\widehat\Pi_h(\widehat\omega_i, \widehat\mu_i, \widehat\omega_j, \widehat\mu_j) = \widehat{Q}^{-1}(\widehat\omega_j, \widehat\mu_j)K_h(\widehat\omega_i - \widehat\omega_j)K_h(\widehat\mu_i - \widehat\mu_j)
	\end{align*}
	in $\widehat\psi$. The kernel $\widehat\Pi_k(x_i, x_j)$ is designed to approximate the kernel of an orthogonal projection in $L_2(g)$. That is,
	\begin{align*}
		& \E_{Z_j \mid D^n} \left[\Pi_k(X_i, X_j)A_j\{Y_j - \mu_j(X)\}\right] = \mu(X_i) - \widehat\mu(X_i) + \text{approximation error}.
	\end{align*}
	The larger the size of the image of the projection (i.e., the larger $k$), the smaller the approximation error is. However, the variance of $\widehat\psi_{HOIF-2}$ increases with $k$, so $k$ needs to be carefully tuned to minimize the mean-square-error. In practice, one needs to use $\widehat\Pi_k$ in place of $\Pi_k$, so that the approximation error depends also on the accuracy with which $\Omega$ can be estimated. Additional higher-order corrections are needed when $\| \widehat\Omega^{-1} - \Omega^{-1}\|_{\text{op}}$ or $\|\widehat{g} - g\|_\infty$ are not sufficiently small. One key feature of the population version $\Pi_k(x_i, x_j)$ is that it satisfies
	\begin{align*}
		\E\left\{(A_i\widehat\omega_i - 1)\Pi_k(X_i, X_j)A_j(Y_j - \widehat\mu_j)\mid D^n \right\} = \int \Pi_k (\widehat\omega - \omega) (x) \Pi_k(\widehat\mu - \mu)(x) g(x)dx,
	\end{align*}
	where $\Pi_k(f)(x) = p(x)^T \Omega^{-1} \int p(x) f(x) g(x) dx$ denote the weighted orthogonal projection (with weight $g$) of $f$ onto the space spanned by $p$. This is crucial to obtain that
	\begin{align*}
		R_n - \E\left\{(A_i\widehat\omega_i - 1)\Pi_k(X_i, X_j)A_j(Y_j - \widehat\mu_j)\mid D^n \right\} = \int (I - \Pi_k)(\widehat\omega - \omega)(x) (I - \Pi_k)(b - \widehat\mu)(x) g(x) dx,
	\end{align*}
	which is a remainder error term that is particularly small because it is a product of approximation errors. The kernel $\widehat\Pi_h(\widehat\omega_i, \widehat\omega_j, \widehat\mu_i, \widehat\mu_j)$ appearing in $\widehat\psi$ does not yield a product of approximation errors even if the true $Q(\widehat\omega_j, \widehat\mu_j)$ is used. However, it is designed to achieve the similar goal of approximating the kernel of a local regression in the sense that
	\begin{align*}
		\E_{Z_j \mid D^n} \left[\Pi_h(\widehat\omega_i, \widehat\mu_i, \widehat\omega_j, \widehat\mu_j)A_j\{Y_j - \mu_j(X)\}\right] = f_\omega(\widehat\omega_i, \omega_i, \widehat\mu_i; D^n) + \text{approximation error}.
	\end{align*}
	Crucially, it retains a symmetry property from $\Pi_k(x_i, x_j)$, which is that taking the expectation with respect to $Z_j$ yields an approximation of $f_\mu(\widehat\omega_i, \omega_i, \widehat\mu_i; D^n)$, while taking the expectation with respect to $Z_i$ yields an approximation of $f_\omega(\widehat\mu_j, \mu_j, \widehat\omega_i; D^n)$:
	\begin{align*}
		\E_{Z_i \mid D^n} \left[\widehat\Pi_h(\widehat\omega_i, \widehat\mu_i, \widehat\omega_j,  \widehat\mu_j)(A_i\widehat\omega(X_i) - 1)\right] 
		& = f_\mu(\widehat\mu_j, \mu_j, \widehat\omega_i; D^n) + \text{approximation error}.
	\end{align*}
	This then allows for the estimation of $R_n$ relying on $f_\omega$ or $f_\mu$ depending on which one is smoother. One advantage of $\widehat\psi$ over $\widehat\psi_{HOIF-2}$ (or higher order versions) is that the kernel $\Pi_h$ in $\widehat\psi$ is low dimensional no matter how large the dimension of $X$ is. The advantage of estimators based on HOIFs is that they can better exploit the regularity of $\omega$ and $\mu$ (as opposed to being agnostic with respect to their structure and instead exploiting the regularity of $f_\omega$ and $f_\mu$) when the dictionary of basis functions is chosen appropriately. Further, they are grounded in a very general theoretical framework for functional estimation.
	
	\subsection{Doubly-robust inference \label{sec:connection_to_DRAL}}
	\cite{van2014targeted} and \cite{benkeser2017doubly} derive estimators of $\psi$ that remain $\sqrt{n}$-consistent and asymptotically normal even when either $\widehat\pi = 1/ \widehat\omega$ or $\widehat\mu$ (but not both) is misspecified. More recently, \cite{dukes2021doubly} have also proposed estimators enjoying this property but focused on the expected conditional covariance functional, $\E\{\cov(Y, A \mid X)\}$. In this section, we briefly outline some similarities and differences between $\widehat\psi$ and the estimator proposed in \cite{benkeser2017doubly} and implemented in the \texttt{R} package \texttt{drtmle} \parencite{benkeser2023doubly}. 
	
	To start, the authors write $R_n = \Pb(\widehat\psi_\DR - \psi)$ as
	\begin{align*}
		R_n & = \one(\overline\pi = \pi) \E\left\{\frac{(\overline\mu - \mu)(\widehat\pi - \pi)}{\pi} \mid D^n \right\} + \one(\overline\mu = \mu) \E\left\{\frac{(\widehat\mu - \mu)(\overline\pi - \pi)}{\overline\pi} \mid D^n \right\} \\
		& \hphantom{=} + \E\left\{ \frac{(\widehat\mu - \overline\mu)(\widehat{\pi} - \overline\pi)}{\overline\pi} \mid D^n \right\} + \E\left\{\frac{(\widehat\mu - \mu)(\widehat\pi - \pi)(\widehat\pi - \overline\pi)}{\overline\pi \widehat\pi} \mid D^n \right\}
	\end{align*}
	where we recall the notation $\overline\pi$ and $\overline\mu$ to denote the limits as $n \to \infty$ of $\widehat\pi$ and $\widehat\mu$ respectively. The last two terms are asymptotically negligible as long as $\widehat\pi$ and $\widehat\mu$ converge to their corresponding limits at $n^{-1/4}$-rates and either $\overline\mu = \mu$ or $\overline\pi = \pi$. Next, they make the observation that the first term can be written as
	\begin{align*}
		\E\left\{\frac{(\overline\mu - \mu)(\widehat\pi - \pi)}{\pi} \mid D^n \right\} = - \E\left[\E\{Y - \overline\mu(X) \mid A = 1, \widehat\pi(X), \pi(X), D^n\}\left\{\frac{\widehat\pi(X) - \pi(X)}{\pi(X)} \right\} \mid D^n \right]
	\end{align*}
	Notice that $\E\{Y - \overline\mu(X) \mid A = 1, \widehat\pi(X), \pi(X), D^n\}$ is essentially $s_\omega(t_1, t_2; D^n)$ up to the parametrization in terms of $\pi(x) = 1/\omega(x)$ as opposed to $\omega(x)$, and with $\overline\mu$ replacing $\widehat\mu$. Next, they show that
	\begin{align*}
		\E\left\{\frac{(\widehat\mu - \mu)(\overline\pi - \pi)}{\overline\pi} \mid D^n \right\} & = \E\left[A\{Y - \widehat\mu(X)\} \frac{\E\left\{\frac{A - \overline{\pi}(X)}{\overline\pi(X)} \mid \widehat\mu(X), \mu(X), D^n \right\}}{\E\{A \mid \widehat\mu(X), \mu(X), D^n\}} \mid D^n \right].
	\end{align*}
	Because one does not know whether $\overline\pi = \pi$ or $\overline\mu = \mu$, following the guiding principles of TMLE, they propose fluctuating $\widehat\pi$, $\widehat\mu$, as well as the estimators of the three other nuisance functions, $\widehat\E\left\{\frac{A - \widehat{\pi}(X)}{\widehat\pi(X)} \mid \widehat\mu(X), D^n \right\}$, $\widehat\E\{A \mid \widehat\mu(X), D^n\}$, and $\widehat\E\{Y - \widehat\mu(X) \mid A = 1, \widehat\pi(X), D^n\}$, so that they simultaneously satisfy:
	\begin{align*}
		& \frac{1}{n} \sum_{i = 1}^n\frac{A_i}{\widehat\pi^*(X_i)}\{Y_i - \widehat\mu^*(X_i)\} \approx 0, \\
		& \frac{1}{n} \sum_{i = 1}^n \widehat\E^*\{Y - \widehat\mu^*(X) \mid A = 1, \widehat\pi^*(X_i), D^n\}\left\{\frac{\widehat\pi^*(X_i) - A_i}{\widehat\pi^*(X_i)} \right\} \approx 0, \\
		& \frac{1}{n} \sum_{i = 1}^n A_i\{Y_i - \widehat\mu^*(X_i)\} \frac{\widehat\E^*\left\{\frac{A - \widehat{\pi}^*(X)}{\widehat\pi^*(X)} \mid \widehat\mu^*(X), D^n \right\}}{\widehat\E^*\{A \mid \widehat\mu^*(X), D^n\}} \approx 0.
	\end{align*}
	They propose an iterative procedure to solve these three moment conditions. However, their results are in terms of high-level conditions and the convergence properties of their algorithms are not fully analyzed. For example, two conditions are that 
	\begin{align*}
		& \int \left[\widehat\E^*\{Y - \overline\mu(X) \mid A = 1, \widehat\pi^*(x), D^n\} - \E\{Y - \overline\mu(X) \mid A = 1, \widehat\pi(x), \pi(x), D^n\}\right]^2 d\Pb(x) = o_\Pb(n^{-1/2}), \\
		& \int \left[\widehat\E^*\{Y - \overline\mu(X) \mid A = 1, \widehat\pi^*(x), D^n\} - \E\{Y - \overline\mu(X) \mid A = 1, \pi(x), D^n\}\right]^2 d\Pb(x) = o_\Pb(n^{-1/2}).
	\end{align*}
	In certain applications, the last condition can be hard to justify because it pertains to the estimation of a regression function on unknown covariates for which the convergence rate can be slow; see \cite{mammen2012nonparametric} and Theorem 3 in \cite{dukes2021doubly}.
	
	Our construction relies on smoothness assumptions imposed directly on $f_\omega(\widehat\omega, \omega, \widehat\mu; D^n)$ and $f_\mu(\widehat\mu, \mu, \widehat\omega; D^n)$. This allows us to avoid devising an iterative procedure. However, we do not claim that our estimator should be preferred to that proposed in \cite{benkeser2017doubly} in applications, which in fact performs well in the small simulation study described in the next section. We view the introduction of a hybrid structure-agnostic class where doubly-robust inference is possible and the analysis of a one-step estimator in this context as our main contributions.

 We emphasize that the doubly robust inference literature typically focuses on the `inconsistency regime' where one nuisance function estimator may be inconsistent for the true function, but converges quickly to a limit. Our estimator can be justified under this regime, although it can exhibit non-regularity outside of the intersection sub-model. It can also be justified under the alternative regime where both functions are estimated consistently, but only one converges as a rate faster than $n^{-1/4}$.
 
	\section{Simulation experiments \label{sec:sims}}
	In this section, we consider simulation experiments to investigate the finite-sample properties of the estimators studied. The main goal is to verify that the estimator introduced in Section \ref{sec:main_estimator} yields a performance comparable to its Doubly-Robust TMLE counterpart implemented in the \texttt{R} package \texttt{drtmle} \parencite{benkeser2023doubly}. We consider the following data generating process:
	\begin{itemize}
		\item $X = (X_1, X_2)$, where $X_i \sim \text{Unif}(-1, 1)$ for $i \in \{1, 2\}$, $X_1 \ind X_2$,
		\item $A \sim \text{Binom}(\pi(x))$ and $Y = AY^1 + (1 - A)Y^0$, where $Y^1 \sim \text{Binom}(\mu(x))$ and $Y^0 \sim \text{Binom}(0.5)$,
		\item $\pi(x) = \text{expit}(\begin{bmatrix} 1 & x \end{bmatrix}^T \beta_\pi)$, where $\beta_\pi = \begin{bmatrix} -0.5 & 2 & 0.5 \end{bmatrix}^T$,
		\item $\mu(x) = \text{expit}(\begin{bmatrix} 1 & x \end{bmatrix}^T \beta_\mu)$, where $\beta_\mu = \begin{bmatrix} -2.5 & 5 & 2 \end{bmatrix}^T$,
		\item $\widehat\pi(x) = \text{expit}(\begin{bmatrix} 1 & x \end{bmatrix}^T \widehat\beta_\pi)$, where $\widehat\beta_\pi = \beta_\pi + N(n^{-r_\pi}, n^{-2r_\pi})$,
		\item $\widehat\mu(x) = \text{expit}(\begin{bmatrix} 1 & x \end{bmatrix}^T \widehat\beta_\mu)$, where $\widehat\beta_\mu = \beta_\mu + N(n^{-r_\mu}, n^{-2r_\mu})$.
	\end{itemize}
	In this set-up, the target parameter is $\psi = \E \{\mu(X)\} \approx 0.66$ and $\widehat\omega$ and $\widehat\mu$ converge to $\omega$ and $\mu$ at rates $n^{-r_\pi}$ and $n^{-r_\mu}$ respectively. We vary $r_\pi, r_\mu$ in  $\{0, 0.3\}$ and the sample size $n$ in $\{500, 1000, 1500, 2000\}$. We run 500 simulations. We select the bandwidth for constructing $\widehat\psi_\omega$, $\widehat\psi_\mu$ and $\widehat\psi$ as the one that we would choose to estimate $\E(Y - \widehat\mu(X) \mid A = 1, \widehat\omega(X), \widehat\mu(X), D^n\}$. In particular, we use the cross-validation-based selector from the \texttt{R} package \texttt{sm} (\texttt{h.select()} function) \parencite{sm_package}. Our theoretical results do not justify this choice, but, in this simulation set-up, it yields reasonable results. As a benchmark, we also consider the performance of an oracle estimator that has access to the true nuisance functions $\omega(x)$ and $\mu(x)$. 
	
	Figure \ref{fig:sim_boxplot} reports the distribution of $\sqrt{n}(\overline\psi - \psi)$ for different choices of $\overline\psi$ and the 4 possible combinations of $(r_\pi, r_\mu)$. The sample size is $n = 2000$. Our theoretical results suggest that $\widehat\psi_\DR$ should not perform well except in setting where $r_\pi = r_\mu = 0.3$, while all estimators should perform poorly when $r_\pi = r_\mu = 0$. However, in this small simulation setup, we find that the corrected estimators still perform reasonably well. The main take-away from Figure \ref{fig:sim_boxplot} is that $\widehat\psi_\DR$ is the only estimator suffering from the mis-specification of one nuisance function. 
	
	Figure \ref{fig:sim_coverage} reports the coverage of a Wald confidence interval as a function of the sample size $n$. The most noticeable pattern is that an interval based on $\widehat\psi_\DR$ fails to achieve nominal coverage in all scenarios considered except when both nuisance functions are correctly specified. When at least one nuisance function is correctly specified, the performance of the corrected estimators is remarkable \footnote{In an earlier version of the manuscript, we reported simulation results that showed some under-coverage for $\widehat\psi$ when the propensity score was mis-specified. We initially thought that the underestimation of the standard error was, at least partially, due to extreme propensity score estimates in this scenario. However, after correcting the variance calculation, we noticed that the estimator now achieves nominal coverage in this setting as well.} Finally, as expected, when both nuisance functions are mis-specified all estimators exhibit some under-coverage.
	\begin{figure}[H]
		\begin{subfigure}{.5\textwidth}
			\centering
			\includegraphics[scale = 0.3]{./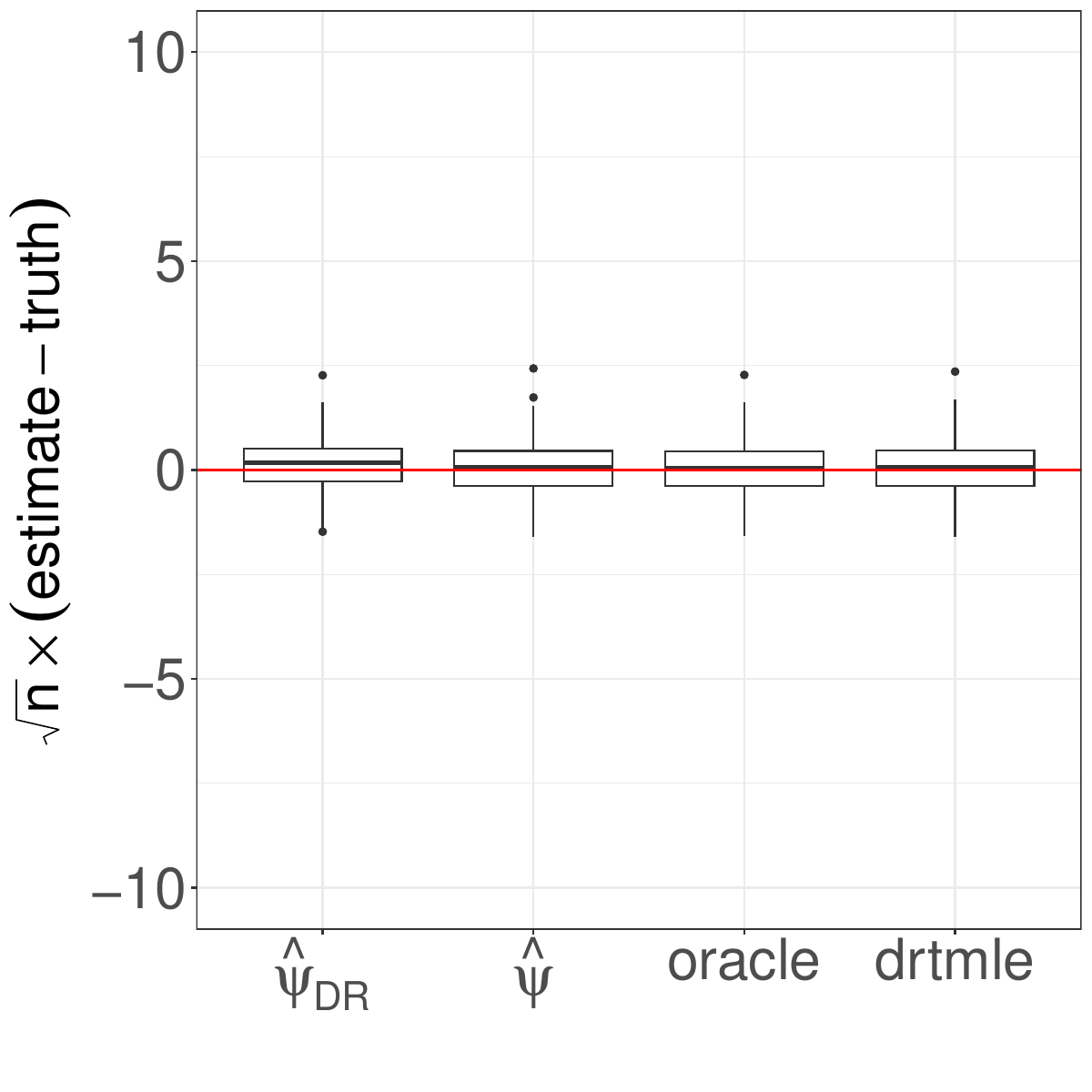}
			\caption{$r_\pi = 0.3$ and $r_\mu = 0.3$.}
		\end{subfigure}\hfill
		\begin{subfigure}{.5\textwidth}
			\centering
			\includegraphics[scale = 0.3]{./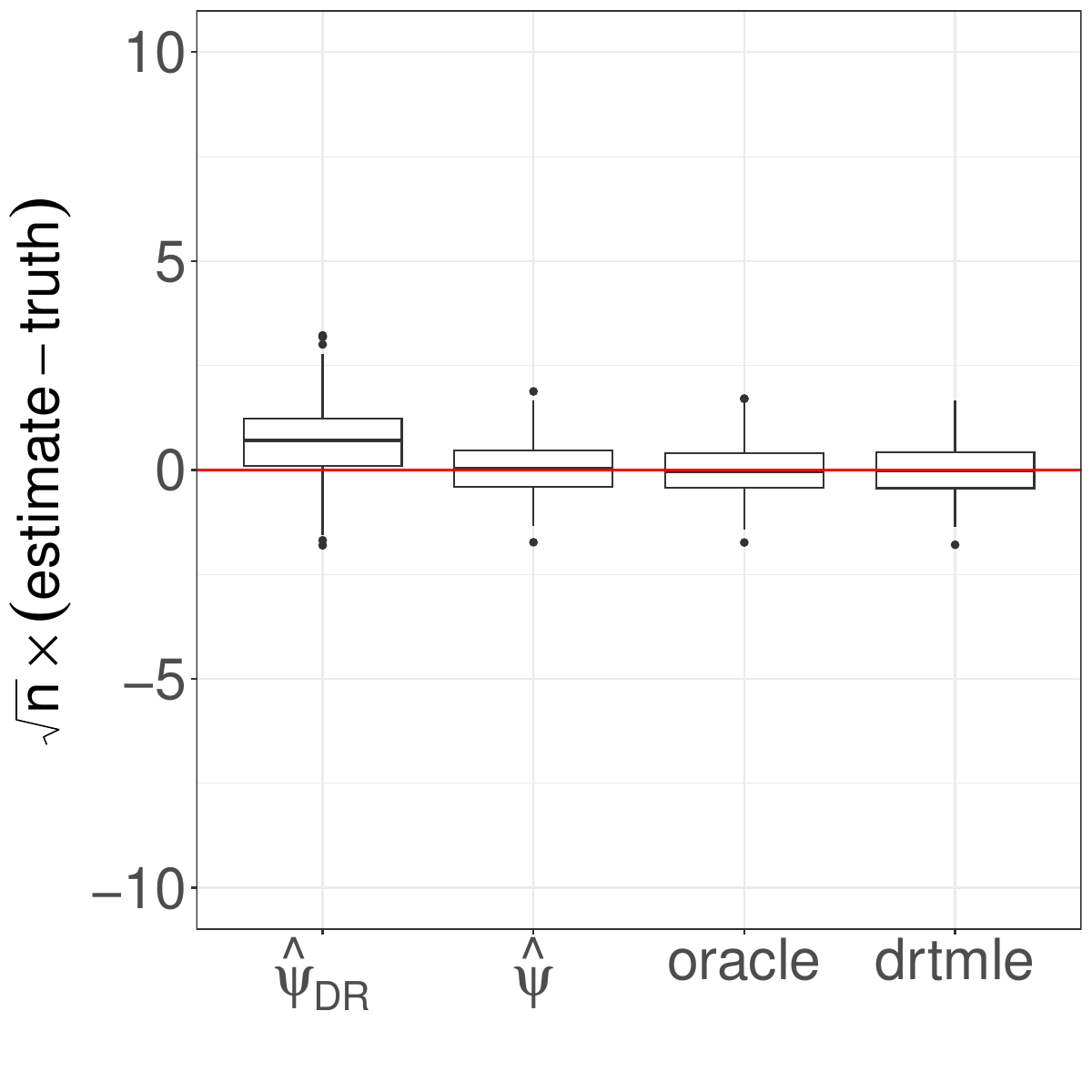}
			\caption{$r_\pi = 0.3$ and $r_\mu = 0$.}
		\end{subfigure}\hfill
		\begin{subfigure}{.5\textwidth}
			\centering
			\includegraphics[scale = 0.3]{./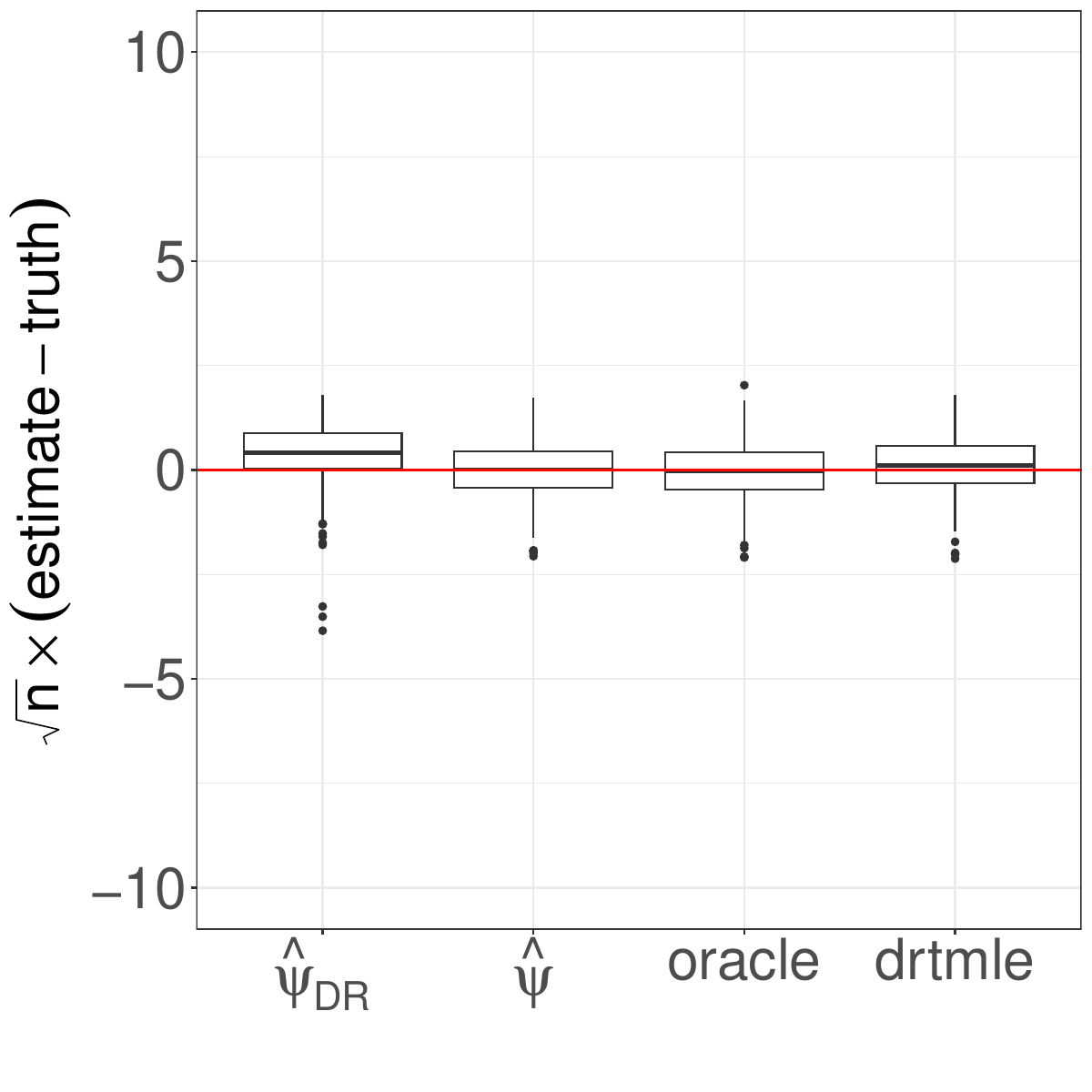}
			\caption{$r_\pi = 0$ and $r_\mu = 0.3$.}
		\end{subfigure}\hfill
		\begin{subfigure}{.5\textwidth}
			\centering
			\includegraphics[scale = 0.3]{./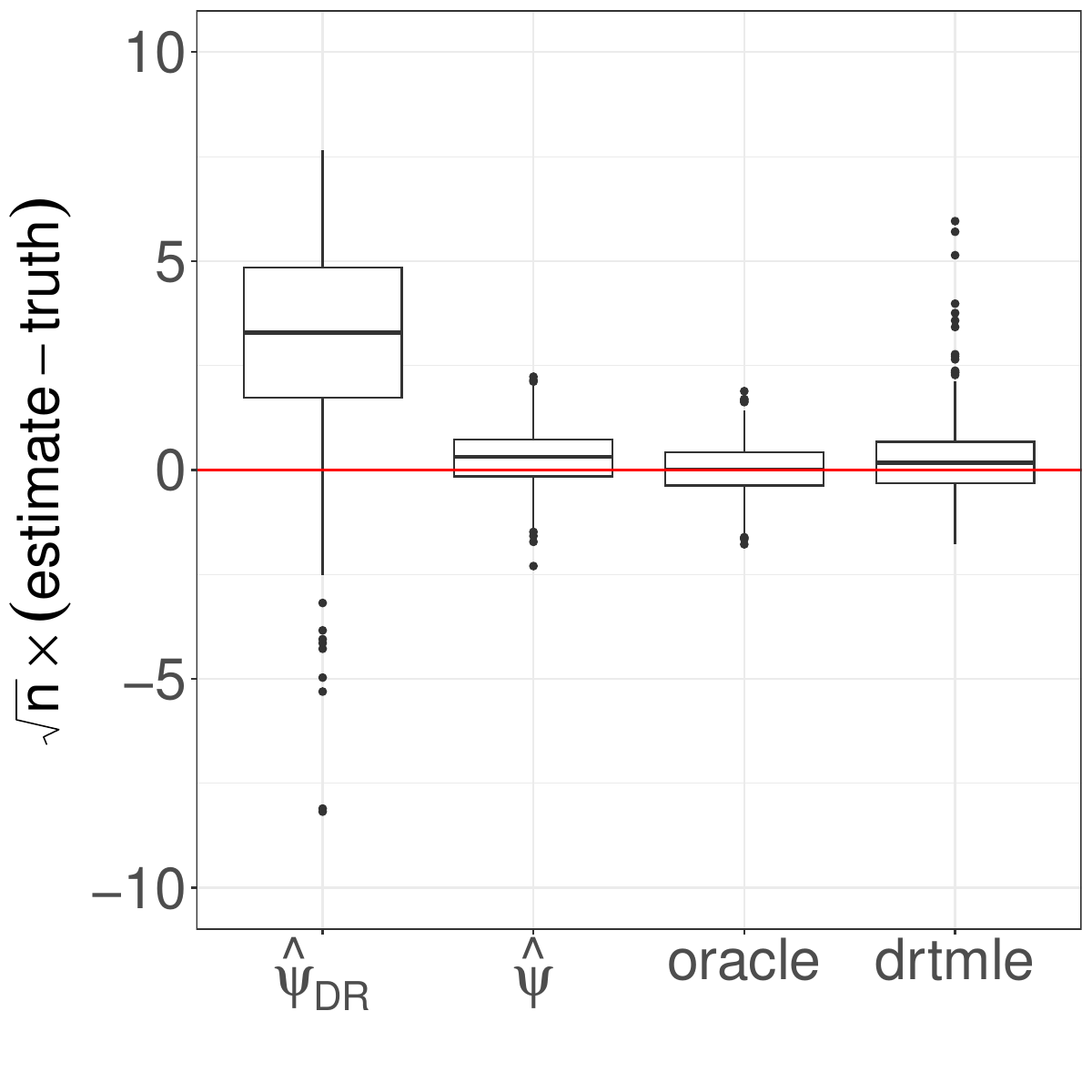}
			\caption{$r_\pi = 0$ and $r_\mu = 0$.}
		\end{subfigure}\hfill
		\caption{Distribution of the errors scaled by $\sqrt{n}$, where $n = 2000$.\label{fig:sim_boxplot}}
	\end{figure}
	
	\begin{figure}[H]
		\begin{subfigure}{.5\textwidth}
			\centering
			\includegraphics[scale = 0.3]{./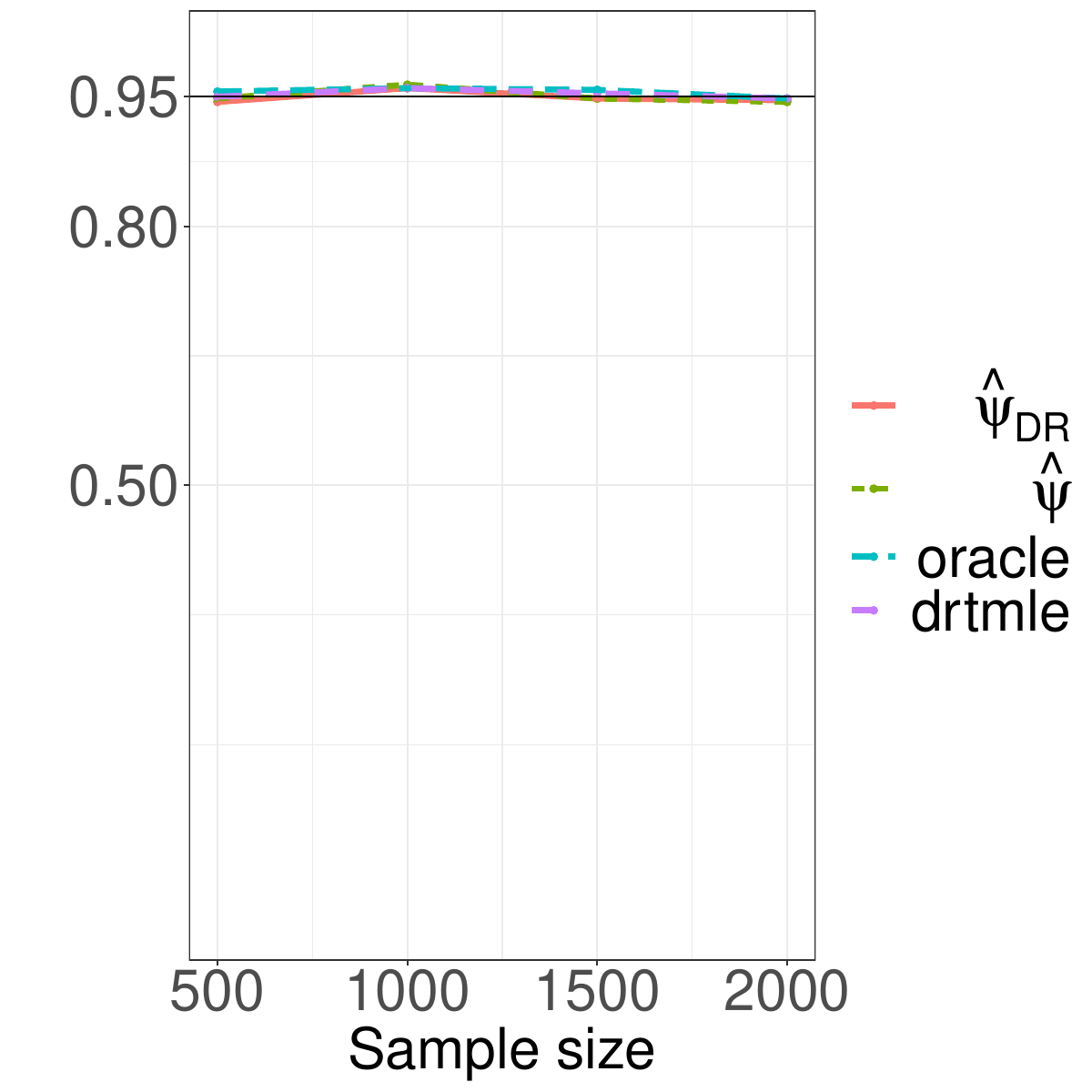}
			\caption{$r_\pi = 0.3$ and $r_\mu = 0.3$.}
		\end{subfigure}\hfill
		\begin{subfigure}{.5\textwidth}
			\centering
			\includegraphics[scale = 0.3]{./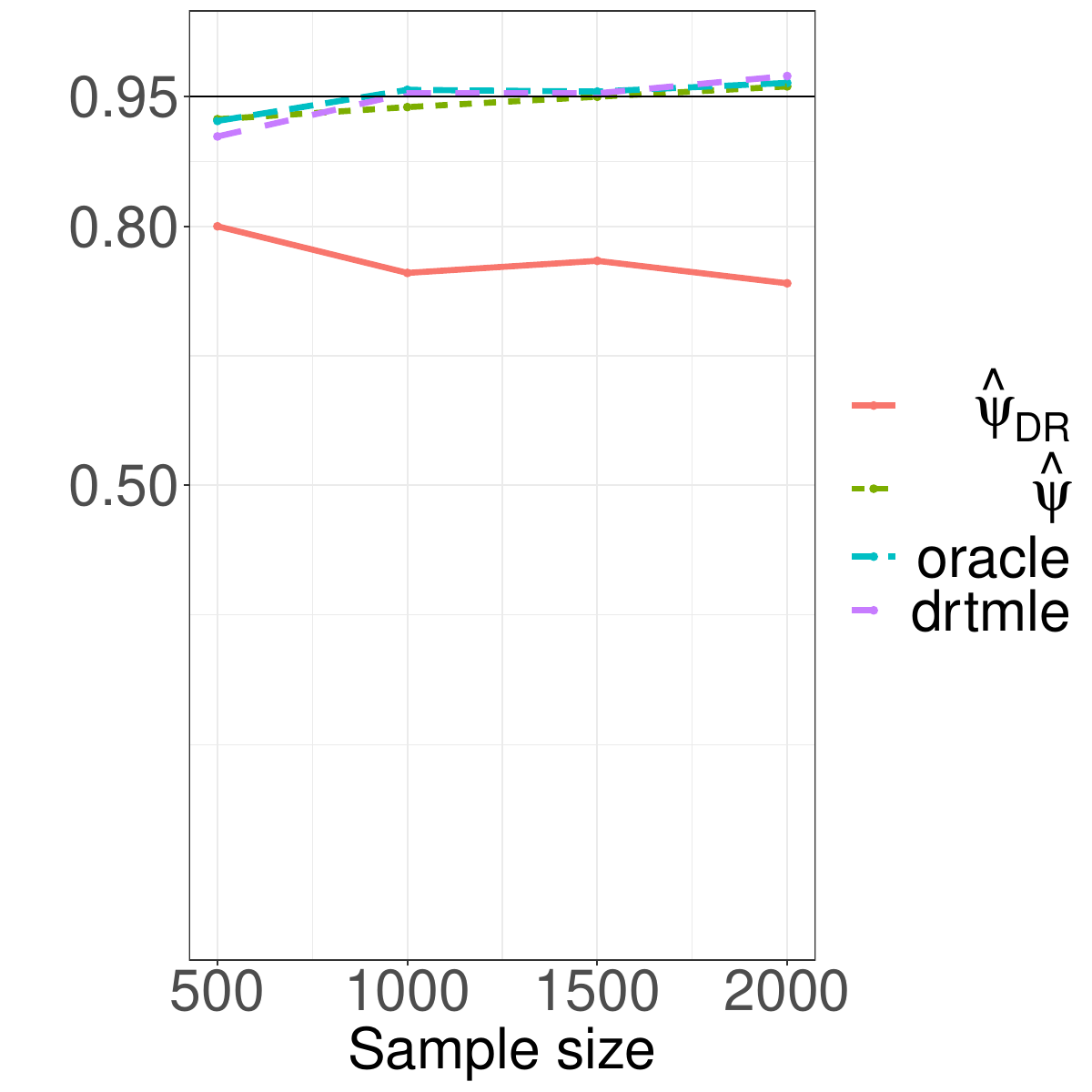}
			\caption{$r_\pi = 0.3$ and $r_\mu = 0$.}
		\end{subfigure}\hfill
		\begin{subfigure}{.5\textwidth}
			\centering
			\includegraphics[scale = 0.3]{./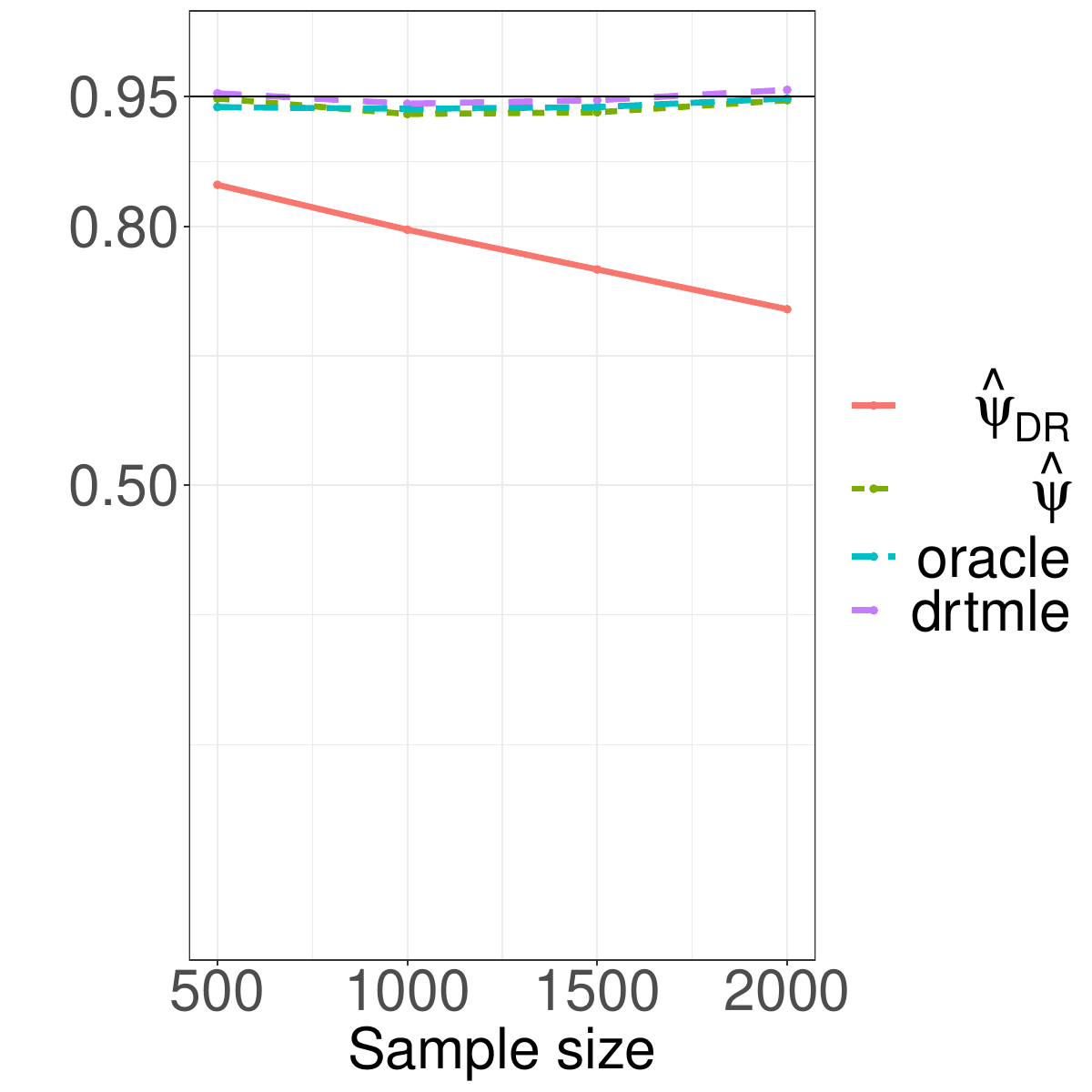}
			\caption{$r_\pi = 0$ and $r_\mu = 0.3$.}
		\end{subfigure}\hfill
		\begin{subfigure}{.5\textwidth}
			\centering
			\includegraphics[scale = 0.3]{./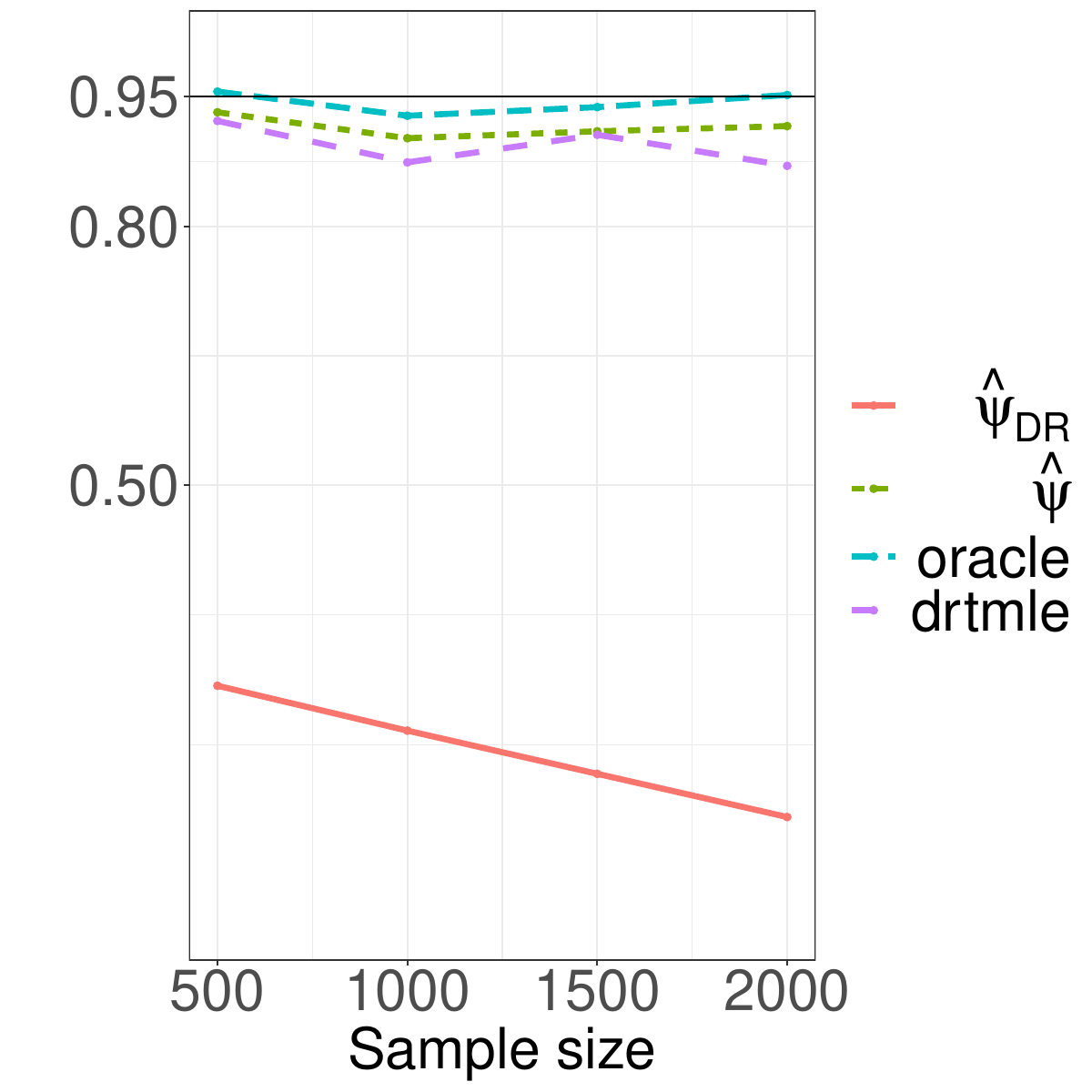}
			\caption{$r_\pi = 0$ and $r_\mu = 0$.}
		\end{subfigure}\hfill
		\caption{Coverage as a function of the sample size $n$. \label{fig:sim_coverage}.}
	\end{figure}
	\section{Conclusions}
	In this work, we have investigated the possibility of constructing estimators of the ATE functional that remain asymptotically linear even if one of the two nuisance functions is not consistently estimated. We have proposed a novel function class that is a hybrid between the purely structure-agnostic one proposed in \cite{balakrishnan2023fundamental}, and analyzed in detail in \cite{jin2024structure}, and more traditional ones based on H\"{o}lder smoothness. In completely structure-agnostic models, where all that is known are rates of convergence for the two nuisance functions, our Proposition \ref{prop:pure_sa} (covering the case when the propensity score can be estimated at an equal or faster rate than the outcome model), and, more generally the concurrent work by \cite{jin2024structure}, show that nonparametric, doubly-robust root-$n$ inference is not attainable. On the contrary, we show that this is possible in the new hybrid class proposed. Further, merging ideas from the literature on doubly-robust inference and that on HOIFs, we have constructed an estimator that, under certain conditions, exhibits doubly-robust asymptotic linearity (DRAL). The sufficient conditions needed for DRAL that we have derived are relatively straightforward to describe, although they pertain to nonstandard regression models where both the outcome and the covariates depend on a training sample used to estimate the nuisance functions. In addition, we have shown that this estimator is minimax optimal in the new hybrid model proposed. In future work, it would be interesting to expand the results presented in Proposition \ref{siva_prop} to cover cases where, for example, the additional nuisance functions involving generated regressors are H\"{o}lder smooth of orders less than 1. Furthermore, it would be also interesting to study other nonparametric models where these additional nuisance functions satisfy structure-agnostic rate conditions instead of the smoothness constrains that we have imposed.
	
	\section{Acknowledgements}
	MB gratefully acknowledges support from NSF DMS Grant 2413891. EK gratefully acknowledges support from NIH R01 Grant LM013361-01A1 and NSF CAREER Award 2047444. OD was supported by FWO grant 1222522N. MB thanks the participants at the Center for Causal Inference Seminar Series at the University of Pennsylvania, at the Center for Data Science Seminar Series at Zhejiang University, and at Seminar Series at New York University Langone, as well as Lin Liu (Shanghai Jiao Tong University), Sijia Li (Harvard University) and Xiudi Li (UC Berkeley), for very insightful conversations. 
	\printbibliography
	
	\appendix
	\section{Example: partially linear logistic model}\label{appendix:partially_linear_logit}
	We revisit Example 3 in \cite{dukes2021doubly}. Here, the goal is to derive an estimator of $\theta$ under the model
	\begin{align*}
		\logit \Pb(Y = 1 \mid A, X) = \theta_0 A + m_0(X),
	\end{align*}
	where $A \in \R$ and $m_0(X) = \E(Y \mid A = 0, X)$. \cite{tan2019doubly} showed that a doubly-robust estimator of $\theta$ can be found by solving the empirical version of the moment condition
	\begin{align*}
		\psi(\theta_0) = \E\left[ \{A - v(X)\}\{Ye^{-\theta_0 A - m_0(X)} - (1 - Y)\} \right] = 0
	\end{align*}
	where $v(X) = \E(A \mid Y = 0, X)$. By the usual decomposition for M-estimators, the conditional bias of $\widehat\theta$ solving the empirical version of the moment condition above can be derived from the conditional bias of the moment condition itself. \cite{tan2019doubly} has shown that such conditional bias is
	\begin{align*}
		R_n = \E\left[ (1 - Y) \left(e^{m_0(X) - \widehat{m}_0(X)} - 1\right)\left\{v(X) - \widehat{v}(X)\right\} \mid D^n \right],
	\end{align*}
	which is not of the variety of remainder terms studied in \cite{dukes2021doubly}. Our work shows that, if one is willing to assume smoothness of the functions
	\begin{align*}
		& f_v(t_1, t_2, t_3; D^n) = \E\left\{ e^{m_0(X) - \widehat{m}_0(X)} - 1\mid Y = 0, \widehat{v}(X) = t_1, v(X) = t_2, \widehat{m}_0(X) = t_3, D^n \right\} \\
		& f_m(t_1, t_2, t_3; D^n) = \E\left\{ v(X) - \widehat{v}(X) \mid Y = 0, \widehat{m}_0(X) = t_1, m_0(X) = t_2, \widehat{v}(X) = t_3, D^n \right\},
	\end{align*}
	then it is possible to derive an estimator of $\theta$ that admits doubly-robust asymptotic linearity. In fact, such estimator would be based on an augmented moment condition of the same variety as the estimator considered in Section \ref{sec:main_estimator}. In particular, let $\widehat\theta$ solve
	\begin{align*}
		& \widehat\psi(\widehat\theta) = \Pn \left[\{A - \widehat{v}(X)\}\{Ye^{-\widehat\theta A - \widehat{m}_0(X)} - (1 - Y)\}\right] - T_n, \text{ where } \\
		& T_n =\frac{1}{n(n-1)} \mathop{\sum\sum}\limits_{1 \leq i \neq j \leq n} \left\{e^{-\theta A_i - \widehat{m}_i} Y_i - (1-Y_i)\right\} \frac{K_h(\widehat{v}_i - \widehat{v}_j)K_h(\widehat{m}_i - \widehat{m}_j)}{\widehat{Q}(\widehat{v}_i, \widehat{m}_i)} (1 - Y_j)(A_j - \widehat{v}_j), \\
		& \widehat{Q}(\widehat{v}_i, \widehat{m}_i) = \frac{1}{n-1}\sum_{s = 1, s \neq i}^n (1 - Y_s)K_h(\widehat{v}_s - \widehat{v}_i)K_h(\widehat{m}_s - \widehat{m}_i),
	\end{align*}
	where we recall the notation $\widehat{v}_i = \widehat{v}(X_i)$, $\widehat{m}_i = \widehat{m}_0(X_i)$.
	
	A proposition analogous to Proposition \ref{prop:min_squared} can be derived for $\psi(\theta_0) - \widehat\psi(\theta_0)$. To see this, notice that
	\begin{align*}
		R_n = \E\{(1 - Y) f_v(\widehat{v}, v, \widehat{m}_0; D^n)(v - \widehat{v}) \mid D^n \} = \E\left\{(1 - Y) \left(e^{m_0 - \widehat{m}}  - 1\right)f_m(\widehat{m}_0, m_0, v; D^n) \mid D^n \right\}
	\end{align*}
	Furthermore, for $Q(\widehat{v}_i, \widehat{m}_i) = \E\{\widehat{Q}(\widehat{v_i}, \widehat{m}_i) \mid X_i, D^n\}$:
	\begin{align*}
		& \E_{Z_j \mid D^n} \left\{ \frac{K_h(\widehat{v}_i - \widehat{v}_j)K_h(\widehat{m}_i - \widehat{m}_j)}{Q(\widehat{v}_j, \widehat{m}_j)} (1 - Y_j)(A_j - \widehat{v}_j) \right\}  \\
		& = \E_{Z_j \mid D^n} \left\{ \frac{K_h(\widehat{v}_i - \widehat{v}_j)K_h(\widehat{m}_i - \widehat{m}_j)}{Q(\widehat{v}_i, \widehat{m}_i)} (1 - Y_j)(v_j - \widehat{v}_j) \right\} \\
		& = \E_{Z_j \mid D^n} \left\{ \frac{K_h(\widehat{v}_i - \widehat{v}_j)K_h(\widehat{m}_i - \widehat{m}_j)}{Q(\widehat{v}_i, \widehat{m}_i)} (1 - Y_j) f_m(\widehat{m}_j, m_j, \widehat{v}_j; D^n) \right\} \\
		& = f_m(\widehat{m}_i, m_i, \widehat{v}_j; D^n) + O(h^\alpha + \|\widehat{m} - m\|^\alpha)
	\end{align*}
	where the last equality follows if $f_m(\widehat{m}, m, \widehat{v}; D^n)$ is H\"{o}lder of order $\alpha$. Similarly, for $Q(\widehat{v}_j, \widehat{v}_j) = \E\{(1 - Y)K_h(\widehat{v} - \widehat{v}_j)K_h(\widehat{m} - \widehat{m}_j) \mid D^n\}$:
	\begin{align*}
		& \E_{Z_i \mid D^n} \left[ \left\{ e^{-\theta A_i - \widehat{m}_i} Y_i - (1-Y_i) \right\} \frac{K_h(\widehat{v}_i - \widehat{v}_j)K_h(\widehat{m}_i - \widehat{m}_j)}{Q(\widehat{v}_j, \widehat{m}_j)} \right] \\
		& = \E_{Z_i \mid D^n} \left\{\frac{e^{ m_i - \widehat{m}_i} -1}{1 + e^{\theta A_i + m_i}} \frac{K_h(\widehat{v}_i - \widehat{v}_j)K_h(\widehat{m}_i - \widehat{m}_j)}{Q(\widehat{v}_j, \widehat{m}_j)} \right\}  \\
		& = \E_{Z_1 \mid D^n} \left\{ (1 - Y_i)\left(e^{ m_i - \widehat{m}_i} -1\right) \frac{K_h(\widehat{v}_i - \widehat{v}_j)K_h(\widehat{m}_i - \widehat{m}_j)}{Q(\widehat{v}_j, \widehat{m}_j)} \right\} \\
		& = \E_{Z_i \mid D^n} \left\{ (1 - Y_i) f_v(\widehat{v}_i, v_i, \widehat{m}_i; D^n)\frac{K_h(\widehat{v}_i - \widehat{v}_j)K_h(\widehat{m}_i - \widehat{m}_j)}{Q(\widehat{v}_j, \widehat{m}_j)} \right\} \\
		& = f_v(\widehat{v}_j, v_j, \widehat{m}_j; D^n) + O(h^\beta + \|\widehat{v} - v\|^\beta)
	\end{align*}
	where the last equality follows if $f_v(\widehat{v}, v, \widehat{m}; D^n)$ is H\"{o}lder of order $\beta$. The errors $\widehat{Q}(\widehat{v}_i, \widehat{m}_i) - Q(\widehat{v}_i. \widehat{m}_i)$ and $Q(\widehat{v}_i, \widehat{m}_i) - Q(\widehat{v}_j, \widehat{m}_j)$, as well as the variance calculations, can be carried out in a way analogous to that used to prove Proposition \ref{prop:min_squared}. 
	\section{Proofs regarding the lower bounds}
	\subsection{Useful Lemmas}
	First, we recall the definition of Hellinger distance. Let $P$ and $Q$ be two probability measures with densities $p$ and $q$ relative to some $\sigma$-finite dominating measure $\nu$.
	\begin{definition} [Hellinger distance]
		The Hellinger distance between $P$ and $Q$ is
		\begin{align*}
			H(P, Q) = \left\{\int (\sqrt{p} - \sqrt{q})^2 d\nu\right\}^{1/2}.
		\end{align*}
	\end{definition}
	The following Lemma is a restatement of Theorem 2.15 in \cite{tsybakov2008introduction} and Lemma 1 in \cite{balakrishnan2023fundamental} with a few simplifications tailored to our settings.
	\begin{lemma}[Lemma 1 in \cite{balakrishnan2023fundamental} / Theorem 2.15 in \cite{tsybakov2008introduction}]\label{lemma:tsybakov_main_LB}
		Let $F(\theta)$ be a functional and $\{P_\theta, \theta \in \Theta\}$ be a statistical model. Consider two priors distributions $\mu_0$ and $\mu_1$ on $\Theta$ and define the posteriors:
		\begin{align*}
			P_j(A) = \int P_\theta^n(A) \mu_j(d\theta), \text{ for all measurable } A \text{ and } j \in \{0, 1\}.
		\end{align*}Assume that
		\begin{enumerate}
			\item There exists $c$ and $s > 0$ such that $\mu_0(\theta: F(\theta) \leq c) = 1$ and $\mu_1(\theta: F(\theta) \geq c + 2s) = 1$.
			\item The Hellinger distance satisfies $H^2(P_1, P_0) \leq \alpha < 2$.
		\end{enumerate}
		Then, it holds that
		\begin{align*}
			\inf_{T_n} \sup_{\theta \in \Theta} \E| T_n - F(\theta)| \ \geq s \cdot \frac{1 - \sqrt{\alpha(1-\alpha/4)}}{2}.
		\end{align*}
	\end{lemma}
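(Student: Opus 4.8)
The plan is to reduce the minimax $L_1$ risk to the error probability of a two-point hypothesis test and then bound that error via the Hellinger distance, following the argument behind Theorem 2.15 in \cite{tsybakov2008introduction}. First I would lower-bound the minimax risk by the Bayes risk for the equal mixture of the two priors,
\begin{align*}
	\inf_{T_n}\sup_{\theta\in\Theta}\E_\theta|T_n - F(\theta)| \ \geq\ \inf_{T_n}\frac12\left\{\int \E_\theta|T_n-F(\theta)|\,\mu_0(d\theta) + \int \E_\theta|T_n-F(\theta)|\,\mu_1(d\theta)\right\}.
\end{align*}
Fix any estimator $T_n$ and set $\mathcal{A} = \{T_n > c+s\}$, the threshold $c+s$ being the midpoint of the gap $[c, c+2s]$ separating the two supports. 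On the support of $\mu_0$ we have $F(\theta)\le c$, so $|T_n - F(\theta)|\geq (T_n - c)\one(\mathcal{A}) \geq s\,\one(\mathcal{A})$; on the support of $\mu_1$ we have $F(\theta)\geq c+2s$, so $|T_n - F(\theta)|\geq (c+2s - T_n)\one(\mathcal{A}^c) \geq s\,\one(\mathcal{A}^c)$.

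Integrating these one-sided bounds against $\mu_0$ and $\mu_1$ respectively and using the definition of the posteriors $P_0,P_1$ gives
\begin{align*}
	\frac12\left\{\int \E_\theta|T_n-F(\theta)|\,\mu_0(d\theta) + \int \E_\theta|T_n-F(\theta)|\,\mu_1(d\theta)\right\} \ \geq\ \frac{s}{2}\left\{P_0(\mathcal{A}) + P_1(\mathcal{A}^c)\right\}.
\end{align*}
Since $P_0(\mathcal{A}) + P_1(\mathcal{A}^c) = 1 - \{P_1(\mathcal{A}) - P_0(\mathcal{A})\} \geq 1 - \|P_1 - P_0\|_{\mathrm{TV}}$, it remains to control the total variation distance. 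Writing $|p_1 - p_0| = |\sqrt{p_1}-\sqrt{p_0}|\,(\sqrt{p_1}+\sqrt{p_0})$, Cauchy--Schwarz together with $\int(\sqrt{p_1}+\sqrt{p_0})^2\,d\nu = 4 - H^2(P_1,P_0)$ yields the standard inequality $\|P_1-P_0\|_{\mathrm{TV}} \leq \sqrt{H^2(P_1,P_0)\bigl(1 - H^2(P_1,P_0)/4\bigr)}$. Because $x\mapsto x(1-x/4)$ is nonnegative and increasing on $[0,2)$ and $H^2(P_1,P_0)\leq\alpha<2$ by Assumption 2, the right-hand side is at most $\sqrt{\alpha(1-\alpha/4)}$. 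Chaining the displays gives $\inf_{T_n}\sup_\theta \E_\theta|T_n - F(\theta)| \geq \frac{s}{2}\bigl(1 - \sqrt{\alpha(1-\alpha/4)}\bigr)$, which is the claim.

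The argument is essentially routine Le Cam / Hellinger bookkeeping, so I do not expect a genuine obstacle; the only point needing mild care is the very first reduction, where $\mu_0$ and $\mu_1$ are supported on \emph{sets} of parameters (those with $F(\theta)\le c$, resp.\ $\ge c+2s$) rather than on two fixed points. Consequently the passage from the $L_1$ risk to a testing error must use only the one-sided inequalities $F(\theta)\le c$ and $F(\theta)\ge c+2s$ on the respective supports --- which is exactly what the choice of threshold $c+s$ and the bounds displayed above exploit --- rather than exact values of $F$.
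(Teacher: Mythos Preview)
Your argument is correct and is precisely the standard Le Cam two-point reduction combined with the Le Cam / Hellinger--TV inequality that underlies Theorem 2.15 in \cite{tsybakov2008introduction}. Note, however, that the paper does not supply its own proof of this lemma: it is stated as a restatement of Theorem 2.15 in \cite{tsybakov2008introduction} and Lemma 1 in \cite{balakrishnan2023fundamental}, and is invoked as a black box in the subsequent lower-bound proofs. Your write-up therefore matches the intended (cited) proof rather than anything appearing in the paper itself.
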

	\begin{lemma}[Theorem 2.1 in \cite{robins2009semiparametric}]\label{lemma:hellinger_robins}
		Let $k$ in $\N$, let $\mathcal{X} = \cup_{j = 1}^k \mathcal{X}_j$ be a measurable partition of the sample space. Given a vector $\lambda = (\lambda_1, \ldots, \lambda_k)$ in some product measurable space $\Lambda = \lambda_1 \times \cdots \lambda_k$ let $P_\lambda$ and $Q_\lambda$ be probablity measures on $\mathcal{X}$ such that
		\begin{enumerate}
			\item $P_\lambda(\mathcal{X}_j) = Q_\lambda(X_j) = p_j$ for every $\lambda \in \Lambda$, for some probability vector $(p_1, \ldots, p_k)$. 
			\item The restrictions of $P_\lambda$ and $Q_\lambda$ to $\mathcal{X}_j$ depend on the $j\text{th}$ coordinate $\lambda_j$ of $\lambda = (\lambda_1, \ldots, \lambda_k)$ only.
		\end{enumerate}
		For $p_\lambda$ and $q_\lambda$ densities of the measures $P_\lambda$ and $Q_\lambda$ that are jointly measurable in the parameter $\lambda$ and the observation, and $\pi$ a probability measure on $\Lambda$, define $p = \int p_\lambda d\pi(\lambda)$  and $q = \int q_\lambda d\pi(\lambda)$, and set
		\begin{align*}
			\delta_1 = \max_j \sup_\lambda \int_{\mathcal{X}_j} \frac{(p_\lambda - p)^2}{p_\lambda} \frac{d\nu}{p_j}, \quad \delta_2 = \max_j \sup_\lambda \int_{\mathcal{X}_j} \frac{(q_\lambda - p_\lambda)^2}{p_\lambda} \frac{d\nu}{p_j},  \quad \delta_3 = \max_j \sup_\lambda \int_{\mathcal{X}_j} \frac{(q - p)^2}{p_\lambda} \frac{d\nu}{p_j}.
		\end{align*}
		If $np_j(1 + \delta_1 + \delta_2) \leq A$ for all $j$ and $\underline{B} \leq p_\lambda \leq \overline{B}$ for positive constants $A$, $\underline{B}$, $\overline{B}$, then there exists a constant $C$ that depends only on $A$, $\underline{B}$, $\overline{B}$ such that, for any product probability measure $\pi = \pi_1 \otimes \cdots \otimes p_k$,
		\begin{align*}
			H^2(P_1, P_2) \leq C n^2 (\max_j p_j)(\delta_2^2 + \delta_1\delta_2) + Cn\delta_3.
		\end{align*}
	\end{lemma}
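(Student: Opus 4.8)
The plan is to prove this squared-Hellinger bound by reducing $H^2(P_1,P_2)$ to a single second moment and then evaluating that moment with a multilinear (``diagram'') expansion adapted to the partition $\{\mathcal X_j\}$. First I would write $H^2(P_1,P_2) = 2 - 2\int\sqrt{(dP_1/d\nu^{\otimes n})(dP_2/d\nu^{\otimes n})}\,d\nu^{\otimes n}$ and pass to the reference product measure $\bar P^{\otimes n}$, where $\bar P$ has density $p = \int p_\lambda\,d\pi$ (well-defined, and bounded below since $p_\lambda \ge \underline B$); this equals $\E_{\bar P^{\otimes n}}\!\big[\sqrt{L_1 L_2}\,\big]$, where $L_1 = \E_\lambda\!\big[\prod_{i=1}^n p_\lambda(X_i)/p(X_i)\big]$, $L_2 = \E_\lambda\!\big[\prod_{i=1}^n q_\lambda(X_i)/p(X_i)\big]$, and $\E_{\bar P^{\otimes n}}[L_1] = \E_{\bar P^{\otimes n}}[L_2] = 1$ since every factor integrates to one against $p$.

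The step that makes the argument work is the elementary identity $\sqrt{(1+\xi_1)(1+\xi_2)} = 1 + \tfrac12(\xi_1+\xi_2) - \tfrac18(\xi_1-\xi_2)^2 + (\text{cubic and higher in }\xi)$: applied with $\xi_i = L_i - 1$ and using $\E_{\bar P^{\otimes n}}[\xi_i] = 0$, it gives $H^2(P_1,P_2) = \tfrac14\,\E_{\bar P^{\otimes n}}\!\big[(L_1-L_2)^2\big] + (\text{remainder})$. Writing it this way is precisely what prevents a ``pure $\delta_1$'' term from appearing, since $L_1-L_2$ vanishes identically when $q_\lambda\equiv p_\lambda$ and $q\equiv p$. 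To evaluate the second moment I would write $p_\lambda/p = 1+u_\lambda$ and $q_\lambda/p = (1+w)(1+v_\lambda)$ with $w = (q-p)/p$ and, pointwise, $\E_\lambda[u_\lambda] = \E_\lambda[v_\lambda] = 0$; expand $\prod_i p_\lambda(X_i)/p(X_i) - \prod_i q_\lambda(X_i)/p(X_i)$ over subsets of $\{1,\dots,n\}$, take $\E_\lambda$, and then $\E_{\bar P^{\otimes n}}$. Because $\pi$ is a product measure and $u_\lambda(x),v_\lambda(x)$ depend on $\lambda$ only through the cell containing $x$, any subset that places a lone index in a cell integrates to zero; together with the cell-orthogonality $\int_{\mathcal X_j} p(x)\,\E_{\lambda_j}[u_{\lambda_j}(x)u_{\lambda_j}(x')]\,d\nu(x) = 0$ (both $p_\lambda$ and $p$ put mass $p_j$ on $\mathcal X_j$), all ``disjoint-pair'' cross terms also drop. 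What survives is a two-observation-collision term of order $n^2\big(\sum_j p_j^2\big)\big(\delta_2^2 + \delta_1\delta_2\big) \le n^2(\max_j p_j)(\delta_2^2+\delta_1\delta_2)$ — with $\delta_2^2$ from two signal factors ($p_\lambda-q_\lambda$) landing in one cell, $\delta_1\delta_2$ from one signal factor colliding with one $u_\lambda$ or $v_\lambda$ factor, and $\sum_j p_j^2$ the chance two observations share a cell — plus a ``mean-shift'' term of order $n\int p\,w^2\lesssim n\delta_3$, arising because $q\ne p$ makes $\E_\lambda[q_\lambda/p] = q/p\ne 1$. Collecting these yields the claimed bound.

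The hard part will be controlling the two error sources hidden in ``remainder'' above. First, the $\sqrt{\cdot}$ expansion is only benign where $L_1,L_2$ stay near $1$; on the complement a direct estimate fails, since $L_1,L_2$ admit only the crude deterministic lower bound $(\underline B/\overline B)^n$, so a truncation/conditioning argument is needed — and this is exactly where the hypotheses $\underline B\le p_\lambda\le\overline B$ and $np_j(1+\delta_1+\delta_2)\le A$ enter, forcing the likelihood ratios to concentrate around $1$ so that the bad event is negligible. Second, the multilinear expansion has infinitely many higher collision patterns (several disjoint pairs, triples, and so on); I would bound these by a geometric-type series dominated by the leading one-pair term, again using $np_j(1+\delta_1+\delta_2)\le A$ to control the expected number of collisions. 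Modulo that bookkeeping — the substance of the argument in \cite{robins2009semiparametric} — the bound follows from the reduction to $\tfrac14\E_{\bar P^{\otimes n}}[(L_1-L_2)^2]$ and the cell-by-cell second-moment computation above.
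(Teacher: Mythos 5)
The paper offers no proof of this lemma: it is imported verbatim (with minor notational simplification) as Theorem 2.1 of \cite{robins2009semiparametric} and used as a black box, so there is no in-paper argument to compare yours against. Judged on its own terms, your outline has the right second-moment skeleton: the reduction of $H^2$ to (essentially) $\tfrac14\E[(L_1-L_2)^2]$, the role of the product prior and of the equal cell masses $P_\lambda(\mathcal{X}_j)=Q_\lambda(\mathcal{X}_j)=p_j$ in annihilating cross-cell and lone-index terms, and the correct attribution of the $\delta_2^2$ and $\delta_1\delta_2$ collision terms, the factor $\sum_j p_j^2\le\max_j p_j$, and the $n\delta_3$ mean-shift term. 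This is a faithful account of where the bound comes from.

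The genuine gap is the one you flag but do not close: the global Taylor expansion of $\sqrt{L_1L_2}$ around $1$. The variables $\xi_i=L_i-1$ are small only in mean square, not pointwise, so the ``cubic and higher'' remainder cannot be absorbed without a concentration/truncation argument for the full $n$-fold likelihood ratios, and the only deterministic control available, $L_i\ge(\underline{B}/\overline{B})^n$, is useless. The source avoids this difficulty by a different organization that your sketch does not reproduce: condition on the multinomial cell-count vector $N=(N_1,\dots,N_k)$, whose law is identical under both mixtures precisely because every $P_\lambda$ and $Q_\lambda$ assigns mass $p_j$ to $\mathcal{X}_j$; observe that, given $N$, both mixtures factorize across cells because $\pi$ is a product measure and the restrictions to $\mathcal{X}_j$ depend only on $\lambda_j$; then use super-multiplicativity of the Hellinger affinity over this factorization together with exact (non-asymptotic) per-cell affinity bounds in terms of $\delta_1,\delta_2,\delta_3$. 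In that route the hypotheses $\underline{B}\le p_\lambda\le\overline{B}$ and $np_j(1+\delta_1+\delta_2)\le A$ enter only through elementary moments such as $\E\{N_j(N_j-1)\}=n(n-1)p_j^2$ and $\E N_j=np_j$, and no truncation is ever needed. So either supply the missing concentration step for your global expansion, or switch to the conditional factorization; as written, the proposal is a correct heuristic but not a proof.
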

	We proceed essentially as in the proof of Theorem 3.1 in \cite{robins2009semiparametric}, but choose certain parameters and the fluctuations in their construction differently in accordance with the assumptions defining $\mathcal{P}(\epsilon_n, \delta_n)$, $\mathcal{P}_\omega(\epsilon_n)$ and $\mathcal{P}_\mu(\delta_n)$, and $\mathcal{P}_{\omega\mu}(\epsilon_n, \delta_n)$. The construction is conceptually very similar to the one used in \cite{balakrishnan2023fundamental}. The only changes pertain to the choice of the fluctuations for the nuisance parameters, which are tailored to the ATE functional considered here.
	
	Let $B: \R^d \to \R$ be a function with support on $[0, 1/2]^d$ such that $\int  B(u) du = 0$ and $\int B^2(u) du = 1$. Let $k$ be an integer and $\mathcal{X}_1, \ldots, \mathcal{X}_k$ be traslates of the cube $k^{-1/d}[0, 1/2]^d$ that are disjoint and contained in $[0, 1]^d$. Let $m_1, \ldots, m_k$ be the bottom left corners of these cubes. Let $\lambda = (\lambda_1, \ldots, \lambda_k) \in \{-1, 1\}^k$. For shorthand notation, let $B_j(x) = B\{k^{1/d}(x- m_j)\}$. 
	\subsection{Proof of Proposition \ref{prop:pure_sa}}\label{appendix:prop_pure_sa}
	Suppose that $\epsilon_n \leq \delta_n$ and define:
	\begin{align*}
		& \omega_{p_\lambda}(x) = \omega_{q_\lambda}(x) = \widehat\omega(x) + \epsilon_n \sum_{j  = 1}^k \lambda_j B_j(x), \quad \mu_{p_\lambda}(x)  = \widehat\mu(x) - \epsilon_n \frac{\widehat\mu(x)}{\widehat\omega(x)} \sum_{j = 1}^k \lambda_j B_j(x), \\
		&  \text{and} \quad \mu_{q_\lambda}(x) = \mu_{1\lambda}(x) + \frac{\delta_n}{\widehat\omega(x)} \sum_{j  = 1}^k \lambda_j B_j(x).
	\end{align*}
	We have $\| \widehat\omega - \omega_{p_\lambda}\|_2 \ = \|\widehat\omega - \omega_{q_\lambda}\| \ =  \epsilon_n$, $\|\widehat\mu - \widehat\mu_{p_\lambda} \| \ \lesssim \epsilon_n \leq \delta_n$, and $\|\widehat\mu - \mu_{q_\lambda}\| \ \lesssim \delta_n$. Consider the densities $p_\lambda(x)$ and $q_\lambda(x)$ for $Y, A \in \{0, 1\}$ and $X \in [0, 1]^d$:
	\begin{align*}
		& p_\lambda(Y, A, X) = g(X) \{\omega_{p_\lambda}(X) - 1\}^{1-A}[\mu_{p_\lambda}(X)^Y\{1-\mu_{p_\lambda}(X)\}^{1-Y}]^A \\
		& q_\lambda(Y, A, X) = g(X) \{\omega_{q_\lambda}(X) - 1\}^{1-A}[\mu_{q_\lambda}(X)^Y\{1-\mu_{q_\lambda}(X)\}^{1-Y}]^A
	\end{align*}
	By construction, $p_\lambda$ and $q_\lambda$ are part of $\mathcal{P}(\epsilon_n, \delta_n)$. Set $g(X) = g = \{\int \widehat\omega(x) dx\}^{-1}$ so that the density of $X$, $f_{p_\lambda}(x) = f_{q_\lambda}(x) = \omega_{p_\lambda}(x) g(x) = \omega_{q_\lambda}(x) g(x)$, integrates to 1. We have
	\begin{align*}
		\psi_{p_\lambda} &=  g \int \omega_{p_\lambda}(x) \mu_{p_\lambda}(x) dx = g \int \widehat\omega(x) \widehat\mu(x) dx - g \epsilon^2_n \sum_{j = 1}^k \int B^2_j(x) \frac{\widehat\mu(x)}{\widehat\omega(x)} dx \\
		\psi_{q_\lambda}	& =  \int \omega_{q_\lambda}(x) \mu_{q_\lambda}(x) g(x) dx =  \psi_{p_\lambda} +  g \delta_n \epsilon_n \int \sum_{j = 1}^k B^2_j(x)\{\widehat\omega(x)\}^{-1} dx
	\end{align*}
	This means that  $|\psi_{q_\lambda} - \psi_{p_\lambda}| \ \gtrsim \epsilon_n \delta_n$ for every $\lambda$. Let $\omega(\lambda)$ denote a product prior on $\lambda = (\lambda_1, \ldots, \lambda_k)$ such that $\omega_j(\lambda = -1) = \omega_j(\lambda = 1) = 1/2$.
	Let $O = (X, A, Y)$ and define $\mathcal{O} = \cup_{j = 1}^k \mathcal{O}_j$, where $\mathcal{O}_j = \mathcal{X}_j \times \{0, 1\}^2$. Notice that $p_j = \int_{\mathcal{O}_j} p_\lambda d\nu = \int_{\mathcal{O}_j} q_\lambda d\nu  = k^{-1}$. Next, we have
	\begin{align*}
		& p(O):= \int p_\lambda(Y, A, X) d\omega(\lambda) = g(X) \{\widehat\omega(X) - 1\}^{1-A} [\widehat\mu(X)^Y\{1-\widehat\mu(X)\}^{1-Y}]^A, \\
		& p_\lambda(O) - p(O) = g(X) \{\omega_{p_\lambda}(X) - \widehat\omega(X)\}^{1-A}[\{\mu_{p_\lambda}(X) - \widehat\mu(X)\}^Y\{\widehat\mu(X) - \mu_{p_\lambda}(X)\}^{1-Y}]^A, \\
		& q_\lambda(O) - p_\lambda(O) = A g(X) \{\mu_{q_\lambda}(X) - \mu_{p_\lambda}(X)\}^Y\{ \mu_{p_\lambda}(X) - \mu_{q_\lambda}(X) \}^{1-Y}, \\
		& q(O) - p(O) := \int \{q_\lambda(Y, A, X) - p_\lambda(Y, A, X)\} d\omega(\lambda).
	\end{align*}
	Next, we apply Lemma \ref{lemma:hellinger_robins}. Notice that $\delta_1 \lesssim \epsilon_n^2$, $\delta_2 \lesssim \delta_n^2$ and $\delta_3 = 0$. Therefore,
	\begin{align*}
		H^2\left( \int p^n_\lambda d\omega(\lambda), \int q^n_\lambda d\omega(\lambda)\right) \lesssim n^2k^{-1}(\delta_n^4 + \delta^2_n \epsilon^2_n).
	\end{align*}
	In this light, choosing $k$ large enough yields that the Hellinger distance is bounded. The lower bound then follows by Lemma \ref{lemma:tsybakov_main_LB}. 
	\subsection{Proof of Proposition \ref{siva_prop}}
    Recall that we have assumed that $c \leq \widehat\mu(x) \leq 1- c$ for some constant $c > 0$ and every $x$.
    
	\textbf{Claim 1}. To prove this claim, we modify the definitions of $p_\lambda$ and $q_\lambda$. Let $\omega_{p_\lambda}(x) = \widehat\omega(x)$, $\omega_{q_\lambda}(x) = \widehat\omega(x) + \epsilon_n \sum_{j = 1}^k\lambda_j B_j(x)$, $\mu_{p_\lambda}(x) = 1/\widehat\omega(x)$, 
	\begin{align*}
		\mu_{q_\lambda}(x) =  \frac{1}{\widehat\omega(x)} - \frac{\omega_{q_\lambda}(x) - \widehat\omega(x)}{\widehat\omega^2(x)} = \frac{1}{\widehat\omega(x)} - \frac{\epsilon_n \sum_{j = 1}^k \lambda_j B_j(x)}{\widehat\omega^2(x)},
	\end{align*}
	and $g(x) = g = \{\int \widehat\omega(x) dx\}^{-1}$. In this light, $\E_{p_\lambda}\{\mu_{p_\lambda}(X) \mid A = 1, \omega_{p_\lambda}(X) = t_1, \widehat\omega(X) = t_2, D^n\}$ and $\E_{q_\lambda}\{\mu_{q_\lambda}(X) \mid A = 1, \omega_{q_\lambda}(X) = t_1, \widehat\omega(X) = t_2, D^n\}$ are smooth functions in $t_1$ and $t_2$. Further, $\|\widehat\omega - \omega_{p_\lambda}\| = 0$ and $\|\widehat\omega - \omega_{q_\lambda}\| = \epsilon_n$, so $p_\lambda$ and $q_\lambda$ belong to $\mathcal{P}_\omega(\epsilon_n)$. 
	
	With these modifications in place, we have $\psi_{p_\lambda} = g$ and 
	\begin{align*}
		\psi_{q_\lambda} = \psi_{p_\lambda} - g \epsilon_n^2 \sum_{j = 1}^k  \int \frac{B^2_j(x)}{\widehat\omega^2(x)} dx,
	\end{align*}
	so that $|\psi_{p_\lambda} - \psi_{q_\lambda}| \ \gtrsim \epsilon_n^2$ for every $\lambda$. Following the same of reasoning to prove Proposition \ref{prop:pure_sa}, we have $\delta_1 = 0$, $\delta_2 \lesssim \epsilon_n^2$ and $\delta_3 = 0$. Therefore, for $k$ large enough, Claim 1 follows.
	
	\textbf{Claim 2}. To prove Claim 3, we modify $p_\lambda$ and $q_\lambda$. We set $\omega_{p_\lambda}(x) = \{\widehat\mu(x)\}^{-1}$, $\mu_{p_\lambda}(x) = \widehat\mu(x)$,
	\begin{align*}
		& \omega_{q_\lambda}(x) = \frac{1}{\widehat\mu(x)}  + \frac{\mu_{q_\lambda}(x) - \widehat\mu(x)}{\widehat\mu^2(x)} =  \frac{1}{\widehat\mu(x)} + \delta_n \sum_{j = 1}^k \lambda_j B_j(x), \quad \text{ and } \\
		& \mu_{q_\lambda}(x) = \widehat\mu(x) - \widehat\mu^2(x) \delta_n  \sum_{j = 1}^k \lambda_j B_j(x)
	\end{align*}
	Under this construction, both $p_\lambda$ and $q_\lambda$ are contained in $\mathcal{P}_\mu(\delta_n)$, because $\| \widehat\mu - \mu_{p_\lambda}\| = 0$, $\|\widehat\mu - \mu_{q_\lambda}\| \lesssim \delta_n$, as well as $\E_{p_\lambda}\{\omega_{p_\lambda}(X) \mid A = 1, \widehat\mu(X), \mu_{p_\lambda}(X), D^n \}= \{\widehat\mu(X)\}^{-1}$ and $\E_{q_\lambda}\{\omega_{q_\lambda}(X) \mid A = 1, \widehat\mu(X), \mu_{q_\lambda}(X), D^n\}= \{\widehat\mu(X)\}^{-1}  + \{\mu_\lambda(X) - \widehat\mu(X)\}\{\widehat\mu^2(X)\}^{-1}$, which are both smooth functions of $\widehat\mu$ and $\mu$. 
	We have
	\begin{align*}
		& \psi_{p_\lambda} = g, \quad \text{ and } \quad \psi_{q_\lambda} = \psi_{p_\lambda} - g \delta_n^2 \sum_{j = 1}^k \int \widehat\mu^2(x)B^2_j(x) dx
	\end{align*}
	where $g = \{\int 1/ \widehat\mu(x) dx\}^{-1}$ under both $q_\lambda$ and $p_\lambda$. Therefore, for any $\lambda$, $|\psi_{p_\lambda} - \psi_{q_\lambda}| \gtrsim \delta_n^2$. Because
	\begin{align*}
		& \int \omega_{p_\lambda}(x) d\omega(\lambda) = \int \omega_{q_\lambda}(x) d\omega(\lambda) = \frac{1}{\widehat\mu(x)}\\
		& \int \mu_{p_\lambda}(x) d\omega(\lambda) = \int \mu_{q_\lambda}(x) d\omega(\lambda) = \widehat\mu(x),
	\end{align*}
	we have $\delta_1 = 0$, $\delta_2 \lesssim \delta_n^2$ and $\delta_3 = 0$. In this light, for $k$ large enough, by proceeding as in the proof of Proposition \ref{prop:pure_sa}, Claim 2 follows. 
	
	\textbf{Claim 3}. Let $\gamma_n = \epsilon_n \land \delta_n$ and define
	\begin{align*}
		& \omega_{p_\lambda}(x) = \omega_{q_\lambda}(x) = \widehat\omega(x) + \gamma_n \sum_{j = 1}^k \lambda_j B_j(x), \quad \mu_{p_\lambda}(x) = \widehat\mu(x) - \gamma_n \frac{\widehat\mu(x)}{\widehat\omega(x)} \sum_{j = 1}^k \lambda_j B_j(x) \\
		& \text{and} \quad \mu_{q_\lambda}(x) = \mu_{p_\lambda}(x) + \frac{\gamma_n}{\widehat\omega(x)} \sum_{j = 1}^k \lambda_j B_j(x).
	\end{align*}
	We have $\|\widehat\omega - \omega_{p_\lambda}\| \ = \|\widehat\omega - \omega_{q_\lambda}\| \ \leq \gamma_n \leq \epsilon_n$, $\|\mu_{p_\lambda} - \widehat\mu\| \ \lesssim \gamma_n \leq \delta_n$, and $\|\mu_{q_\lambda} - \widehat\mu\| \ \lesssim \gamma_n \leq \delta_n$. Furthermore,
	\begin{align*}
		& \omega_{p_\lambda}(x) = \omega_{q_\lambda}(x) = \widehat\omega(x) + \frac{\{\widehat\mu(x) - \mu_{p_\lambda}(x)\} \widehat\omega(x)}{\widehat\mu(x)} = \widehat\omega(x) + \frac{\{\mu_{q_\lambda}(x) - \widehat\mu(x)\}\widehat\omega(x)}{1 - \widehat\mu(x)}, \\
		& \mu_{p_\lambda}(x) = \widehat\mu(x) - \frac{\widehat\mu(x)\{\omega_{p_\lambda}(x) - \widehat\omega(x)\}}{\widehat\omega(x)}, \quad \text{and} \quad \mu_{q_\lambda}(x) = \widehat\mu(x) + \frac{\{1-\widehat\mu(x)\}\{\omega_{q_\lambda}(x) - \widehat\omega(x)\}}{\widehat\omega(x)}.
	\end{align*}
	Therefore, $\E\{\omega_{p_\lambda}(X) \mid A = 1, \widehat\mu(X) = t_1, \mu_{p_\lambda}(X) = t_2, \widehat\omega(X) = t_3, D^n\}$, $\E\{\omega_{q_\lambda}(X) \mid A = 1, \widehat\mu(X) = t_1, \mu_{q_\lambda}(X) = t_2, \widehat\omega(X) = t_3, D^n\}$, $\E\{\mu_{p_\lambda}(X) \mid A = 1, \widehat\omega(X) = t_1, \omega_{p_\lambda}(X) = t_2, \widehat\mu(X) = t_3, D^n\}$, as well as $\E\{\mu_{q_\lambda}(X) \mid A = 1, \widehat\omega(X) = t_1, \omega_{q_\lambda}(X) = t_2, \widehat\mu(X) = t_3, D^n\}$ are all smooth functions. In this respect, $p_\lambda$ and $q_\lambda$ belong to $\mathcal{P}_{\omega\mu}(\epsilon_n, \delta_n)$. Set $g(X) = g = \{\int \widehat\omega(x) dx\}^{-1}$ and define the densities:
	\begin{align*}
		& p_\lambda(Y, A, X) = g(X)\{\omega_{p_\lambda}(X) - 1\}^{1-A}[\mu_{p_\lambda}(X)^Y\{1 - \mu_{p_\lambda}(X)\}^{1 - Y}]^A, \\
		& q_\lambda(Y, A, X) = g(X)\{\omega_{q_\lambda}(X) - 1\}^{1-A}[\mu_{q_\lambda}(X)^Y\{1 - \mu_{q_\lambda}(X)\}^{1 - Y}]^A.
	\end{align*}
	We have
	\begin{align*}
		& \psi_{p_\lambda} = g\int \omega_\lambda(x) \mu_{1\lambda}(x) dx = g\int \widehat\omega(x) \widehat\mu(x) dx - g \gamma^2_n \sum_{j = 1}^k \int \frac{\widehat\mu(x)}{\widehat\omega(x)} B^2_j(x) dx, \\
		& \psi_{q_\lambda} = \psi_{p_\lambda} + g \gamma^2_n \sum_{j = 1}^k \int \frac{B^2_j(x)}{\widehat\omega(x)}dx,
	\end{align*}
	so that $|\psi_{p_\lambda} - \psi_{q_\lambda}| \ \gtrsim \gamma_n^2 = \epsilon_n^2 \land \delta_n^2$ for every $\lambda$. To conclude the proof, it is sufficient to note that $\delta_1^2 \lesssim \gamma_n^2$, $\delta_2 \lesssim \gamma_n^2$, and $\delta_3 = 0$ and then follow the arguments made to establish Proposition \ref{prop:pure_sa}.  
	\section{Proofs regarding the upper bounds}
	\subsection{Useful lemmas}
	\begin{lemma}[Lemma 2 from \cite{kennedy2018sharp}] \label{lemma:edward}
		Let $\widehat{f}(o)$ be a function estimated from a sample $O^N = (O_{n+1}, \ldots, O_N)$ and let $\Pn$ denote the empirical measure over $(O_1, \ldots, O_n)$, which is independent of $O^N$. Then,
		\begin{align*}
			(\Pn - \Pb)(\widehat{f} - f) = O_\Pb\left(\frac{\|\widehat{f} - f\|}{\sqrt{n}}\right).
		\end{align*}
	\end{lemma}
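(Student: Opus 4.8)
The plan is to condition on the training sample $O^N$, so that $\widehat f$ becomes a fixed (nonrandom) function, and then exploit the fact that $(\Pn - \Pb)(\widehat f - f)$ is, conditionally, a centered average of $n$ i.i.d.\ terms. Writing $g = \widehat f - f$, the independence of $O^N$ from $(O_1, \ldots, O_n)$ gives $\E\{(\Pn - \Pb)g \mid O^N\} = 0$, so the first step is simply to compute the conditional second moment of $(\Pn - \Pb)g$ and show it is of order $\|\widehat f - f\|^2 / n$.

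The second step is this variance bound. Conditionally on $O^N$, the summands $g(O_i) - \Pb g$ are i.i.d.\ and mean-zero, so
\begin{align*}
    \E\left[\{(\Pn - \Pb)g\}^2 \mid O^N\right] = \frac{1}{n}\,\var\{g(O_1) \mid O^N\} \leq \frac{1}{n}\,\Pb g^2 = \frac{\|\widehat f - f\|^2}{n},
\end{align*}
where the final equality $\Pb g^2 = \|\widehat f - f\|^2$ holds because, with $\widehat f$ frozen by the conditioning, $\Pb g^2$ simply integrates $(\widehat f - f)^2$ against the law of a fresh observation, which is exactly the squared $L_2$ norm appearing on the right-hand side.

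The third step converts this moment bound into the desired $O_\Pb$ statement. The conceptual point is that the normalizing quantity $\|\widehat f - f\|$ is itself random through its dependence on $O^N$, so one cannot invoke a standard central limit theorem directly. Instead, I would define the normalized statistic $Z_n = \sqrt n\,(\Pn - \Pb)g / \|\widehat f - f\|$ (handling the null event $\{\|\widehat f - f\| = 0\}$ trivially). The display above reads $\E(Z_n^2 \mid O^N) \leq 1$ for every realization of $O^N$; taking expectations over $O^N$ gives the unconditional bound $\E(Z_n^2) \leq 1$, and hence by Markov's inequality $\Pb(|Z_n| > t) \leq t^{-2}$ for all $t > 0$. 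Since this tail can be made uniformly small by choosing $t$ large, $Z_n = O_\Pb(1)$, which is precisely the claim $(\Pn - \Pb)(\widehat f - f) = O_\Pb(\|\widehat f - f\|/\sqrt n)$.

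There is no genuinely hard computation here: once the conditioning is in place, the heart of the argument is the one-line conditional variance bound of the second step. The only subtlety — and the reason the lemma is useful in the main decomposition — lies in the third step, where the uniformity of the conditional bound $\E(Z_n^2 \mid O^N) \leq 1$ over all realizations of the training sample is exactly what allows the random normalizer $\|\widehat f - f\|$ to be carried through cleanly, so that the unconditional stochastic-order statement follows without any separate control of how fast $\|\widehat f - f\|$ vanishes.
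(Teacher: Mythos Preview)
Your argument is correct and is the standard proof of this result: condition on the training sample so that $\widehat f$ is fixed, bound the conditional variance by $\|\widehat f - f\|^2/n$, and convert to an $O_\Pb$ statement via Chebyshev/Markov on the normalized ratio. The paper does not supply its own proof of this lemma; it simply restates it as Lemma~2 from \cite{kennedy2018sharp}, whose proof is exactly the conditioning-plus-Chebyshev argument you give.
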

	\begin{lemma}\label{lemma:Q}
		Let the setup be as in the paper and let us use the shorthand notation $(\widehat\omega_1, \widehat\mu_1) \equiv (\widehat\omega(X_1), \widehat\mu(X_1))$. Further, for a kernel $K$, bandwidth $h$, and pair of indices $s$ and $s_0$, let
		\begin{align*}
			K_{h0}(\widehat\omega_s, \widehat\mu_s) = h^{-1}K(h^{-1}(\widehat\omega_s - \widehat\omega_{s_0})) h^{-1} K(h^{-1}(\widehat\mu_s - \widehat\mu_{s_0})).
		\end{align*}
		Similarly define $K_{h0}(\widehat\omega_s)$ and $K_{h0}(\widehat\mu_s)$. Let $N$ denote some integer. For some set of indices $\mathcal{S}$ that can be partitioned into $\mathcal{S}_1$ and $\mathcal{S}_2$, define 
		\begin{align*}
			\widehat{Q}(\widehat\omega_{s_0}, \widehat\mu_{s_0}) = \frac{1}{N} \sum_{s \in \mathcal{S}} A_s K_{h0}(\widehat\omega_s, \widehat\mu_s)
		\end{align*} 
		and $Q(\widehat\omega_{s_0}, \widehat\mu_{s_0}) = \E\{AK_{h0}(\widehat\omega, \widehat\mu) \mid D^n\}$. Similarly, define $\widehat{Q}(\widehat\omega_{s_0})$, $Q(\widehat\omega_{s_0})$, $\widehat{Q}(\widehat\mu_{s_0})$ and $Q(\widehat\mu_{s_0})$. Suppose that $Q(\widehat\omega_{s_0}, \widehat\mu_{s_0})$, $Q(\widehat\omega_{s_0})$, and $Q(\widehat\mu_{s_0})$ are bounded. Let $W_{\mathcal{S}}$ be shorthand notation for $(A_s, X_s)_{s \in \mathcal{S}}$ and denote  the size of the set of indices $\mathcal{S}$ by $|\mathcal{S}|$. Then, it holds that:
		\begin{align*}
			&  \E\left[\{\widehat{Q}(\widehat\omega_{s_0}, \widehat\mu_{s_0}) - Q(\widehat\omega_{s_0}, \widehat\mu_{s_0})\}^2 \mid X_0, W_{\mathcal{S}_2}, D^n\right] \lesssim \frac{1 + |S_2|^2}{N^2h^4} + \frac{|S_1|}{N^2h^2} + \left(\frac{|\mathcal{S}|}{N} - 1\right)^2, \\
			& \E\left[\{\widehat{Q}(\widehat\omega_{s_0}) - Q(\widehat\omega_{s_0})\}^2 \mid X_0, W_{\mathcal{S}_2}, D^n\right] \lesssim \frac{1 + |S_2|^2}{(Nh)^2} + \frac{|S_1|}{N^2h} + \left(\frac{|\mathcal{S}|}{N} - 1\right)^2, \\
			& \E\left[\{\widehat{Q}(\widehat\mu_{s_0}) - Q(\widehat\mu_{s_0})\}^2 \mid X_0, W_{\mathcal{S}_2}, D^n\right] \lesssim \frac{1 + |S_2|^2}{(Nh)^2} + \frac{|S_1|}{N^2h} + \left(\frac{|\mathcal{S}|}{N} - 1\right)^2.
		\end{align*}
	\end{lemma}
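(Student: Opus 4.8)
The plan is to prove the three bounds by a single bias--variance argument, since they differ only in the number of kernel factors; I will give the $(\widehat\omega_{s_0},\widehat\mu_{s_0})$ case and note at the end that dropping a kernel factor lowers the relevant power of $h^{-1}$ by one. Let $\mathcal{G}$ denote the $\sigma$-field generated by $X_{s_0}$, $W_{\mathcal{S}_2}$ and $D^n$, and decompose $\widehat{Q}(\widehat\omega_{s_0},\widehat\mu_{s_0}) = \widehat{Q}_1 + \widehat{Q}_2$ with $\widehat{Q}_\ell = N^{-1}\sum_{s\in\mathcal{S}_\ell} A_s K_{h0}(\widehat\omega_s,\widehat\mu_s)$. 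Conditionally on $\mathcal{G}$ the term $\widehat{Q}_2$ is deterministic; and since $\widehat\omega,\widehat\mu$ are functions of $D^n$ while $(O_s)_{s\in\mathcal{S}_1}$ is independent of $\mathcal{G}$, the summands of $\widehat{Q}_1$ are i.i.d.\ with common conditional mean $Q(\widehat\omega_{s_0},\widehat\mu_{s_0})$ (here $Q$ is the paper's $\E\{AK_{h0}(\widehat\omega,\widehat\mu)\mid D^n\}$, understood as that function evaluated at the realized $(\widehat\omega_{s_0},\widehat\mu_{s_0})$). Hence the conditional MSE equals $\var(\widehat{Q}_1\mid\mathcal{G}) + \bigl(\E[\widehat{Q}\mid\mathcal{G}] - Q\bigr)^2$, with $\E[\widehat{Q}\mid\mathcal{G}] - Q = (|\mathcal{S}_1|/N - 1)\,Q + \widehat{Q}_2$.

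For the variance, the workhorse is the pointwise inequality $K_h(u)^2 \le \|K\|_\infty\, h^{-1} K_h(u)$ (from $K$ being bounded), which gives $K_{h0}^2 \le \|K\|_\infty^2 h^{-2} K_{h0}$; since $A_s^2 = A_s$, taking the conditional expectation yields $\E[(A_s K_{h0})^2\mid\mathcal{G}] \le \|K\|_\infty^2 h^{-2}\, Q(\widehat\omega_{s_0},\widehat\mu_{s_0}) \lesssim h^{-2}$ by the assumed boundedness of $Q$ --- no change of variables or density bound is needed. Therefore $\var(\widehat{Q}_1\mid\mathcal{G}) = |\mathcal{S}_1|\,N^{-2}\var(A_s K_{h0}\mid\mathcal{G}) \lesssim |\mathcal{S}_1|/(N^2 h^2)$. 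For the squared bias, use $|Q|\lesssim 1$ and the crude bound $|\widehat{Q}_2| \le N^{-1}|\mathcal{S}_2|\,\|K\|_\infty^2 h^{-2}$ together with $(a+b)^2\le 2a^2+2b^2$ and the identity $|\mathcal{S}_1|/N - 1 = (|\mathcal{S}|/N - 1) - |\mathcal{S}_2|/N$, to obtain $\bigl(\E[\widehat{Q}\mid\mathcal{G}] - Q\bigr)^2 \lesssim (|\mathcal{S}|/N - 1)^2 + |\mathcal{S}_2|^2/(N^2 h^4)$, absorbing the leftover $|\mathcal{S}_2|^2/N^2$ into the last term via $h<1$.

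Summing the two pieces and using $|\mathcal{S}_2|^2 \le 1 + |\mathcal{S}_2|^2$ gives exactly $\E[(\widehat{Q} - Q)^2\mid X_{s_0}, W_{\mathcal{S}_2}, D^n] \lesssim (1+|\mathcal{S}_2|^2)/(N^2 h^4) + |\mathcal{S}_1|/(N^2 h^2) + (|\mathcal{S}|/N - 1)^2$; the additive $1$ is pure slack that also trivially covers the degenerate cases $|\mathcal{S}_1|=0$ or $|\mathcal{S}_2|=0$. The two univariate statements follow verbatim: removing the $\widehat\mu$-kernel replaces $h^{-2}$ by $h^{-1}$ in the variance step (so the middle term becomes $|\mathcal{S}_1|/(N^2 h)$) and $h^{-4}$ by $h^{-2}$ in the bias step (so the first term becomes $(1+|\mathcal{S}_2|^2)/(Nh)^2$). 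I do not anticipate a genuine obstacle; the only points requiring care are fixing the conditioning $\sigma$-field precisely so that $\widehat{Q}_1$ is a clean i.i.d.\ average and $\widehat{Q}_2$ is a constant, and the kernel-square trick that converts boundedness of $Q$ directly into the $h^{-2}$ (resp.\ $h^{-1}$) control of the second moment.
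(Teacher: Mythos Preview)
Your argument is correct and follows essentially the same bias--variance route as the paper, just organized more cleanly: you split $\widehat Q = \widehat Q_1 + \widehat Q_2$ up front and invoke $\mathrm{MSE} = \mathrm{variance} + \mathrm{bias}^2$, whereas the paper expands $(\widehat Q - Q)^2$ directly and sorts the cross-terms by whether their indices lie in $\mathcal S_1$ or $\mathcal S_2$. The kernel-square trick $K_h^2 \lesssim h^{-1}K_h$ that converts boundedness of $Q$ into the $h^{-2}$ second-moment control is the same key step in both proofs.

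One small omission: your claim that the $\mathcal S_1$-summands are conditionally i.i.d.\ with mean $Q$ requires $s_0 \notin \mathcal S_1$, which the lemma does not assume. The paper handles this by peeling off the $s = s_0$ term first; that term is at most $N^{-1}h^{-2}K(0)^2$, squares to $\lesssim 1/(N^2h^4)$, and is in fact the source of the additive ``$1$'' in the stated bound. So your remark that the $1$ is ``pure slack'' is not quite right: it is needed precisely for the case $s_0 \in \mathcal S$, which does occur in the paper's applications (for instance, $\widehat Q_{-i}(\widehat\mu_j)$ sums over $s \neq i$ and thus includes $s = j = s_0$). The fix is the same one-line adjustment the paper makes.
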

	For example, assuming $nh^2 \to \infty$, if $|\mathcal{S}| = N = n$, $|\mathcal{S}_1| = n-c$ and $|\mathcal{S}_2| = c$, for some constant $c\geq 0$, then:
	\begin{align*}
		\E\left[\{\widehat{Q}(\widehat\omega_{s_0}, \widehat\mu_{s_0}) - Q(\widehat\omega_{s_0}, \widehat\mu_{s_0})\}^2 \mid W_{\mathcal{S}_2}, D^n\right] \lesssim (nh^2)^{-1}.
	\end{align*}
	\begin{proof}
		Without essential loss of generality, we derive the results for the case when $\mathcal{S}$ does not include the index $s_0$. If it does, we can consider $\mathcal{S}_{-s_0} = \mathcal{S} \setminus s_0$, define $\widehat{Q}_{-s_0}(\widehat\omega_{s_0}, \widehat\mu_{s_0}) = (N-1)^{-1} \sum_{s \in \mathcal{S}_{-s_0}} A_s K_{h0}(\widehat\omega_s, \widehat\mu_s)$, and use the bound
  \begin{align*}
      \left\{\widehat{Q}(\widehat\omega_{s_0}, \widehat\mu_{s_0}) - Q(\widehat\omega_{s_0}, \widehat\mu_{s_0})\right\}^2 & = \left\{\widehat{Q}_{-s_0}(\widehat\omega_{s_0}, \widehat\mu_{s_0}) - Q(\widehat\omega_{s_0}, \widehat\mu_{s_0}) + \left(\frac{N-1}{N} - 1\right) \widehat{Q}_{-s_0}(\widehat\omega_{s_0}, \widehat\mu_{s_0}) \right. \\ 
     &\hphantom{=} \quad\quad \left. \vphantom{\widehat{Q}_{-s_0}(\widehat\omega_{s_0}, \widehat\mu_{s_0})} \+ + \frac{A_{s_0}K_{h0}(\widehat\omega_{s_0}, \widehat\mu_{s_0})}{N} \right\}^2\\
     & \lesssim \left\{\widehat{Q}_{-s_0}(\widehat\omega_{s_0}, \widehat\mu_{s_0}) - Q(\widehat\omega_{s_0}, \widehat\mu_{s_0})\right\}^2 + \frac{1}{N^2h^4}.
  \end{align*}
  We would then use the bounding approach below on the first term. We have
		\begin{align*}
			\left\{\widehat{Q}(\widehat\omega_{s_0}, \widehat\mu_{s_0}) - Q(\widehat\omega_{s_0}, \widehat\mu_{s_0})\right\}^2 & = \left\{\frac{1}{N} \sum_{s \in \mathcal{S}} A_sK_{h0}(\widehat\omega_{s}, \widehat\mu_{s}) - Q(\widehat\omega_{s_0}, \widehat\mu_{s_0}) \right\}^2 \\
			& = \left[\frac{1}{N} \sum_{s \in \mathcal{S}} \{A_sK_{0}(\widehat\omega_{s}, \widehat\mu_{s}) - Q(\widehat\omega_{s_0}, \widehat\mu_{s_0})\} + \left(\frac{|\mathcal{S}|}{N} - 1\right)Q(\widehat\omega_{s_0}, \widehat\mu_{s_0}) \right]^2 \\
			& \lesssim \left[\frac{1}{N} \sum_{s \in \mathcal{S}} \{A_sK_{h0}(\widehat\omega_{s}, \widehat\mu_{s}) - Q(\widehat\omega_{s_0}, \widehat\mu_{s_0})\}\right]^2 + \left(\frac{|\mathcal{S}|}{N} - 1\right)^2Q^2(\widehat\omega_{s_0}, \widehat\mu_{s_0}).
		\end{align*}
		Next, we have
		\begin{align*}
			& \left[\frac{1}{N} \sum_{s \in \mathcal{S}} \{A_sK_{h0}(\widehat\omega_s, \widehat\mu_s) - Q(\widehat\omega_{s_0}, \widehat\mu_{s_0})\}\right]^2 \\
			& = \frac{1}{N^2} \mathop{\sum\sum}_{s \neq j, \ s \text{ or } j \ \in \mathcal{S}_1} \{A_sK_{h0}(\widehat\omega_s, \widehat\mu_s) - Q(\widehat\omega_{s_0}, \widehat\mu_{s_0})\}\{A_jK_{h0}(\widehat\omega_j, \widehat\mu_j) - Q(\widehat\omega_{s_0}, \widehat\mu_{s_0})\} \\
			& \hphantom{=} + \frac{1}{N^2} \mathop{\sum\sum}_{s \neq j, \ s, j \in \mathcal{S}_2} \{A_sK_{h0}(\widehat\omega_s, \widehat\mu_s) - Q(\widehat\omega_{s_0}, \widehat\mu_{s_0})\}\{A_jK_{h0}(\widehat\omega_j, \widehat\mu_j) - Q(\widehat\omega_{s_0}, \widehat\mu_{s_0})\} \\
			& \hphantom{=} + \frac{1}{N^2} \sum_{s \in \mathcal{S}_1} \{A_sK_{h0}(\widehat\omega_s, \widehat\mu_s) - Q(\widehat\omega_{s_0}, \widehat\mu_{s_0})\}^2 \\
			& \hphantom{=} + \frac{1}{N^2} \sum_{s \in \mathcal{S}_2} \{A_sK_{h0}(\widehat\omega_s, \widehat\mu_s) - Q(\widehat\omega_{s_0}, \widehat\mu_{s_0})\}^2.
		\end{align*}
		If either $s$ or $j$ is in the set $\mathcal{S}_1$, then
		\begin{align*}
			& \E\left[\{A_sK_{h0}(\widehat\omega_s, \widehat\mu_s) - Q(\widehat\omega_{s_0}, \widehat\mu_{s_0})\}\{A_jK_{h0}(\widehat\omega_j, \widehat\mu_j) - Q(\widehat\omega_{s_0}, \widehat\mu_{s_0})\} \mid X_0, W_{\mathcal{S}_2}, D^n \right] \\
			& = \E\{A_sK_{h0}(\widehat\omega_s, \widehat\mu_s) - Q(\widehat\omega_{s_0}, \widehat\mu_{s_0}) \mid X_0, W_{\mathcal{S}_2}, D^n\}\E\{A_jK_{h0}(\widehat\omega_j, \widehat\mu_j) - Q(\widehat\omega_{s_0}, \widehat\mu_{s_0}) \mid X_0, W_{\mathcal{S}_2}, D^n \} \\
			& = 0,
		\end{align*}
  since at least one of the two terms multiplying each other is equal to zero as it does not depend on $W_{\mathcal{S}_2}$. Next, we have
		\begin{align*}
			\mathop{\sum\sum}_{s \neq j, s, j \in \mathcal{S}_2} \E\left[\{A_sK_{h0}(\widehat\omega_s, \widehat\mu_s) - Q(\widehat\omega_{s_0}, \widehat\mu_{s_0})\}\{A_jK_{h0}(\widehat\omega_j, \widehat\mu_j) - Q(\widehat\omega_{s_0}, \widehat\mu_{s_0})\} \mid X_0, W_{\mathcal{S}_2}, D^n\right] \lesssim \frac{|\mathcal{S}_2|(|S_2| - 1)}{h^4}.
		\end{align*}
		In addition, we have
		\begin{align*}
			& \sum_{s \in \mathcal{S}_1} \E\left[\{A_sK_{h0}(\widehat\omega_s, \widehat\mu_s) - Q(\widehat\omega_{s_0}, \widehat\mu_{s_0})\}^2 \mid X_0, W_{\mathcal{S}_2}, D^n\right] \\
			& = \sum_{s \in \mathcal{S}_1} \E\{A_sK^2_{h0}(\widehat\omega_s, \widehat\mu_s) \mid X_0, D^n\} - Q^2(\widehat\omega_{s_0}, \widehat\mu_{s_0}) \\
			& \lesssim h^{-2}|\mathcal{S}_1|
		\end{align*}
		Finally,
		\begin{align*}
			\sum_{s \in \mathcal{S}_2} \E\left[\{A_sK_{h0}(\widehat\omega_s, \widehat\mu_s) - Q(\widehat\omega_{s_0}, \widehat\mu_{s_0})\}^2 \mid X_0, W_{\mathcal{S}_2}, D^n \right] \lesssim h^{-4}|\mathcal{S}_2|
		\end{align*}
		The proofs for the other two statements follow analogously.
	\end{proof}
	\subsection{Proof of Proposition \ref{prop:max_squared_a}: preliminaries}
	Recall the notation $\widehat\omega = \widehat\omega(X)$ and $\widehat\omega_i = \widehat\omega(X_i)$ and that
	\begin{align*}
		s_\omega(t_1, t_2; D^n) = \E(\mu - \widehat\mu \mid A = 1, \widehat\omega = t_1, \omega = t_2, D^n).
	\end{align*}
	The estimator is $\widehat\psi_\omega = \Pn \widehat\varphi - T_{n\omega}$, where $\varphi(O) = A\omega(Y - \mu) + \mu$ and
	\begin{align*}
		& T_{n\omega} = \frac{1}{n(n-1)} \mathop{\sum\sum}\limits_{1 \leq i \neq j \leq n}(A_i\widehat\omega_i - 1) \widehat{Q}^{-1}(\widehat\omega_i) K_h(\widehat\omega_j - \widehat\omega_i) A_j(Y_j - \widehat\mu_j),
	\end{align*}
	for $\widehat{Q}(\widehat\omega_i) = (n-1)^{-1} \sum_{j = 1, j \neq i}^n A_j K_{h}(\widehat\omega_j - \widehat\omega_i)$. The decomposition \eqref{eq:main_decomposition} yields that
	\begin{align*}
		\widehat\psi_\omega - \psi = (\Pn - \Pb)(\widehat\varphi - \overline\varphi) + (\Pn - \Pb) \overline\varphi + R_n - T_{n\omega}, \text{ where } R_n = \Pb(\widehat\varphi - \varphi).
	\end{align*}
	To keep the notation compact in subsequent proofs, let us also define:
	\begin{align*}
		& \widehat\kappa_\omega(Z_1, Z_2; D^n) = (A_1\widehat\omega_1 - 1)\frac{K_h(\widehat\omega_1 - \widehat\omega_2)}{\widehat{Q}(\widehat\omega_1)} A_2(Y_2 - \widehat\mu_2), \\
		& \kappa_\omega(Z_1, Z_2; D^n) = (A_1\widehat\omega_1 - 1)\frac{K_h(\widehat\omega_1 - \widehat\omega_2)}{Q(\widehat\omega_1)} A_2(Y_2 - \widehat\mu_2),
	\end{align*}
	so that
	\begin{align*}
		& T_{n\omega} = \frac{1}{n(n-1)} \mathop{\sum\sum}\limits_{1 \leq i \neq j \leq n}\widehat\kappa_\omega(Z_i, Z_j; D^n).
	\end{align*}
	Also define
	\begin{align*}
		\widetilde{T}_{n\omega} = \frac{1}{n(n-1)} \mathop{\sum\sum}\limits_{1 \leq i \neq j \leq n}\kappa_\omega(Z_i, Z_j; D^n).
	\end{align*}
	Further, let
	\begin{align*}
		& \kappa_{\omega1}(Z; D^n) = \int \kappa_\omega(Z, z; D^n) d\Pb(z) \text{ and } \kappa_{\omega2}(Z; D^n) = \int \kappa_\omega(z, Z; D^n) d\Pb(z).
	\end{align*}
	We can decompose $T_{n\omega}$ as
	\begin{align*}
		T_{n\omega} & = \widetilde{T}_{n\omega} + T_{n\omega} - \widetilde{T}_{n\omega} \\
		& = (\Pn - \Pb) \kappa_{\omega1} + (\Pn - \Pb)\kappa_{\omega2} + S_{n\omega} + \E\{\kappa_\omega(Z_1, Z_2; D^n) \mid D^n\} + T_{n\omega} - \widetilde{T}_{n\omega},
	\end{align*}
	where
	\begin{align*}
		S_{n\omega} = \frac{1}{n(n-1)}\mathop{\sum\sum}_{1 \leq i \neq j \leq n} \left[ \kappa_\omega(Z_i, Z_j; D^n) - \kappa_{\omega1}(Z_i; D^n) - \kappa_{\omega2}(Z_j; D^n)+ \E\left\{ \kappa_\omega(Z_1, Z_2; D^n) \mid D^n \right\}\right].
	\end{align*}
	We will repeatedly use the following two lemmas:
	\begin{lemma}\label{lemma:Sna}
		It holds that
		\begin{enumerate}
			\item $\sup_z |\kappa_{\omega1}(z; D^n)| \ \lesssim \|\widehat\mu - \mu\|_\infty \ \land \ h^{-1/2} \|\widehat\mu - \mu\|$,
			\item $\sup_z|\kappa_{\omega2}(z; D^n)| \ \lesssim \|\widehat\omega - \omega\|_\infty \ \land \ h^{-1/2}\|\widehat\omega - \omega\|$,
			\item $\E(S^2_{n\omega} \mid D^n) \lesssim (n^2h)^{-1}$.
		\end{enumerate}
	\end{lemma}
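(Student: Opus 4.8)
The plan is to bound each of the three quantities directly from their definitions, using the kernel assumption, boundedness of the residuals, and the fact that $\widehat\omega,\widehat\mu$ are computed on the independent sample $D^n$. For the first two claims I would compute $\kappa_{\omega1}$ and $\kappa_{\omega2}$ explicitly as conditional expectations. Recall $\kappa_\omega(Z_1, Z_2; D^n) = (A_1\widehat\omega_1 - 1)\{Q(\widehat\omega_1)\}^{-1} K_h(\widehat\omega_1 - \widehat\omega_2) A_2(Y_2 - \widehat\mu_2)$. Integrating out $Z_2$ gives $\kappa_{\omega1}(Z_1; D^n) = (A_1\widehat\omega_1 - 1)\{Q(\widehat\omega_1)\}^{-1}\, \E\{K_h(\widehat\omega_1 - \widehat\omega_2) A_2(Y_2 - \widehat\mu_2)\mid Z_1, D^n\}$; by the definition of $s_\omega$ and $Q(\widehat\omega_1) = \E\{A K_h(\widehat\omega_1 - \widehat\omega)\mid D^n\}$, the inner expectation equals a kernel-weighted average of $s_\omega(\widehat\omega, \omega; D^n)$ against the density of $(A\widehat\omega)$-weighted units, so that $\kappa_{\omega1}(Z_1; D^n) = (A_1\widehat\omega_1 - 1)\, \overline{s}_\omega(\widehat\omega_1)$ for a smoothed version $\overline{s}_\omega$ of $s_\omega$. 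Since $|A_1\widehat\omega_1 - 1|\le M_\omega + 1$ and $|\overline s_\omega|$ is controlled by $\|s_\omega\|_\infty \lesssim \|\widehat\mu-\mu\|_\infty$ (using $|s_\omega(t_1,t_2;D^n)|\le \E\{|\mu-\widehat\mu|\mid \cdots\}$), the first bound $\sup_z|\kappa_{\omega1}|\lesssim \|\widehat\mu-\mu\|_\infty$ follows. For the alternative bound $h^{-1/2}\|\widehat\mu-\mu\|$, I would instead apply Cauchy–Schwarz inside the conditional expectation: $|\E\{K_h(\widehat\omega_1-\widehat\omega_2)A_2(Y_2-\widehat\mu_2)\mid Z_1,D^n\}| \le \{\E K_h^2(\widehat\omega_1-\widehat\omega_2)\mid Z_1, D^n\}^{1/2}\{\E A_2(Y_2-\widehat\mu_2)^2 \mid \cdots\}^{1/2}$; the first factor is $O(h^{-1/2})$ by the kernel assumption and the change-of-variables argument (densities bounded), and the second factor is $\lesssim \|\widehat\mu-\mu\| + \sigma$ — here one uses $\E\{(Y-\widehat\mu)^2\mid A=1, \widehat\omega, \omega, D^n\} = \var(Y\mid\cdots) + s_\omega^2$ and, more carefully, that the $s_\omega^2$ part contributes at the $\|\widehat\mu-\mu\|$ scale after the outer kernel smoothing, while the conditional-variance part is $O(1)$ but multiplies $h^{1/2}\cdot h^{-1/2}$ — so the precise bookkeeping there is the one place to be careful. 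The second claim is symmetric: integrating out $Z_1$ first, one writes $\kappa_{\omega2}(Z_2;D^n) = A_2(Y_2-\widehat\mu_2)\,\E[(A_1\widehat\omega_1-1)\{Q(\widehat\omega_1)\}^{-1}K_h(\widehat\omega_1-\widehat\omega_2)\mid Z_2, D^n]$ and recognizes $\E[(A_1\widehat\omega_1-1)\mid \widehat\omega_1, D^n] = \E[A_1\widehat\omega_1-1\mid \widehat\omega_1, D^n]$, which after localization picks out $\E\{\widehat\omega-\omega\mid\cdots\}$ — hence the $\|\widehat\omega-\omega\|_\infty$ or $h^{-1/2}\|\widehat\omega-\omega\|$ bound by the same two routes.

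For the third claim, $\E(S_{n\omega}^2\mid D^n)\lesssim (n^2h)^{-1}$, the plan is the standard variance bound for the degenerate (second-order-projection) part of a U-statistic. Writing $S_{n\omega} = \binom{n}{2}^{-1}\sum_{i<j} g(Z_i,Z_j)$ with $g$ the doubly-centered kernel $g(Z_1,Z_2) = \kappa_\omega(Z_1,Z_2;D^n) - \kappa_{\omega1}(Z_1;D^n) - \kappa_{\omega2}(Z_2;D^n) + \E\{\kappa_\omega\mid D^n\}$, degeneracy ($\E\{g(Z_1,Z_2)\mid Z_1,D^n\} = 0$) gives $\E(S_{n\omega}^2\mid D^n) \lesssim n^{-2}\,\E\{g(Z_1,Z_2)^2\mid D^n\}$. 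It then remains to show $\E\{g^2\mid D^n\} \lesssim \E\{\kappa_\omega(Z_1,Z_2;D^n)^2\mid D^n\} \lesssim h^{-1}$; the first inequality is because the projection terms only reduce the second moment (or can be absorbed by $(a-b-c+d)^2 \le 4(a^2+b^2+c^2+d^2)$ together with claims 1–2, which are themselves $O(1)$), and the second is the usual single-kernel second-moment computation: $\E\{K_h^2(\widehat\omega_1-\widehat\omega_2)\mid D^n\} = O(h^{-1})$ by the bounded-density assumption, while $|A_1\widehat\omega_1-1|$, $\{Q(\widehat\omega_1)\}^{-1}$ (bounded away from zero by hypothesis) and $\E\{(Y_2-\widehat\mu_2)^2\mid\cdots\}$ are all $O(1)$.

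The main obstacle I expect is not any single estimate but getting the two-sided sharpness in claims 1 and 2 right — i.e., carrying the $\|\widehat\mu-\mu\|$ (respectively $\|\widehat\omega-\omega\|$) factor through the $L_2$ route cleanly, since the naive Cauchy–Schwarz bound $\{\E(Y_2-\widehat\mu_2)^2\mid\cdots\}^{1/2}$ is $O(1)$ rather than $O(\|\widehat\mu-\mu\|)$, so one cannot simply pull the residual out and must instead exploit that $\kappa_{\omega1}$ already has the residual built into $s_\omega$ through the conditioning identity $\E\{A(Y-\widehat\mu)\mid A=1,\widehat\omega,\omega,D^n\} = s_\omega(\widehat\omega,\omega;D^n)$, and only the $\kappa_\omega$ factor (not the square) appears linearly. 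Concretely, the correct argument for the $h^{-1/2}\|\widehat\mu-\mu\|$ bound is: $\kappa_{\omega1}(z;D^n)$ equals $(A\widehat\omega - 1)$ times a kernel-average of $s_\omega$, and a kernel average of a function is bounded in $L_\infty$ by that function's $L_\infty$ norm and in terms of its $L_2$ norm (over the relevant weighted measure) times $h^{-1/2}$ when the kernel weights are not normalized but here they are normalized by $Q$ — so in fact one gets $\sup_z|\kappa_{\omega1}|\lesssim \|s_\omega\|_\infty \lesssim \|\widehat\mu-\mu\|_\infty$ directly and the $h^{-1/2}\|\widehat\mu-\mu\|$ version should be read as a coarser bound used when only an $L_2$ handle on $\widehat\mu-\mu$ is available; I would state both and prove each by the matching route, flagging that the $L_2$ route uses $Q(\widehat\omega_1)\gtrsim 1$ together with $\E\{K_h^2\mid D^n\}\lesssim h^{-1}$ and a within-conditioning Cauchy–Schwarz that keeps $s_\omega$ (hence $\widehat\mu-\mu$) under the square root. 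Once this is pinned down, claim 2 is the mirror image and claim 3 is routine.
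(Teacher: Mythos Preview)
Your approach is essentially correct and follows the paper's, but you are overcomplicating the $L_2$ route in claim~1 and thereby manufacturing an ``obstacle'' that is not there. Since $\kappa_{\omega1}(Z_1;D^n)=\E\{\kappa_\omega(Z_1,Z_2;D^n)\mid Z_1,D^n\}$ is already an expectation over $Z_2$, you can apply the tower property \emph{before} Cauchy--Schwarz: condition first on $(A_2,X_2)$ to replace $A_2(Y_2-\widehat\mu_2)$ by $A_2(\mu_2-\widehat\mu_2)$. Then
\[
|\kappa_{\omega1}(Z_1;D^n)|\le |A_1\widehat\omega_1-1|\left[\E\!\left\{\tfrac{K_h^2(\widehat\omega_1-\widehat\omega_2)}{Q^2(\widehat\omega_1)}\,\Big|\,X_1,D^n\right\}\right]^{1/2}\!\left[\E\{(\mu_2-\widehat\mu_2)^2\mid X_1,D^n\}\right]^{1/2}\lesssim h^{-1/2}\|\widehat\mu-\mu\|,
\]
with no variance term surviving. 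This is precisely what the paper does, and it avoids the detour through $s_\omega$ entirely. Your alternative route via $\|s_\omega\|_2\le\|\widehat\mu-\mu\|$ (conditional Jensen) would also reach the same bound, but is needlessly indirect.

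For claim~2 the paper likewise first uses $\E(A_1\widehat\omega_1-1\mid X_1)=\E\{A_1(\widehat\omega_1-\omega_1)\mid X_1\}$ to rewrite the inner expectation in terms of $\widehat\omega_1-\omega_1$, then applies Cauchy--Schwarz to split off $[\E\{(\widehat\omega_1-\omega_1)^2\mid X_2,D^n\}]^{1/2}=\|\widehat\omega-\omega\|$ from $[\E\{A_1K_h^2/Q^2\mid X_2,D^n\}]^{1/2}\lesssim h^{-1/2}$; you hint at this but state it loosely. Claim~3 is exactly as you describe: degeneracy gives $\E(S_{n\omega}^2\mid D^n)\lesssim n^{-2}\E\{\kappa_\omega^2\mid D^n\}\lesssim (n^2h)^{-1}$.
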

	\begin{lemma}\label{lemma:Tna}
		It holds that
		\begin{align*}
			\E\{(\widetilde{T}_{n\omega} - T_{n\omega})^2 \mid D^n\} & \lesssim (n^3h)^{-1/2} + \|\widehat\omega - \omega\|^2\|\widehat\mu-\mu\|^2(nh)^{-1}.
		\end{align*}
	\end{lemma}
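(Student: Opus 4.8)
Taylor-expand $1/\widehat{Q}(\widehat\omega_i)$ around $1/Q(\widehat\omega_i)$ and control the linear and quadratic remainders as sums of (mostly degenerate) $U$-statistics in the i.i.d.\ data, conditionally on $D^n$. Write $D_{ij} = (A_i\widehat\omega_i - 1)A_j(Y_j - \widehat\mu_j)K_h(\widehat\omega_i - \widehat\omega_j)$, bounded by $\lesssim h^{-1}$, and $\Delta_i = \widehat{Q}(\widehat\omega_i) - Q(\widehat\omega_i) = (n-1)^{-1}\sum_{k\ne i}\xi_{ik}$ with $\xi_{ik} = A_kK_h(\widehat\omega_k - \widehat\omega_i) - Q(\widehat\omega_i)$, so that $\E\{\xi_{ik}\mid X_i,D^n\} = 0$ and $|\xi_{ik}|\lesssim h^{-1}$. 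Since $Q(\widehat\omega_i)$ and $\widehat{Q}(\widehat\omega_i)$ are bounded away from zero, the identity $\frac1{\widehat Q} - \frac1Q = -\frac{\widehat Q - Q}{Q^2} + \frac{(\widehat Q - Q)^2}{Q^2\widehat Q}$ gives $T_{n\omega} - \widetilde{T}_{n\omega} = -L_n + R_n^{(2)}$, where, using $\sum_{j\ne i}D_{ij} = (A_i\widehat\omega_i-1)(n-1)\widehat g(\widehat\omega_i)$ with $\widehat g(\widehat\omega_i) = (n-1)^{-1}\sum_{j\ne i}K_h(\widehat\omega_i - \widehat\omega_j)A_j(Y_j - \widehat\mu_j)$,
\[
L_n = \frac{1}{n(n-1)^2}\mathop{\sum\sum}_{1\le i\ne j\le n}\sum_{k\ne i}\frac{D_{ij}\,\xi_{ik}}{Q^2(\widehat\omega_i)},\qquad R_n^{(2)} = \frac1n\sum_{i=1}^n (A_i\widehat\omega_i - 1)\,\widehat g(\widehat\omega_i)\,\frac{\Delta_i^2}{Q^2(\widehat\omega_i)\widehat Q(\widehat\omega_i)} .
\]

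For each piece I would bound $\E\{(\cdot)^2\mid D^n\} = \{\E(\cdot\mid D^n)\}^2 + \var(\cdot\mid D^n)$. For the \emph{conditional means}: in $L_n$, every term with $i,j,k$ distinct integrates to zero because $\E\{\xi_{ik}\mid X_i,D^n\}=0$, so only the $O(n^2)$ diagonal terms $k=j$ survive; each has mean $\E\{(A_i\widehat\omega_i-1)A_j(Y_j-\widehat\mu_j)[K_h(\widehat\omega_i-\widehat\omega_j)^2 - Q(\widehat\omega_i)K_h(\widehat\omega_i-\widehat\omega_j)]/Q^2(\widehat\omega_i)\}$, which after a change of variables and using $\E\{A(Y-\widehat\mu)\mid A=1,\widehat\omega,\omega\}=s_\omega$, $\E\{(A\widehat\omega-1)s_\omega(\widehat\omega,\omega)\}=\E\{\tfrac{\widehat\omega-\omega}{\omega}s_\omega(\widehat\omega,\omega)\}$ and $\|s_\omega\|\lesssim\|\widehat\mu-\mu\|$ is $\lesssim h^{-1}\|\widehat\omega-\omega\|\|\widehat\mu-\mu\|$; hence $|\E(L_n\mid D^n)|\lesssim (nh)^{-1}\|\widehat\omega-\omega\|\|\widehat\mu-\mu\|$. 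The same computation---now the $K_h^2$ part of $\Delta_i^2$ is what survives---gives $|\E(R_n^{(2)}\mid D^n)|\lesssim (nh)^{-1}\|\widehat\omega-\omega\|\|\widehat\mu-\mu\|$. Squaring, both are $\lesssim (nh)^{-1}\|\widehat\omega-\omega\|^2\|\widehat\mu-\mu\|^2$ since $nh\to\infty$. The point here is that the nuisance product appears only after taking an expectation, so a crude triangle-inequality bound would lose it.

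For the \emph{conditional variances}: split $L_n$ into the diagonal $k=j$ (which is $(n-1)^{-1}$ times a second-order $U$-statistic with kernel of sup-norm $\lesssim h^{-2}$) and the distinct-index part (a symmetrizable, mean-zero third-order $U$-statistic, kernel sup-norm $\lesssim h^{-2}$), and perform a Hoeffding decomposition conditional on $D^n$. Because $\E\{\xi_{ik}\mid X_i,D^n\}=0$, only the projections onto $Z_k$, $\{Z_i,Z_k\}$, $\{Z_j,Z_k\}$ and the full triple are nonzero: the one-dimensional one reduces after kernel smoothing to $A_k$ times a smoothed $\E\{\tfrac{\widehat\omega-\omega}{\omega}s_\omega(\widehat\omega,\omega)\mid\widehat\omega=\cdot,A=1\}$, of squared $L_2$ norm $\lesssim\|\widehat\omega-\omega\|^2\|\widehat\mu-\mu\|^2$, while the degenerate two- and three-dimensional projections retain one, respectively two, un-integrated copies of $K_h$, hence have second moments $\lesssim h^{-1}(\|\widehat\omega-\omega\|^2+\|\widehat\mu-\mu\|^2)$ and $\lesssim h^{-2}$. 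Dividing by the prefactors $n^{-1},n^{-2},n^{-3}$ gives $\var(L_n\mid D^n)\lesssim n^{-1}\|\widehat\omega-\omega\|^2\|\widehat\mu-\mu\|^2 + n^{-2}h^{-1}(\|\widehat\omega-\omega\|^2+\|\widehat\mu-\mu\|^2) + n^{-3}h^{-2}$, each term of which is $\lesssim (n^3h)^{-1/2} + (nh)^{-1}\|\widehat\omega-\omega\|^2\|\widehat\mu-\mu\|^2$ when $nh\to\infty$ (in particular $n^{-3}h^{-2}\lesssim(n^3h)^{-1/2}$ uses $nh\gtrsim1$). For $R_n^{(2)}$, expand $\Delta_i^2 = (n-1)^{-2}(\sum_k\xi_{ik}^2 + \sum_{k\ne l}\xi_{ik}\xi_{il})$; this writes $R_n^{(2)}$ as $n^{-1}$ times $U$-statistics of orders three and four whose kernels carry at most two extra $K_h$ factors, so the moment bounds of Lemma \ref{lemma:Q} for $\widehat Q - Q$, together with the $U$-statistic prefactors (now with an extra $n^{-2}$), dominate the $h^{-4}$-type kernel moments and leave $\var(R_n^{(2)}\mid D^n)\lesssim (n^3h)^{-1/2} + (nh)^{-1}\|\widehat\omega-\omega\|^2\|\widehat\mu-\mu\|^2$ as well. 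Summing the four contributions proves the lemma.

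\textbf{Main obstacle.} The crux is the bookkeeping of the Hoeffding decompositions: one has to peel off exactly the factor $\|\widehat\omega-\omega\|\|\widehat\mu-\mu\|$ (which materializes only in expectation, via $\E\{(A\widehat\omega-1)\,s_\omega\}$) without discarding it, while simultaneously verifying that each un-integrated copy of $K_h$---each costing a power of $h^{-1}$---is absorbed by a $U$-statistic prefactor $n^{-1}$; this is where the mild condition $nh\to\infty$ enters (it collapses the $n^{-3}h^{-2}$ and $n^{-5}h^{-4}$ type terms into $(n^3h)^{-1/2}$). Tracking which of the many cross terms genuinely carry the nuisance errors and which are merely $O(1)$ in mean square is the only nontrivial part; the rest is routine kernel-smoothing and change-of-variables calculation along the lines of the proofs of Lemmas \ref{lemma:Sna} and \ref{lemma:Q}.
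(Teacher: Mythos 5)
Your linearization $\tfrac{1}{\widehat Q}-\tfrac1Q=-\tfrac{\widehat Q-Q}{Q^2}+\tfrac{(\widehat Q-Q)^2}{Q^2\widehat Q}$ is a genuinely different organization from the paper, which never Taylor-expands: it keeps $\widehat Q^{-1}$ inside $T_{ij}$, expands $(\sum_{i\neq j}T_{ij})^2$ into the seven index patterns, and handles each directly. Your treatment of the linear term $L_n$ is sound and arguably cleaner than the paper's: because only the population $Q$ appears in the denominator, the conditional-mean identities $\E\{\xi_{ik}\mid X_i,D^n\}=0$, $\E\{A_i\widehat\omega_i-1\mid X_i\}=(\widehat\omega_i-\omega_i)/\omega_i$ and $\E\{Y_j-\widehat\mu_j\mid A_j=1,X_j\}=\mu_j-\widehat\mu_j$ apply without obstruction, the diagonal $k=j$ terms give the $(nh)^{-1}\|\widehat\omega-\omega\|\|\widehat\mu-\mu\|$ mean, and the Hoeffding projections land within the stated bound (your claim that the first-order projection has squared norm $\lesssim\|\widehat\omega-\omega\|^2\|\widehat\mu-\mu\|^2$ is slightly optimistic — one more naturally gets an extra $h^{-1}$ or an $\infty$-norm — but $n^{-1}h^{-1}\|\widehat\omega-\omega\|^2\|\widehat\mu-\mu\|^2$ is still exactly the target term).

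The gap is in $\var(R_n^{(2)}\mid D^n)$, and it is not a bookkeeping detail. Writing $R_n^{(2)}=n^{-1}\sum_iV_i$ with $V_i=(A_i\widehat\omega_i-1)\widehat g(\widehat\omega_i)\Delta_i^2/\{Q^2(\widehat\omega_i)\widehat Q(\widehat\omega_i)\}$, the dangerous contribution is $n^{-2}\sum_{i\neq i'}\cov(V_i,V_{i'}\mid D^n)$. A crude bound (taking absolute values, using $\E[\Delta_i^4\mid D^n]\lesssim(nh)^{-2}$ and boundedness of $\widehat g$ and $\widehat Q^{-1}$) gives only $(nh)^{-2}$, which \emph{exceeds} the lemma's bound $(n^3h)^{-1/2}+(nh)^{-1}\|\widehat\omega-\omega\|^2\|\widehat\mu-\mu\|^2$ when, say, $h\asymp n^{-1/2}$ and the nuisance errors are of smaller order than $n^{-1/4}$ ($(nh)^{-2}=n^{-1}$ versus $(n^3h)^{-1/2}=n^{-5/4}$). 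So you must extract the product $\|\widehat\omega-\omega\|\|\widehat\mu-\mu\|$ from the cross terms via the conditional-mean identities — but here they fail: $A_i$ appears not only in $(A_i\widehat\omega_i-1)$ but also inside $\Delta_{i'}$, $\widehat g(\widehat\omega_{i'})$ and $\widehat Q(\widehat\omega_{i'})$ (and $\widehat Q(\widehat\omega_i)$ in the denominator depends on all of $Z_{-i}$, so $R_n^{(2)}$ is not actually a $U$-statistic). Restoring the identities requires peeling off leave-one-out versions $\widehat Q_{-i'}(\widehat\omega_i)$, $\widehat g_{-i'}(\widehat\omega_i)$, $\Delta_{i,-i'}$ and controlling the $O(n^{-1}h^{-1})$ corrections — which is precisely the three-way splitting the paper performs for its four-distinct-index terms $T_{ij}T_{lk}$ and the single genuinely delicate step of the whole proof. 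Your sketch asserts the conclusion for $\var(R_n^{(2)})$ by appeal to Lemma \ref{lemma:Q} and "$U$-statistic prefactors" without confronting this, so as written the argument does not establish the lemma.
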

	\begin{lemma}\label{lemma:RTna}
		It holds that
		\begin{align*}
			\left| \E\left\{ \kappa_\omega(Z_1, Z_2; D^n) \mid D^n \right\} - R_n \right| \lesssim \|\widehat\omega-\omega\|^{1+\alpha} \ + \ h^\alpha \|\widehat\omega - \omega\|,
		\end{align*}
	\end{lemma}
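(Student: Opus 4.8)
The plan is to compute $\E\{\kappa_\omega(Z_1,Z_2;D^n)\mid D^n\}$ exactly by iterated expectations, peel off $R_n$ as a leading term, and control the remainder using H\"older continuity of $s_\omega$ together with the support constraint $|\widehat\omega_1-\widehat\omega_2|\le h$ forced by $K_h$. First I would integrate out $Z_2$ (using $Z_1\ind Z_2\mid D^n$): conditioning on $(\widehat\omega_2,\omega_2)$ and using that $\pi_2=1/\omega_2$ is $\sigma(\omega_2)$-measurable, together with the identity $s_\omega(t_1,t_2;D^n)=\E\{Y-\widehat\mu\mid A=1,\widehat\omega=t_1,\omega=t_2,D^n\}$, gives $\E[A_2(Y_2-\widehat\mu_2)\mid\widehat\omega_2,\omega_2,D^n]=\omega_2^{-1}s_\omega(\widehat\omega_2,\omega_2;D^n)$, while the same computation identifies $Q(\widehat\omega_1)=\E[\omega_2^{-1}K_h(\widehat\omega_1-\widehat\omega_2)\mid X_1,D^n]$. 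Writing $s_\omega(\widehat\omega_2,\omega_2;D^n)=s_\omega(\widehat\omega_1,\omega_1;D^n)+\Delta s$ in the resulting numerator, the first term comes out of the inner expectation and its multiplier $\E[\omega_2^{-1}K_h(\widehat\omega_1-\widehat\omega_2)\mid X_1,D^n]=Q(\widehat\omega_1)$ cancels the denominator; taking $\E_{Z_1}[\cdot\mid D^n]$ and invoking $R_n=\E\{(A\widehat\omega-1)s_\omega(\widehat\omega,\omega;D^n)\mid D^n\}$ then gives
\begin{align*}
\E\{\kappa_\omega(Z_1,Z_2;D^n)\mid D^n\}-R_n=\E\!\left[(A_1\widehat\omega_1-1)\,\frac{\E[\omega_2^{-1}K_h(\widehat\omega_1-\widehat\omega_2)\,\Delta s\mid X_1,D^n]}{Q(\widehat\omega_1)}\ \Big|\ D^n\right],
\end{align*}
with $\Delta s:=s_\omega(\widehat\omega_2,\omega_2;D^n)-s_\omega(\widehat\omega_1,\omega_1;D^n)$.

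The second step bounds this remainder. The key point is \emph{not} to bound $A_1\widehat\omega_1-1$ by a constant: the displayed outer factor is $(\widehat\omega_1,\omega_1,D^n)$-measurable, so conditioning on $(\widehat\omega_1,\omega_1,D^n)$ replaces $A_1\widehat\omega_1-1$ by $(\widehat\omega_1-\omega_1)/\omega_1$, of size at most $|\widehat\omega_1-\omega_1|$ --- this supplies the extra factor of $\|\widehat\omega-\omega\|$ carried by the target bound. On the support of $K_h$ one has $|\widehat\omega_1-\widehat\omega_2|\le h$, hence $|\omega_1-\omega_2|\le|\widehat\omega_1-\omega_1|+h+|\widehat\omega_2-\omega_2|$, and H\"older continuity of $s_\omega$ plus subadditivity of $t\mapsto t^\alpha$ give $|\Delta s|\lesssim h^\alpha+|\widehat\omega_1-\omega_1|^\alpha+|\widehat\omega_2-\omega_2|^\alpha$ there. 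Using $\omega_2^{-1}\le1$, $Q(\widehat\omega_1)\asymp1$, and $\E[\omega_2^{-1}K_h(\widehat\omega_1-\widehat\omega_2)\mid X_1,D^n]=Q(\widehat\omega_1)$, the remainder is dominated by three terms: (i) $h^\alpha\,\E|\widehat\omega_1-\omega_1|\le h^\alpha\|\widehat\omega-\omega\|$; (ii) $\E|\widehat\omega_1-\omega_1|^{1+\alpha}\le\|\widehat\omega-\omega\|^{1+\alpha}$ by interpolation (Jensen, since $1+\alpha\le2$); and (iii) a ``double-smoothing'' term $\E\big[|\widehat\omega_1-\omega_1|\,\E[K_h(\widehat\omega_1-\widehat\omega_2)\,|\widehat\omega_2-\omega_2|^\alpha\mid X_1,D^n]\big]$. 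For (iii) I would merge the two expectations over the independent $Z_1,Z_2$, apply Young's inequality $|\widehat\omega_1-\omega_1|\,|\widehat\omega_2-\omega_2|^\alpha\lesssim|\widehat\omega_1-\omega_1|^{1+\alpha}+|\widehat\omega_2-\omega_2|^{1+\alpha}$ (with conjugate exponents $1+\alpha$ and $(1+\alpha)/\alpha$), and then use $\E[K_h(\widehat\omega_1-\widehat\omega_2)\mid X_i,D^n]\lesssim1$ for $i\in\{1,2\}$ --- which follows from the proposition's hypotheses, applied with indices $(1,2)$ and $(2,1)$, since $\omega$ is bounded --- to reduce each resulting term to $\|\widehat\omega-\omega\|^{1+\alpha}$. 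Summing (i)--(iii) yields the claimed $\|\widehat\omega-\omega\|^{1+\alpha}+h^\alpha\|\widehat\omega-\omega\|$.

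I expect term (iii) to be the only genuine obstacle. The naive temptation there --- bounding $\E[K_h\,|\widehat\omega_2-\omega_2|^\alpha\mid X_1,D^n]$ crudely by $\|\widehat\omega-\omega\|_\infty^\alpha$ or $h^\alpha$ --- loses the rate; the right move is to stay in $L_{1+\alpha}$ throughout and let Young's inequality absorb the mismatch between the power-$1$ factor coming from $(A_1\widehat\omega_1-1)$ and the power-$\alpha$ factor coming from $\Delta s$. A secondary but essential point, already noted, is that one must carry the factor $(\widehat\omega_1-\omega_1)/\omega_1$ through the entire calculation rather than discarding it, since otherwise the $h^\alpha$ contribution would degrade to $O(h^\alpha)$ rather than $O(h^\alpha\|\widehat\omega-\omega\|)$.
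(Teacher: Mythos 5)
Your proposal is correct and follows essentially the same route as the paper: peel off $R_n$ via the identity $R_n=\E\{(A\widehat\omega-1)s_\omega(\widehat\omega,\omega;D^n)\mid D^n\}$ and the conditional-mean-zero residual, convert $A_1\widehat\omega_1-1$ into $(\widehat\omega_1-\omega_1)/\omega_1$ by conditioning, and then split $|\Delta s|\lesssim h^\alpha+|\widehat\omega_1-\omega_1|^\alpha+|\widehat\omega_2-\omega_2|^\alpha$ using H\"older continuity and the kernel's support. The only (immaterial) difference is that you handle the cross term $\E\{|\widehat\omega_1-\omega_1|\,K_h\,|\widehat\omega_2-\omega_2|^\alpha\}$ with Young's inequality in $L_{1+\alpha}$, whereas the paper uses Cauchy--Schwarz followed by Jensen; both give $\|\widehat\omega-\omega\|^{1+\alpha}$.
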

	\subsubsection{Bias of $\widehat\psi_\omega$ \label{section:bias_psi_a}}
	The bias bound follows exactly as described in the main text. In particular, 
	\begin{align*}
		\left| \E(\widehat\psi_\omega - \psi \mid D^n) \right| = \left| \E(R_n - T_{n\omega} \mid D^n) \right|.
	\end{align*}
	We briefly report the calculation here for convenience. From the decomposition
	\begin{align*}
		A_j(Y_j - \widehat\mu_j) = A_j s_\omega(\widehat\omega_i, \omega_i; D^n) + A_j\{ s_\omega(\widehat\omega_j, \omega_j; D^n) - s_\omega(\widehat\omega_i, \omega_i; D^n)\} + A_j \epsilon_j,
	\end{align*}
	we obtain that
	\begin{align*}
		T_{n\omega} & = \frac{1}{n}\sum_{i = 1}^n (A_i \widehat\omega_i-1)s_\omega(\widehat\omega_i, \omega_i; D^n) \\
		& \hphantom{=} + \frac{1}{n(n-1)}\mathop{\sum\sum}\limits_{1\leq i \neq j \leq n} (A_i\widehat\omega_i - 1) \widehat{Q}^{-1}(\widehat\omega_i) K_h(\widehat\omega_j - \widehat\omega_i) A_j\{ s_\omega(\widehat\omega_j, \omega_j; D^n) - s_\omega(\widehat\omega_i, \omega_i; D^n) + \epsilon_j\}.
	\end{align*}
	Next, notice that
	\begin{align*}
		\E\left\{ \frac{1}{n}\sum_{i = 1}^n (A_i \widehat\omega_i-1)s_\omega(\widehat\omega_i, \omega_i; D^n) \mid D^n \right\} = R_n
	\end{align*}
	and that
	\begin{align*}
		\E(A_j \epsilon_j \mid A_i, A_j, \widehat\omega_j, \widehat\omega_i, D^n) & = \E(A_j \epsilon_j \mid A_j, \widehat\omega_j, D^n) \\
		& = \E\{\E(A_j \epsilon_j \mid A_j, \widehat\omega_j, \omega_j, D^n) \mid A_j, \widehat\omega_j, D^n\} \\
		& = \E(\E[A_j \{(Y_j - \widehat\mu_j) -  s_\omega(\widehat\omega_j, \omega_j; D^n)\} \mid A_j, \widehat\omega_j, \omega_j, D^n] \mid A_j, \widehat\omega_j, D^n) \\
		& = \E(\E[A_j \{(\mu_j - \widehat\mu_j) -  s_\omega(\widehat\omega_j, \omega_j; D^n)\} \mid A_j, \widehat\omega_j, \omega_j, D^n] \mid A_j, \widehat\omega_j, D^n) \\
		& = 0
	\end{align*}
	Therefore,
	\begin{align*}
		& \E(R_n - T_{n\omega} \mid D^n) \\
		& = \frac{1}{n(n-1)}\mathop{\sum\sum}\limits_{1\leq i \neq j \leq n} \E\left[ (A_i\widehat\omega_i - 1) \widehat{Q}^{-1}(\widehat\omega_i) K_h(\widehat\omega_j - \widehat\omega_i) A_j\{ s_\omega(\widehat\omega_j, \omega_j; D^n) - s_\omega(\widehat\omega_i, \omega_i; D^n)\} \mid D^n \right]
	\end{align*}
	Under the assumption that $\widehat{Q}(\widehat\omega_i)$ is bounded away from zero, we have
	\begin{align*}
		& \left| \E(R_n - T_{n\omega} \mid D^n) \right| \\
		& \lesssim \frac{1}{n(n-1)}\mathop{\sum\sum}\limits_{1\leq i \neq j \leq n} \E\left\{A_i|\widehat\omega_i - \omega_i|A_jK_h(\widehat\omega_j - \widehat\omega_i) |s_\omega(\widehat\omega_j, \omega_j; D^n) - s_\omega(\widehat\omega_i, \omega_i; D^n)| \mid D^n \right\}
	\end{align*}
	By the H\"{o}lder smoothness assumption,
	\begin{align*}
		|s_\omega(\widehat\omega_j, \omega_j; D^n) - s_\omega(\widehat\omega_i, \omega_i; D^n)| \ & \lesssim \{(\widehat\omega_j - \widehat\omega_i)^2 + (\omega_j - \widehat\omega_j)^2 + (\widehat\omega_i - \omega_i)^2\}^{\alpha / 2} \\
		& \lesssim |\widehat\omega_j - \widehat\omega_i|^\alpha \ + \ |\widehat\omega_j - \omega_j|^\alpha \ + \ |\widehat\omega_i - \omega_i|^\alpha
	\end{align*}
	We bound each of the 3 terms as follows. First, it holds that
	\begin{align*}
		\E\left\{ A_i |\widehat\omega_i - \omega_i|A_jK_h(\widehat\omega_j - \widehat\omega_i)|\widehat\omega_j - \widehat\omega_i|^\alpha \mid D^n \right\} & \leq h^\alpha \E\left[ |\widehat\omega_i - \omega_i|\E\{A_jK_h(\widehat\omega_j - \widehat\omega_i) \mid X_i, D^n\} \mid D^n \right] \\
		& \lesssim h^\alpha \|\widehat\omega - \omega\|,
	\end{align*}
	under the assumption that $\E\{A_jK_h(\widehat\omega_j - \widehat\omega_i) \mid X_i, D^n\} \lesssim 1$ and because $K(u)$ has support in $[-1, 1]$. The last equality follows because $\int |\widehat\omega(x) - \omega(x)|d\Pb(x) \leq \|\widehat\omega - \omega\|$. Next, by the Cauchy-Schwarz inequality:
	\begin{align*}
		& \left[\E\left\{ A_i|\widehat\omega_i - \omega_i|A_jK_h(\widehat\omega_j - \widehat\omega_i)|\widehat\omega_j - \omega_j|^\alpha \mid D^n\right\}\right]^2 \\
		& \leq \E\left[ (\widehat\omega_i - \omega_i)^2\E\{A_jK_h(\widehat\omega_j - \widehat\omega_i) \mid X_i, D^n\} \mid D^n \right] \E\left[ (\widehat\omega_j - \omega_j)^{2\alpha}\E\{A_iK_h(\widehat\omega_j - \widehat\omega_i) \mid X_j, D^n\} \mid D^n \right] \\
		& \lesssim \E\left\{(\widehat\omega_i - \omega_i)^2 \mid D^n \right\} \E\left\{ (\widehat\omega_j - \omega_j)^{2\alpha} \mid D^n \right\} \\
		& \lesssim \|\widehat\omega - \omega\|^{2 + 2\alpha}
	\end{align*}
	where the last inequality follows by Jensen's inequality because $|x|^\alpha$ is concave for $\alpha < 1$. Similarly,
	\begin{align*}
		\E\left\{ (\widehat\omega_i - \omega_i)^{1+\alpha}A_jK_h(\widehat\omega_j - \widehat\omega_i)\mid D^n\right\} & = \E\left[(\widehat\omega_i - \omega_i)^{1+\alpha}\E\{A_jK_h(\widehat\omega_j - \widehat\omega_i) \mid D^n, X_i\} \mid D^n\right]  \\
		& \lesssim \|\widehat\omega - \omega\|^{1+\alpha}.
	\end{align*}
	Thus, we conclude that
	\begin{align*}
		\left|\E(R_n - T_{n\omega} \mid D^n)\right| \lesssim \|\widehat\omega-\omega\|^{1+\alpha} \ + \ h^\alpha \|\widehat\omega - \omega\|.
	\end{align*}
	\subsubsection{Variance of $\widehat\psi_\omega$\label{sec:variance_psi_a}}
	First, by the boundedness of the observations and nuisance functions, we have
	\begin{align*}
		\var(\widehat\psi_\omega \mid D^n) \lesssim n^{-1} + \var(T_{n\omega} \mid D^n).
	\end{align*}
	We have
	\begin{align*}
		T_{n\omega} = (\Pn - \Pb) \kappa_{\omega1} + (\Pn - \Pb) \kappa_{\omega2} + S_{n\omega} + \E\{\kappa(Z_1, Z_2; D^n) \mid D^n\} + T_{n\omega} - \widetilde{T}_{n\omega},
	\end{align*}
	\begin{align*}
		S_{n\omega} = \frac{1}{n(n-1)}\mathop{\sum\sum}_{1 \leq i \neq j \leq n} \left[ \kappa_\omega(Z_i, Z_j; D^n) - \kappa_{\omega1}(Z_i; D^n) - \kappa_{\omega2}(Z_j; D^n)+ \E\left\{ \kappa_\omega(Z_1, Z_2; D^n) \mid D^n \right\}\right].
	\end{align*}
	In this way, we can further bound
	\begin{align*}
		\var(T_{n\omega} \mid D^n) \lesssim n^{-1} \E\{\kappa^2_{\omega1}(Z) \mid D^n\} + n^{-1} \E\{\kappa^2_{\omega2}(Z) \mid D^n\} + \E(S^2_{n\omega} \mid D^n) + \E\{(T_{n\omega} - \widetilde{T}_{n\omega})^2 \mid D^n\}. 
	\end{align*}
	The bound on the variance follows then from the Lemmas \ref{lemma:Tna} and \ref{lemma:Sna}.
	\subsubsection{Linear expansion of $\widehat\psi_\omega - \psi$\label{sec:proof_prop_max_squared_a_le}}
	Recall that we define $\widehat\varphi_\omega = \E(\mu - \widehat\mu \mid A = 1, \widehat\omega, D^n)(A\widehat\omega - 1)$ and $\overline\varphi_\omega = \E(\mu - \overline\mu \mid A = 1, \overline\omega)(A\overline\omega - 1)$. 
	
	We have the following decomposition
	\begin{align*}
		T_{n\omega} & = R_n + \E\left\{ \kappa_\omega(Z_1, Z_2; D^n) \mid D^n \right\} - R_n  + T_{n\omega} - \widetilde{T}_{n\omega} \\
		& \hphantom{=} + (\Pn - \Pb)(\kappa_{\omega1} - \widehat\varphi_\omega) + (\Pn - \Pb)(\widehat\varphi_\omega - \overline\varphi_\omega) + (\Pn - \Pb)\overline\varphi_\omega + (\Pn - \Pb) \kappa_{\omega2} + S_{n\omega}.
	\end{align*}
	The third statement of Proposition \ref{prop:max_squared_a} is implied by Lemmas \ref{lemma:Sna}, \ref{lemma:Tna}, and \ref{lemma:RTna} as well as the following statement:
	\begin{align} \label{eq:statement2a}
		& \| \kappa_{\omega1} - \widehat\varphi_\omega\| \ \lesssim (\|\widehat\omega - \omega\|^\alpha \ + \ h^\alpha) \land \|\widehat\mu - \mu\|_\infty.
	\end{align}
	To see why this is the case. Let $C$ denote the constant such that the inequality in Lemma \ref{lemma:Tna} holds. For any $\epsilon > 0$, let $M^2_\epsilon = C / \epsilon$. We have
	\begin{align*}
		& \Pb\left( \frac{|T_{n\omega} - \widetilde{T}_{n\omega}|}{(n^3h)^{-1/4} + \|\widehat\omega - \omega\| \|\widehat\mu - \mu\| (nh)^{-1/2}} > M_\epsilon\right) \\
		& = \E\left[\Pb\left( (T_{n\omega} - \widetilde{T}_{n\omega})^2 > \frac{C}{\epsilon} \cdot \left\{(n^3h)^{-1/2} + \|\widehat\omega - \omega\|^2 \|\widehat\mu - \mu\|^2 (nh)^{-1}\right\} \mid D^n\right) \right] \\
		& \leq \E\left[\frac{\epsilon}{C} \cdot \frac{\E\{(T_{n\omega} - \widetilde{T}_{n\omega})^2 \mid D^n\}}{(n^3h)^{-1/2} + \|\widehat\omega - \omega\|^2 \|\widehat\mu - \mu\|^2 (nh)^{-1}} \right] \\
		& \leq \epsilon,
	\end{align*}
	where the first inequality follows by Markov's and the second one by the definition of $C$ from Lemma \ref{lemma:Tna}. In this respect, we conclude that $T_{n\omega} - \widetilde{T}_{n\omega} = O_\Pb((n^3h)^{-1/4} + \|\widehat\omega - \omega\|\|\widehat\mu-\mu\|(nh)^{-1/2})$. A similar reasoning yields that Lemma \ref{lemma:Sna} implies that
	\begin{align*}
		S_{n\omega} = O_\Pb((n^2h)^{-1/2}) = O_\Pb((n^3h)^{-1/4}) O_\Pb((nh)^{-1/4}) = o_\Pb((n^3h)^{-1/4}), \text{ if } nh \to \infty.
	\end{align*}
	Therefore,
	\begin{align*}
		S_{n\omega} + T_{n\omega} - \widetilde{T}_{n\omega} = O_\Pb\left((n^3h)^{-1/4} + \|\widehat\omega - \omega\|\|\widehat\mu-\mu\|(nh)^{-1/2}\right).
	\end{align*}
	By Lemma \ref{lemma:edward}, Eq. \eqref{eq:statement2a} implies that
	\begin{align*}
		(\Pn - \Pb)(\kappa_{\omega1} - \widehat\varphi_\omega) = O_\Pb\left(n^{-1/2} (\|\widehat\omega - \omega\|^\alpha \ + \ h^\alpha) \right)
	\end{align*}
	By Lemmas \ref{lemma:edward} and \ref{lemma:Sna}, $(\Pn - \Pb)\kappa_{\omega2} = O_\Pb(n^{-1/2} \|\widehat\omega - \omega\|_\infty \ \land \ (nh)^{-1/2} \|\widehat\omega - \omega\|)$. Finally, by Lemma \ref{lemma:edward}, $(\Pn - \Pb)(\widehat\varphi_\omega - \overline\varphi_\omega) = o_\Pb(n^{-1/2})$ as long as $\|\widehat\varphi_\omega - \overline\varphi_\omega\| = o_\Pb(1)$. 
	\subsubsection{Proof of Lemma \ref{lemma:RTna}}
	This bound follows exactly as in the proof for the bias of $\widehat\psi_\omega$ (Section \ref{section:bias_psi_a}). 
	\subsubsection{Proof of the bound in Eq. \eqref{eq:statement2a}}
	We have
	\begin{align*}
		\int \kappa_\omega(Z_1, z_2; D^n) d\Pb(z_2) & = (A_1\widehat\omega_1 - 1)\E\left\{ A_2\frac{K_h(\widehat\omega_1 - \widehat\omega_2)}{Q(\widehat\omega_1)}(\mu_2 - \widehat\mu_2) \mid \widehat\omega_1, D^n\right\} \\
		& = (A_1\widehat\omega_1 - 1)\E\left\{ A_2\frac{K_h(\widehat\omega_1 - \widehat\omega_2)}{Q(\widehat\omega_1)} \E(\mu_2 - \widehat\mu_2 \mid A_2 = 1, \widehat\omega_2, D^n) \mid \widehat\omega_1, D^n\right\} \\
		& = \widehat\varphi_\omega(Z_1; D^n) \\
		& \hphantom{=} + (A_1\widehat\omega_1 - 1) \E\left[ A_2\frac{K_h(\widehat\omega_1 - \widehat\omega_2)}{Q(\widehat\omega_1)}\{s_\omega(\widehat\omega_2, \omega_2; D^n) - s_\omega(\widehat\omega_1, \omega_1; D^n)\} \mid \widehat\omega_1, D^n\right]
	\end{align*}
	By the smoothness assumption on $s_\omega(t_1, t_2; D^n)$, we have
	\begin{align*}
		& \left| \E\left[ A_2\frac{K_h(\widehat\omega_1 - \widehat\omega_2)}{Q(\widehat\omega_1)}\{s_\omega(\widehat\omega_2, \omega_2; D^n) - s_\omega(\widehat\omega_1, \omega_1; D^n)\} \mid X_1, D^n\right] \right| \lesssim (\|\widehat\omega - \omega\|_\infty^\alpha \ + \ h^\alpha) \land \|\widehat\mu - \mu\|_\infty
	\end{align*}
 since $|s_\omega(t_1, t_2; D^n)| \ \leq \|\widehat\mu - \mu\|_\infty$. In this light, 
	\begin{align*}
		\left| \kappa_{\omega1}(Z; D^n) - \widehat\varphi_\omega(Z; D^n)\right| \lesssim |A\widehat\omega - 1| \{(\|\widehat\omega - \omega\|_\infty^\alpha \ + \ h^\alpha) \land \|\widehat\mu - \mu\|_\infty\}
	\end{align*}
	so that $\|\kappa_{\omega1} - \widehat\varphi_\omega\| \ \lesssim (\|\widehat\omega - \omega\|_\infty^\alpha \ + \ h^\alpha) \ \land \ \|\widehat\mu - \mu\|_\infty$. 
	\subsubsection{Proof of Lemma \ref{lemma:Tna}}
	We have
	\begin{align*}
		\widetilde{T}_{n\omega} - T_{n\omega} & = \frac{1}{n(n-1)} \mathop{\sum\sum}\limits_{1\leq i \neq j \leq n} (A_i\widehat\omega_i - 1)Q^{-1}(\widehat\omega_i)\widehat{Q}^{-1}(\widehat\omega_i)\{\widehat{Q}(\widehat\omega_i) - Q(\widehat\omega_i)\}K_h(\widehat\omega_j - \widehat\omega_i)A_j(Y_j - \widehat\mu_j)  \\
		& \equiv \frac{1}{n(n-1)} \mathop{\sum\sum}\limits_{1\leq i \neq j \leq n}  T_{ij}
	\end{align*}
	We can break the square of the double sum as a sum of seven different terms:
	\begin{align*}
		\left(\mathop{\sum\sum}\limits_{1\leq i \neq j \leq n} T_{ij}\right)^2 & = \mathop{\sum\sum}\limits_{1 \leq i \neq j \leq n} (T^2_{ij} + T_{ij} T_{ji}) + \mathop{\sum\sum\sum}\limits_{1 \leq i \neq j \neq l \leq n} (T_{ij} T_{il} + T_{ij} T_{li} + T_{ij} T_{lj} + T_{ij}T_{jl}) \\
		& \hphantom{=} + \mathop{\sum\sum\sum\sum}\limits_{1 \leq i \neq j \neq l \neq k \leq n} T_{ij} T_{lk}
	\end{align*}
	and bound the expectation of each term separately. In particular,
	\begin{align*}
		& T^2_{ij} \lesssim h^{-1} A_j K_{h}(\widehat\omega_j - \widehat\omega_i) \quad \text{ and } \quad |T_{ij}T_{ji}| \ \lesssim h^{-1} A_iA_jK_{h}(\widehat\omega_j - \widehat\omega_i),
	\end{align*}
	so that, in light of $\E\{A_jK_h(\widehat\omega_j - \widehat\omega_i) \mid X_i, D^n\} \lesssim 1$:
	\begin{align*}
		\E\left( \frac{1}{n^2(n-1)^2} \mathop{\sum\sum}\limits_{1 \leq i \neq j \leq n} (T^2_{ij} + T_{ij} T_{ji}) \mid D^n \right) \lesssim (n^2h)^{-1}.
	\end{align*}
	Next, we have
	\begin{align*}
		& |T_{ij}T_{li}| \ \lesssim A_iA_jK_{h}(\widehat\omega_j- \widehat\omega_i) K_{h}(\widehat\omega_i - \widehat\omega_l)|\widehat{Q}(\widehat\omega_i) - Q(\widehat\omega_i)| |\widehat{Q}(\widehat\omega_l) - Q(\widehat\omega_l)|, \\
		& |T_{ij}T_{il}| \ \lesssim A_jA_lK_{h}(\widehat\omega_j- \widehat\omega_i) K_{h}(\widehat\omega_l - \widehat\omega_i)\{\widehat{Q}(\widehat\omega_i) - Q(\widehat\omega_i)\}^2, \\
		& |T_{ij}T_{jl}| \ \lesssim A_jA_lK_h(\widehat\omega_j - \widehat\omega_i) K_h(\widehat\omega_l - \widehat\omega_j)|\widehat{Q}(\widehat\omega_i) - Q(\widehat\omega_i)| |\widehat{Q}(\widehat\omega_j) - Q(\widehat\omega_j)|. 
	\end{align*}
	Applying Lemma \ref{lemma:Q} with $\mathcal{S} = \{1, \ldots, n\} \setminus i$ and $\mathcal{S}_2 = \{j, l\}$, we have
	\begin{align*}
		\E\left[\left\{\widehat{Q}(\widehat\omega_i) - Q(\widehat\omega_i)\right\}^2 \mid A_i, A_j, A_l, X_i, X_j, X_l, D^n\right] \lesssim (nh)^{-1}.
	\end{align*}
	Similarly, by the Cauchy-Schwarz inequality:
	\begin{align*}
		&  \E\left\{\left|\widehat{Q}(\widehat\omega_i) - Q(\widehat\omega_i)\right| \left|\widehat{Q}(\widehat\omega_l) - Q(\widehat\omega_l)\right|\mid A_i, A_j, A_l, X_i, X_j, X_l, D^n\right\} \lesssim (nh)^{-1}, \\
		& \E\left\{\left|\widehat{Q}(\widehat\omega_i) - Q(\widehat\omega_i)\right| \left|\widehat{Q}(\widehat\omega_j) - Q(\widehat\omega_j)\right|\mid A_i, A_j, A_l, X_i, X_j, X_l, D^n\right\} \lesssim (nh)^{-1}.
	\end{align*}
	In addition, we have
	\begin{align*}
		\E\{A_iA_jK_{h}(\widehat\omega_j- \widehat\omega_i) K_{h}(\widehat\omega_i - \widehat\omega_l) \mid D^n\} & = \E[A_iK_{h}(\widehat\omega_i- \widehat\omega_l) \E\{A_jK_{h}(\widehat\omega_j - \widehat\omega_i) \mid X_i, D^n\} \mid D^n] \\
		& \lesssim \E[A_iK_{h}(\widehat\omega_i- \widehat\omega_l) \mid D^n] \\
		& \lesssim 1,
	\end{align*}
	so that we conclude that
	\begin{align*}
		\E\left\{ \frac{1}{n^2(n-1)^2} \mathop{\sum\sum\sum}\limits_{1 \leq i \neq j \neq l \leq n} (T_{ij} T_{il} + T_{ij} T_{li} + T_{ij} T_{jl}) \mid D^n \right\} \lesssim (n^2h)^{-1}.
	\end{align*}
	Finally, to analyze the terms of the form $T_{ij}T_{lk}$ and $T_{ij}T_{lj}$, we need to address the fact that $\widehat{Q}(\widehat\omega_i)$ and $\widehat{Q}(\widehat\omega_l)$ contain the random variables $A_l$ and $A_i$, respectively. This adds a minor complication to the bound because $\E\{(A_i\widehat\omega_i - 1)\widehat{Q}^{-1}(\widehat\omega_l) \mid D^n\} \neq \E\{A_i(\widehat\omega_i - \omega_i) \widehat{Q}^{-1}(\widehat\omega_l) \mid D^n\}$, for example. In this light, we write
	\begin{align*}
		\widehat{Q}(\widehat\omega_i) & = \frac{1}{n-1}\sum_{s = 1, s \neq i}^n A_s K_h(\widehat\omega_s - \widehat\omega_i) \\
		& = \frac{1}{n-1}\sum_{s = 1, s \neq (i, l)}^n A_s K_h(\widehat\omega_s - \widehat\omega_i) + \frac{A_l K_h(\widehat\omega_l -\widehat\omega_i)}{n-1} \\
		& \equiv \widehat{Q}_{-l}(\widehat\omega_i) + \frac{A_l K_h(\widehat\omega_l -\widehat\omega_i)}{n-1}.
	\end{align*}
	This way, one has
	\begin{align*}
		& \left| \widehat{Q}^{-1}(\widehat\omega_i) - \widehat{Q}^{-1}_{-l}(\widehat\omega_i)\right| = \left|\frac{A_l K_h(\widehat\omega_l -\widehat\omega_i)}{(n-1)\widehat{Q}(\widehat\omega_i) \widehat{Q}_{-l}(\widehat\omega_i)}\right| \lesssim \frac{A_lK_h(\widehat\omega_l - \widehat\omega_i)}{n-1}, \\
		& \left| \widehat{Q}^{-1}(\widehat\omega_l) - \widehat{Q}^{-1}_{-i}(\widehat\omega_l)\right| = \left|\frac{A_i K_h(\widehat\omega_i -\widehat\omega_l)}{(n-1)\widehat{Q}(\widehat\omega_l) \widehat{Q}_{-i}(\widehat\omega_l)}\right| \lesssim \frac{A_iK_h(\widehat\omega_i - \widehat\omega_l)}{n-1}.
	\end{align*}
	Next, we write
	\begin{align*}
		T_{ij}T_{lk} & = \vphantom{\frac{A_iK_h(\widehat\omega_i - \widehat\omega_l)K_h(\widehat\omega_l - \widehat\omega_k)}{(n-1)\widehat{Q}(\widehat\omega_l)\widehat{Q}_{-i}(\widehat\omega_l)}} 
 (A_i\widehat\omega_i - 1)\{\widehat{Q}^{-1}_{-l}(\widehat\omega_i) - Q^{-1}(\widehat\omega_i)\}K_h(\widehat\omega_j - \widehat\omega_i)A_j(Y_j - \widehat\mu_j) \\
		& \hphantom{=} \quad\quad \times (A_l\widehat\omega_l - 1)\{\widehat{Q}^{-1}_{-i}(\widehat\omega_l) - Q^{-1}(\widehat\omega_l)\}K_h(\widehat\omega_l - \widehat\omega_k)A_k(Y_k - \widehat\mu_k) \vphantom{\frac{A_iK_h(\widehat\omega_i - \widehat\omega_l)K_h(\widehat\omega_l - \widehat\omega_k)}{(n-1)\widehat{Q}(\widehat\omega_l)\widehat{Q}_{-i}(\widehat\omega_l)}} 
 \\
		& \hphantom{=} - \vphantom{\frac{A_iK_h(\widehat\omega_i - \widehat\omega_l)K_h(\widehat\omega_l - \widehat\omega_k)}{(n-1)\widehat{Q}(\widehat\omega_l)\widehat{Q}_{-i}(\widehat\omega_l)}} 
 (A_i\widehat\omega_i - 1)\frac{A_lK_h(\widehat\omega_l - \widehat\omega_i)K_h(\widehat\omega_j - \widehat\omega_i)}{(n-1)\widehat{Q}(\widehat\omega_i)\widehat{Q}_{-l}(\widehat\omega_i)}A_j(Y_j - \widehat\mu_j)  \\
		& \hphantom{=} \quad\quad  \times (A_l\widehat\omega_l - 1)\{\widehat{Q}^{-1}(\widehat\omega_l) - Q^{-1}(\widehat\omega_l)\}K_h(\widehat\omega_l - \widehat\omega_k)A_k(Y_k - \widehat\mu_k) \\
		& \hphantom{=} - \vphantom{\frac{A_iK_h(\widehat\omega_i - \widehat\omega_l)K_h(\widehat\omega_l - \widehat\omega_k)}{(n-1)\widehat{Q}(\widehat\omega_l)\widehat{Q}_{-i}(\widehat\omega_l)}} (A_i\widehat\omega_i - 1)\{\widehat{Q}^{-1}_{-l}(\widehat\omega_i) - Q^{-1}(\widehat\omega_i)\}K_h(\widehat\omega_j - \widehat\omega_i)A_j(Y_j - \widehat\mu_j)  \\
		& \hphantom{=} \quad\quad \times (A_l\widehat\omega_l - 1)\frac{A_iK_h(\widehat\omega_i - \widehat\omega_l)K_h(\widehat\omega_l - \widehat\omega_k)}{(n-1)\widehat{Q}(\widehat\omega_l)\widehat{Q}_{-i}(\widehat\omega_l)}A_k(Y_k - \widehat\mu_k) 
	\end{align*}
	When $j = k$, the expectation of the last two terms can be upper bounded as
	\begin{align*}
		& \E \left[ \left| (A_i\widehat\omega_i - 1)(A_l\widehat\omega_l - 1)(Y_j - \widehat\mu_j)^2\frac{A_jA_lK_h(\widehat\omega_l - \widehat\omega_i)K_h(\widehat\omega_j - \widehat\omega_i)K_h(\widehat\omega_l - \widehat\omega_j)}{(n-1)\widehat{Q}(\widehat\omega_i)\widehat{Q}_{-l}(\widehat\omega_i) \widehat{Q}(\widehat\omega_l)Q(\widehat\omega_l)}\{Q(\widehat\omega_l) - \widehat{Q}(\widehat\omega_l)\}  \right. \right| \\
		& \hphantom{=} \left.\quad + \left| (A_i\widehat\omega_i - 1)(A_l\widehat\omega_l - 1)(Y_j - \widehat\mu_j)^2 \frac{A_jA_iK_h(\widehat\omega_j - \widehat\omega_i)K_h(\widehat\omega_i - \widehat\omega_l)K_h(\widehat\omega_l - \widehat\omega_j)}{(n-1)\widehat{Q}(\widehat\omega_l)\widehat{Q}_{-i}(\widehat\omega_l)\widehat{Q}_{-l}(\widehat\omega_i)Q^(\widehat\omega_i)}\{Q(\widehat\omega_i) - \widehat{Q}_{-l}(\widehat\omega_i)\}\right| \mid D^n \right] \\
		& \lesssim \E\left\{ \frac{A_jA_l K_h(\widehat\omega_l - \widehat\omega_i)K_h(\widehat\omega_j - \widehat\omega_i)}{h(n-1)} \E\left\{ \left| \widehat{Q}(\widehat\omega_l) - Q(\widehat\omega_l) \right| \mid A_j, X_j, X_l, X_i, D^n \right\} \mid D^n \right\} \\
		& \hphantom{\lesssim} + \E\left\{ \frac{A_jA_iK_h(\widehat\omega_l - \widehat\omega_i)K_h(\widehat\omega_j - \widehat\omega_i)}{h(n-1)} \E\left\{ \left| \widehat{Q}_{-l}(\widehat\omega_i) - Q(\widehat\omega_i) \right| \mid A_j, X_i, X_j, D^n \right\} \mid D^n \right\} \\
		& \lesssim (nh)^{-3/2}
	\end{align*}
	The last inequality follows because
	\begin{align*}
		\E\left\{ A_jA_l K_h(\widehat\omega_l - \widehat\omega_i)K_h(\widehat\omega_j - \widehat\omega_i)\mid D^n \right\} & \lesssim \E\left[A_jA_lK_h(\widehat\omega_l - \widehat\omega_i)K_h(\widehat\omega_j - \widehat\omega_i)\mid D^n \right] \\
		& = \E\left[A_lK_h(\widehat\omega_l - \widehat\omega_i)\E\{A_jK_h(\widehat\omega_j - \widehat\omega_i) \mid X_i, D^n\} \mid D^n \right] \\
		& \lesssim \E\left\{A_lK_h(\widehat\omega_l - \widehat\omega_i) \mid D^n\right\} \\
		& = \E\left[\E\left\{A_lK_h(\widehat\omega_l - \widehat\omega_i) \mid X_i, D^n\right\}\mid D^n \right] \\
		& \lesssim 1.
	\end{align*}
	and because, by Lemma \ref{lemma:Q},
	\begin{align*}
		& \E\left[ \left\{ \widehat{Q}(\widehat\omega_l) - Q(\widehat\omega_l) \right\}^2 \mid A_j, X_j, X_l, X_i, D^n \right] \lesssim (nh)^{-1},\\
		& \E\left[ \left\{ \widehat{Q}_{-l}(\widehat\omega_i) - Q(\widehat\omega_i) \right\}^2 \mid A_j, X_i, X_j, D^n \right] \lesssim (nh)^{-1}.
	\end{align*}
	On the other hand, when $j \neq k$:
	\begin{align*}
		& \E \left[ \left| (A_i\widehat\omega_i - 1)\frac{A_lK_h(\widehat\omega_l - \widehat\omega_i)K_h(\widehat\omega_j - \widehat\omega_i)}{(n-1)\widehat{Q}(\widehat\omega_i)\widehat{Q}_{-l}(\widehat\omega_i)}A_j(Y_j - \widehat\mu_j) \right. \right. \\
		& \hphantom{=} \quad\quad \left. \times (A_l\widehat\omega_l - 1)\{\widehat{Q}^{-1}(\widehat\omega_l) - Q^{-1}(\widehat\omega_l)\}K_h(\widehat\omega_l - \widehat\omega_k)A_k(Y_k - \widehat\mu_k) \vphantom{\frac{A_iK_h(\widehat\omega_i - \widehat\omega_l)K_h(\widehat\omega_l - \widehat\omega_k)}{(n-1)\widehat{Q}(\widehat\omega_l)\widehat{Q}_{-i}(\widehat\omega_l)}} \right| \\
		& \hphantom{=} \quad + \left| \vphantom{\frac{A_iK_h(\widehat\omega_i - \widehat\omega_l)K_h(\widehat\omega_l - \widehat\omega_k)}{(n-1)\widehat{Q}(\widehat\omega_l)\widehat{Q}_{-i}(\widehat\omega_l)}} 
 (A_i\widehat\omega_i - 1)\{\widehat{Q}^{-1}_{-l}(\widehat\omega_i) - Q^{-1}(\widehat\omega_i)\}K_h(\widehat\omega_j - \widehat\omega_i)A_j(Y_j - \widehat\mu_j) \right. \\
		& \hphantom{=} \left. \left. \quad\quad\quad \times (A_l\widehat\omega_l - 1)\frac{A_iK_h(\widehat\omega_i - \widehat\omega_l)K_h(\widehat\omega_l - \widehat\omega_k)}{(n-1)\widehat{Q}(\widehat\omega_l)\widehat{Q}_{-i}(\widehat\omega_l)}A_k(Y_k - \widehat\mu_k)\right| \mid D^n \right] \\
		& \lesssim \E\left\{ \frac{A_jA_lA_k K_h(\widehat\omega_l - \widehat\omega_i)K_h(\widehat\omega_j - \widehat\omega_i)K_h(\widehat\omega_l - \widehat\omega_k)}{(n-1)} \E\left\{ \left| \widehat{Q}(\widehat\omega_l) - Q(\widehat\omega_l) \right| \mid A_j, A_k, X_j, X_l, X_i, X_k, D^n \right\} \mid D^n \right\} \\
		& \hphantom{\lesssim} + \E\left\{ \frac{A_jA_iA_kK_h(\widehat\omega_l - \widehat\omega_i)K_h(\widehat\omega_j - \widehat\omega_i)K_h(\widehat\omega_l - \widehat\omega_k)}{(n-1)} \E\left\{ \left| \widehat{Q}_{-l}(\widehat\omega_i) - Q(\widehat\omega_i) \right| \mid A_j, A_k, X_i, X_j, X_k, D^n \right\} \mid D^n \right\} \\
		& \lesssim (n^3h)^{-1/2}
	\end{align*}
	The last inequality follows because
	\begin{align*}
		& \E\left\{ A_jA_kA_l K_h(\widehat\omega_l - \widehat\omega_i)K_h(\widehat\omega_j - \widehat\omega_i)K_h(\widehat\omega_l - \widehat\omega_k) \mid D^n \right\} \\
		& = \E\left[A_jA_l K_h(\widehat\omega_l - \widehat\omega_i)K_h(\widehat\omega_j - \widehat\omega_i)\E\{A_kK_h(\widehat\omega_l - \widehat\omega_k) \mid X_l, D^n\} \mid D^n \right] \\
		& \lesssim \E\left\{A_j A_lK_h(\widehat\omega_l - \widehat\omega_i)K_h(\widehat\omega_j - \widehat\omega_i) \mid D^n \right\} \\
		& = \E\left[A_lK_h(\widehat\omega_l - \widehat\omega_i)\E\{A_jK_h(\widehat\omega_j - \widehat\omega_i) \mid X_i, D^n\} \mid D^n \right] \\
		& \lesssim \E\left\{A_lK_h(\widehat\omega_l - \widehat\omega_i) \mid D^n\right\} \\
		& = \E\left[\E\left\{A_lK_h(\widehat\omega_l - \widehat\omega_i) \mid X_i, D^n\right\}\mid D^n \right] \\
		& \lesssim 1.
	\end{align*}
	and by Lemma \ref{lemma:Q}. 
	
	Next, we proceed to bound the first term appearing in $T_{ij}T_{lk}$, which is the main term. 
	
	When $k = j$, we have 
	\begin{align*}
		& \left| \E\left[(A_i\widehat\omega_i - 1)\{\widehat{Q}^{-1}_{-l}(\widehat\omega_i) - Q^{-1}(\widehat\omega_i)\}K_h(\widehat\omega_j - \widehat\omega_i)A_j(Y_j - \widehat\mu_j)^2 \right. \right. \\
		& \left. \left. \quad\quad \times (A_l\widehat\omega_l - 1)\{\widehat{Q}^{-1}_{-i}(\widehat\omega_l) - Q^{-1}(\widehat\omega_l)\}K_h(\widehat\omega_l - \widehat\omega_j)\mid D^n \right] \right| \\
		& = \left| \E\left[A_i(\widehat\omega_i - \omega_i)\{\widehat{Q}^{-1}_{-l}(\widehat\omega_i) - Q^{-1}(\widehat\omega_i)\}K_h(\widehat\omega_j - \widehat\omega_i)A_j(Y_j - \widehat\mu_j)^2 \right. \right. \\
		& \left. \left. \quad\quad \times A_l(\widehat\omega_l - \omega_l)\{\widehat{Q}^{-1}_{-i}(\widehat\omega_l) - Q^{-1}(\widehat\omega_l)\}K_h(\widehat\omega_l - \widehat\omega_j)\mid D^n \right] \right| \\
		& \lesssim \left| \E\left[A_iA_lK_h(\widehat\omega_j - \widehat\omega_i)K_h(\widehat\omega_l - \widehat\omega_j)\E\left\{|\widehat{Q}_{-l}(\widehat\omega_i) - Q(\widehat\omega_i)| | \widehat{Q}_{-i}(\widehat\omega_l) - Q(\widehat\omega_l)| \mid X_i, X_j, X_l, D^n \right\} \right] \mid D^n \right| \\
		& \lesssim (nh)^{-1}.
	\end{align*}
	In this light, 
	\begin{align*}
		\E\left\{\frac{1}{n^2(n-1)^2} \mathop{\sum\sum\sum}\limits_{1 \leq i \neq j \neq l \leq n} T_{ij}T_{lj} \mid D^n \right\} \lesssim (n^2h)^{-1}.
	\end{align*}
	Next, for $k \neq j$, we have
	\begin{align*}
		& \left| \E\left[(A_i\widehat\omega_i - 1)\{\widehat{Q}^{-1}_{-l}(\widehat\omega_i) - Q^{-1}(\widehat\omega_i)\}K_h(\widehat\omega_j - \widehat\omega_i)A_j(Y_j - \widehat\mu_j) \right. \right. \\
		& \left. \left. \quad\quad \times (A_l\widehat\omega_l - 1)\{\widehat{Q}^{-1}_{-i}(\widehat\omega_l) - Q^{-1}(\widehat\omega_l)\}K_h(\widehat\omega_l - \widehat\omega_k)A_k(Y_k - \widehat\mu_k) \mid D^n \right] \right| \\
		& = \left| \E\left[A_i(\widehat\omega_i - \omega_i)\{\widehat{Q}^{-1}_{-l}(\widehat\omega_i) - Q^{-1}(\widehat\omega_i)\}K_h(\widehat\omega_j - \widehat\omega_i)A_j(\mu_j - \widehat\mu_j) \right. \right. \\
		& \left. \left. \quad\quad \times A_l(\widehat\omega_l - \omega_l)\{\widehat{Q}^{-1}_{-i}(\widehat\omega_l) - Q^{-1}(\widehat\omega_l)\}K_h(\widehat\omega_l - \widehat\omega_k)A_k(\mu_k - \widehat\mu_k) \mid D^n \right] \right| \\
		& \lesssim \left| \E\left[|\widehat\omega_i - \omega_i||\mu_j - \widehat\mu_j||\widehat\omega_l - \omega_l||\mu_k - \widehat\mu_k|A_iA_jA_kA_lK_h(\widehat\omega_j - \widehat\omega_i)K_h(\widehat\omega_l - \widehat\omega_k) \right. \right. \\
		& \left. \left. \quad\quad\quad \times  \E\left\{|\widehat{Q}_{-l}(\widehat\omega_i) - Q(\widehat\omega_i)| | \widehat{Q}_{-i}(\widehat\omega_l) - Q(\widehat\omega_l)| \mid A_j, A_k, X_i, X_j, X_l, X_k, D^n \right\} \right] \mid D^n \right| \\
		& \lesssim (nh)^{-1} \E\left[|\widehat\omega_i - \omega_i| |\widehat\omega_l - \omega_l| 
		A_jA_iA_kA_lK_h(\widehat\omega_j - \widehat\omega_i)K_h(\widehat\omega_k - \widehat\omega_l)|\widehat\mu_j - \mu_j||\widehat\mu_k - \mu_k| \mid D^n \right] \\
		& \lesssim (nh)^{-1} \|\widehat\omega - \omega\|^2\|\widehat\mu - \mu\|^2.
	\end{align*}
	where the second-to-last inequality follows by Lemma \ref{lemma:Q} and the Cauchy-Schwarz inequality, while the last one follows by the Cauchy-Schwarz inequality and the independence of the observations because:
	\begin{align*}
		& \left\{\E\left(|\widehat\omega_i - \omega_i|
		A_iA_jK_h(\widehat\omega_j - \widehat\omega_i)|\widehat\mu_j - \mu_j| \mid D^n\right)\right\}^2 \\
		& \leq \E\left[(\widehat\omega_i - \omega_i)^2
		\E\{A_jK_h(\widehat\omega_j - \widehat\omega_i) \mid X_i, D^n\} \mid D^n \right] \E\left[(\widehat\mu_j - \mu_j)^2  \E\{A_iK_h(\widehat\omega_j - \widehat\omega_i) \mid X_j, D^n\} \mid D^n \right] \\
		& \lesssim \|\widehat\omega - \omega\|^2\|\widehat\mu - \mu\|^2.
	\end{align*}
	We have thus reached:
	\begin{align*}
		& \left| \E(T_{ij}T_{lk} \mid D^n) \right| \lesssim \|\widehat\omega - \omega\|^2\|\widehat\mu - \mu\|^2(nh)^{-1} + (n^3h)^{-1/2},
	\end{align*}
	implying that
	\begin{align*}
		\E\left(\frac{1}{n^2(n-1)^2} \mathop{\sum\sum\sum\sum}\limits_{1 \leq i \neq j \neq l \neq k \leq n} T_{ij} T_{lk} \mid D^n\right) \lesssim \|\widehat\omega - \omega\|^2 \|\widehat\mu- \mu\|^2(nh)^{-1} + (n^3h)^{-1/2}
	\end{align*}
	Putting everything together, because $nh \to \infty$, we have that
	\begin{align*}
		\E\{(\widetilde{T}_n - T_n)^2 \mid D^n\} \lesssim (n^3h)^{-1/2} + \|\widehat\omega - \omega\|^2\|\widehat\mu - \mu\|^2(nh)^{-1}.
	\end{align*}
	This concludes our proof of Lemma \ref{lemma:Tna}.
	\subsubsection{Proof of Lemma \ref{lemma:Sna}}
	We have that
	\begin{align*}
		\left|  \kappa_{\omega1}(Z_1; D^n) \right| & = \left| \int \kappa_\omega(Z_1, z_2; D^n) d\Pb(z_2)\right| \\
		& = \left| (A_1\widehat\omega_1 - 1) \E\left\{\frac{K_h(\widehat\omega_1 - \widehat\omega_2)}{Q(\widehat\omega_1)}A_2(\mu_2 - \widehat\mu_2) \mid X_1, D^n)\right\}\right| \\
		& \lesssim \|\widehat\mu - \mu\|_\infty \E\left\{A_2\frac{K_h(\widehat\omega_1 - \widehat\omega_2)}{Q(\widehat\omega_1)} \mid X_1, D^n \right\} \\
		& =\|\widehat\mu - \mu\|_\infty 
	\end{align*}
	and, by Cauchy-Schwarz, that
	\begin{align*}
		\left|\kappa_{\omega1}(Z_1; D^n) \right| & = \left| (A_1\widehat\omega_1 - 1)\E\left\{\frac{K_h(\widehat\omega_1 - \widehat\omega_2)}{Q(\widehat\omega_1)}A_2(\mu_2 - \widehat\mu_2) \mid X_1, D^n\right\}\right| \\
		&  \leq \left| (A_1\widehat\omega_1 - 1) \right| \left[\E\left\{\frac{K^2_h(\widehat\omega_1 - \widehat\omega_2)}{Q^2(\widehat\omega_1)} \mid X_1, D^n \right\} \right]^{1/2} \left[\E\left\{(\mu_2 - \widehat\mu_2)^2 \mid X_1, D^n\right\}\right]^{1/2} \\
		& \lesssim h^{-1/2} \|\widehat\mu - \mu\|
	\end{align*}
	Further, we have that
	\begin{align*}
		\left|\kappa_{\omega2}(Z_2; D^n) \right| & = \left| \int \kappa_\omega(z_1, Z_2; D^n) d\Pb(z_1)\right| \\
		& = \left| \E\{A_1(\widehat\omega_1 - \omega_1) \frac{K_h(\widehat\omega_1 - \widehat\omega_2)}{Q(\widehat\omega_1)} \mid X_2, D^n)A_2(Y_2 - \widehat\mu_2)\right| \\
		& \lesssim \|\widehat\omega - \omega\|_\infty
	\end{align*}
	and, by Cauchy-Schwarz, that
	\begin{align*}
		\left|\kappa_{\omega2}(Z_2; D^n) \right| & = \left| \E\left\{A_1(\widehat\omega_1 - \omega_1) \frac{K_h(\widehat\omega_1 - \widehat\omega_2)}{Q(\widehat\omega_1)} \mid X_2, D^n\right\}A_2(Y_2 - \widehat\mu_2)\right| \\
		& \leq \left[\E\{(\widehat\omega_1 - \omega_1)^2 \mid X_2, D^n\}\right]^{1/2} \left[\E\left\{\frac{A_1K^2_h(\widehat\omega_1 - \widehat\omega_2)}{Q^2(\widehat\omega_1)} \mid X_2, D^n\right\}\right]^{1/2} \left|A_2(Y_2 - \widehat\mu_2)\right| \\
		& \lesssim h^{-1/2} \|\widehat\omega - \omega\|
	\end{align*}
	Finally, recall that
	\begin{align*}
		S_{n\omega} = \frac{1}{n(n-1)}\mathop{\sum\sum}_{1 \leq i \neq j \leq n} \left[ \kappa_\omega(Z_i, Z_j; D^n) - \kappa_{\omega1}(Z_i; D^n) - \kappa_{\omega2}(Z_j; D^n)+ \E\left\{ \kappa_\omega(Z_1, Z_2; D^n) \mid D^n \right\}\right],
	\end{align*}
	and notice that
	\begin{align*}
		& \E\left[\kappa_\omega(Z_i, Z_j; D^n) - \kappa_{\omega1}(Z_i; D^n) - \kappa_{\omega2}(Z_j; D^n)+ \E\left\{ \kappa_\omega(Z_1, Z_2; D^n) \mid D^n \right\} \mid Z_i, D^n\right] \\
		& = \E\left[\kappa_\omega(Z_i, Z_j; D^n) - \kappa_{\omega1}(Z_i; D^n) - \kappa_{\omega2}(Z_j; D^n)+ \E\left\{ \kappa_\omega(Z_1, Z_2; D^n) \mid D^n \right\} \mid Z_j, D^n\right] \\
		& = 0.
	\end{align*}
	Therefore,
	\begin{align*}
		& \E(S^2_{n\omega} \mid D^n)  \lesssim \frac{1}{n^2} \E\left\{\kappa^2_\omega(Z_1, Z_2; D^n) \mid D^n \right\}  \lesssim (n^2h)^{-1}.
	\end{align*}
	\subsection{Proof of Proposition \ref{prop:max_squared_b}: preliminaries \label{sec:proof_prop_max_squared_b}}
	The proof essentially follows that of Proposition \ref{prop:max_squared_a}, so we omit certain details. Recall
	\begin{align*}
		s_\mu(t_1, t_2; D^n) = \E(\widehat\omega - \omega \mid A = 1, \widehat\mu = t_1, \mu = t_2, D^n).
	\end{align*}
	The estimator is $\widehat\psi_\mu = \Pn \widehat\varphi - T_{n\mu}$, where
	\begin{align*}
		& T_{n\mu} = \frac{1}{n(n-1)} \mathop{\sum\sum}\limits_{1 \leq i \neq j \leq n}(A_i\widehat\omega_i - 1) \widehat{Q}^{-1}_{-i}(\widehat\mu_j) K_h(\widehat\mu_j - \widehat\mu_i) A_j(Y_j - \widehat\mu_j), \text{for} \\
		& \widehat{Q}_{-i}(\widehat\mu_j) = \frac{1}{n-1} \sum_{s = 1, s \neq i}^n A_sK_h(\widehat\mu_s - \widehat\mu_j).
	\end{align*}
	The decomposition \eqref{eq:main_decomposition} yields that
	\begin{align*}
		\widehat\psi_\mu - \psi = (\Pn - \Pb)(\widehat\varphi - \overline\varphi) + (\Pn - \Pb) \overline\varphi + R_n - T_{n\mu}, \text{ where } R_n = \Pb(\widehat\varphi - \varphi).
	\end{align*}
	To keep the notation compact, let us define:
	\begin{align*}
		& \widehat\kappa_\mu(Z_1, Z_2; D^n) = (A_1\widehat\omega_1 - 1)\frac{K_h(\widehat\mu_1 - \widehat\mu_2)}{\widehat{Q}_{-1}(\widehat\mu_2)} A_2(Y_2 - \widehat\mu_2), \\
		& \kappa_\mu(Z_1, Z_2; D^n) = (A_1\widehat\omega_1 - 1)\frac{K_h(\widehat\mu_1 - \widehat\mu_2)}{Q(\widehat\mu_2)} A_2(Y_2 - \widehat\mu_2),
	\end{align*}
	so that
	\begin{align*}
		& T_{n\mu} = \frac{1}{n(n-1)} \mathop{\sum\sum}\limits_{1 \leq i \neq j \leq n}\widehat\kappa_\mu(Z_i, Z_j; D^n).
	\end{align*}
	Also, define
	\begin{align*}
		\widetilde{T}_{n\mu} = \frac{1}{n(n-1)} \mathop{\sum\sum}\limits_{1 \leq i \neq j \leq n}\kappa_\mu(Z_i, Z_j; D^n).
	\end{align*}
	Further, let
	\begin{align*}
		& \kappa_{\mu1}(Z; D^n) = \int \kappa_\mu(Z, z; D^n) d\Pb(z) \quad \text{ and } \quad \kappa_{\mu2}(Z; D^n) = \int \kappa_\mu(z, Z; D^n) d\Pb(z). 
	\end{align*}
	We can decompose $T_{n\mu}$ as
	\begin{align*}
		T_{n\mu} & = \widetilde{T}_{n\mu} + T_{n\mu} - \widetilde{T}_{n\mu} \\
		& = (\Pn - \Pb)\kappa_{\mu1}+ (\Pn - \Pb) \kappa_{\mu2} + S_{n\mu} + \E\{\kappa_\mu(Z_1, Z_2; D^n) \mid D^n\} + T_{n\mu} - \widetilde{T}_{n\mu},
	\end{align*}
	where
	\begin{align*}
		S_{n\mu} = \frac{1}{n(n-1)}\mathop{\sum\sum}_{1 \leq i \neq j \leq n} \left[ \kappa_\mu(Z_i, Z_j; D^n) - \kappa_{\mu1}(Z_i; D^n) - \kappa_{\mu2}(Z_j; D^n)+ \E\left\{ \kappa_\mu(Z_1, Z_2; D^n) \mid D^n \right\}\right].
	\end{align*}
	We will repeatedly use the following lemmas:
	\begin{lemma}\label{lemma:Snb}
		It holds that
		\begin{enumerate}
			\item $\sup_z|\kappa_{\mu1}(z; D^n)| \ \lesssim \|\widehat\mu - \mu\|_\infty \ \land \ h^{-1/2} \|\widehat\mu - \mu\|$,
			\item $\sup_z|\kappa_{\mu2}(z; D^n)| \ \lesssim \|\widehat\omega - \omega\|_\infty \ \land \ h^{-1/2} \|\widehat\omega - \omega\|$,
			\item $\E(S^2_{n\mu} \mid D^n) \lesssim (n^2h)^{-1}$.
		\end{enumerate}
	\end{lemma}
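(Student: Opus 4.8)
The plan is to run verbatim the argument used for Lemma \ref{lemma:Sna}, adjusted only for the fact that in $\kappa_\mu$ the normalizing factor $Q(\widehat\mu_2)$ is localized at the index carrying the outcome residual $A_2(Y_2-\widehat\mu_2)$ rather than at the index carrying $A_1\widehat\omega_1-1$. The single structural consequence of this change is that the exact cancellation $\E\{A_2 K_h(\widehat\omega_1-\widehat\omega_2)Q^{-1}(\widehat\omega_1)\mid X_1,D^n\}=1$ that simplified the $\omega$-case is unavailable; in its place I would use $\E\{A_2 K_h(\widehat\mu_1-\widehat\mu_2)Q^{-1}(\widehat\mu_2)\mid X_1,D^n\}\lesssim 1$, which holds because $Q(\widehat\mu_2)$ is bounded away from zero and, by symmetry of $K$ together with the standing kernel-integrability assumption $\E\{A_jK_h(\widehat\mu_j-\widehat\mu_i)\mid X_i,D^n\}\lesssim 1$, one has $\E\{A_2 K_h(\widehat\mu_1-\widehat\mu_2)\mid X_1,D^n\}\lesssim 1$.

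For Claim~1, since $O_1\ind O_2$ one has $\kappa_{\mu1}(Z_1;D^n)=(A_1\widehat\omega_1-1)\,\E\{K_h(\widehat\mu_1-\widehat\mu_2)Q^{-1}(\widehat\mu_2)A_2(\mu_2-\widehat\mu_2)\mid X_1,D^n\}$, using that $A_2(Y_2-\widehat\mu_2)$ collapses to $A_2(\mu_2-\widehat\mu_2)$ after conditioning on $(A_2,\widehat\mu_2,\mu_2,D^n)$. Bounding $|\mu_2-\widehat\mu_2|\le\|\widehat\mu-\mu\|_\infty$ and $|A_1\widehat\omega_1-1|\lesssim 1$ and invoking the above bound on $\E\{A_2 K_h(\widehat\mu_1-\widehat\mu_2)Q^{-1}(\widehat\mu_2)\mid X_1,D^n\}$ gives the $\|\widehat\mu-\mu\|_\infty$ half; applying Cauchy--Schwarz instead, with $K_h^2\lesssim h^{-1}K_h$ and $\E\{(\mu_2-\widehat\mu_2)^2\mid X_1,D^n\}\le\|\widehat\mu-\mu\|^2$, gives the $h^{-1/2}\|\widehat\mu-\mu\|$ half. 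Claim~2 is symmetric: integrating out $Z_1$ and replacing $A_1\widehat\omega_1-1$ by $A_1(\widehat\omega_1-\omega_1)$ (legitimate since $\E\{(A_1\widehat\omega_1-1)g(X_1)\mid D^n\}=\E\{A_1(\widehat\omega_1-\omega_1)g(X_1)\mid D^n\}$ for any $g$) yields $\kappa_{\mu2}(Z_2;D^n)=\E\{A_1(\widehat\omega_1-\omega_1)K_h(\widehat\mu_1-\widehat\mu_2)Q^{-1}(\widehat\mu_2)\mid X_2,D^n\}A_2(Y_2-\widehat\mu_2)$; then $|A_2(Y_2-\widehat\mu_2)|\lesssim 1$, the exact identity $\E\{A_1 K_h(\widehat\mu_1-\widehat\mu_2)\mid X_2,D^n\}=Q(\widehat\mu_2)\lesssim 1$, and boundedness of $Q$ from below give the $\|\widehat\omega-\omega\|_\infty$ bound, while Cauchy--Schwarz with $\E\{A_1(\widehat\omega_1-\omega_1)^2\mid D^n\}\le\|\widehat\omega-\omega\|^2$ and $\E\{A_1 K_h^2(\widehat\mu_1-\widehat\mu_2)Q^{-2}(\widehat\mu_2)\mid X_2,D^n\}\lesssim h^{-1}$ give the $h^{-1/2}\|\widehat\omega-\omega\|$ bound.

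For Claim~3, the key observation is that the summand of $S_{n\mu}$ is the degenerate second-order Hoeffding projection of $\kappa_\mu$: by the defining relations $\E\{\kappa_\mu(Z_i,Z_j;D^n)\mid Z_i,D^n\}=\kappa_{\mu1}(Z_i;D^n)$ and $\E\{\kappa_\mu(Z_i,Z_j;D^n)\mid Z_j,D^n\}=\kappa_{\mu2}(Z_j;D^n)$ (both by independence of the observations), the summand has conditional mean zero given either $Z_i,D^n$ or $Z_j,D^n$. Hence, on squaring $S_{n\mu}$ and taking $\E(\cdot\mid D^n)$, every cross term indexed by distinct unordered pairs vanishes, leaving $\E(S_{n\mu}^2\mid D^n)\lesssim n^{-2}\,\E\{\kappa_\mu^2(Z_1,Z_2;D^n)\mid D^n\}$. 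Finally, by boundedness of the residuals, $K_h^2\lesssim h^{-1}K_h$, boundedness of $Q$ from below, and the kernel-integrability assumption, $\E\{\kappa_\mu^2(Z_1,Z_2;D^n)\mid D^n\}\lesssim\E\{A_2 K_h^2(\widehat\mu_1-\widehat\mu_2)Q^{-2}(\widehat\mu_2)\mid D^n\}\lesssim h^{-1}\,\E\{A_2 K_h(\widehat\mu_1-\widehat\mu_2)\mid D^n\}\lesssim h^{-1}$, which yields $\E(S_{n\mu}^2\mid D^n)\lesssim (n^2h)^{-1}$.

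The computations are entirely routine; the only point requiring care---and the main (minor) obstacle---is that, because the normalization $Q(\widehat\mu_2)$ now sits at the ``outcome'' index, one cannot invoke the exact kernel cancellation available in the proof of Lemma \ref{lemma:Sna} and must instead systematically trade it for the two assumptions that $\widehat{Q}_{-i}(\widehat\mu_j)$ (hence $Q(\widehat\mu_j)$) is bounded away from zero and that $\E\{A_jK_h(\widehat\mu_j-\widehat\mu_i)\mid X_i,D^n\}\lesssim 1$. All remaining steps are identical to those in the proof of Lemma \ref{lemma:Sna}, and the passage between the empirical normalization $\widehat{Q}_{-1}(\widehat\mu_2)$ in $T_{n\mu}$ and the population normalization $Q(\widehat\mu_2)$ used here is deferred to the analogue of Lemma \ref{lemma:Tna}.
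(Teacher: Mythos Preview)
Your proposal is correct and follows essentially the same approach as the paper, which simply states that the proof is ``identical'' to that of Lemma~\ref{lemma:Sna}. In fact, you are slightly more careful than the paper in flagging the asymmetry caused by the normalization $Q(\widehat\mu_2)$ sitting at the outcome index---noting that the exact cancellation $\E\{A_2K_h/Q\mid X_1,D^n\}=1$ is unavailable for Claim~1 and must be replaced by the boundedness assumptions, while the exact identity $\E\{A_1K_h(\widehat\mu_1-\widehat\mu_2)\mid X_2,D^n\}=Q(\widehat\mu_2)$ is available for Claim~2---but this only sharpens the exposition and does not change the argument.
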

	\begin{lemma}\label{lemma:Tnb}
		It holds that
		\begin{align*}
			\E\{(\widetilde{T}_{n\mu} - T_{n\mu})^2 \mid D^n\} \lesssim (n^3h)^{-1/2} + \|\widehat\omega - \omega\|^2\|\widehat\mu - \mu\|^2(nh)^{-1}.
		\end{align*}
	\end{lemma}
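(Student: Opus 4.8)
The plan is to follow the proof of Lemma~\ref{lemma:Tna} essentially line by line, the only structural differences being that the kernel now localizes in the generated regressor $\widehat\mu$ rather than $\widehat\omega$, and that the relevant denominator is the leave-one-out object $\widehat{Q}_{-i}(\widehat\mu_j)$, which is localized at $\widehat\mu_j$ (the index of the outcome residual) while omitting observation $i$. First I would write
\begin{align*}
\widetilde{T}_{n\mu} - T_{n\mu} = \frac{1}{n(n-1)}\mathop{\sum\sum}\limits_{1 \leq i\neq j\leq n} T_{ij}, \qquad T_{ij} = (A_i\widehat\omega_i - 1)\,\frac{\widehat{Q}_{-i}(\widehat\mu_j) - Q(\widehat\mu_j)}{Q(\widehat\mu_j)\,\widehat{Q}_{-i}(\widehat\mu_j)}\,K_h(\widehat\mu_i - \widehat\mu_j)\,A_j(Y_j - \widehat\mu_j),
\end{align*}
using the assumed positivity and boundedness of $\widehat{Q}_{-i}(\widehat\mu_j)$ and of $Q(\widehat\mu_j)$, and then expand $\bigl(\sum\sum T_{ij}\bigr)^2$ into the usual seven groups: the diagonal terms $T_{ij}^2$ and $T_{ij}T_{ji}$, the three-index terms $T_{ij}T_{il}$, $T_{ij}T_{li}$, $T_{ij}T_{lj}$, $T_{ij}T_{jl}$, and the four-index term $T_{ij}T_{lk}$.

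For the diagonal group and the ``easy'' three-index groups $T_{ij}T_{il},T_{ij}T_{li},T_{ij}T_{jl}$ I would use the crude bounds $T_{ij}^2\lesssim h^{-1}A_jK_h(\widehat\mu_i - \widehat\mu_j)$ and $|T_{ij}T_{ji}|\lesssim h^{-1}A_iA_jK_h(\widehat\mu_i - \widehat\mu_j)$, together with the change-of-variables bound $\E\{A_jK_h(\widehat\mu_j - \widehat\mu_i)\mid X_i, D^n\}\lesssim 1$ and Lemma~\ref{lemma:Q} in its one-coordinate form (applied with $\mathcal{S} = \{1,\dots,n\}\setminus\{i\}$ and $\mathcal{S}_2$ the constant-size index set shared with the other factor), which yields $\E[\{\widehat{Q}_{-i}(\widehat\mu_j) - Q(\widehat\mu_j)\}^2\mid\cdots]\lesssim (nh)^{-1}$. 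After normalizing by $n^{-2}(n-1)^{-2}$ and counting $O(n^2)$, resp. $O(n^3)$, summands, each of these groups contributes $O((n^2h)^{-1})$, which is $o((n^3h)^{-1/2})$ since $nh\to\infty$.

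The real work, exactly as in Lemma~\ref{lemma:Tna}, is the term $T_{ij}T_{lk}$, which I would analyze together with $T_{ij}T_{lj}$ by distinguishing the sub-cases $k = j$ and $k\neq j$. Here $\widehat{Q}_{-i}(\widehat\mu_j)$ still depends on observation $l$ and $\widehat{Q}_{-l}(\widehat\mu_k)$ still depends on observation $i$, so I would split $\widehat{Q}_{-i}(\widehat\mu_j) = \widehat{Q}_{-i,-l}(\widehat\mu_j) + (n-1)^{-1}A_lK_h(\widehat\mu_l - \widehat\mu_j)$ and, symmetrically, $\widehat{Q}_{-l}(\widehat\mu_k) = \widehat{Q}_{-l,-i}(\widehat\mu_k) + (n-1)^{-1}A_iK_h(\widehat\mu_i - \widehat\mu_k)$, exactly as in the proof of Lemma~\ref{lemma:Tna}. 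The cross pieces carrying an explicit extra factor $(n-1)^{-1}K_h$ pick up an additional $h^{-1}(n-1)^{-1}$ and, combined with the $(nh)^{-1/2}$ $L_2$-control of the remaining $\widehat Q$-error (Lemma~\ref{lemma:Q}) and the kernel bounds, are $O((n^3h)^{-1/2})$. For the leading piece, in which the residual $A_i\widehat\omega_i - 1$ multiplies $\widehat{Q}_{-i,-l}(\widehat\mu_j) - Q(\widehat\mu_j)$ — a quantity not depending on observation $i$ — I would condition on $X_i$ and all other observations to replace $A_i\widehat\omega_i - 1$ by $A_i(\widehat\omega_i - \omega_i)$ (using $\E\{A_i\omega_i - 1\mid X_i\} = 0$), and likewise replace $A_j(Y_j - \widehat\mu_j)$ by $A_j(\mu_j - \widehat\mu_j)$; this is precisely why $\widehat Q$ is taken leave-one-out in $T_{n\mu}$. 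When $k = j$, Cauchy--Schwarz on the two $\widehat Q$-errors (each $(nh)^{-1/2}$ in $L_2$) plus the kernel bounds give $O((nh)^{-1})$ per summand, hence $O((n^2h)^{-1})$ after normalization; when $k\neq j$, a further Cauchy--Schwarz across the then-independent factors $|\widehat\omega_i - \omega_i|$, $|\widehat\omega_l - \omega_l|$, $|\mu_j - \widehat\mu_j|$, $|\mu_k - \widehat\mu_k|$, combined with the change-of-variables kernel bounds, yields $|\E(T_{ij}T_{lk}\mid D^n)|\lesssim \|\widehat\omega - \omega\|^2\|\widehat\mu - \mu\|^2(nh)^{-1} + (n^3h)^{-1/2}$, and multiplying by the $O(n^4)$ count and $n^{-2}(n-1)^{-2}$ gives the claimed bound. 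Collecting all groups and using $nh\to\infty$ to absorb the $(n^2h)^{-1}$ contributions into $(n^3h)^{-1/2}$ finishes the proof.

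The main obstacle is, as in Lemma~\ref{lemma:Tna}, the combinatorial bookkeeping in the four-index term: one must keep straight that $\widehat{Q}_{-i}$ is localized at $\widehat\mu_j$ rather than at $\widehat\mu_i$, which merely changes the order in which variables are integrated out in the change-of-variables steps (integrating over $X_i$ against $K_h(\widehat\mu_i-\widehat\mu_j)$ before $X_j$, and using that $\widehat{Q}_{-i}(\widehat\mu_j)$ is measurable with respect to $X_j$ and the remaining observations) but not the orders obtained. I therefore expect no genuinely new difficulty beyond carefully tracking these indices.
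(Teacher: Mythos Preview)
Your proposal is correct and follows essentially the same approach as the paper's own proof: the same seven-group expansion of $(\sum\sum T_{ij})^2$, the same crude bounds for the diagonal and easy three-index groups via Lemma~\ref{lemma:Q} yielding $O((n^2h)^{-1})$, and the same double leave-out decomposition $\widehat{Q}_{-i}(\widehat\mu_j) = \widehat{Q}_{-i,-l}(\widehat\mu_j) + (n-1)^{-1}A_lK_h(\widehat\mu_l - \widehat\mu_j)$ for the $T_{ij}T_{lk}$ term, with the conditioning step replacing $A_i\widehat\omega_i - 1$ by $A_i(\widehat\omega_i - \omega_i)$ once the denominator no longer depends on $A_i$. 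The bounds you state for each sub-case ($k=j$ versus $k\neq j$; cross pieces versus leading piece) match the paper's.
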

	\begin{lemma}\label{lemma:RTnb}
		It holds that
		\begin{align*}
			\left|R_n - \E(\widetilde{T}_n \mid D^n) \right| = \left| R_n - \E\{\kappa_\mu(Z_1, Z_2; D^n) \mid D^n\} \right| \lesssim (h^\beta \|\widehat\mu - \mu\| \ + \ \|\widehat\mu - \mu\|^{1 + \beta}) \land \|\widehat\omega - \omega\|\|\widehat\mu - \mu\|.
		\end{align*}
	\end{lemma}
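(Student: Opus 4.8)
The plan is to mirror the proof of Lemma~\ref{lemma:RTna} --- equivalently, the bias computation for $\widehat\psi_\omega$ in Section~\ref{section:bias_psi_a} --- now localizing in $\widehat\mu$ and exploiting the H\"older smoothness of $s_\mu$ in place of $s_\omega$. The first step is to rewrite $\E\{\kappa_\mu(Z_1,Z_2;D^n)\mid D^n\}$ in terms of $s_\mu$. Conditioning on $Z_2,D^n$ and integrating over $Z_1$: because the denominator $Q(\widehat\mu_2)=\E\{A_1K_h(\widehat\mu_1-\widehat\mu_2)\mid\widehat\mu_2,D^n\}$ is a population quantity and hence does not involve $A_1$, and neither does $K_h(\widehat\mu_1-\widehat\mu_2)$, I can replace $A_1\widehat\omega_1-1$ by $A_1(\widehat\omega_1-\omega_1)$ inside the expectation using $\E(A_1\widehat\omega_1-1\mid X_1,D^n)=\pi(X_1)(\widehat\omega_1-\omega_1)=\E\{A_1(\widehat\omega_1-\omega_1)\mid X_1,D^n\}$; then, since $A_1K_h(\widehat\mu_1-\widehat\mu_2)$ is measurable with respect to $\sigma(A_1,\widehat\mu_1,Z_2,D^n)$, the tower property and the definition of $s_\mu$ give $\E\{\kappa_\mu(Z_1,Z_2;D^n)\mid D^n\}=\E\{A_1\,s_\mu(\widehat\mu_1,\mu_1;D^n)\,Q(\widehat\mu_2)^{-1}K_h(\widehat\mu_1-\widehat\mu_2)\,A_2(Y_2-\widehat\mu_2)\mid D^n\}$.

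Next I would split $s_\mu(\widehat\mu_1,\mu_1;D^n)=s_\mu(\widehat\mu_2,\mu_2;D^n)+\{s_\mu(\widehat\mu_1,\mu_1;D^n)-s_\mu(\widehat\mu_2,\mu_2;D^n)\}$. In the leading term, $s_\mu(\widehat\mu_2,\mu_2;D^n)$ is constant under $\E_{Z_1}[\,\cdot\mid Z_2,D^n]$, so integrating over $Z_1$ produces the exact factor $\E\{A_1K_h(\widehat\mu_1-\widehat\mu_2)\mid\widehat\mu_2,D^n\}=Q(\widehat\mu_2)$, cancelling the denominator and leaving $\E\{A_2(Y_2-\widehat\mu_2)s_\mu(\widehat\mu_2,\mu_2;D^n)\mid D^n\}=R_n$, by the identity $R_n=\E\{A(Y-\widehat\mu)s_\mu(\widehat\mu,\mu;D^n)\mid D^n\}$ recorded in Section~\ref{sec:upper_bounds}. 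Hence $\E\{\kappa_\mu(Z_1,Z_2;D^n)\mid D^n\}-R_n$ equals the expectation given $D^n$ of $A_1\,Q(\widehat\mu_2)^{-1}K_h(\widehat\mu_1-\widehat\mu_2)\,A_2(Y_2-\widehat\mu_2)\{s_\mu(\widehat\mu_1,\mu_1;D^n)-s_\mu(\widehat\mu_2,\mu_2;D^n)\}$; taking the expectation over $Y_2$ first replaces $A_2(Y_2-\widehat\mu_2)$ by $A_2(\mu_2-\widehat\mu_2)$, which is legitimate since the remaining factors do not involve $Y_2$.

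To bound this remainder I would use that on the support of $K_h$ we have $|\widehat\mu_1-\widehat\mu_2|\le h$, so $|\mu_1-\mu_2|\le|\widehat\mu_1-\mu_1|+h+|\widehat\mu_2-\mu_2|$ and the H\"older-$\beta$ property of $s_\mu$ gives $|s_\mu(\widehat\mu_1,\mu_1;D^n)-s_\mu(\widehat\mu_2,\mu_2;D^n)|\lesssim h^\beta+|\widehat\mu_1-\mu_1|^\beta+|\widehat\mu_2-\mu_2|^\beta$. The three resulting terms are handled exactly as in Section~\ref{section:bias_psi_a}: for the $h^\beta$ and $|\widehat\mu_2-\mu_2|^\beta$ pieces, integrating out $Z_1$ produces the exact factor $Q(\widehat\mu_2)$ which cancels the weight, leaving $h^\beta\E\{A_2|\mu_2-\widehat\mu_2|\mid D^n\}\lesssim h^\beta\|\widehat\mu-\mu\|$ and $\E\{A_2|\mu_2-\widehat\mu_2|^{1+\beta}\mid D^n\}\lesssim\|\widehat\mu-\mu\|^{1+\beta}$ (Jensen); for the $|\widehat\mu_1-\mu_1|^\beta$ piece, a Cauchy--Schwarz split of the nonnegative weight $Q(\widehat\mu_2)^{-1}K_h(\widehat\mu_1-\widehat\mu_2)$ bounds it by $\{\E[A_1(\widehat\mu_1-\mu_1)^{2\beta}\mid D^n]\}^{1/2}\{\E[A_2(\mu_2-\widehat\mu_2)^2\mid D^n]\}^{1/2}\lesssim\|\widehat\mu-\mu\|^{1+\beta}$, using Jensen again. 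The kernel facts I rely on here are that $Q(\widehat\mu_2)$ is bounded above and away from zero (under the stated assumptions) and that $\E\{K_h(\widehat\mu_1-\widehat\mu_2)\mid X_1,D^n\}\lesssim1$, which follows from the assumption $\E\{A_jK_h(\widehat\mu_j-\widehat\mu_i)\mid X_i,D^n\}\lesssim1$ together with $1\le\omega\le M_\omega$; the latter also gives $\E\{K_h(\widehat\mu_1-\widehat\mu_2)\mid\widehat\mu_2,D^n\}\lesssim Q(\widehat\mu_2)$. This yields $|\E\{\kappa_\mu(Z_1,Z_2;D^n)\mid D^n\}-R_n|\lesssim h^\beta\|\widehat\mu-\mu\|+\|\widehat\mu-\mu\|^{1+\beta}$.

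For the alternative bound $\|\widehat\omega-\omega\|\|\widehat\mu-\mu\|$ I would bound $R_n$ and $\E\{\kappa_\mu(Z_1,Z_2;D^n)\mid D^n\}$ separately: $R_n=\E\{A(\widehat\omega-\omega)(\mu-\widehat\mu)\mid D^n\}\lesssim\|\widehat\omega-\omega\|\|\widehat\mu-\mu\|$ by Cauchy--Schwarz, while from the $s_\mu$-representation above, a Cauchy--Schwarz split of the weight $Q(\widehat\mu_2)^{-1}K_h(\widehat\mu_1-\widehat\mu_2)$ together with $\E\{A_1 s_\mu(\widehat\mu_1,\mu_1;D^n)^2\mid D^n\}\le\E\{A_1(\widehat\omega_1-\omega_1)^2\mid D^n\}\le\|\widehat\omega-\omega\|^2$ (Jensen and the tower property) gives $|\E\{\kappa_\mu(Z_1,Z_2;D^n)\mid D^n\}|\lesssim\|\widehat\omega-\omega\|\|\widehat\mu-\mu\|$. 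Taking the minimum of the two bounds finishes the proof. I expect the main obstacle to be purely organizational: making sure that (i) the swap $A_1\widehat\omega_1-1\to A_1(\widehat\omega_1-\omega_1)$ is invoked precisely because it is $Q$ (not $\widehat{Q}_{-i}$) in the denominator, and (ii) the identification $\E\{A_1(\widehat\omega_1-\omega_1)\mid A_1,\widehat\mu_1,\mu_1,D^n\}=A_1 s_\mu(\widehat\mu_1,\mu_1;D^n)$ is applied in the right order, so that after the split the leading term carries exactly the weight needed to collapse to $R_n$; beyond this bookkeeping there is no genuinely new estimate relative to the one-dimensional kernel bounds already used for $\widehat\psi_\omega$.
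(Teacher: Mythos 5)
Your proposal is correct and follows essentially the same route as the paper's proof: replace $A_1\widehat\omega_1-1$ by $A_1 s_\mu(\widehat\mu_1,\mu_1;D^n)$ via the tower property (using that $Q(\widehat\mu_2)$ and $K_h(\widehat\mu_1-\widehat\mu_2)$ do not involve $A_1$ beyond $\widehat\mu_1$), peel off $s_\mu(\widehat\mu_2,\mu_2;D^n)$ so that integrating out $Z_1$ cancels $Q(\widehat\mu_2)$ and recovers $R_n$ exactly, bound the H\"older increment by $h^\beta+|\widehat\mu_1-\mu_1|^\beta+|\widehat\mu_2-\mu_2|^\beta$ with Jensen and Cauchy--Schwarz, and intersect with the crude Cauchy--Schwarz bound $\|\widehat\omega-\omega\|\|\widehat\mu-\mu\|$. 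You in fact supply a few details (the Cauchy--Schwarz split of the kernel weight for the cross term, and the bound $\E\{A_1 s_\mu^2\}\le\|\widehat\omega-\omega\|^2$ for the alternative bound) that the paper leaves implicit.
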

	\subsubsection{Bias of $\widehat\psi_\mu$}
	In light of Lemma \ref{lemma:RTnb} and the decomposition in Eq. \eqref{eq:main_decomposition}, the bound on the bias of $\widehat\psi_\mu$ follows after bounding $\left|\E(T_{n\mu} - \widetilde{T}_{n\mu} \mid D^n)\right|$. We have
	\begin{align*}
		& \left| \E(T_{n\mu}  - \widetilde{T}_{n\mu}\mid D^n) \right| \\
		& = \frac{1}{n(n-1)} \left| \mathop{\sum\sum}_{1 \leq i \neq j \leq n}\E\left[(A_i\widehat\omega_i-1)\{\widehat{Q}_{-i}^{-1}(\widehat\mu_j) - Q^{-1}(\widehat\mu_j)\}K_h(\widehat\mu_i - \widehat\mu_j)A_j(Y_j - \widehat\mu_j) \mid D^n \right]  \right|.
	\end{align*}
	Because $\widehat{Q}_{-i}(\widehat\mu_j)$ does not depend on $A_i$, we have
	\begin{align*}
		& \left| \E\left[(A_i\widehat\omega_i-1)\{\widehat{Q}_{-i}^{-1}(\widehat\mu_j) - Q^{-1}(\widehat\mu_j)\}K_h(\widehat\mu_i - \widehat\mu_j)A_j(Y_j - \widehat\mu_j) \mid D^n \right] \right| \\
		& = \left| \E\left[A_i(\widehat\omega_i-\omega_i)\{\widehat{Q}_{-i}^{-1}(\widehat\mu_j) - Q^{-1}(\widehat\mu_j)\}K_h(\widehat\mu_i - \widehat\mu_j)A_j(\mu_j - \widehat\mu_j) \mid D^n \right] \right| \\
		& \lesssim \left| \E\left[|\widehat\omega_i-\omega_i|\E\{|\widehat{Q}_{-i}(\widehat\mu_j) - Q(\widehat\mu_j)| \mid A_j, X_j, D^n\}A_iA_jK_h(\widehat\mu_i - \widehat\mu_j)|\mu_j - \widehat\mu_j| \mid D^n \right] \right|
	\end{align*}
	By Lemma \ref{lemma:Q}, we have
	\begin{align*}
		\E\{|\widehat{Q}_{-i}(\widehat\mu_j) - Q(\widehat\mu_j)| \mid A_j, X_j, D^n\} \lesssim (nh)^{-1/2}.
	\end{align*}
	In this light, we have
	\begin{align*}
		\left| \E\left[(A_i\widehat\omega_i-1)\{\widehat{Q}_{-i}^{-1}(\widehat\mu_j) - Q^{-1}(\widehat\mu_j)\}K_h(\widehat\mu_i - \widehat\mu_j)A_j(Y_j - \widehat\mu_j) \mid D^n \right] \right| \lesssim \|\widehat\omega - \omega\| \|\widehat\mu - \mu\|(nh)^{-1/2}.
	\end{align*}
	Putting everything together, we have reached that
	\begin{align*}
		\left|\E(\widehat\psi_\mu - \psi \mid D^n)\right| & \lesssim \left|\E(R_n - \widetilde{T}_{n\mu} \mid D^n)\right|  + \left|\E(\widetilde{T}_{n\mu} - T_{n\mu} \mid D^n)\right| \\
		& \lesssim h^\beta \|\widehat\mu - \mu\| \ + \ \|\widehat\mu - \mu\|^{1 + \beta} \ + \ \|\widehat\mu - \mu\| \|\widehat\omega - \omega\|(nh)^{-1/2}.
	\end{align*}
	Finally, we also have the bound
	\begin{align*}
		\left|\E(\widehat\psi_\mu - \psi \mid D^n)\right| \lesssim |R_n| \ + \ |\E(T_{n\mu}\mid D^n)| \ \lesssim \|\widehat\omega - \omega\| \|\widehat\mu - \mu\| \ + \ |\E(T_n\mid D^n)| 
	\end{align*}
	and
	\begin{align*}
		\left| \E(T_{n\mu}\mid D^n) \right| & \leq \frac{1}{n(n-1)} \left| \mathop{\sum\sum}\limits_{1 \leq i \neq j \leq n} \E\left\{ (A_i\widehat\omega_i - 1)\widehat{Q}^{-1}_{-i}(\widehat\mu_j)K_h(\widehat\mu_i - \widehat\mu_j)A_j(Y_j - \widehat\mu_j) \mid D^n\right\} \right| \\
		& \lesssim \|\widehat\omega - \omega\|\|\widehat\mu - \mu\|.
	\end{align*}
	This concludes our derivation of the bound on the bias of $\widehat\psi_\mu$ given the training sample $D^n$. 
	\subsubsection{Variance of $\widehat\psi_\mu$}
	The bound on the variance follows from the same reasoning used to bound the variance of $\widehat\psi_\omega$ in Section \ref{sec:variance_psi_a}. In particular, it follows from Lemmas \ref{lemma:Snb} and \ref{lemma:Tnb}.
	\subsubsection{Linear expansion of $\widehat\psi_\mu - \psi_\mu$}
	We have defined $\widehat\varphi_\mu = \E(\widehat\omega - \omega \mid A = 1, \widehat\mu, D^n)A(Y - \widehat\mu)$ and $\overline\varphi_\mu = \E(\overline\omega - \omega \mid A = 1, \overline\mu)A(Y - \overline\mu)$. Following the reasoning in Section \ref{sec:proof_prop_max_squared_a_le}, the third statement of Proposition \ref{prop:max_squared_b} is implied by Lemmas \ref{lemma:Tnb}, \ref{lemma:Snb} and \ref{lemma:RTnb} as well as the following statement:
	\begin{align}\label{eq:statement3b}
		\|\widehat\kappa_{\mu2} - \widehat\varphi_\mu\| \lesssim (\|\widehat\mu - \mu\|_\infty^\beta \ + \ h^\beta) \land \|\widehat\omega - \omega\|_\infty.
	\end{align}
	\subsubsection{Proof Lemma \ref{lemma:RTnb}}
	We have
	\begin{align*}
		& \E\{\kappa_\mu(Z_1, Z_2; D^n) \mid D^n\} = \E\left\{ (A_1\widehat\omega_1 - 1)\{Q(\widehat\mu_2)\}^{-1} K_h(\widehat\mu_1 - \widehat\mu_2)A_2(Y_2 - \widehat\mu_2) \mid D^n \right\} \\
		& = \E\left\{ A_1s_\mu(\widehat\mu_2, \mu_2; D^n)\{Q(\widehat\mu_2)\}^{-1} K_h(\widehat\mu_1 - \widehat\mu_2)A_2(Y_2 - \widehat\mu_2) \mid D^n \right\} \\
		& \hphantom{=} + \E\left[ A_1\{s_\mu(\widehat\mu_1, \mu_1; D^n) - s_\mu(\widehat\mu_2, \mu_2; D^n)\}K_h(\widehat\mu_1 - \widehat\mu_2)\{Q(\widehat\mu_2)\}^{-1} A_2(Y_2 - \widehat\mu_2) \mid D^n \right] \\
		& = R_n + \E\left[ A_1\{s_\mu(\widehat\mu_1, \mu_1; D^n) - s_\mu(\widehat\mu_2, \mu_2; D^n)\}K_h(\widehat\mu_1 - \widehat\mu_2)\{Q(\widehat\mu_2)\}^{-1}A_2(\mu_2 - \widehat\mu_2) \mid D^n \right].
	\end{align*}
	The second equality follows because
	\begin{align*}
		& \E\{A_1(\widehat\omega_1 - \omega_1) - A_1s_\mu(\widehat\mu_1, \mu_1; D^n) \mid A_1, A_2, Y_2, \widehat\mu_1, \widehat\mu_2, D^n\} \\
		& = \E\{A_1(\widehat\omega_1 - \omega_1) - A_1s_\mu(\widehat\mu_1, \mu_1; D^n) \mid A_1, \widehat\mu_1, D^n\} \\
		& = \E\left[\E\{A_1(\widehat\omega_1 - \omega_1) \mid A_1, \widehat\mu_1, \mu_1, D^n\} - A_1s_\mu(\widehat\mu_1, \mu_1; D^n) \mid A_1, \widehat\mu_1, D^n\right] \\
		& = \E\{A_1s_\mu(\widehat\mu_1, \mu_1; D^n) - A_1s_\mu(\widehat\mu_1, \mu_1; D^n) \mid A_1, \widehat\mu_1, D^n\}\\
		& = 0.
	\end{align*}
	The last equality follows because
	\begin{align*}
		Q(\widehat\mu_2) = \E\{A_1 K_h(\widehat\mu_1 - \widehat\mu_2) \mid D^n, X_2\} \quad \text{ and } \quad R_n = \E\left\{s_\mu(\widehat\mu_2, \mu_2; D^n)A_2(Y_2 - \widehat\mu_2) \mid D^n \right\}.
	\end{align*}
	Under the smoothness condition of $s_\mu(t_1, t_2; D^n)$, one also has
	\begin{align*}
		& \left| \E\left[ A_1\{s_\mu(\widehat\mu_1, \mu_1; D^n) - s_\mu(\widehat\mu_2, \mu_2; D^n)\}\frac{K_h(\widehat\mu_1 - \widehat\mu_2)}{Q(\widehat\mu_2)} A_2(\mu_2 - \widehat\mu_2) \mid D^n \right] \right| \\
		& \lesssim \E\left[\{|\widehat\mu_2 - \widehat\mu_1|^\beta \ + \ |\widehat\mu_2 - \mu_2|^\beta \ + \ |\widehat\mu_1 - \mu_1|^\beta\}A_1A_2K_h(\widehat\mu_1 - \widehat\mu_2)(\mu_2 - \widehat\mu_2) \mid D^n \right] \\
		& \lesssim h^\beta \|\widehat\mu - \mu\| \ + \ \|\widehat\mu - \mu\|^{1 + \beta}.
	\end{align*}
	Finally, we also have the bound
	\begin{align*}
		\left| \E\{\kappa_\mu(Z_1, Z_2; D^n) \mid D^n\} \right| & \leq \left|\E\left\{ (A_1\widehat\omega_1 - 1)Q^{-1}(\widehat\mu_2)K_h(\widehat\mu_1 - \widehat\mu_2)A_2(Y_2 - \widehat\mu_2) \mid D^n\right\} \right| \\
		& \lesssim \|\widehat\omega - \omega\|\|\widehat\mu - \mu\|.   
	\end{align*}
	\subsubsection{Proof of Lemma \ref{lemma:Tnb}}
	Let
	\begin{align*}
		T_{ij} & = (A_i\widehat\omega_i - 1)\{\widehat{Q}_{-i}^{-1}(\widehat\mu_j) - Q^{-1}(\widehat\mu_j)\}K_{h}(\widehat\mu_i - \widehat\mu_j)A_j(Y_j - \widehat\mu_j) \\
		& = (A_i\widehat\omega_i - 1)\widehat{Q}^{-1}_{-i}(\widehat\mu_j)Q^{-1}(\widehat\mu_j)\{Q(\widehat\mu_j) - \widehat{Q}_{-i}(\widehat\mu_j)\}K_{h}(\widehat\mu_i - \widehat\mu_j)A_j(Y_j - \widehat\mu_j)
	\end{align*}
	and notice that
	\begin{align*}
		&(T_{n\mu} - \widetilde{T}_{n\mu})^2 =
		\left(\mathop{\sum\sum}\limits_{1\leq i \neq j \leq n} T_{ij}\right)^2 \\
		& = \mathop{\sum\sum}\limits_{1 \leq i \neq j \leq n} (T^2_{ij} + T_{ij} T_{ji}) + \mathop{\sum\sum\sum}\limits_{1 \leq i \neq j \neq l \leq n} (T_{ij} T_{il} + T_{ij} T_{li} + T_{ij} T_{jl} + T_{ij} T_{lj}) + \mathop{\sum\sum\sum\sum}\limits_{1 \leq i \neq j \neq l \neq k \leq n} T_{ij} T_{lk}.
	\end{align*}
	Just like in the proof of the bound in Eq. \eqref{eq:statement2a}, we have
	\begin{align*}
		& T_{ij}^2 \lesssim A_jh^{-1}K_{h}(\widehat\mu_i - \widehat\mu_j), \quad |T_{ij}T_{ji}| \ \lesssim h^{-1}A_iA_jK_{h}(\widehat\mu_i - \widehat\mu_j) \\
		& |T_{ij}T_{il}| \ \lesssim A_jA_lK_{h}(\widehat\mu_i - \widehat\mu_j)K_{h}(\widehat\mu_i - \widehat\mu_l)\{Q(\widehat\mu_j) - \widehat{Q}_{-i}(\widehat\mu_j)\}\{Q(\widehat\mu_l) - \widehat{Q}_{-i}(\widehat\mu_l)\}, \\
		& |T_{ij}T_{li}| \ \lesssim A_iA_jK_{h}(\widehat\mu_i - \widehat\mu_j)K_{h}(\widehat\mu_l - \widehat\mu_i)\{Q(\widehat\mu_j) - \widehat{Q}_{-i}(\widehat\mu_j)\}\{Q(\widehat\mu_i) - \widehat{Q}_{-l}(\widehat\mu_i)\} \\
		& |T_{ij}T_{jl}| \ \lesssim A_jA_lK_{h}(\widehat\mu_i - \widehat\mu_j)K_{h}(\widehat\mu_j - \widehat\mu_l)\{Q(\widehat\mu_j) - \widehat{Q}_{-i}(\widehat\mu_j)\}\{Q(\widehat\mu_l) - \widehat{Q}_{-j}(\widehat\mu_l)\}, \\
		& |T_{ij}T_{lj}| \ \lesssim A_jK_{h}(\widehat\mu_i - \widehat\mu_j)K_{h}(\widehat\mu_l - \widehat\mu_j)\{Q(\widehat\mu_j) - \widehat{Q}_{-i}(\widehat\mu_j)\}\{Q(\widehat\mu_j) - \widehat{Q}_{-l}(\widehat\mu_j)\}.
	\end{align*}
	By Lemma \ref{lemma:Q}, we have
	\begin{align*}
		\E\{|\widehat{Q}_{-i}(\widehat\mu_j) - Q(\widehat\mu_j)| \mid A_l, X_j, X_l, D^n\} \lesssim (nh)^{-1/2}.
	\end{align*}
	This yields that
	\begin{align*}
		& \left| \frac{1}{n^2(n-1)^2}\E\left\{\mathop{\sum\sum}\limits_{1 \leq i \neq j \leq n} (T^2_{ij} + T_{ij} T_{ji}) \mid D^n \right\} \right| \lesssim (n^2h)^{-1}, \\
		& \left| \frac{1}{n^2(n-1)^2}\E\left\{\mathop{\sum\sum\sum}\limits_{1 \leq i \neq j \neq l \leq n} (T_{ij} T_{il} + T_{ij} T_{li} + T_{ij} T_{jl}) \mid D^n \right\} \right| \lesssim (n^2h)^{-1}.
	\end{align*}
	Next, define 
	\begin{align*}
		\widehat{Q}_{-il}(\widehat\mu_j) = \frac{1}{n-1}\sum_{s = 1, s \neq (i, l)}^n A_sK_h(\widehat\mu_s - \mu_j),
	\end{align*}
	so that
	\begin{align*}
		\widehat{Q}_{-il}(\widehat\mu_j) - \widehat{Q}_{-i}(\widehat\mu_j) = - \frac{A_lK_h(\widehat\mu_l - \widehat\mu_j)}{n-1}.
	\end{align*}
	In this light, we have
	\begin{align*}
		T_{ij}T_{lk} & = (A_i\widehat\omega_i - 1)\{\widehat{Q}_{-il}^{-1}(\widehat\mu_j) - Q^{-1}(\widehat\mu_j)\}K_{h}(\widehat\mu_i - \widehat\mu_j)A_j(Y_j - \widehat\mu_j) \\
		& \hphantom{=} \quad \times (A_l\widehat\omega_l - 1)\{\widehat{Q}_{-il}^{-1}(\widehat\mu_k) - Q^{-1}(\widehat\mu_k)\}K_{h}(\widehat\mu_l - \widehat\mu_k)A_k(Y_k - \widehat\mu_k) \\
		& \hphantom{=} - (A_i\widehat\omega_i - 1)\frac{A_lK_h(\widehat\mu_l - \widehat\mu_j)K_h(\widehat\mu_i - \widehat\mu_j)}{(n-1)\widehat{Q}_{-il}(\widehat\mu_j)\widehat{Q}_{-i}(\widehat\mu_j)}A_j(Y_j - \widehat\mu_j) \\
		& \hphantom{= - } \quad \times (A_l\widehat\omega_l - 1)\{\widehat{Q}^{-1}_{-l}(\widehat\mu_k) - Q^{-1}(\widehat\mu_k)\}K_{h}(\widehat\mu_l - \widehat\mu_k)A_k(Y_k - \widehat\mu_k) \\
		& \hphantom{=} - (A_i\widehat\omega_i - 1)\{\widehat{Q}_{-il}^{-1}(\widehat\mu_j) - Q^{-1}(\widehat\mu_j)\}K_{h}(\widehat\mu_i - \widehat\mu_j)A_j(Y_j - \widehat\mu_j) \\
		& \hphantom{= - } \quad \times (A_l\widehat\omega_l - 1)\frac{ A_iK_h(\widehat\mu_i - \widehat\mu_k)K_{h}(\widehat\mu_l - \widehat\mu_k)}{(n-1)\widehat{Q}_{-il}(\widehat\mu_k)\widehat{Q}(\widehat\mu_k)}A_k(Y_k - \widehat\mu_k).
	\end{align*}
	When $j = k$, the expectation of the last two terms can be upper bounded as
	\begin{align*}
		& \E \left[ \left| (A_i\widehat\omega_i - 1)(A_l\widehat\omega_l - 1)(Y_j - \widehat\mu_j)^2\frac{A_jA_lK^2_h(\widehat\mu_l - \widehat\mu_i)K_h(\widehat\mu_i - \widehat\mu_j)}{(n-1)\widehat{Q}(\widehat\mu_j)\widehat{Q}_{-il}(\widehat\mu_j) \widehat{Q}_{-l}(\widehat\mu_j)Q(\widehat\mu_j)}\{Q(\widehat\mu_j) - \widehat{Q}_{-l}(\widehat\mu_j)\}  \right| \right. \\
		& \hphantom{=} \left. \quad + \left| (A_i\widehat\omega_i - 1)(A_l\widehat\omega_l - 1)(Y_j - \widehat\mu_j)^2 \frac{A_jA_iK^2_h(\widehat\mu_i - \widehat\mu_j)K_h(\widehat\mu_l - \widehat\mu_j)}{(n-1)\widehat{Q}_{-il}(\widehat\mu_j)\widehat{Q}(\widehat\mu_j)\widehat{Q}_{-il}(\widehat\mu_j)Q^(\widehat\mu_j)}\{Q(\widehat\mu_j) - \widehat{Q}_{-il}(\widehat\mu_j)\}\right| \mid D^n \right] \\
		& \lesssim \E\left\{ \frac{A_jA_l K_h(\widehat\mu_l - \widehat\mu_i)K_h(\widehat\mu_i - \widehat\mu_j)}{h(n-1)} \E\left\{ \left| Q(\widehat\mu_j) - \widehat{Q}_{-l}(\widehat\mu_j) \right| \mid A_j, X_j, X_l, X_i, D^n \right\} \mid D^n \right\} \\
		& \hphantom{\lesssim} + \E\left\{ \frac{A_jA_iK_h(\widehat\mu_i - \widehat\mu_j)K_h(\widehat\mu_l - \widehat\mu_j)}{h(n-1)} \E\left\{ \left| Q(\widehat\mu_j) - \widehat{Q}_{-il}(\widehat\mu_j) \right| \mid A_j, X_j, D^n \right\} \mid D^n \right\} \\
		& \lesssim (nh)^{-3/2}
	\end{align*}
	The last inequality follows because
	\begin{align*}
		\E\left\{ A_jA_l K_h(\widehat\mu_l - \widehat\mu_i)K_h(\widehat\mu_i - \widehat\mu_j) \mid D^n \right\} & = \E\left[A_lK_h(\widehat\mu_l - \widehat\mu_i)\E\{A_jK_h(\widehat\mu_i - \widehat\mu_j) \mid X_i, D^n\} \mid D^n \right] \\
		& \lesssim \E\left\{A_lK_h(\widehat\mu_l - \widehat\mu_i) \mid D^n\right\} \\
		& = \E\left[\E\left\{A_lK_h(\widehat\mu_l - \widehat\mu_i) \mid X_i, D^n\right\}\mid D^n \right] \\
		& \lesssim 1.
	\end{align*}
	and because, by Lemma \ref{lemma:Q},
	\begin{align*}
		& \E\left[ \left| \widehat{Q}_{-l}(\widehat\mu_l) - Q(\widehat\mu_j) \right| \mid A_j, X_j, X_l, X_i, D^n \right] \lesssim (nh)^{-1/2},\\
		& \E\left[ \left| \widehat{Q}_{-il}(\widehat\mu_j) - Q(\widehat\mu_j) \right| \mid A_j, X_j, D^n \right] \lesssim (nh)^{-1/2}.
	\end{align*}
	On the other hand, when $j \neq k$:
	\begin{align*}
		& \E \left[ \left| (A_i\widehat\omega_i - 1)\frac{A_lK_h(\widehat\mu_l - \widehat\mu_j)K_h(\widehat\mu_i - \widehat\mu_j)}{(n-1)\widehat{Q}_{il}(\widehat\mu_j)\widehat{Q}_{-i}(\widehat\mu_j)}A_j(Y_j - \widehat\mu_j) \right. \right. \\
		& \hphantom{=} \quad\quad \left. \vphantom{\frac{A_lK_h(\widehat\mu_l - \widehat\mu_j)K_h(\widehat\mu_i - \widehat\mu_j)}{(n-1)\widehat{Q}_{il}(\widehat\mu_j)\widehat{Q}_{-i}(\widehat\mu_j)}} \times (A_l\widehat\omega_l - 1)\{\widehat{Q}^{-1}_{-l}(\widehat\mu_k) - Q^{-1}(\widehat\mu_k)\}K_h(\widehat\mu_l - \widehat\mu_k)A_k(Y_k - \widehat\mu_k)\right| \\
		& \hphantom{=} \quad + \left| \vphantom{\frac{A_lK_h(\widehat\mu_l - \widehat\mu_j)K_h(\widehat\mu_i - \widehat\mu_j)}{(n-1)\widehat{Q}_{il}(\widehat\mu_j)\widehat{Q}_{-i}(\widehat\mu_j)}} (A_i\widehat\omega_i - 1)\{\widehat{Q}^{-1}_{-il}(\widehat\mu_j) - Q^{-1}(\widehat\mu_j)\}K_h(\widehat\mu_i - \widehat\mu_j)A_j(Y_j - \widehat\mu_j) \right. \\
		& \hphantom{=} \left. \left. \quad\quad\quad \times (A_l\widehat\omega_l - 1)\frac{A_iK_h(\widehat\mu_i - \widehat\mu_k)K_h(\widehat\mu_l - \widehat\mu_k)}{(n-1)\widehat{Q}_{-il}(\widehat\mu_k)\widehat{Q}(\widehat\mu_k)}A_k(Y_k - \widehat\mu_k)\right| \mid D^n \right] \\
		& \lesssim \E\left\{ \frac{A_jA_lA_k K_h(\widehat\mu_l - \widehat\mu_j)K_h(\widehat\mu_i - \widehat\mu_j)K_h(\widehat\mu_l - \widehat\mu_k)}{(n-1)} \E\left\{ \left| \widehat{Q}_{-l}(\widehat\mu_k) - Q(\widehat\mu_k) \right| \mid A_j, A_k, X_j, X_i, X_k, D^n \right\} \mid D^n \right\} \\
		& \hphantom{\lesssim} + \E\left\{ \frac{A_jA_iA_kK_h(\widehat\mu_i - \widehat\mu_j)K_h(\widehat\mu_i - \widehat\mu_k)K_h(\widehat\mu_l - \widehat\mu_k)}{(n-1)} \E\left\{ \left| \widehat{Q}_{-il}(\widehat\mu_j) - Q(\widehat\mu_j) \right| \mid A_j, A_k, X_j, X_k, D^n \right\} \mid D^n \right\} \\
		& \lesssim (n^3h)^{-1/2}
	\end{align*}
	The last inequality follows because
	\begin{align*}
		& \E\left\{ A_jA_lA_k K_h(\widehat\mu_l - \widehat\mu_j)K_h(\widehat\mu_i - \widehat\mu_j)K_h(\widehat\mu_l - \widehat\mu_k) \mid D^n \right\} \\
		& = \E\left[A_jA_l K_h(\widehat\mu_l - \widehat\mu_j)K_h(\widehat\mu_i - \widehat\mu_j)\E\{A_kK_h(\widehat\mu_l - \widehat\mu_k) \mid X_l, D^n\} \mid D^n \right] \\
		& \lesssim \E\left\{A_j A_lK_h(\widehat\mu_l - \widehat\mu_j)K_h(\widehat\mu_i - \widehat\mu_j) \mid D^n \right\} \\
		& = \E\left[A_lK_h(\widehat\mu_l - \widehat\mu_j)\E\{A_jK_h(\widehat\mu_i - \widehat\mu_j) \mid X_i, D^n\} \mid D^n \right] \\
		& \lesssim \E\left\{A_lK_h(\widehat\mu_l - \widehat\mu_j) \mid D^n\right\} \\
		& = \E\left[\E\left\{A_lK_h(\widehat\mu_l - \widehat\mu_j) \mid X_i, D^n\right\}\mid D^n \right] \\
		& \lesssim 1.
	\end{align*}
	and by Lemma \ref{lemma:Q}. 
	
	Next, we proceed to bound the first term appearing in $T_{ij}T_{lk}$, which is the main term. 
	
	When $k = j$, we have 
	\begin{align*}
		& \left| \E\left[(A_i\widehat\omega_i - 1)\{\widehat{Q}^{-1}_{-il}(\widehat\mu_j) - Q^{-1}(\widehat\mu_j)\}^2K_h(\widehat\mu_i - \widehat\mu_j)K_h(\widehat\mu_l - \widehat\mu_j)A_j(Y_j - \widehat\mu_j)^2\right] \right| \\
		& = \left| \E\left[A_i(\widehat\omega_i - \omega_i)\{\widehat{Q}^{-1}_{-il}(\widehat\mu_j) - Q^{-1}(\widehat\mu_j)\}^2K_h(\widehat\mu_i - \widehat\mu_j)K_h(\widehat\mu_l - \widehat\mu_j)A_j(Y_j - \widehat\mu_j)^2A_l(\widehat\omega_l - \omega_l) \right] \right| \\
		& \lesssim \E\left(A_iA_jK_h(\widehat\mu_i - \widehat\mu_j)K_h(\widehat\mu_l - \widehat\mu_j)\E\left[\{Q(\widehat\mu_j) - \widehat{Q}_{-il}(\widehat\mu_j)\}^2 \mid X_i, X_j, X_l, D^n \right] \mid D^n \right) \\
		& \lesssim (nh)^{-1}.
	\end{align*}
	In this light, 
	\begin{align*}
		\E\left\{\frac{1}{n^2(n-1)^2} \mathop{\sum\sum\sum}\limits_{1 \leq i \neq j \neq l \leq n} T_{ij}T_{lj} \mid D^n \right\} \lesssim (n^2h)^{-1}.
	\end{align*}
	Next, for $k \neq j$, we have
	\begin{align*}
		& \left| \E\left[(A_i\widehat\omega_i - 1)\{\widehat{Q}^{-1}_{-il}(\widehat\mu_j) - Q^{-1}(\widehat\mu_j)\}K_h(\widehat\mu_i - \widehat\mu_j)A_j(Y_j - \widehat\mu_j) \right. \right. \\
		& \left. \left. \quad\quad \times (A_l\widehat\omega_l - 1)\{\widehat{Q}^{-1}_{-il}(\widehat\mu_k) - Q^{-1}(\widehat\mu_k)\}K_h(\widehat\mu_l - \widehat\mu_k)A_k(Y_k - \widehat\mu_k) \mid D^n \right] \right| \\
		& = \left| \E\left[A_i(\widehat\omega_i - \omega_i)\{\widehat{Q}^{-1}_{-il}(\widehat\mu_j) - Q^{-1}(\widehat\mu_j)\}K_h(\widehat\mu_i - \widehat\mu_j)A_j(\mu_j - \widehat\mu_j) \right. \right. \\
		& \left. \left. \quad\quad \times A_l(\widehat\omega_l - \omega_l)\{\widehat{Q}^{-1}_{-il}(\widehat\mu_k) - Q^{-1}(\widehat\mu_k)\}K_h(\widehat\mu_l - \widehat\mu_k)A_k(\mu_k - \widehat\mu_k) \mid D^n \right] \right| \\
		& \lesssim \left| \E\left[|\widehat\omega_i - \omega_i||\mu_j - \widehat\mu_j||\widehat\omega_l - \omega_l||\mu_k - \widehat\mu_k|A_iA_jA_kA_lK_h(\widehat\mu_i - \widehat\mu_j)K_h(\widehat\mu_l - \widehat\mu_k) \right. \right. \\
		& \left. \left. \quad\quad\quad \times  \E\left\{|\widehat{Q}_{-il}(\widehat\mu_j) - Q(\widehat\mu_j)| | \widehat{Q}_{-il}(\widehat\mu_k) - Q(\widehat\mu_k)| \mid A_j, A_k, X_j, X_k, D^n \right\} \right] \mid D^n \right| \\
		& \lesssim (nh)^{-1} \E\left[|\widehat\omega_i - \omega_i| |\widehat\omega_l - \omega_l| 
		A_jA_iA_kA_lK_h(\widehat\mu_i - \widehat\mu_j)K_h(\widehat\mu_l - \widehat\mu_k)|\widehat\mu_j - \mu_j||\widehat\mu_k - \mu_k| \mid D^n \right] \\
		& \lesssim (nh)^{-1} \|\widehat\omega - \omega\|^2\|\widehat\mu - \mu\|^2.
	\end{align*}
	where the second-to-last inequality follows by Lemma \ref{lemma:Q} and the Cauchy-Schwarz inequality, while the last one follows by the Cauchy-Schwarz inequality and the independence of the observations because:
	\begin{align*}
		& \left\{\E\left(|\widehat\omega_i - \omega_i|
		A_iA_jK_h(\widehat\mu_i - \widehat\mu_j)|\widehat\mu_j - \mu_j| \mid D^n\right)\right\}^2 \\
		& \leq \E\left[(\widehat\omega_i - \omega_i)^2
		\E\{A_jK_h(\widehat\mu_i - \widehat\mu_j) \mid X_i, D^n\} \mid D^n \right] \E\left[(\widehat\mu_j - \mu_j)^2  \E\{A_iK_h(\widehat\mu_i - \widehat\mu_j) \mid X_j, D^n\} \mid D^n \right] \\
		& \lesssim \|\widehat\omega - \omega\|^2\|\widehat\mu - \mu\|^2.
	\end{align*}
	We have thus reached, for $j \neq k$:
	\begin{align*}
		& \left| \E(T_{ij}T_{lk} \mid D^n) \right| \lesssim \|\widehat\omega - \omega\|^2\|\widehat\mu - \mu\|^2(nh)^{-1} + (n^3h)^{-1/2},
	\end{align*}
	implying that
	\begin{align*}
		\E\left(\frac{1}{n^2(n-1)^2} \mathop{\sum\sum\sum\sum}\limits_{1 \leq i \neq j \neq l \neq k \leq n} T_{ij} T_{lk} \mid D^n\right) \lesssim \|\widehat\omega - \omega\|^2 \|\widehat\mu- \mu\|^2(nh)^{-1} + (n^3h)^{-1/2}
	\end{align*}
	Putting everything together, because $nh \to \infty$, we have that
	\begin{align*}
		\E\{(\widetilde{T}_{n\mu} - T_{n\mu})^2 \mid D^n\} \lesssim (n^3h)^{-1/2} + \|\widehat\omega - \omega\|^2\|\widehat\mu - \mu\|^2(nh)^{-1}.
	\end{align*}
	This concludes our proof of Lemma \ref{lemma:Tnb}.
	\subsubsection{Proof of the bounds in Eq. \eqref{eq:statement3b}}
	We have
	\begin{align*}
		\int \kappa_\mu(z_1, Z_2; D^n) d\Pb(z_1) & = \E\left\{(A_1\widehat\omega_1 - 1)\frac{K_h(\widehat\mu_1 - \widehat\mu_2)}{Q(\widehat\mu_2)} \mid X_2, D^n\right\}A_2(Y_2 - \widehat\mu_2) \\
		& = \E\left\{A_1\E(\widehat\omega_1 - \omega_1 \mid A_1 =1, \widehat\mu_1, D^n)\frac{K_h(\widehat\mu_1 - \widehat\mu_2)}{Q(\widehat\mu_2)} \mid X_2, D^n\right\}A_2(Y_2 - \widehat\mu_2) \\
		& = \widehat\varphi_\mu(Z_2; D^n) \\
		& \hphantom{=} + \E\left[A_1\frac{K_h(\widehat\mu_1 - \widehat\mu_2)}{Q(\widehat\mu_2)}\{s_\mu(\widehat\mu_1, \mu_1; D^n) - s_\mu(\widehat\mu_2, \mu_2; D^n)\} \mid X_2, D^n\right]A_2(Y_2 - \widehat\mu_2)
	\end{align*}
	By the smoothness assumption on $s_\mu(t_1, t_2; D^n)$, we have
	\begin{align*}
		& \left| \E\left[A_1\frac{K_h(\widehat\mu_1 - \widehat\mu_2)}{Q(\widehat\mu_2)}\{s_\mu(\widehat\mu_1, \mu_1; D^n) - s_\mu(\widehat\mu_2, \mu_2; D^n)\} \mid X_2, D^n\right] \right| \lesssim \|\widehat\mu - \mu\|^\beta \ + \ h^\beta
	\end{align*}
	In this light, 
	\begin{align*}
		\left| \kappa_{\mu2}(Z; D^n) - \widehat\varphi_\mu(Z; D^n)\right| \lesssim |A(Y - \widehat\mu)| (\|\widehat\mu - \mu\|^\beta \ + \ h^\beta)
	\end{align*}
	so that $\|\kappa_{\mu2} - \widehat\varphi_\mu\| \ \lesssim \|\widehat\mu - \mu\|^\beta \ + \ h^\beta$. 
	
	\subsubsection{Proof of Lemma \ref{lemma:Snb}}
	The proof of Lemma \ref{lemma:Snb} follows by arguments identical to that used to prove Lemma \ref{lemma:Sna}.
	\subsection{Proof of Proposition \ref{prop:min_squared}: preliminaries}
	Recall our shorthand notation $\widehat\omega(X_i) = \widehat\omega_i$, $\widehat\omega = \widehat\omega(X)$ and so on, as well as $K_{hi}(\widehat\omega_j) = h^{-1}K\left(\frac{\widehat\omega_i - \widehat\omega_j}{h}\right)$ with defined $K_{hi}(\widehat\mu_j)$ similarly. Further, we define
	\begin{align*}
		& \widehat{Q}(\widehat\omega_i, \widehat\mu_i) =  \frac{1}{n-1}\sum_{1 \leq j \leq n, j \neq i} A_j K_{hi}(\widehat\omega_j) K_{hi}(\widehat\mu_j), \\
		& Q(\widehat\omega_i, \widehat\mu_i) = \E\left\{\widehat{Q}(\widehat\omega_i, \widehat\mu_i) \mid X_i, D^n\right\} \\
		& \hphantom{Q(\widehat\omega_i, \widehat\mu_i)} =  \int \{\omega(x)\}^{-1}K_h(\widehat\omega(x) - \widehat\omega(X_i))K_h(\widehat\mu(x) - \widehat\mu(X_i)) d\Pb(x) \\
		& \hphantom{Q(\widehat\omega_i, \widehat\mu_i)}  = \Pb(A = 1) \int K(u) K(v)  d\Pb_{\widehat\omega, \widehat\mu \mid A = 1, D^n}(hu + \widehat\omega(X_i), hv + \widehat\mu(X_i)), \\
		& Q(\widehat\omega_j, \widehat\mu_j) =  \int \{\omega(x)\}^{-1}K_h(\widehat\omega(x) - \widehat\omega(X_j))K_h(\widehat\mu(x) - \widehat\mu(X_j)) d\Pb(x) \\
		& \hphantom{Q(\widehat\omega_j, \widehat\mu_j)} = \Pb(A = 1) \int K(u) K(v)  d\Pb_{\widehat\omega, \widehat\mu \mid A = 1, D^n}(hu + \widehat\omega(X_j), hv + \widehat\mu(X_j)),
	\end{align*}
	where $d\Pb_{\widehat\omega, \widehat\mu \mid A = 1, D^n}(u, v)$ is the density of $(\widehat\omega(X), \widehat\mu(X))$ among units with $A = 1$ keeping $\widehat\omega(\cdot)$ and $\widehat\mu(\cdot)$ as fixed functions given the training sample $D^n$. 
	The estimator is $\widehat\psi = \Pn \widehat\varphi - T_n$, where $\widehat\varphi(O) = A\widehat\omega(Y - \widehat\mu) + \widehat\mu$ and
	\begin{align*}
		& T_n = \frac{1}{n(n-1)} \mathop{\sum\sum}\limits_{1\leq i \neq j \leq n} (A\widehat\omega_i - 1) \{\widehat{Q}(\widehat\omega_i, \widehat\mu_i)\}^{-1} K_{hi}(\widehat\omega_j) K_{hi}(\widehat\mu_j) A_j(Y_j - \widehat\mu_j) \equiv \frac{1}{n(n-1)} \mathop{\sum\sum}\limits_{1\leq i \neq j \leq n} T_{ij}.
	\end{align*}
	From the main decomposition \eqref{eq:main_decomposition}, we have
	\begin{align*}
		\widehat\psi - \psi = (\Pn - \Pb)(\widehat\varphi - \overline\varphi) + (\Pn - \Pb)\overline\varphi + R_n - T_n, \text{ where } R_n = \Pb(\widehat\varphi - \varphi). 
	\end{align*}
	For compact notation, let us define:
	\begin{align*}
		& \widehat\kappa(Z_1, Z_2; D^n) = (A_1\widehat\omega_1 - 1)\frac{K_h(\widehat\omega_1 - \widehat\omega_2)K_h(\widehat\mu_1 - \widehat\mu_2)}{\widehat{Q}(\widehat\omega_1, \widehat\mu_1)} A_2(Y_2 - \widehat\mu_2), \\
		& \kappa(Z_1, Z_2; D^n) = (A_1\widehat\omega_1 - 1)\frac{K_h(\widehat\omega_1 - \widehat\omega_2)K_h(\widehat\mu_1 - \widehat\mu_2)}{Q(\widehat\omega_1, \widehat\mu_1)} A_2(Y_2 - \widehat\mu_2),
	\end{align*}
	so that
	\begin{align*}
		& T_{n} = \frac{1}{n(n-1)} \mathop{\sum\sum}\limits_{1 \leq i \neq j \leq n}\widehat\kappa(Z_i, Z_j; D^n).
	\end{align*}
	Also define
	\begin{align*}
		\widetilde{T}_{n} = \frac{1}{n(n-1)} \mathop{\sum\sum}\limits_{1 \leq i \neq j \leq n}\kappa(Z_i, Z_j; D^n),
	\end{align*}
	and let
	\begin{align*}
		& \widehat\varphi_{\omega\mu} = \E(\omega - \widehat\omega \mid A = 1, \widehat\omega, \widehat\mu, D^n)A(Y - \widehat\mu) + \E(\mu - \widehat\mu \mid A = 1, \widehat\omega, \widehat\mu, D^n)(A\widehat\omega - 1), \\
		& \overline\varphi_{\omega\mu} = \E(\omega - \overline\omega \mid A = 1, \overline\omega, \overline\mu)A(Y - \overline\mu) + \E(\mu - \overline\mu \mid A = 1, \overline\omega, \overline\mu)(A\overline\omega - 1).
	\end{align*}
	Finally, let
	\begin{align*}
		& \kappa_1(Z; D^n) = \int \kappa(Z, z; D^n) d\Pb(z) \quad \text{ and } \quad \kappa_2(Z; D^n) = \int \kappa(z, Z; D^n) d\Pb(z). 
	\end{align*}
	We have the following decomposition
	\begin{align*}
		T_{n} & = R_n + \E\left\{ \kappa(Z_1, Z_2; D^n) \mid D^n \right\} - R_n  + T_{n} - \widetilde{T}_{n} \\
		& \hphantom{=} + (\Pn - \Pb)(\kappa_1 + \kappa_2 - \widehat\varphi_{\omega\mu}) + (\Pn - \Pb)(\widehat\varphi_{\omega\mu} - \overline\varphi_{\omega\mu}) + (\Pn - \Pb)\overline\varphi_{\omega\mu} + S_{n},
	\end{align*}
	where
	\begin{align*}
		S_n = \frac{1}{n(n-1)}\mathop{\sum\sum}_{1 \leq i \neq j \leq n} \left[ \kappa(Z_i, Z_j; D^n) - \kappa_1(Z_i; D^n) - \kappa_2(Z_j; D^n)+ \E\left\{ \kappa(Z_1, Z_2; D^n) \mid D^n \right\}\right].
	\end{align*}
	We will repeatedly use the following lemmas:
	\begin{lemma}\label{lemma:Sn}
		It holds that
		\begin{enumerate}
			\item $\sup_z |\kappa_{1}(z; D^n)| \ \lesssim \|\widehat\mu - \mu\|_\infty \ \land \ h^{-1} \|\widehat\mu - \mu\|$,
			\item $\sup_z|\kappa_{2}(z; D^n)| \ \lesssim \|\widehat\omega - \omega\|_\infty \ \land \ h^{-1}\|\widehat\omega - \omega\|$,
			\item $\E(S^2_{n} \mid D^n) \lesssim (nh)^{-2}$.
		\end{enumerate}
	\end{lemma}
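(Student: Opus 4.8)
The plan is to follow verbatim the structure of the proof of Lemma~\ref{lemma:Sna}, with the single kernel $K_h(\widehat\omega_1 - \widehat\omega_2)$ there replaced by the product $K_h(\widehat\omega_1 - \widehat\omega_2)K_h(\widehat\mu_1 - \widehat\mu_2)$ here. Each extra kernel costs a factor of $h^{-1}$ in the relevant second-moment bounds, which is exactly what turns $\|\widehat\mu-\mu\| h^{-1/2}$ into $\|\widehat\mu-\mu\| h^{-1}$ in the first two claims and $(n^2h)^{-1}$ into $(nh)^{-2}$ in the third. Throughout I would use that $Q(\widehat\omega_1, \widehat\mu_1)$ is bounded above and below and that, by the hypothesis of Proposition~\ref{prop:min_squared}, the joint density of $(\widehat\omega(X), \widehat\mu(X))$ among units with $A=1$ is bounded.

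For Claim~1, take the inner conditional expectation over $Y_2$ (so $A_2Y_2 \mapsto A_2\mu_2$) to write $\kappa_1(Z_1; D^n) = (A_1\widehat\omega_1 - 1)\,Q(\widehat\omega_1,\widehat\mu_1)^{-1}\,\E\{A_2 K_h(\widehat\omega_1 - \widehat\omega_2)K_h(\widehat\mu_1 - \widehat\mu_2)(\mu_2 - \widehat\mu_2)\mid X_1, D^n\}$. Bounding $|\mu_2 - \widehat\mu_2| \le \|\widehat\mu - \mu\|_\infty$ and using $\E\{A_2 K_h(\widehat\omega_1 - \widehat\omega_2)K_h(\widehat\mu_1 - \widehat\mu_2)\mid X_1, D^n\} = Q(\widehat\omega_1,\widehat\mu_1)$ yields $|\kappa_1| \lesssim \|\widehat\mu - \mu\|_\infty$; applying instead Cauchy--Schwarz conditionally on $(X_1, D^n)$ to separate $(\mu_2 - \widehat\mu_2)^2$ from $Q^{-2}K_h^2 K_h^2$, and using a change of variables into $(\widehat\omega, \widehat\mu)$-coordinates to get $\E\{A_2 K_h^2(\widehat\omega_1 - \widehat\omega_2)K_h^2(\widehat\mu_1 - \widehat\mu_2)\mid X_1, D^n\} \lesssim h^{-2}$, yields $|\kappa_1| \lesssim h^{-1}\|\widehat\mu - \mu\|$. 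Claim~2 is symmetric: since $\E\{(A_1\widehat\omega_1 - 1)g(X_1)\mid X_2, D^n\} = \E\{A_1(\widehat\omega_1 - \omega_1)g(X_1)\mid X_2, D^n\}$ for any $g$ not depending on $A_1$, we get $\kappa_2(Z_2; D^n) = \E\{A_1(\widehat\omega_1 - \omega_1)Q(\widehat\omega_1,\widehat\mu_1)^{-1}K_h(\widehat\omega_1 - \widehat\omega_2)K_h(\widehat\mu_1 - \widehat\mu_2)\mid X_2, D^n\}\,A_2(Y_2 - \widehat\mu_2)$; bounding $|\widehat\omega_1 - \omega_1| \le \|\widehat\omega - \omega\|_\infty$ and $Q^{-1} \lesssim 1$, together with $\E\{A_1 K_h(\widehat\omega_1 - \widehat\omega_2)K_h(\widehat\mu_1 - \widehat\mu_2)\mid X_2, D^n\} = Q(\widehat\omega_2, \widehat\mu_2) \lesssim 1$, gives $|\kappa_2| \lesssim \|\widehat\omega - \omega\|_\infty$, and the Cauchy--Schwarz variant gives $|\kappa_2| \lesssim h^{-1}\|\widehat\omega - \omega\|$.

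For Claim~3, observe that the summand of $S_n$, namely $\widetilde\kappa(Z_i, Z_j) := \kappa(Z_i, Z_j; D^n) - \kappa_1(Z_i; D^n) - \kappa_2(Z_j; D^n) + \E\{\kappa(Z_1, Z_2; D^n)\mid D^n\}$, satisfies $\E\{\widetilde\kappa(Z_i, Z_j)\mid Z_i, D^n\} = \E\{\widetilde\kappa(Z_i, Z_j)\mid Z_j, D^n\} = 0$ by construction of $\kappa_1, \kappa_2$, so $S_n$ is a completely degenerate second-order U-statistic. Expanding $\E(S_n^2\mid D^n)$, only the diagonal ($k=i,\, l=j$) and transpose-diagonal ($k=j,\, l=i$) terms survive, so $\E(S_n^2\mid D^n) \lesssim n^{-2}\E\{\widetilde\kappa(Z_1, Z_2)^2\mid D^n\} \lesssim n^{-2}\E\{\kappa(Z_1, Z_2; D^n)^2\mid D^n\}$, the last step because $\widetilde\kappa$ is the $L^2(\Pb\times\Pb)$-projection of $\kappa$ onto the orthogonal complement of the Hájek components. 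Finally, since $(A_1\widehat\omega_1 - 1)^2$, $Q^{-2}$ and $(Y_2 - \widehat\mu_2)^2$ are bounded, $\kappa^2 \lesssim A_2 K_h^2(\widehat\omega_1 - \widehat\omega_2)K_h^2(\widehat\mu_1 - \widehat\mu_2)$, and the same change-of-variables estimate used in Claim~1 gives $\E\{\kappa^2\mid D^n\} \lesssim h^{-2}$, whence $\E(S_n^2\mid D^n) \lesssim (nh)^{-2}$.

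I do not anticipate a genuine obstacle: the lemma is mechanical once Lemma~\ref{lemma:Sna} is in place. The only points requiring attention are the change-of-variables estimates producing the $h^{-2}$ factors --- which is why boundedness of the joint density of $(\widehat\omega(X), \widehat\mu(X))$ given $A=1$ and of $Q$ is needed, not merely the Lipschitz-ness invoked elsewhere in Proposition~\ref{prop:min_squared} --- and the verification that $S_n$'s kernel is completely degenerate so that the $n^{-2}$ rate is available. In particular, the mismatch between the localization point $(\widehat\omega_1, \widehat\mu_1)$ of $Q$ and the variable integrated out in $\kappa_2$, which is the real difficulty in the companion lemmas bounding $T_n - \widetilde T_n$, does not interfere here, because at this level it suffices to bound $Q^{-1}$ by a constant.
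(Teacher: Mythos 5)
Your proposal is correct and follows essentially the same route as the paper's proof: the $L_\infty$ bound via $\E\{A_2K_h(\widehat\omega_1-\widehat\omega_2)K_h(\widehat\mu_1-\widehat\mu_2)\mid X_1,D^n\}=Q(\widehat\omega_1,\widehat\mu_1)$, the Cauchy--Schwarz variant giving the extra $h^{-1}$, and the complete degeneracy of the $S_n$ kernel yielding $\E(S_n^2\mid D^n)\lesssim n^{-2}\E\{\kappa^2\mid D^n\}\lesssim (nh)^{-2}$. Your closing observations (bounded density/$Q$ suffices here, and the $\widehat Q$-versus-$Q$ localization mismatch is irrelevant at this level) are accurate and consistent with how the paper organizes the argument.
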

	\begin{lemma}\label{lemma:Tn}
		It holds that
		\begin{align*}
			\E\{(T_n - \widetilde{T}_n)^2 \mid D^n\} \lesssim (n^3h^2)^{-1/2} + \|\widehat\omega - \omega\|^2\|\widehat\mu - \mu\|^2 (nh^2)^{-1}.
		\end{align*}
	\end{lemma}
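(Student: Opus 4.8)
The plan is to mirror, essentially line for line, the proof of Lemma \ref{lemma:Tna}, with the single structural change that the univariate kernel $K_h(\widehat\omega_j-\widehat\omega_i)$ is everywhere replaced by the product $K_{hi}(\widehat\omega_j)K_{hi}(\widehat\mu_j)$ and the univariate normalizer $\widehat{Q}(\widehat\omega_i)$ by $\widehat{Q}(\widehat\omega_i,\widehat\mu_i)$. First I would write $T_n-\widetilde{T}_n = \{n(n-1)\}^{-1}\mathop{\sum\sum}_{1\le i\ne j\le n} T_{ij}$, where now (abusing the notation of the preliminaries)
\begin{align*}
T_{ij} = (A_i\widehat\omega_i-1)\,\frac{\widehat{Q}(\widehat\omega_i,\widehat\mu_i)-Q(\widehat\omega_i,\widehat\mu_i)}{\widehat{Q}(\widehat\omega_i,\widehat\mu_i)\,Q(\widehat\omega_i,\widehat\mu_i)}\,K_{hi}(\widehat\omega_j)K_{hi}(\widehat\mu_j)\,A_j(Y_j-\widehat\mu_j),
\end{align*}
and expand $(\sum\sum T_{ij})^2$ into the seven standard groups: the diagonal pairs $T_{ij}^2$ and $T_{ij}T_{ji}$; the four three-index families $T_{ij}T_{il}$, $T_{ij}T_{li}$, $T_{ij}T_{jl}$, $T_{ij}T_{lj}$; and the four-index family $T_{ij}T_{lk}$.

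For the diagonal pairs, $K_{hi}^2(\widehat\omega_j)K_{hi}^2(\widehat\mu_j)\lesssim h^{-2}K_{hi}(\widehat\omega_j)K_{hi}(\widehat\mu_j)$, the boundedness of the observations and nuisances, and $\E\{A_jK_{hi}(\widehat\omega_j)K_{hi}(\widehat\mu_j)\mid X_i,D^n\}=Q(\widehat\omega_i,\widehat\mu_i)\lesssim 1$ give $\E(T_{ij}^2+|T_{ij}T_{ji}|\mid D^n)\lesssim h^{-2}$, hence a contribution $O((n^2h^2)^{-1})$. For the three-index families $T_{ij}T_{il}$, $T_{ij}T_{li}$, $T_{ij}T_{jl}$ I would invoke Lemma \ref{lemma:Q} in its bivariate form (with $\mathcal{S}=\{1,\dots,n\}\setminus i$, $\mathcal{S}_2$ a fixed pair, $N=n-1$) to get $\E[\{\widehat{Q}(\widehat\omega_i,\widehat\mu_i)-Q(\widehat\omega_i,\widehat\mu_i)\}^2\mid \cdots]\lesssim (nh^2)^{-1}$, and then, after peeling off the inner conditional expectations of the remaining kernel factors one index at a time (each $\lesssim 1$), obtain $O((n^2h^2)^{-1})$ again; these groups are therefore negligible.

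The main work, and the main obstacle, is the family $T_{ij}T_{lk}$ (together with $T_{ij}T_{lj}$), because $\widehat{Q}(\widehat\omega_i,\widehat\mu_i)$ contains $A_l$ and $\widehat{Q}(\widehat\omega_l,\widehat\mu_l)$ contains $A_i$, so one cannot directly replace $A_i\widehat\omega_i-1$ by $A_i(\widehat\omega_i-\omega_i)$ inside the expectation. Exactly as in Lemma \ref{lemma:Tna}, I would introduce the leave-two-out normalizer $\widehat{Q}_{-il}(\widehat\omega_i,\widehat\mu_i)=(n-1)^{-1}\sum_{s\ne i,l}A_sK_{hi}(\widehat\omega_s)K_{hi}(\widehat\mu_s)$, use $|\widehat{Q}^{-1}(\widehat\omega_i,\widehat\mu_i)-\widehat{Q}^{-1}_{-il}(\widehat\omega_i,\widehat\mu_i)|\lesssim (n-1)^{-1}A_lK_{hi}(\widehat\omega_l)K_{hi}(\widehat\mu_l)$, and split $T_{ij}T_{lk}$ into a ``main'' piece in which both normalizers are replaced by their leave-two-out versions plus two ``correction'' pieces, each carrying an extra $(n-1)^{-1}$ and an extra kernel product. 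In the main piece, $A_i\widehat\omega_i-1$, $A_l\widehat\omega_l-1$ may be replaced by $A_i(\widehat\omega_i-\omega_i)$, $A_l(\widehat\omega_l-\omega_l)$ and $A_j(Y_j-\widehat\mu_j)$, $A_k(Y_k-\widehat\mu_k)$ by $A_j(\mu_j-\widehat\mu_j)$, $A_k(\mu_k-\widehat\mu_k)$; then the bivariate Lemma \ref{lemma:Q} (now yielding $(nh^2)^{-1}$) controls the product of the two $|\widehat{Q}-Q|$ factors, and two applications of Cauchy-Schwarz together with $\E\{A_jK_{hi}(\widehat\omega_j)K_{hi}(\widehat\mu_j)\mid X_i,D^n\}\lesssim 1$ produce $\|\widehat\omega-\omega\|^2\|\widehat\mu-\mu\|^2(nh^2)^{-1}$ for $i,j,l,k$ distinct (and $O((nh^2)^{-1})$ when $k=j$). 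The correction pieces, treated separately for $k=j$ and $k\ne j$ just as in Lemma \ref{lemma:Tna}, contribute $O((n^3h^2)^{-1/2})$ after bounding their (now five or more) kernel factors by repeatedly extracting inner conditional expectations. Summing the four groups gives $\E\{(T_n-\widetilde{T}_n)^2\mid D^n\}\lesssim (n^2h^2)^{-1}+(n^3h^2)^{-1/2}+\|\widehat\omega-\omega\|^2\|\widehat\mu-\mu\|^2(nh^2)^{-1}$, and since $nh^2\to\infty$ one has $(n^2h^2)^{-1}=(nh^2)^{-1/2}(n^3h^2)^{-1/2}\lesssim (n^3h^2)^{-1/2}$, which yields the stated bound. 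The only genuine novelty over Lemma \ref{lemma:Tna} is that every $h^{-1}$ becomes $h^{-2}$ and Lemma \ref{lemma:Q} is used in its bivariate form; the combinatorial skeleton of the estimate is unchanged, so the delicate part is purely the leave-two-out bookkeeping for the four-index sum.
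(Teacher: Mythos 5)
Your proposal follows essentially the same route as the paper's proof: the same seven-group expansion of the squared double sum, the same bivariate application of Lemma \ref{lemma:Q} yielding the $(nh^2)^{-1}$ rate for $\{\widehat{Q}(\widehat\omega_i,\widehat\mu_i)-Q(\widehat\omega_i,\widehat\mu_i)\}^2$, and the same leave-out decomposition of the normalizers in the four-index family to restore the mean-zero property of the residuals before applying Cauchy--Schwarz (the paper removes one index at a time from each normalizer rather than both at once, but this is immaterial). Your final assembly, including the observation that $(n^2h^2)^{-1}\lesssim (n^3h^2)^{-1/2}$ under $nh^2\to\infty$, matches the paper's conclusion.
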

	\begin{lemma}\label{lemma:RTn}
		It holds that
		\begin{align*}
			\left| \E(\widetilde{T}_n \mid D^n) - R_n \right| & = \left| \E\{\kappa(Z_1, Z_2; D^n) \mid D^n\} - R_n \right| \\
			& \lesssim (\|\widehat\mu-\mu\|^{1+\beta} \ + \ h^\beta \|\widehat\mu - \mu\|) \land (\|\widehat\omega-\omega\|^{1+\alpha} \ + \ h^\alpha \|\widehat\omega - \omega\|)  \nonumber \\  
			& \hphantom{\lesssim} \quad\quad \land \|\widehat\omega - \omega\|\|\widehat\mu-\mu\|
		\end{align*}
	\end{lemma}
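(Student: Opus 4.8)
The plan is to bound $\bigl|\E\{\kappa(Z_1,Z_2;D^n)\mid D^n\}-R_n\bigr|$ separately by each of the three quantities appearing in the claimed minimum, and then combine. Throughout I use that under the stated hypotheses $Q(\widehat\omega,\widehat\mu)$ is bounded above and below (above because the Lipschitz density of $(\widehat\omega,\widehat\mu)\mid A=1$ is bounded on the compact range of the nuisances, below by the hypothesis on $\widehat{Q}$), that $\E\{AK_h(\widehat\omega-\widehat\omega_0)K_h(\widehat\mu-\widehat\mu_0)\mid X_0,D^n\}\lesssim 1$, and the two representations $R_n=\E\{(A\widehat\omega-1)f_\omega(\widehat\omega,\omega,\widehat\mu;D^n)\mid D^n\}=\E\{A(Y-\widehat\mu)f_\mu(\widehat\mu,\mu,\widehat\omega;D^n)\mid D^n\}$ from Section~\ref{sec:upper_bounds}.

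\emph{The $\alpha$-side.} First I would integrate $\kappa$ over $Z_2$ given $Z_1$. Since $\E(Y_2-\widehat\mu_2\mid A_2=1,X_2,D^n)=\mu_2-\widehat\mu_2$ and $\E(\mu_2-\widehat\mu_2\mid A_2=1,\widehat\omega_2,\omega_2,\widehat\mu_2,D^n)=f_\omega(\widehat\omega_2,\omega_2,\widehat\mu_2;D^n)$, adding and subtracting $f_\omega(\widehat\omega_1,\omega_1,\widehat\mu_1;D^n)$ and using that $\E\{A_2K_h(\widehat\omega_1-\widehat\omega_2)K_h(\widehat\mu_1-\widehat\mu_2)\mid X_1,D^n\}$ equals $Q(\widehat\omega_1,\widehat\mu_1)$ \emph{exactly}, the leading term is $(A_1\widehat\omega_1-1)f_\omega(\widehat\omega_1,\omega_1,\widehat\mu_1;D^n)$, whose conditional mean is $R_n$. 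For the remainder, replacing $A_1\widehat\omega_1-1$ by $A_1(\widehat\omega_1-\omega_1)$ by iterated expectation, I would invoke H\"older-$\alpha$ smoothness of $f_\omega$ together with the kernel-support constraints $|\widehat\omega_1-\widehat\omega_2|\le h$, $|\widehat\mu_1-\widehat\mu_2|\le h$ and $|\omega_1-\omega_2|\lesssim h+|\widehat\omega_1-\omega_1|+|\widehat\omega_2-\omega_2|$ to get $|f_\omega(\widehat\omega_2,\omega_2,\widehat\mu_2)-f_\omega(\widehat\omega_1,\omega_1,\widehat\mu_1)|\lesssim h^\alpha+|\widehat\omega_1-\omega_1|^\alpha+|\widehat\omega_2-\omega_2|^\alpha$ on the support; the $h^\alpha$ and $|\widehat\omega_1-\omega_1|^\alpha$ pieces factor out of the $Z_2$-integral, which then collapses to $1$, while the $|\widehat\omega_2-\omega_2|^\alpha$ piece is handled by Cauchy--Schwarz (one factor collapsing the $Z_2$-kernel, the other using boundedness of $Q$ and of the density) and Jensen. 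This yields $h^\alpha\|\widehat\omega-\omega\|+\|\widehat\omega-\omega\|^{1+\alpha}$, exactly the computation behind Lemma~\ref{lemma:RTna}.

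\emph{The $\beta$-side.} This is the genuinely new part. I would instead integrate $\kappa$ over $Z_1$ first; the obstacle is that the normalization $Q(\widehat\omega_1,\widehat\mu_1)$ is localized at the ``wrong'' point relative to the target $f_\mu(\widehat\mu_2,\mu_2,\widehat\omega_2;D^n)$. Using iterated expectation, $\E(\widehat\omega_1-\omega_1\mid A_1=1,\widehat\mu_1,\mu_1,\widehat\omega_1,D^n)=f_\mu(\widehat\mu_1,\mu_1,\widehat\omega_1;D^n)$, and the Lipschitz-density assumption — which makes $Q$ Lipschitz, so that $Q(\widehat\omega_1,\widehat\mu_1)^{-1}=Q(\widehat\omega_2,\widehat\mu_2)^{-1}+\rho$ with $|\rho|\lesssim h$ on the kernel support — I would decompose $\E_{Z_1}[\kappa\mid X_2,D^n]$ into (i) a term $f_\mu(\widehat\mu_2,\mu_2,\widehat\omega_2;D^n)A_2(Y_2-\widehat\mu_2)$, whose conditional mean is $R_n$; (ii) a ``misplacement'' term carrying the factor $\rho$, which after iterating over $A_1$ to expose $\widehat\omega_1-\omega_1$ and over $Z_2$ to expose $\mu_2-\widehat\mu_2$ is bounded by Cauchy--Schwarz by $h\,\|\widehat\omega-\omega\|\,\|\widehat\mu-\mu\|\lesssim h^\beta\|\widehat\mu-\mu\|$; and (iii) a smoothness remainder involving $f_\mu(\widehat\mu_1,\mu_1,\widehat\omega_1)-f_\mu(\widehat\mu_2,\mu_2,\widehat\omega_2)$, handled as on the $\alpha$-side via H\"older-$\beta$ smoothness of $f_\mu$, $|\mu_1-\mu_2|\lesssim h+|\widehat\mu_1-\mu_1|+|\widehat\mu_2-\mu_2|$, Cauchy--Schwarz and Jensen, and bounded by $h^\beta\|\widehat\mu-\mu\|+\|\widehat\mu-\mu\|^{1+\beta}$.

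\emph{The product bound and conclusion.} Finally, $\bigl|\E\{\kappa\mid D^n\}-R_n\bigr|\lesssim\|\widehat\omega-\omega\|\|\widehat\mu-\mu\|$ follows from $|R_n|\lesssim\|\widehat\omega-\omega\|\|\widehat\mu-\mu\|$ (Cauchy--Schwarz) together with the same bound for $|\E\{\kappa\mid D^n\}|$ itself: iterate to replace $A_1\widehat\omega_1-1$ by $A_1(\widehat\omega_1-\omega_1)$ and $Y_2-\widehat\mu_2$ by $\mu_2-\widehat\mu_2$, then apply Cauchy--Schwarz splitting $|\widehat\omega_1-\omega_1|$ from $|\mu_2-\widehat\mu_2|$, one factor collapsing the $Z_2$-kernel against $Q(\widehat\omega_1,\widehat\mu_1)$ and the other using boundedness of $Q$ and the density. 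Taking the minimum of the $\alpha$-side, $\beta$-side and product bounds gives the statement. I expect the $\beta$-side — making precise that the misplaced normalization $Q(\widehat\omega_1,\widehat\mu_1)$ costs only a factor of order $h$ via the Lipschitz density — to be the main obstacle; the remaining estimates parallel the one-dimensional arguments already used for Lemma~\ref{lemma:RTna} and the bias bound in Proposition~\ref{prop:max_squared_b}.
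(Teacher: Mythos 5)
Your proposal is correct and follows essentially the same route as the paper: the paper's proof of this lemma is a pointer to the bias derivation for $\widehat\psi$, which consists of exactly your two decompositions --- the $f_\omega$ route with the normalization correctly localized at $Z_1$ (the paper's ``Bound 2''), and the $f_\mu$ route where the misplaced normalization $Q(\widehat\omega_1,\widehat\mu_1)$ versus $Q(\widehat\omega_2,\widehat\mu_2)$ is controlled by the Lipschitz density assumption at cost $O(h)$ on the kernel support (the paper's term $T_{3ij}$ in ``Bound 1'') --- together with the elementary Cauchy--Schwarz product bound. Your identification of the $\beta$-side misplacement term as the one place the Lipschitz-density hypothesis is genuinely used matches the paper's treatment.
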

	\subsubsection{Bias of $\widehat\psi$ \label{sec:bias_psi}}
	\textbf{Bound 1.} We start with the following decomposition:
	\begin{align*}
		T_{ij} = (A_i\widehat\omega_i - 1) \{Q(\widehat\omega_j, \widehat\mu_j)\}^{-1} K_{hi}(\widehat\omega_j) K_{hi}(\widehat\mu_j) A_j(Y_j - \widehat\mu_j) + T_{2ij} + T_{3ij},
	\end{align*}
	where
	\begin{align*}
		& T_{2ij} = (A_i\widehat\omega_i - 1) \left[\{\widehat{Q}(\widehat\omega_i, \widehat\mu_i)\}^{-1} - \{Q(\widehat\omega_i, \widehat\mu_i)\}^{-1} \right] K_{hi}(\widehat\omega_j) K_{hi}(\widehat\mu_j) A_j(Y_j - \widehat\mu_j), \\
		& T_{3ij} = (A_i\widehat\omega_i - 1) \left[\{Q(\widehat\omega_i, \widehat\mu_i)\}^{-1} - \{Q(\widehat\omega_j, \widehat\mu_j)\}^{-1} \right] K_{hi}(\widehat\omega_j) K_{hi}(\widehat\mu_j) A_j(Y_j - \widehat\mu_j).
	\end{align*}
	We have
	\begin{align*}
		& \E\left[\frac{1}{n(n-1)}\mathop{\sum\sum}\limits_{1 \leq i \neq j \leq n} (A_i\widehat\omega_i - 1) \{Q(\widehat\omega_j, \widehat\mu_j)\}^{-1} K_{hi}(\widehat\omega_j) K_{hi}(\widehat\mu_j) A_j(Y_j - \widehat\mu_j) \mid D^n \right] \\
		& = \E\left[\frac{1}{n(n-1)}\mathop{\sum\sum}\limits_{1 \leq i \neq j \leq n} A_i(\widehat\omega_i - \omega_i) \{Q(\widehat\omega_j, \widehat\mu_j)\}^{-1} K_{hi}(\widehat\omega_j) K_{hi}(\widehat\mu_j) A_j(Y_j - \widehat\mu_j) \mid D^n \right] \\
		& = \E\left[\frac{1}{n(n-1)}\mathop{\sum\sum}\limits_{1 \leq i \neq j \leq n} A_if_\mu(\widehat\mu_i, \mu_i, \widehat\omega_i; D^n) \{Q(\widehat\omega_j, \widehat\mu_j)\}^{-1} K_{hi}(\widehat\omega_j) K_{hi}(\widehat\mu_j) A_j(Y_j - \widehat\mu_j) \mid D^n \right] \\
		& = \E\left[\frac{1}{n(n-1)}\mathop{\sum\sum}\limits_{1 \leq i \neq j \leq n} A_if_\mu(\widehat\mu_j, \mu_j, \widehat\omega_j; D^n) \{Q(\widehat\omega_j, \widehat\mu_j)\}^{-1} K_{hi}(\widehat\omega_j) K_{hi}(\widehat\mu_j) A_j(Y_j - \widehat\mu_j) \mid D^n \right] \\
		& \hphantom{=} + \E\left[\frac{1}{n(n-1)}\mathop{\sum\sum}\limits_{1 \leq i \neq j \leq n} A_i \{f_\mu(\widehat\mu_i, \mu_i, \widehat\omega_i; D^n) - f_\mu(\widehat\mu_j, \mu_j, \widehat\omega_j; D^n)\} \frac{A_jK_{hi}(\widehat\omega_j) K_{hi}(\widehat\mu_j)(\mu_j - \widehat\mu_j)}{Q(\widehat\omega_j, \widehat\mu_j)}\mid D^n \right] \\
		& = R_n + \E\left(\frac{1}{n(n-1)}\mathop{\sum\sum}\limits_{1 \leq i \neq j \leq n} T_{1ij} \mid D^n\right)
	\end{align*}
	The second equality follows because
	\begin{align*}
		& \E\{A_i(\widehat\omega_i - \omega_i) - A_if_\mu(\widehat\mu_i, \mu_i, \widehat\omega_i; D^n) \mid A_i, A_j, Y_j, \widehat\omega_i, \widehat\omega_j, \widehat\mu_i, \widehat\mu_j, D^n\} \\
		& = \E\{A_i(\widehat\omega_i - \omega_i) - A_if_\mu(\widehat\mu_i, \mu_i, \widehat\omega_i; D^n) \mid A_i, \widehat\omega_i, \widehat\mu_i, D^n\} \\
		& = \E\left[\E\{A_i(\widehat\omega_i - \omega_i) \mid A_i, \widehat\omega_i, \widehat\mu_i, \mu_i, D^n\} - A_if_\mu(\widehat\mu_i, \mu_i, \widehat\omega_i; D^n) \mid A_i, \widehat\omega_i, \widehat\mu_i, D^n \right] \\
		& = \E\left\{A_i\E(\widehat\omega_i - \omega_i \mid A_i = 1, \widehat\omega_i, \widehat\mu_i, \mu_i, D^n) - A_if_\mu(\widehat\mu_i, \mu_i, \widehat\omega_i; D^n) \mid A_i, \widehat\omega_i, \widehat\mu_i, D^n \right\} \\
		& = 0
	\end{align*}
	The last equality follows because
	\begin{align*}
		& \E\left[\frac{1}{n(n-1)}\mathop{\sum\sum}\limits_{1 \leq i \neq j \leq n} A_if_\mu(\widehat\mu_j, \mu_j, \widehat\omega_j; D^n) \{Q(\widehat\omega_j, \widehat\mu_j)\}^{-1} K_{hi}(\widehat\omega_j) K_{hi}(\widehat\mu_j) A_j(Y_j - \widehat\mu_j) \mid D^n \right] \\
		& = \E\left[\frac{1}{n(n-1)}\mathop{\sum\sum}\limits_{1 \leq i \neq j \leq n} \frac{\E\{A_iK_{hi}(\widehat\omega_j) K_{hi}(\widehat\mu_j)\mid X_j, D^n\}}{Q(\widehat\omega_j, \widehat\mu_j)} f_\mu(\widehat\mu_j, \mu_j, \widehat\omega_j; D^n) A_j(Y_j - \widehat\mu_j) \mid D^n \right] \\
		& = \E\left[\frac{1}{n(n-1)}\mathop{\sum\sum}\limits_{1 \leq i \neq j \leq n} f_\mu(\widehat\mu_j, \mu_j, \widehat\omega_j; D^n) A_j(Y_j - \widehat\mu_j) \mid D^n \right] \\
		& = \E\left[\frac{1}{n}\sum_{j = 1}^n f_\mu(\widehat\mu_j, \mu_j, \widehat\omega_j; D^n) A_j(Y_j - \widehat\mu_j) \mid D^n \right] \\
		& = R_n.
	\end{align*}
	Thus, we have reached
	\begin{align*}
		\E\left\{ \frac{1}{n(n-1)}\mathop{\sum\sum}\limits_{1 \leq i \neq j \leq n} T_{ij} \mid D^n \right\} = R_n + \E\left(\frac{1}{n(n-1)}\mathop{\sum\sum}\limits_{1 \leq i \neq j \leq n} T_{1ij} + T_{2ij} + T_{3ij} \mid D^n \right)
	\end{align*}
	so that
	\begin{align*}
		\E(\widehat\psi - \psi \mid D^n) = \E(R_n - T_n \mid D^n) = \E\left(\frac{1}{n(n-1)}\mathop{\sum\sum}\limits_{1 \leq i \neq j \leq n} T_{1ij} + T_{2ij} + T_{3ij} \mid D^n \right)
	\end{align*}
	We bound each term separately.
	
	\textit{Term $T_1$}. By the smoothness assumption on $f_\mu$, the last term can be upper bounded as
	\begin{align*}
		& \left|\E\left(\frac{1}{n(n-1)}\mathop{\sum\sum}\limits_{1 \leq i \neq j \leq n} T_{1ij} \mid D^n\right)\right| \\
		& = \left|\E\left[\frac{1}{n(n-1)}\mathop{\sum\sum}\limits_{1 \leq i \neq j \leq n} A_i \{f_\mu(\widehat\mu_i, \mu_i, \widehat\omega_i; D^n) - f_\mu(\widehat\mu_j, \mu_j, \widehat\omega_j; D^n)\}\frac{A_jK_{hi}(\widehat\omega_j) K_{hi}(\widehat\mu_j)(\mu_j - \widehat\mu_j)}{Q(\widehat\omega_j, \widehat\mu_j)} \mid D^n \right] \right| \\
		& \lesssim \E\left\{ \left(A_i| \widehat\mu_i - \widehat\mu_j|^\beta \ + \ |\mu_i - \mu_j|^\beta \ + \ |\widehat\omega_i - \widehat\omega_j|^\beta\right)A_jK_{hi}(\widehat\omega_j) K_{hi}(\widehat\mu_j) |\mu_j - \widehat\mu_j| \mid D^n \right\} \\
		& \leq \E\left\{ A_i \left(h^\beta \ + \ |\mu_i - \widehat\mu_i|^\beta \ + \ |\widehat\mu_j - \mu_j|^\beta\right)A_jK_{hi}(\widehat\omega_j) K_{hi}(\widehat\mu_j) |\mu_j - \widehat\mu_j| \mid D^n \right\} \\
		& \lesssim h^\beta \E\left[\E\left\{A_iK_{hi}(\widehat\omega_j) K_{hi}(\widehat\mu_j) \mid X_j, D^n\right\} |\mu_j - \widehat\mu_j| \mid D^n \right] \\
		& \hphantom{\lesssim} + \left(\E\left[\E\left\{A_jK_{hi}(\widehat\omega_j) K_{hi}(\widehat\mu_j) \mid X_i, D^n\right\} (\mu_i - \widehat\mu_i)^{2\beta} \mid D^n \right] \right)^{1/2} \\
		& \hphantom{\lesssim} \quad\quad \times \left(\E\left[\E\left\{A_iK_{hi}(\widehat\omega_j) K_{hi}(\widehat\mu_j) \mid X_j, D^n\right\} (\mu_j - \widehat\mu_j)^{2} \mid D^n \right] \right)^{1/2} \\
		& \hphantom{\lesssim} + \E\left[\E\left\{A_iK_{hi}(\widehat\omega_j) K_{hi}(\widehat\mu_j) \mid X_j, D^n\right\} |\mu_j - \widehat\mu_j|^{1 + \beta} \mid D^n \right]  \\
		& \lesssim h^\beta \| \widehat\mu - \mu\| \ + \ \|\widehat\mu - \mu\|^{1 + \beta}
	\end{align*}
	where the last inequality follows because $\E\left\{A_jK_{hi}(\widehat\omega_j) K_{hi}(\widehat\mu_j) \mid X_i, D^n\right\} \lesssim 1$. 
	
	\textit{Term $T_2$.} Let $A^n = (A_1, \ldots, A_n)$ and $A^n_{-i} = A^n \setminus A_i$. Because $\widehat{Q}(\widehat\omega_i, \widehat\mu_i)$ does not depend on $A_i$, we have $\E\{\widehat{Q}(\widehat\omega_i, \widehat\mu_i) \mid X^n, A^n_{-i}, D^n\} = \widehat{Q}(\widehat\omega_i, \widehat\mu_i)$. Therefore:
	\begin{align*}
		\E(T_{2ij} \mid X^n, A^n_{-i}, D^n) = \frac{(\widehat\omega_i - \omega_i)A_j(\mu_j - \widehat\mu_j)}{\omega_i\widehat{Q}(\widehat\omega_i, \widehat\mu_i)Q(\widehat\omega_i, \widehat\mu_i)}K_{hi}(\widehat\omega_j)K_{hi}(\widehat\mu_j) \{Q(\widehat\omega_i, \widehat\mu_i) - \widehat{Q}(\widehat\omega_i, \widehat\mu_i)\} 
	\end{align*}
	Therefore, we have
	\begin{align*}
		|\E(T_{2ij} \mid D^n)| & \leq \E\left[\left| \frac{A_i(\widehat\omega_i - \omega_i)A_j(\mu_j - \widehat\mu_j)}{\widehat{Q}(\widehat\omega_i, \widehat\mu_i)Q(\widehat\omega_i, \widehat\mu_i)}K_{hi}(\widehat\omega_j)K_{hi}(\widehat\mu_j) \{Q(\widehat\omega_i, \widehat\mu_i) - \widehat{Q}(\widehat\omega_i, \widehat\mu_i)\}  \right| \mid D^n \right] \\
		& \lesssim \E\left\{ |\widehat\omega_i - \omega_i||\mu_j - \widehat\mu_j| A_iA_jK_{hi}(\widehat\omega_j)K_{hi}(\widehat\mu_j) \left|Q(\widehat\omega_i, \widehat\mu_i) - \widehat{Q}(\widehat\omega_i, \widehat\mu_i)\right| \mid D^n \right\} \\
		& = \E\left[|\widehat\omega_i - \omega_i||\mu_j - \widehat\mu_j| A_iA_jK_{hi}(\widehat\omega_j)K_{hi}(\widehat\mu_j) \E\left\{\left|Q(\widehat\omega_i, \widehat\mu_i) - \widehat{Q}(\widehat\omega_i, \widehat\mu_i)\right| \mid A_j, X_i, X_j, D^n\right\} \mid D^n \right]
	\end{align*}
	By Lemma \ref{lemma:Q}, we have
	\begin{align*}
		\E\left\{\left|Q(\widehat\omega_i, \widehat\mu_i) - \widehat{Q}(\widehat\omega_i, \widehat\mu_i)\right| \mid A_j, X_i, X_j, D^n\right\} \lesssim (nh^2)^{-1/2}.
	\end{align*}
	Therefore, we have
	\begin{align*}
		|\E(T_{2ij} \mid D^n)| \ & \lesssim \E\left\{ |\widehat\omega_i - \omega_i||\mu_j - \widehat\mu_j|A_i A_j K_{hi}(\widehat\omega_j)K_{hi}(\widehat\mu_j) \mid D^n \right\}(nh^2)^{-1/2} \\
		& \lesssim \|\widehat\omega - \omega\| \|\widehat\mu - \mu\|(nh^2)^{-1/2},
	\end{align*}
	so that
	\begin{align*}
		\left| \E\left(\frac{1}{n(n-1)}\mathop{\sum\sum}\limits_{1 \leq i \neq j \leq n} T_{2ij} \mid D^n \right)\right| \lesssim \|\widehat\omega - \omega\| \|\widehat\mu - \mu\|(nh^2)^{-1/2}.
	\end{align*}
	\textit{Term $T_3$}. We have
	\begin{align*}
		\E(T_{3ij} \mid D^n) & = \E\left[\frac{(\widehat\omega_i - \omega_i)(\mu_j - \widehat\mu_j)}{{Q}(\widehat\omega_i, \widehat\mu_i)Q(\widehat\omega_j, \widehat\mu_j)}A_iA_jK_{hi}(\widehat\omega_j)K_{hi}(\widehat\mu_j) \{Q(\widehat\omega_j, \widehat\mu_j) - Q(\widehat\omega_i, \widehat\mu_i)\} \mid D^n \right]
	\end{align*}
	Under the assumption that the density of $(\widehat\omega, \widehat\mu)$ given $A = 1$ is Lipschitz:
	\begin{align*}
		& \{\Pb(A = 1)\}^{-1}\left| Q(\widehat\omega_i, \widehat\mu_i) - Q(\widehat\omega_j, \widehat\mu_j)\right|  \\
		& = \left| \int K(u) K(v) \left\{d\Pb_{\widehat\omega, \widehat\mu \mid A = 1, D^n}(uh + \widehat\omega_i, vh + \widehat\mu_i) - d\Pb_{\widehat\omega, \widehat\mu \mid A = 1, D^n}(uh + \widehat\omega_j, vh + \widehat\mu_j) \right\} \right| \\
		& \lesssim |\widehat\omega_i - \widehat\omega_j|  \ + \ |\widehat\mu_i - \widehat\mu_j|.
	\end{align*}
	Therefore,
	\begin{align*}
		| \E(T_{3ij} \mid D^n) | \ & \lesssim \E\left[ |\widehat\omega_i - \omega_i|| \mu_j - \widehat\mu_j| K_{hi}(\widehat\omega_j)K_{hi}(\widehat\mu_j) \{|\widehat\omega_i - \widehat\omega_j| \ + \ |\widehat\mu_i - \widehat\mu_j|\} \mid D^n \right] \\
		& \lesssim h\E\left\{ |\widehat\omega_i - \omega_i| |\mu_j - \widehat\mu_j| K_{hi}(\widehat\omega_j)K_{hi}(\widehat\mu_j) \mid D^n \right\} \\
		& \lesssim h\|\widehat\omega - \omega\| \| \widehat\mu - \mu\|\\
		& \lesssim h\|\widehat\mu -\mu\|.
	\end{align*}
	where the last inequality follows by the Cauchy-Schwarz inequality. 
	
	Putting everything together, we have reached that
	\begin{align*}
		\left|\E(\widehat\psi - \psi \mid D^n) \right| & = \E\left(\frac{1}{n(n-1)}\mathop{\sum\sum}\limits_{1\leq i \neq j \leq n} T_{1ij} + T_{2ij} + T_{3ij} \mid D^n \right) \\
		& \lesssim h^\beta\|\widehat\mu - \mu\| \ + \ \|\widehat\mu - \mu\|^{1+\beta} \ + \ \|\widehat\omega - \omega\| \|\widehat\mu - \mu\|(nh^2)^{-1/2}.
	\end{align*}
	Next, we proceed with a different bound on the bias.
	
	\textbf{Bound 2.} Recall that
	\begin{align*}
		& T_n = \frac{1}{n(n-1)} \mathop{\sum\sum}\limits_{1\leq i \neq j \leq n} (A\widehat\omega_i - 1) \{\widehat{Q}(\widehat\omega_i, \widehat\mu_i)\}^{-1} K_{hi}(\widehat\omega_j) K_{hi}(\widehat\mu_j) A_j(Y_j - \widehat\mu_j) \equiv \frac{1}{n(n-1)} \mathop{\sum\sum}\limits_{1\leq i \neq j \leq n} T_{ij}.
	\end{align*}
	so that
	\begin{align*}
		& \E(T_n \mid D^n) = \E\left[\frac{1}{n(n-1)} \mathop{\sum\sum}\limits_{1 \leq i \neq j \leq n} (A_i\widehat\omega_i - 1) \{\widehat{Q}(\widehat\omega_i, \widehat\mu_i)\}^{-1} K_{hi}(\widehat\omega_j) K_{hi}(\widehat\mu_j) A_jf_\omega(\widehat\omega_j, \omega_j, \widehat\mu_j;D^n)\mid D^n \right] \\
		& = \E\left[\frac{1}{n(n-1)} \mathop{\sum\sum}\limits_{1 \leq i \neq j \leq n} (A_i\widehat\omega_i - 1) f_\omega(\widehat\omega_i, \omega_i, \widehat\mu_i;D^n)\mid D^n \right] \\
		& \hphantom{=} + \E\left[\frac{1}{n(n-1)} \mathop{\sum\sum}\limits_{1 \leq i \neq j \leq n} (A_i\widehat\omega_i - 1) \frac{K_{hi}(\widehat\omega_j) K_{hi}(\widehat\mu_j)}{\widehat{Q}(\widehat\omega_i, \widehat\mu_i)} A_j\{f_\omega(\widehat\omega_j, \omega_j, \widehat\mu_j; D^n) - f_\omega(\widehat\omega_i, \omega_i, \widehat\mu_i;D^n)\}\mid D^n \right] \\
		& = R_n + \E\left[\frac{1}{n(n-1)} \mathop{\sum\sum}\limits_{1 \leq i \neq j \leq n} A_i(\widehat\omega_i - \omega_i) \frac{K_{hi}(\widehat\omega_j) K_{hi}(\widehat\mu_j)}{\widehat{Q}(\widehat\omega_i, \widehat\mu_i)} A_j\{f_\omega(\widehat\omega_j, \omega_j, \widehat\mu_j;D^n) - f_\omega(\widehat\omega_i, \omega_i, \widehat\mu_i;D^n)\}\mid D^n \right]
	\end{align*}
	The first equality follows because
	\begin{align*}
		& \E\{A_j(Y_j - \widehat\mu_j) - A_jf_\omega(\widehat\omega_j, \omega_j, \widehat\mu_j; D^n) \mid A_j, \widehat\omega_j, \widehat\mu_j, A_i, \widehat\omega_i, \widehat\mu_i, D^n\} \\
		& = \E\{A_j(Y_j - \widehat\mu_j) - A_jf_\omega(\widehat\omega_j, \omega_j, \widehat\mu_j) \mid A_j, \widehat\omega_j, \widehat\mu_j, D^n\} \\
		& = \E\left[ \E\{A_j(Y_j - \widehat\mu_j) \mid A_j, \widehat\omega_j, \omega_j, \widehat\mu_j, D^n\}- A_jf_\omega(\widehat\omega_j, \omega_j, \widehat\mu_j; D^n) \mid A_j, \widehat\omega_j, \widehat\mu_j, D^n\right] \\
		& = \E\left( \E[\E\{A_j(Y_j - \widehat\mu_j) \mid A_j, \widehat\omega_j, \omega_j, \widehat\mu_j, X_j, D^n\} \mid A_j, \widehat\omega_j, \omega_j, \widehat\mu_j, D^n]- A_jf_\omega(\widehat\omega_j, \omega_j, \widehat\mu_j;D^n) \mid A_j, \widehat\omega_j, \widehat\mu_j, D^n\right) \\
		& = \E\left[ \E\{A_j(\mu_j - \widehat\mu_j) \mid A_j, \widehat\omega_j, \omega_j, \widehat\mu_j, D^n\}- A_jf_\omega(\widehat\omega_j, \omega_j, \widehat\mu_j; D^n) \mid A_j, \widehat\omega_j, \widehat\mu_j, D^n\right] \\
		& = \E\left[ A_j f_\omega(\widehat\omega_j, \omega_j, \widehat\mu_j; D^n) - A_jf_\omega(\widehat\omega_j, \omega_j, \widehat\mu_j; D^n) \mid A_j, \widehat\omega_j, \widehat\mu_j, D^n\right] \\
		& = 0.
	\end{align*}
	The last equality follows because $\widehat{Q}(\widehat\omega_i, \widehat\mu_i)$ does not depend on $A_i$ so that
	\begin{align*}
		\E\{(A_i\widehat\omega_i - 1)\widehat{Q}^{-1}(\widehat\omega_i, \widehat\mu_i) \mid X^n, A^n_{-i}, D^n\} & = \left(\frac{\widehat\omega_i}{\omega_i} - 1\right) \widehat{Q}^{-1}(\widehat\omega_i, \widehat\mu_i) \\
		& = \E\{(A_i(\widehat\omega_i - \omega_i)\widehat{Q}^{-1}(\widehat\omega_i, \widehat\mu_i) \mid X^n, A^n_{-i}, D^n\},
	\end{align*}
	where $X^n = (X_1, \ldots, X_n)$ and $A^n_{-i} = A^n \setminus A_i$. 
	
	Next, by the smoothness assumption on $f_\omega$ and the boundedness assumption on $\widehat{Q}^{-1}(\widehat\omega_i, \widehat\mu_i)$, we have
	\begin{align*}
		\left| f_\omega(\widehat\omega_j, \omega_j, \widehat\mu_j; D^n) - f_\omega(\widehat\omega_i, \omega_i, \widehat\mu_i; D^n) \right| & \lesssim  |\widehat\omega_j - \widehat\omega_i|^\alpha \ + \ |\omega_j - \omega_i|^\alpha \ + \ |\widehat\mu_j - \widehat\mu_i|^\alpha \\
		& \lesssim |\widehat\omega_j - \widehat\omega_i|^\alpha \ + \ |\widehat\omega_j - \omega_j|^\alpha \ + \ |\widehat\omega_i - \omega_i|^\alpha \ + \ |\widehat\mu_j - \widehat\mu_i|^\alpha
	\end{align*}
	so that
	\begin{align*}
		& \left| \E\left[\frac{1}{n(n-1)} \mathop{\sum\sum}\limits_{1 \a i \neq j \leq n} A_i(\widehat\omega_i - \omega_i) \frac{K_{hi}(\widehat\omega_j) K_{hi}(\widehat\mu_j)}{\widehat{Q}(\widehat\omega_i, \widehat\mu_i)} A_j\{f_\omega(\widehat\omega_j, \omega_j, \widehat\mu_j; D^n) - f_\omega(\widehat\omega_i, \omega_i, \widehat\mu_i; D^n)\}\mid D^n \right] \right| \\
		& \lesssim  \E\left[|\widehat\omega_i - \omega_i| A_iA_jK_{hi}(\widehat\omega_j) K_{hi}(\widehat\mu_j) \{|\widehat\omega_j - \widehat\omega_i|^\alpha \ + \ |\widehat\omega_j - \omega_j|^\alpha \ + \ |\widehat\omega_i - \omega_i|^\alpha \ + \ |\widehat\mu_j - \widehat\mu_i|^\alpha\} \mid D^n \right] \\
		& \lesssim h^\alpha \E\left[|\widehat\omega_i - \omega_i|\E\{A_jK_{hi}(\widehat\omega_j) K_{hi}(\widehat\mu_j) \mid X_i, D^n\} \mid D^n \right] \\
		& \hphantom{\lesssim} + \E\left\{|\widehat\omega_i - \omega_i| A_iA_jK_{hi}(\widehat\omega_j) K_{hi}(\widehat\mu_j)|\widehat\omega_j - \omega_j|^\alpha \mid D^n \right\} \\
		& \hphantom{\lesssim} + \E\left[(\widehat\omega_i - \omega_i)^{1+\alpha}\E\{A_jK_{hi}(\widehat\omega_j) K_{hi}(\widehat\mu_j) \mid X_i, D^n\} \right] \\
		& \lesssim h^\alpha \| \widehat\omega - \omega\| \ + \ \|\widehat\omega - \omega\|^{1+\alpha}
	\end{align*}
	under the condition that $\E\{A_jK_{hi}(\widehat\omega_j) K_{hi}(\widehat\mu_j) \mid X_i, D^n\}$. Thus, we also have that
	\begin{align*}
		|\E(\widehat\psi - \psi \mid D^n)| \ = |\E(T_n \mid D^n) - R_n| \ \lesssim h^\alpha \|\widehat\omega - \omega\| \ + \ \|\widehat\omega - \omega\|^{1+\alpha}.
	\end{align*}
	Combining \textbf{Bound 1} and \textbf{Bound 2}, we conclude that
	\begin{align*}
		& |\E(\widehat\psi - \psi \mid D^n)| \ \lesssim \left(h^\beta\|\widehat\mu - \mu\| \ + \ \|\widehat\mu - \mu\|^{1+\beta} \ + \ \|\widehat\omega - \omega\| \|\widehat\mu - \mu\|(nh^2)^{-1/2}\right) \land \left(h^\alpha \|\widehat\omega - \omega\| \ + \ \|\widehat\omega - \omega\|^{1+\alpha}\right).
	\end{align*}
	This concludes our proof of the bound on the bias of $\widehat\psi$.
	\subsubsection{Variance of $\widehat\psi$}
	The bound on the variance follows from the same reasoning used to bound the variance of $\widehat\psi_\omega$ in Section \ref{sec:variance_psi_a}. In particular, it follows from Lemmas \ref{lemma:Tn} and \ref{lemma:Sn} as well as the decomposition of $T_n$ as
	\begin{align*}
		T_n = (\Pn - \Pb)(\kappa_1 + \kappa_2) + S_n + \E\{\kappa(Z_1, Z_2; D^n) \mid D^n\} + T_n  - \widetilde{T}_n.
	\end{align*}
	\subsubsection{Linear expansion of $\widehat\psi - \psi$}
	We have defined 
	\begin{align*}
	&	\widehat\varphi_{\omega\mu} = \E\{\widehat\omega - \omega\mid A = 1, \widehat\omega, \widehat\mu, D^n\}A(Y - \widehat\mu) + \E\{\mu - \widehat\mu \mid A = 1, \widehat\omega, \widehat\mu, D^n\}(A\widehat\omega - 1), \\
	&	\overline\varphi_{\omega\mu} = \E\{\overline\omega - \omega\mid A = 1, \overline\omega, \overline\mu\}A(Y - \overline\mu) + \E\{\mu - \overline\mu \mid A = 1, \overline\omega, \overline\mu\}(A\overline\omega - 1).
	\end{align*}
	Following the reasoning in Section \ref{sec:proof_prop_max_squared_a_le}, the third statement of Proposition \ref{prop:min_squared} is implied by Lemmas \ref{lemma:Tn}, \ref{lemma:Sn} and \ref{lemma:RTn} as well as the following statement:
	\begin{align} \label{eq:statement3}
		& \| \kappa_1 + \kappa_2 - \widehat\varphi_{\omega\mu}\| \ \lesssim \{(h^\alpha + \|\widehat\omega - \omega\|_\infty) \land \|\widehat\mu - \mu\|_\infty\} + \{(h^\beta + \|\widehat\mu - \mu\|_\infty) \land \|\widehat\omega - \omega\|_\infty)\} + h.
	\end{align}
	\subsubsection{Proof of Lemma \ref{lemma:RTn}}
	The proof follows by similar arguments as those used to derive the bound on the bias of $\widehat\psi$ in Section \ref{sec:bias_psi}.
	\subsubsection{Proof of Lemma \ref{lemma:Tn}}
	Recall the definitions
	\begin{align*}
		& \widehat{Q}(\widehat{a_i}, \widehat\mu_i) = \frac{1}{n-1}\sum_{s = 1, \ s \neq i}^n A_sK_{hi}(\widehat\omega_s)K_{hi}(\widehat\mu_s) \\
		& \widehat{Q}_{-l}(\widehat\omega_i, \widehat\mu_i) = \frac{1}{n-1}\sum_{s = 1, \ s \neq (i, l)}^n A_sK_{hi}(\widehat\omega_s)K_{hi}(\widehat\mu_s) \\
		& Q(\widehat\omega_i, \widehat\mu_i) = \E\{\widehat{Q}(\widehat\omega_i, \widehat\mu_i) \mid X_i, D^n\} \\
		& \hphantom{Q(\widehat\omega_i, \widehat\mu_i)} = \int \{\omega(x)\}^{-1} K_h(\widehat\omega(x) - \omega(X_i))K_h(\widehat\mu(x) - \mu(X_i)) d\Pb(x) \\
		& \hphantom{Q(\widehat\omega_i, \widehat\mu_i)} = \Pb(A = 1) \int K(u)K(v) d\Pb_{\widehat\omega, \widehat\mu \mid A = 1, D^n}(uh + \widehat\omega(X_i), vh + \widehat\mu(X_i))
	\end{align*}
	We have
	\begin{align*}
		\widetilde{T}_n - T_n & = \frac{1}{n(n-1)}\mathop{\sum\sum}\limits_{1 \leq i \neq j \leq n} (A_i\widehat\omega_i - 1) \{Q^{-1}(\widehat\omega_i, \widehat\mu_i) - \widehat{Q}^{-1}(\widehat\omega_i, \widehat\mu_i)\} K_{hi}(\widehat\omega_j) K_{hi}(\widehat\mu_j) A_j(Y_j - \widehat\mu_j) \\
		& = \frac{1}{n(n-1)}\mathop{\sum\sum}\limits_{1 \leq i \neq j \leq n} T_{ij} 
	\end{align*}
	Just like in the proof of Lemmas \ref{lemma:Tna} and \ref{lemma:Tnb}, we break the square of the double sum above in seven terms:
	\begin{align*}
		\left(\mathop{\sum\sum}\limits_{1\leq i \neq j \leq n} T_{ij}\right)^2 & = \mathop{\sum\sum}\limits_{1 \leq i \neq j \leq n} (T^2_{ij} + T_{ij} T_{ji}) + \mathop{\sum\sum\sum}\limits_{1 \leq i \neq j \neq l \leq n} (T_{ij} T_{il} + T_{ij} T_{li} + T_{ij} T_{jl} + T_{ij} T_{lj}) \\
		& \hphantom{=} + \mathop{\sum\sum\sum\sum}\limits_{1 \leq i \neq j \neq l \neq k \leq n} T_{ij} T_{lk}.
	\end{align*}
	We have
	\begin{align*}
		T^2_{ij} \lesssim A_jh^{-2}K_{hi}(\widehat\omega_i)K_{hi}(\widehat\mu_i) \quad \text{and} \quad |T_{ij}T_{ji}| \ \lesssim A_iA_jh^{-2}K_{hi}(\widehat\omega_i)K_{hi}(\widehat\mu_i).
	\end{align*}
	In this light,
	\begin{align*}
		\E\left(\mathop{\sum\sum}\limits_{1 \leq i \neq j \leq n} T^2_{ij} + \mathop{\sum\sum}\limits_{1 \leq i \neq j \leq n} T_{ij} T_{ji} \mid D^n \right) \lesssim n(n-1)h^{-2},
	\end{align*}
	under the condition that $\E\{A_jK_{hi}(\widehat\omega_i)K_{hi}(\widehat\mu_i)\mid X_i, D^n\} \lesssim 1$. Next, notice that
	\begin{align*}
		& |T_{ij}T_{il}| \ \lesssim A_jA_lK_{hi}(\widehat\omega_j)K_{hi}(\widehat\mu_j)K_{hi}(\widehat\omega_l)K_{hi}(\widehat\mu_l)\{Q(\widehat\omega_i, \widehat\mu_i) - \widehat{Q}(\widehat\omega_i, \widehat\mu_i)\}^2 \\
		& |T_{ij}T_{li}| \ \lesssim A_iA_jK_{hi}(\widehat\omega_j)K_{hi}(\widehat\mu_j)K_{hi}(\widehat\omega_l)K_{hi}(\widehat\mu_l)|Q(\widehat\omega_i, \widehat\mu_i) - \widehat{Q}(\widehat\omega_i, \widehat\mu_i)||Q(\widehat\omega_l, \widehat\mu_l) - \widehat{Q}(\widehat\omega_l, \widehat\mu_l)| \\
		& |T_{ij}T_{jl}| \ \lesssim A_jA_lK_{hi}(\widehat\omega_j)K_{hi}(\widehat\mu_j)K_{hj}(\widehat\omega_l)K_{hj}(\widehat\mu_l)|Q(\widehat\omega_i, \widehat\mu_i) - \widehat{Q}(\widehat\omega_i, \widehat\mu_i)||Q(\widehat\omega_j, \widehat\mu_j) - \widehat{Q}(\widehat\omega_j, \widehat\mu_j)| \\
		& |T_{ij}T_{lj}| \ \lesssim A_jK_{hi}(\widehat\omega_j)K_{hi}(\widehat\mu_j)K_{hj}(\widehat\omega_l)K_{hj}(\widehat\mu_l)|Q(\widehat\omega_i, \widehat\mu_i) - \widehat{Q}(\widehat\omega_i, \widehat\mu_i)||Q(\widehat\omega_l, \widehat\mu_l) - \widehat{Q}(\widehat\omega_l, \widehat\mu_l)| .
	\end{align*}
	By Lemma \ref{lemma:Q}, we have
	\begin{align*}
		\E\left\{\left|Q(\widehat\omega_i, \widehat\mu_i) - \widehat{Q}(\widehat\omega_i, \widehat\mu_i)\right| \mid A_j, A_l, X_i, X_j, X_l, D^n\right\} \lesssim (nh^2)^{-1/2}.
	\end{align*}
	In this light, we have
	\begin{align*}
		& \E(|T_{ij}T_{il}| \mid D^n) \lesssim  \E\left[A_jA_lK_{hi}(\widehat\omega_j)K_{hi} (\widehat\mu_j)K_{hi}(\widehat\omega_l)K_{hi}(\widehat\mu_l)\{Q(\widehat\omega_i, \widehat\mu_i) - \widehat{Q}(\widehat\omega_i, \widehat\mu_i)\}^2 \mid D^n\right] \\
		& = \E\left(A_jA_lK_{hi}(\widehat\omega_j)K_{hi}(\widehat\mu_j)K_{hi}(\widehat\omega_l)K_{hi}(\widehat\mu_l)\E\left[\{Q(\widehat\omega_i, \widehat\mu_i) - \widehat{Q}(\widehat\omega_i, \widehat\mu_i)\}^2 \mid A_j, A_l, X_i, X_j, X_l, D^n \right]\mid D^n\right) \\
		& = (nh^2)^{-1}\E\left[A_jK_{hi}(\widehat\omega_j)K_{hi}(\widehat\mu_j)\E\{A_lK_{hi}(\widehat\omega_l)K_{hi}(\widehat\mu_l) \mid X_i, D^n\} \mid D^n\right] \\
		& \lesssim (nh^2)^{-1}\E\{A_jK_{hi}(\widehat\omega_j)K_{hi}(\widehat\mu_j) \mid D^n\} \\
		& = (nh^2)^{-1}\E\left[\E\{A_jK_{hi}(\widehat\omega_j)K_{hi}(\widehat\mu_j) \mid X_i, D^n\} \mid D^n \right] \\
		& \lesssim (nh^2)^{-1}.
	\end{align*}
	The expectation of the terms $|T_{ij}T_{li}|$ and $|T_{ij}T_{jl}|$ can be similarly upper bounded by a $(nh^2)^{-1}$. Therefore, we have
	\begin{align*}
		\E\left\{\frac{1}{n^2(n-1)^2} \mathop{\sum\sum\sum}\limits_{1 \leq i \neq j \neq l \leq n} (T_{ij} T_{il} + T_{ij} T_{li} + T_{ij} T_{jl}) \mid D^n \right\} \lesssim (nh)^{-2}.
	\end{align*}
	The remaining term to control is that involving the terms $T_{ij}T_{lk}$, possibly with $j = k$. We have
	\begin{align*}
		\widehat{Q}(\widehat\omega_i, \widehat\mu_i) & = \frac{1}{n-1}\sum_{s = 1, s \neq i}^n A_s K_{hi}(\widehat\omega_s)K_{hi}(\widehat\mu_j) \\
		& = \frac{1}{n-1}\sum_{s = 1, s \neq (i, l)}^n A_s K_{hi}(\widehat\omega_s)K_{hi}(\widehat\mu_s) + \frac{A_l K_{hi}(\widehat\omega_l)K_{hi}(\widehat\mu_l)}{n-1} \\
		& \equiv \widehat{Q}_{-l}(\widehat\omega_i, \widehat\mu_i) + \frac{A_l K_{hi}(\widehat\omega_l)K_{hi}(\widehat\mu_l)}{n-1}.
	\end{align*}
	Next, we write
	\begin{align*}
		T_{ij}T_{lk} & = (A_i\widehat\omega_i - 1)\{\widehat{Q}^{-1}_{-l}(\widehat\omega_i, \widehat\mu_i) - Q^{-1}(\widehat\omega_i, \widehat\mu_i)\}K_{hi}(\widehat\omega_j)K_{hi}(\widehat\mu_j)A_j(Y_j - \widehat\mu_j) \\
		& \hphantom{=} \quad\quad \times (A_l\widehat\omega_l - 1)\{\widehat{Q}^{-1}_{-i}(\widehat\omega_l, \widehat\mu_l) - Q^{-1}(\widehat\omega_l, \widehat\mu_l)\}K_{hl}(\widehat\omega_k)K_{hl}(\widehat\mu_k)A_k(Y_k - \widehat\mu_k) \\
		& \hphantom{=} - (A_i\widehat\omega_i - 1)\frac{A_lK_{hi}(\widehat\omega_l)K_{hi}(\widehat\mu_l)K_{hi}(\widehat\omega_j)K_{hi}(\widehat\mu_j)}{(n-1)\widehat{Q}(\widehat\omega_i, \widehat\mu_i)\widehat{Q}_{-l}(\widehat\omega_i, \widehat\mu_l)}A_j(Y_j - \widehat\mu_j) \\
		& \hphantom{=} \quad\quad\times (A_l\widehat\omega_l - 1)\{\widehat{Q}^{-1}(\widehat\omega_l, \widehat\mu_l) - Q^{-1}(\widehat\omega_l, \widehat\mu_l)\}K_{hl}(\widehat\omega_k)K_{hl}(\widehat\mu_k)A_k(Y_k - \widehat\mu_k) \\
		& \hphantom{=} - (A_i\widehat\omega_i - 1)\{\widehat{Q}^{-1}_{-l}(\widehat\omega_i, \widehat\mu_i) - Q^{-1}(\widehat\omega_i, \widehat\mu_i)\}K_{hi}(\widehat\omega_j)K_{hi}(\widehat\mu_j)A_j(Y_j - \widehat\mu_j) \\
		& \hphantom{=} \quad\quad \times (A_l\widehat\omega_l - 1)\frac{A_iK_{hl}(\widehat\omega_i)K_{hl}(\widehat\mu_i)K_{hl}(\widehat\omega_k)K_{hl}(\widehat\mu_k)}{(n-1)\widehat{Q}(\widehat\omega_l, \widehat\mu_l)\widehat{Q}_{-i}(\widehat\omega_l, \widehat\mu_l)}A_k(Y_k - \widehat\mu_k)
	\end{align*}
	When $j = k$, the expectation of the last two terms can be upper bounded as
	\begin{align*}
		& \E \left[ \left|\frac{(A_i\widehat\omega_i - 1)(A_l\widehat\omega_l - 1)}{n-1}\frac{A_lK_{hi}(\widehat\omega_l, \widehat\mu_l)K_{hi}(\widehat\omega_j, \widehat\mu_j)K_{hl}(\widehat\omega_j, \widehat\mu_j) }{\widehat{Q}(\widehat\omega_i, \widehat\mu_i)\widehat{Q}_{-l}(\widehat\omega_i, \widehat\mu_l)Q(\widehat\omega_l, \widehat\mu_l)\widehat{Q}(\widehat\omega_l, \widehat\mu_l)}A_j(Y_j - \widehat\mu_j)^2 \{Q(\widehat\omega_l, \widehat\mu_l) - \widehat{Q}(\widehat\omega_l, \widehat\mu_l)\} \right| \right. \\
		& \left. + \left| \frac{(A_i\widehat\omega_i - 1)(A_l\widehat\omega_l - 1)}{n-1}\frac{A_iK_{hl}(\widehat\omega_i, \widehat\mu_i)K_{hl}(\widehat\omega_j, \widehat\mu_j)K_{hi}(\widehat\omega_j, \widehat\mu_j)}{\widehat{Q}(\widehat\omega_l, \widehat\mu_l)\widehat{Q}_{-i}(\widehat\omega_l, \widehat\mu_l)\widehat{Q}_{-l}(\widehat\omega_i, \widehat\mu_i)Q(\widehat\omega_i, \widehat\mu_i)}A_j(Y_j - \widehat\mu_j)^2\{Q(\widehat\omega_i, \widehat\mu_i) -\widehat{Q}_{-l}(\widehat\omega_i, \widehat\mu_i)\} \right| \mid D^n \right] \\
		& \lesssim \E\left[\frac{A_lA_j K_{hi}(\widehat\omega_l, \widehat\mu_l)K_{hi}(\widehat\omega_j, \widehat\mu_j)}{h^2(n-1)} \E\left\{\left|\widehat{Q}(\widehat\omega_l, \widehat\mu_l) - Q(\widehat\omega_l, \widehat\mu_l) \right| \mid A_l, A_j, X_l, X_j, X_i\right\} \mid D^n \right] \\
		& \hphantom{\lesssim} + \E\left[\frac{A_iA_jK_{hl}(\widehat\omega_i, \widehat\mu_i)K_{hi}(\widehat\omega_j, \widehat\mu_j)}{h^2(n-1)} \E\left\{\left|\widehat{Q}_{-l}(\widehat\omega_i, \widehat\mu_i) - Q(\widehat\omega_i, \widehat\mu_i) \right| \mid A_j, X_j, X_i\right\} \mid D^n \right] \\
		& \lesssim (nh^2)^{-3/2}
	\end{align*}
	The last inequality follows by Lemma \ref{lemma:Q} and because
	\begin{align*}
		\E\left\{A_iA_jK_{hl}(\widehat\omega_i, \widehat\mu_i)K_{hi}(\widehat\omega_j, \widehat\mu_j) \mid D^n\right\} & = \E\left[A_iK_{hl}(\widehat\omega_i, \widehat\mu_i)\E\{A_jK_{hi}(\widehat\omega_j, \widehat\mu_j) \mid X_i, D^n\} \mid D^n\right] \\
		& \lesssim \E\left\{A_iK_{hl}(\widehat\omega_i, \widehat\mu_i) \mid D^n\right\}\\
		& \lesssim 1.
	\end{align*}
	When $j \neq k$, the last two terms can be upper bounded as
	\begin{align*}
		& \E \left[ \left|(A_i\widehat\omega_i - 1)\frac{A_lK_{hi}(\widehat\omega_l, \widehat\mu_l)K_{hi}(\widehat\omega_j, \widehat\mu_j)}{(n-1)\widehat{Q}(\widehat\omega_i, \widehat\mu_i)\widehat{Q}_{-l}(\widehat\omega_i, \widehat\mu_l)}A_j(Y_j - \widehat\mu_j) \right. \right. \\
		& \hphantom{=} \quad\quad \left. \times \vphantom{\frac{A_lK_{hi}(\widehat\omega_l, \widehat\mu_l)K_{hi}(\widehat\omega_j, \widehat\mu_j)}{(n-1)\widehat{Q}(\widehat\omega_i, \widehat\mu_i)\widehat{Q}_{-l}(\widehat\omega_i, \widehat\mu_l)}} (A_l\widehat\omega_l - 1)\{\widehat{Q}^{-1}(\widehat\omega_l, \widehat\mu_l) - Q^{-1}(\widehat\omega_l, \widehat\mu_l)\}K_{hl}(\widehat\omega_k, \widehat\mu_k)A_k(Y_k - \widehat\mu_k) \right| \\
		& \hphantom{=} + \left| \vphantom{\frac{A_lK_{hi}(\widehat\omega_l, \widehat\mu_l)K_{hi}(\widehat\omega_j, \widehat\mu_j)}{(n-1)\widehat{Q}(\widehat\omega_i, \widehat\mu_i)\widehat{Q}_{-l}(\widehat\omega_i, \widehat\mu_l)}} (A_i\widehat\omega_i - 1)\{\widehat{Q}^{-1}_{-l}(\widehat\omega_i, \widehat\mu_i) - Q^{-1}(\widehat\omega_i, \widehat\mu_i)\}K_{hi}(\widehat\omega_j, \widehat\mu_j)A_j(Y_j - \widehat\mu_j)  \right. \\
		& \hphantom{=} \left. \left. \quad\quad \times (A_l\widehat\omega_l - 1)\frac{A_iK_{hl}(\widehat\omega_i, \widehat\mu_i)K_{hl}(\widehat\omega_k, \widehat\mu_k)}{(n-1)\widehat{Q}(\widehat\omega_l, \widehat\mu_l)\widehat{Q}_{-i}(\widehat\omega_l, \widehat\mu_l)}A_k(Y_k - \widehat\mu_k) \right| \mid D^n \right] \\
		& \lesssim \E\left[\frac{A_lA_jA_k K_{hi}(\widehat\omega_l, \widehat\mu_l)K_{hi}(\widehat\omega_j, \widehat\mu_j)K_{hl}(\widehat\omega_k, \widehat\mu_k)}{(n-1)} \E\left\{\left|\widehat{Q}(\widehat\omega_l, \widehat\mu_l) - Q(\widehat\omega_l, \widehat\mu_l) \right| \mid A_j, A_k, X_l, X_j, X_k, X_i\right\} \mid D^n \right] \\
		& \hphantom{\lesssim} + \E\left[\frac{A_iA_jA_k K_{hl}(\widehat\omega_i, \widehat\mu_i)K_{hi}(\widehat\omega_j, \widehat\mu_j)K_{hl}(\widehat\omega_k, \widehat\mu_k)}{(n-1)} \E\left\{\left|\widehat{Q}_{-l}(\widehat\omega_i, \widehat\mu_i) - Q(\widehat\omega_i, \widehat\mu_i) \right| \mid A_j, A_k, X_l, X_j, X_k, X_i\right\} \mid D^n \right] \\
		& \lesssim (n^3h^2)^{-1/2}
	\end{align*}
	Next, we bound the first term. For $j = k$, we have
	\begin{align*}
		& \left| \E\left[(A_i\widehat\omega_i - 1)\{\widehat{Q}^{-1}_{-l}(\widehat\omega_i, \widehat\mu_i) - Q^{-1}(\widehat\omega_i, \widehat\mu_i)\}K_{hi}(\widehat\omega_j, \widehat\mu_j)A_j(Y_j - \widehat\mu_j)^2 \right. \right. \\
		& \left. \left. \quad\quad \times (A_l\widehat\omega_l - 1)\{\widehat{Q}^{-1}_{-i}(\widehat\omega_l, \widehat\mu_l) - Q^{-1}(\widehat\omega_l, \widehat\mu_l)\}K_{hl}(\widehat\omega_j, \widehat\mu_j) \mid D^n \right] \right| \\
		& = \left| \E\left[A_i(\widehat\omega_i - \omega_i)\{\widehat{Q}^{-1}_{-l}(\widehat\omega_i, \widehat\mu_i) - Q^{-1}(\widehat\omega_i, \widehat\mu_i)\}K_{hi}(\widehat\omega_j, \widehat\mu_j)A_j(Y_j - \widehat\mu_j)^2 \right. \right. \\
		& \left. \left. \quad\quad\quad \times A_l(\widehat\omega_l - \omega_l)\{\widehat{Q}^{-1}_{-i}(\widehat\omega_l, \widehat\mu_l) - Q^{-1}(\widehat\omega_l, \widehat\mu_l)\}K_{hl}(\widehat\omega_j, \widehat\mu_j)\mid D^n \right] \right| \\
		& \lesssim \left| \E\left[A_iA_jA_lK_{hi}(\widehat\omega_j, \widehat\mu_j)K_{hl}(\widehat\omega_j, \widehat\mu_j) \right. \right. \\
		& \left. \left. \quad\quad\quad \times  \E\left\{|\widehat{Q}_{-l}(\widehat\omega_i, \widehat\mu_i) - Q(\widehat\omega_i, \widehat\mu_i)| | \widehat{Q}_{-i}(\widehat\omega_l, \widehat\mu_l) - Q(\widehat\omega_l, \widehat\mu_l)| \mid A_j, X_i, X_j, X_l, D^n \right\} \right] \mid D^n \right| \\
		& \lesssim (nh^2)^{-1}
	\end{align*}
	This means that, for $k = j$, we have the bound
	\begin{align*}
		|\E(T_{ij}T_{lj} \mid D^n)| \lesssim (nh^2)^{-1},
	\end{align*}
	so that
	\begin{align*}
		\left| \frac{1}{n^2(n-1)^2} \E\left\{\mathop{\sum\sum\sum}\limits_{1 \leq i \neq j \neq l \leq n} T_{ij} T_{lj} \mid D^n \right\} \right| \lesssim (nh)^{-2}.
	\end{align*}
	Finally, for $k \neq j$, we have
	\begin{align*}
		& \left| \E\left[(A_i\widehat\omega_i - 1)\{\widehat{Q}^{-1}_{-l}(\widehat\omega_i, \widehat\mu_i) - Q^{-1}(\widehat\omega_i, \widehat\mu_i)\}K_{hi}(\widehat\omega_j, \widehat\mu_j)A_j(Y_j - \widehat\mu_j) \right. \right. \\
		& \left. \left. \quad\quad \times (A_l\widehat\omega_l - 1)\{\widehat{Q}^{-1}_{-i}(\widehat\omega_l, \widehat\mu_l) - Q^{-1}(\widehat\omega_l, \widehat\mu_l)\}K_{hl}(\widehat\omega_k, \widehat\mu_k)A_k(Y_k - \widehat\mu_k) \mid D^n \right] \right| \\
		& = \left| \E\left[A_i(\widehat\omega_i - \omega_i)\{\widehat{Q}^{-1}_{-l}(\widehat\omega_i, \widehat\mu_i) - Q^{-1}(\widehat\omega_i, \widehat\mu_i)\}K_{hi}(\widehat\omega_j, \widehat\mu_j)A_j(\mu_j - \widehat\mu_j) \right. \right. \\
		& \left. \left. \quad\quad\quad \times A_l(\widehat\omega_l - \omega_l)\{\widehat{Q}^{-1}_{-i}(\widehat\omega_l, \widehat\mu_l) - Q^{-1}(\widehat\omega_l, \widehat\mu_l)\}K_{hl}(\widehat\omega_k, \widehat\mu_k)A_k(\mu_k - \widehat\mu_k) \mid D^n \right] \right| \\
		& \lesssim \left| \E\left[|\widehat\omega_i - \omega_i||\mu_j - \widehat\mu_j||\widehat\omega_l - \omega_l||\mu_k - \widehat\mu_k|A_iA_jA_kA_lK_{hi}(\widehat\omega_j, \widehat\mu_j)K_{hl}(\widehat\omega_k, \widehat\mu_k) \right. \right. \\
		& \left. \left. \quad\quad\quad \times  \E\left\{|\widehat{Q}_{-l}(\widehat\omega_i, \widehat\mu_i) - Q(\widehat\omega_i, \widehat\mu_i)| | \widehat{Q}_{-i}(\widehat\omega_l, \widehat\mu_l) - Q(\widehat\omega_l, \widehat\mu_l)| \mid A_j, A_k, X_i, X_j, X_l, X_k, D^n \right\} \right] \mid D^n \right| \\
		& \lesssim (nh^2)^{-1}\E\left[|\widehat\omega_i - \omega_i|A_iA_j
		K_{hi}(\widehat\omega_j)K_{hi}(\widehat\mu_j)|\widehat\mu_j - \mu_j| \mid D^n\right] \E\left[A_kA_lK_{hl}(\widehat\omega_k)K_{hl}(\widehat\mu_k)|\widehat\omega_l - \omega_l| |\widehat\mu_k - \mu_k| \mid D^n \right] \\
		& \lesssim (nh^2)^{-1}\|\widehat\omega - \omega\|^2\|\widehat\mu - \mu\|^2,
	\end{align*}
	where the second-to-last inequality relies on Lemma \ref{lemma:Q}. 
	Putting everything together, we have that
	\begin{align*}
		& \left| \E(T_{ij}T_{lk} \mid D^n) \right| \lesssim \|\widehat\omega - \omega\|^2\|\widehat\mu - \mu\|^2(nh)^{-1} + (n^3h^2)^{-1/2}.
	\end{align*}
	so that we have reached
	\begin{align*}
		\left| \frac{1}{n^2(n-1)^2} \E\left(\mathop{\sum\sum\sum\sum}\limits_{1 \leq i \neq j \neq l \neq k \leq n} T_{ij} T_{lk} \mid D^n\right)\right| \lesssim \|\widehat\omega - \omega\|^2 \|\widehat\mu- \mu\|^2(nh^2)^{-1} + (n^3h^2)^{-1/2}
	\end{align*}
	This concludes our proof that
	\begin{align*}
		\E\{(\widetilde{T}_n - T_n)^2 \mid D^n\} \lesssim (n^3h^2)^{-1/2} + \|\widehat\omega - \omega\|^2 \|\widehat\mu- \mu\|^2(nh^2)^{-1}.
	\end{align*}
	\subsubsection{Proof of the bound in Eq. \eqref{eq:statement3}}
	We have
	\begin{align*}
		& \kappa_1(Z_1; D^n) = \E\{\kappa(Z_1, Z_2; D^n) \mid Z_1, D^n\} \\
		& = (A_1\widehat\omega_1 - 1) \E\left\{\frac{K_{h1}(\widehat\omega_2)K_{h1}(\widehat\mu_2)}{Q(\widehat\omega_1, \widehat\mu_1)} A_2(Y_2 - \widehat\mu_2) \mid Z_1, D^n\right\} \\
		& = (A_1\widehat\omega_1 - 1) \E\left\{\frac{K_{h1}(\widehat\omega_2)K_{h1}(\widehat\mu_2)}{Q(\widehat\omega_1, \widehat\mu_1)} A_2\E(\mu_2 - \widehat\mu_2 \mid A_2 = 1, \widehat\omega_2, \widehat\mu_2, D^n) \mid D^n, Z_1\right\} \\
		& = (A_1\widehat\omega_1 - 1)\E(\mu - \widehat\mu \mid A = 1, \widehat\omega_1, \widehat\mu_1, D^n) \\
		& \hphantom{=} + (A_1\widehat\omega_1 - 1) \E\left[\frac{K_{h1}(\widehat\omega_2)K_{h1}(\widehat\mu_2)}{Q(\widehat\omega_1, \widehat\mu_1)} A_2\{\E(\mu - \widehat\mu \mid A = 1, \widehat\omega_2, \widehat\mu_2, D^n) - \E(\mu - \widehat\mu \mid A = 1, \widehat\omega_1, \widehat\mu_1, D^n)\} \mid D^n, Z_1\right]
	\end{align*}
	Similarly, we have
	\begin{align*}
		& \kappa_2(Z_2; D^n) = \E\{\kappa(Z_1, Z_2; D^n) \mid Z_2, D^n\} \\
		& = \E\left\{(A_1\widehat\omega_1 - 1)\frac{K_{h1}(\widehat\omega_2)K_{h1}(\widehat\mu_2)}{Q(\widehat\omega_1, \widehat\mu_1)} \mid Z_2, D^n\right\}A_2(Y_2 - \widehat\mu_2) \\
		& = \E\left\{A_1\E(\widehat\omega - \omega \mid A = 1, \widehat\mu_1, \widehat\omega_1, D^n)\frac{K_{h1}(\widehat\omega_2)K_{h1}(\widehat\mu_2)}{Q(\widehat\omega_1, \widehat\mu_1)} \mid Z_2, D^n\right\}A_2(Y_2 - \widehat\mu_2) \\
		& = \E(\widehat\omega - \omega \mid A = 1, \widehat\mu_2, \widehat\omega_2, D^n)A_2(Y_2 - \widehat\mu_2) \\
		& \hphantom{=} + \E\left[A_1\{\E(\widehat\omega_1 - \omega_1 \mid A_1 = 1, \widehat\mu_1, \widehat\omega_1, D^n) - \E(\widehat\omega_2 - \omega_2 \mid A_2 = 1, \widehat\mu_2, \widehat\omega_2, D^n)\}\frac{K_{h1}(\widehat\omega_2)K_{h1}(\widehat\mu_2)}{Q(\widehat\omega_2, \widehat\mu_2)} \mid Z_2, D^n\right]\\
		& \hphantom{\hphantom{=}} \quad\quad\quad\quad \times A_2(Y_2 - \widehat\mu_2) \\
        & \hphantom{=} + \E\left[A_1\E(\widehat\omega_1 - \omega_1 \mid A_1 = 1, \widehat\mu_1, \widehat\omega_1, D^n)\{Q^{-1}(\widehat\omega_1, \widehat\mu_1) - Q^{-1}(\widehat\omega_2, \widehat\mu_2)\}K_{h1}(\widehat\omega_2)K_{h1}(\widehat\mu_2) \mid Z_2, D^n\right]A_2(Y_2 - \widehat\mu_2)
	\end{align*}
 In virtue of the Lipschitz assumption on the density of $(\widehat\omega(X), \widehat\mu(X))$, the last term can be upper bounded by a constant multiple of $\|\widehat\omega - \omega\|_\infty h$. Next, we have
	\begin{align*}
		\E(\mu_2 - \widehat\mu_2 \mid A_2 = 1, \widehat\omega_2, \widehat\mu_2, D^n) & = \E\{f_\omega(\widehat\omega_2, \omega_2, \widehat\mu_2; D^n) \mid A_2 = 1, \widehat\omega_2, \widehat\mu_2, D^n\} \\
		& = f_\omega(\widehat\omega_2, \widehat\omega_2, \widehat\mu_2; D^n) \\
		& \hphantom{=} + \E\{f_\omega(\widehat\omega_2, \omega_2, \widehat\mu_2; D^n) - f_\omega(\widehat\omega_2, \widehat\omega_2, \widehat\mu_2; D^n) \mid A_2 = 1, \widehat\omega_2, \widehat\mu_2, D^n\} 
	\end{align*}
	Similarly, we have
	\begin{align*}
		\E(\mu_1 - \widehat\mu_1 \mid A_1 = 1, \widehat\omega_1, \widehat\mu_1, D^n) & = \E\{f_\omega(\widehat\omega_1, \omega_1, \widehat\mu_1; D^n) \mid A_1 = 1, \widehat\omega_1, \widehat\mu_1, D^n\} \\
		& = f_\omega(\widehat\omega_1, \widehat\omega_1, \widehat\mu_1; D^n) \\
		& \hphantom{=} + \E\{f_\omega(\widehat\omega_1, \omega_1, \widehat\mu_1; D^n) - f_\omega(\widehat\omega_1, \widehat\omega_1, \widehat\mu_1; D^n) \mid A_2 = 1, \widehat\omega_1, \widehat\mu_1, D^n\} 
	\end{align*}
	In this light, by the smoothness assumption on $f_\omega(t_1, t_2, t_3; D^n)$, we have
	\begin{align*}
		& \left|\E\left[\frac{K_{h1}(\widehat\omega_2)K_{h1}(\widehat\mu_2)}{Q(\widehat\omega_1, \widehat\mu_1)} A_2\{\E(\mu - \widehat\mu \mid A = 1, \widehat\omega_2, \widehat\mu_2, D^n) - \E(\mu - \widehat\mu \mid A = 1, \widehat\omega_1, \widehat\mu_1, D^n)\} \mid D^n, Z_1\right] \right| \\
		& \leq \left| \E\left[\frac{K_{h1}(\widehat\omega_2)K_{h1}(\widehat\mu_2)}{Q(\widehat\omega_1, \widehat\mu_1)} A_2\{f_\omega(\widehat\omega_2, \widehat\omega_2, \widehat\mu_2; D^n) - f_\omega(\widehat\omega_1, \widehat\omega_1, \widehat\mu_1; D^n) \mid D^n, Z_1\right] \right| \\
		& \hphantom{=} + \left| \E\left[\frac{K_{h1}(\widehat\omega_2)K_{h1}(\widehat\mu_2)}{Q(\widehat\omega_1, \widehat\mu_1)} A_2\{f_\omega(\widehat\omega_2, \omega_2, \widehat\mu_2; D^n) - f_\omega(\widehat\omega_2, \widehat\omega_2, \widehat\mu_2; D^n)\} \mid D^n, Z_1\right] \right| \\
		& \hphantom{=} +  \left| \E\{f_1(\widehat\omega_1, \omega_1, \widehat\mu_1; D^n) - f_\omega(\widehat\omega_1, \widehat\omega_1, \widehat\mu_1; D^n) \mid A_1 = 1,\widehat\omega_1, \widehat\mu_1, D^n\} \right| \\
		& \lesssim h^\alpha + \|\widehat\omega - \omega\|^\alpha_\infty
	\end{align*}
	We thus have:
	\begin{align*}
		& \left|\E\left[\frac{K_{h1}(\widehat\omega_2)K_{h1}(\widehat\mu_2)}{Q(\widehat\omega_1, \widehat\mu_1)} A_2\{\E(\mu_2 - \widehat\mu_2 \mid A_2 = 1, \widehat\omega_2, \widehat\mu_2, D^n) - \E(\mu_1 - \widehat\mu_1 \mid A_1 = 1, \widehat\omega_1, \widehat\mu_1, D^n)\} \mid D^n, Z_1\right] \right| \\
		& \lesssim (h^\alpha + \|\widehat\omega - \omega\|^\alpha_\infty) \land \|\widehat\mu - \mu\|_\infty.
	\end{align*}
	An analogous reasoning yields
	\begin{align*}
		& \left| \E\left[A_1\{\E(\widehat\omega_1 - \omega_1 \mid A_1 = 1, \widehat\mu_1, \widehat\omega_1, D^n) - \E(\widehat\omega_2 - \omega_2 \mid A_2 = 1, \widehat\mu_2, \widehat\omega_2, D^n)\}\frac{K_{h1}(\widehat\omega_2)K_{h1}(\widehat\mu_2)}{Q(\widehat\omega_2, \widehat\mu_2)} \mid Z_2, D^n\right] \right| \\
		& \lesssim (h^\beta + \|\widehat\mu - \mu\|_\infty) \land \|\widehat\omega - \omega\|_\infty).
	\end{align*}
	Therefore,
	\begin{align*}
		& \left| \kappa_1(Z; D^n) + \kappa_2(Z; D^n) - \widehat\varphi_{\omega\mu}(Z; D^n) \right| \\
  & \lesssim \{(h^\alpha + \|\widehat\omega - \omega\|_\infty) \land \|\widehat\mu - \mu\|_\infty\} + \{(h^\beta + \|\widehat\mu - \mu\|_\infty) \land \|\widehat\omega - \omega\|_\infty)\} + h.
	\end{align*}
	\subsubsection{Proof of Lemma \ref{lemma:Sn}}
	We have that
	\begin{align*}
		\left|  \kappa_1(Z_1; D^n) \right| & = \left| \int \kappa(Z_1, z_2; D^n) d\Pb(z_2) \right| \\
		& = \left| (A\widehat\omega_1 - 1) \E\left\{ \frac{K_{h1}(\widehat\omega_2, \widehat\mu_2)}{Q(\widehat\omega_1, \widehat\mu_1)}A_2(\mu_2 - \widehat\mu_2) \mid X_1, D^n\right\} \right| \\
		& \lesssim \|\widehat\mu - \mu\|_\infty \left\{A_2\frac{K_{h1}(\widehat\omega_2, \widehat\mu_2)}{Q(\widehat\omega_1, \widehat\mu_1)} \mid X_1, D^n\right\} \\
		& = \|\widehat\mu - \mu\|_\infty
	\end{align*}
	and, by Cauchy-Schwarz, that
	\begin{align*}
		\left|  \kappa_1(Z_1; D^n) \right| & = \left| (A\widehat\omega_1 - 1) \E\left\{ \frac{K_{h1}(\widehat\omega_2, \widehat\mu_2)}{Q(\widehat\omega_1, \widehat\mu_1)}A_2(\mu_2 - \widehat\mu_2) \mid X_1, D^n\right\} \right| \\
		& \leq \left|A\widehat\omega_1 - 1\right| \left[ \E\left\{ A_2\frac{K^2_{h1}(\widehat\omega_2, \widehat\mu_2)}{Q^2(\widehat\omega_1, \widehat\mu_1)} \mid X_1, D^n\right\}\right]^{1/2} \left[\E\left\{(\mu_2 - \widehat\mu_2)^2 \mid X_1, D^n\right\}\right]^{1/2} \\
		& \lesssim h^{-1}\|\widehat\mu - \mu\|.
	\end{align*}
	Similarly, we have
	\begin{align*}
		\left|\kappa_2(Z_2; D^n) \right| & = \left| \int \kappa(z_1, Z_2; D^n) d\Pb(z_1) \right| \\
		& = \left| \E\left\{A_1(\widehat\omega_1 - \omega_1) \frac{K_{h1}(\widehat\omega_2, \widehat\mu_2)}{Q(\widehat\omega_1, \widehat\mu_1)}\mid X_2, D^n\right\} A_2(Y_2 - \widehat\mu_2) \right| \\
		& \lesssim \|\widehat\omega - \omega\|_\infty
	\end{align*}
	and, by Cauchy-Schwarz,
	\begin{align*}
		\left|\kappa_2(Z_2; D^n) \right| & = \left| \E\left\{A_1(\widehat\omega_1 - \omega_1) \frac{K_{h1}(\widehat\omega_2, \widehat\mu_2)}{Q(\widehat\omega_1, \widehat\mu_1)}\mid X_2, D^n\right\} A_2(Y_2 - \widehat\mu_2) \right| \\
		& \leq \E\left[\left\{(\widehat\omega_1 - \omega_1)^2 \mid X_2, D^n\right\} \right]^{1/2} \left[ \E\left\{\frac{K^2_{h1}(\widehat\omega_2, \widehat\mu_2)}{Q^2(\widehat\omega_1, \widehat\mu_1)} \mid X_2, D^n\right\} \right]^{1/2}\left|A_2(Y_2 - \widehat\mu_2)\right| \\
		& \lesssim h^{-1}\|\widehat\omega - \omega\|. 
	\end{align*}
	The bound on $\E(S^2_n \mid D^n)$ follows as in the proof of the bound on $\E(S^2_{n\omega} \mid D^n)$ in Lemma \ref{lemma:Sna}. Recall that
	\begin{align*}
		S_{n} = \frac{1}{n(n-1)}\mathop{\sum\sum}_{1 \leq i \neq j \leq n} \left[ \kappa(Z_i, Z_j; D^n) - \kappa_{1}(Z_i; D^n) - \kappa_{2}(Z_j; D^n)+ \E\left\{ \kappa(Z_1, Z_2; D^n) \mid D^n \right\}\right],
	\end{align*}
	and notice that
	\begin{align*}
		& \E\left[\kappa(Z_i, Z_j; D^n) - \kappa_{1}(Z_i; D^n) - \kappa_{2}(Z_j; D^n)+ \E\left\{ \kappa(Z_1, Z_2; D^n) \mid D^n \right\} \mid Z_i, D^n\right] \\
		& = \E\left[\kappa(Z_i, Z_j; D^n) - \kappa_{1}(Z_i; D^n) - \kappa_{2}(Z_j; D^n)+ \E\left\{ \kappa(Z_1, Z_2; D^n) \mid D^n \right\} \mid Z_j, D^n\right] \\
		& = 0.
	\end{align*}
	Therefore,
	\begin{align*}
		\E(S^2_{n} \mid D^n) & \lesssim \frac{1}{n^2} \E\left\{\kappa^2(Z_1, Z_2; D^n) \mid D^n \right\} \lesssim \frac{1}{(nh)^2} \E\left\{A_2K_{h1}(\widehat\omega_2)K_h(\widehat\mu_2) \mid D^n\right\}  \lesssim (nh)^{-2}.
	\end{align*}
\section{Proof of Proposition \ref{prop:regularity}}
Because $\psi_{n^{-1/2}} - \psi = n^{-1/2}\E(d_\mu + \mu d_f) + o(n^{-1})$, if
\begin{align*}
    \widehat\psi - \psi = (\Pn - \Pb)(\overline\varphi - \overline\varphi_{\omega\mu}) + o_\Pb(n^{-1/2}),
\end{align*}
we have
	\begin{align*}
		n^{1/2}(\widehat\psi - \psi_n) \stackrel{P^n}{\indist} N\left(-\E(d_\mu + \mu d_f), \var(\overline\varphi - \overline\varphi_{\omega\mu})\right),
	\end{align*}
 where $P^n$ is the product measure of the sample. By, for example, Lemma 25.14 in \cite{van2000asymptotic}, paths $t \mapsto P_t$ with score $B(O)$ that are differentiable in quadratic mean, satisfy the equation:
	\begin{align*}
		\log \prod_{i = i}^n \frac{p_{n}(O_i)}{p(O_i)} = n^{-1/2} \sum_{i = 1}^n B(O_i) - \frac{1}{2} \E\{B^2(O)\} + o_P(1).
	\end{align*}
	Therefore, 
 \begin{align*}
     \begin{pmatrix}
     n^{1/2}(\widehat\psi - \psi_n) \\
     \vphantom{\Sigma} \\
     \log \prod_{i = i}^n \frac{p_{n}(O_i)}{p(O_i)}
     \end{pmatrix} & = 
     \begin{pmatrix}
         n^{-1/2}\sum_{i = 1}^n \{\overline\varphi(O_i) - \overline\varphi_{\omega\mu}(O_i) - \psi\} - \E(d_\mu + \mu d_f) \\
         \vphantom{\Sigma} \\
         n^{-1/2}\sum_{i = 1}^n B(d_\omega, d_\mu, d_f)(O_i) - \frac{1}{2}\E\{B^2(d_\omega, d_\mu, d_f)\}
     \end{pmatrix} + o_{P^n}(1) \\
     & \stackrel{P^n}{\indist} N\begin{pmatrix} \begin{bmatrix} -\E(d_\mu + \mu d_f), \\
         \vphantom{\Sigma} \\
     - \frac{1}{2} \E\{B^2(d_\omega, d_\mu, d_f)\} \end{bmatrix}, \Sigma
     \end{pmatrix},
 \end{align*}
 where
 \begin{align*}
   \Sigma = \begin{bmatrix}
         \var(\overline\varphi - \overline\varphi_{\omega\mu}) & \E\{B(d_\omega, d_\mu, d_f)(\overline\varphi - \overline\varphi_{\omega\mu})\} \\
         \E\{B(d_\omega, d_\mu, d_f)(\overline\varphi - \overline\varphi_{\omega\mu})\} & \var\{B(d_\omega, d_\mu, d_f)\}
     \end{bmatrix}
 \end{align*}
By Le Cam's third lemma (e.g., Example 6.7 in \cite{van2000asymptotic}), under sampling from $p_n(o) = p_{\eta_t}$ at $t = n^{-1/2}$, we have
	\begin{align*}
		n^{1/2}(\widehat\psi - \psi_n) \indist N(\theta, \var(\overline\varphi - \overline\varphi_{\omega\mu})), \text{ where } \theta = \E\{B(d_\omega, d_\mu, d_f)(\overline\varphi - \overline\varphi_{\omega\mu})\} - \E(d_\mu + \mu d_f),
	\end{align*}
	We thus compute $\theta$ under the 3 cases considered:
	\begin{enumerate}
		\item $\overline\omega = \omega$ and $\overline\mu = \mu$. In this case, $\overline\varphi - \overline\varphi_{\omega\mu} = \varphi$ and we have
		\begin{align*}
			\E\{B(d_\omega, d_\mu, d_f)(\overline\varphi - \overline\varphi_{\omega\mu})\} & =  -\E\left\{ \frac{A\omega - 1}{\omega(\omega-1)} d_\omega  \varphi \right\} + \E\left\{ \frac{A(Y - \mu)}{\mu(1-\mu)} d_\mu \varphi\right\} + \E(d_f \varphi) \\
			& = 0 + \E\left\{ \frac{A\omega(Y - \mu)^2}{\mu(1-\mu)} d_\mu\right\} + \E(\mu d_f) \\
			& = \E(d_\mu + \mu d_f). 
		\end{align*}
		Therefore, $\theta = 0$.
		\item $\overline\omega = \omega$ and $\overline\mu \neq \mu$. In this case, 
		\begin{align*}
			\overline\varphi - \overline\varphi_{\omega\mu} & = A\omega\{Y - \E(\mu \mid A = 1, \omega, \overline\mu)\} + \E(\mu \mid A = 1, \omega, \overline\mu) \\
			& = A\omega(Y - \mu) + h(\omega, \overline\mu) + A\omega\{\mu - h(\omega, \overline\mu)\}
		\end{align*}
		We have
		\begin{align*}
			& \E\{B(d_\omega, d_\mu, d_f)(\overline\varphi - \overline\varphi_{\omega\mu})\} \\
			& = -\E\left\{ \frac{A\omega - 1}{\omega(\omega-1)} d_\omega(\overline\varphi - \overline\varphi_{\omega\mu})\right\} + \E\left\{ \frac{A(Y - \mu)}{\mu(1-\mu)} d_\mu (\overline\varphi - \overline\varphi_{\omega\mu})\right\} + \E\{d_f(\overline\varphi - \overline\varphi_{\omega\mu})\} \\
			& = -\E\left[\frac{(A\omega - 1)A\omega\{Y - h(\omega, \overline\mu)\}}{\omega(\omega-1)} d_\omega \right] + \E\left\{ \frac{A\omega(Y - \mu)^2}{\mu(1-\mu)} d_\mu\right\} + \E\left[\frac{A(Y - \mu)}{\mu(1-\mu)} \omega \{\mu - h(\omega, \overline\mu)\} d_\mu \right] \\
			& \hphantom{=} + \E\left[d_fA\omega\{Y - h(\omega, \overline\mu)\} + d_f h(\omega, \overline\mu) \right] \\
			& = -\E\left[A\{\mu - h(\omega, \overline\mu)\}d_\omega \right] + \E(d_\mu + \mu d_f).
		\end{align*}
		Therefore $\theta = -\E\left[A\{\mu - h(\omega, \overline\mu)\}d_\omega\right]$.
		\item $\overline\omega \neq \omega$ and $\overline\mu = \mu$. In this case,
		\begin{align*}
			\overline\varphi - \overline\varphi_{\omega\mu} = A\E(\omega \mid A = 1, \overline\omega, \mu) (Y - \mu) + \mu = Ag(\overline\omega, \mu)(Y - \mu) + \mu.
		\end{align*}
		We have
		\begin{align*}
			& \E\{B(d_\omega, d_\mu, d_f)(\overline\varphi - \overline\varphi_{\omega\mu})\} \\
			& = -\E\left\{ \frac{A\omega - 1}{\omega(\omega-1)} d_\omega(\overline\varphi - \overline\varphi_{\omega\mu})\right\} + \E\left\{ \frac{A(Y - \mu)}{\mu(1-\mu)} d_\mu (\overline\varphi - \overline\varphi_{\omega\mu})\right\} + \E\{d_f(\overline\varphi - \overline\varphi_{\omega\mu})\} \\
			& = 0 + \E\{Ag(\overline\omega, \mu) d_\mu \} + \E(\mu d_f).
		\end{align*}
		In this case, $\theta = \E\left[\{Ag(\overline\omega, \mu) - 1\} d_\mu\right]$.
	\end{enumerate}
\end{document}